\definecolor{dartmouthgreen}{rgb}{0.05, 0.5, 0.06}
\definecolor{ceruleanblue}{rgb}{0.16, 0.32, 0.75}
\newtheorem{theorem}{Theorem}[section]
\newtheorem{fact}[theorem]{Fact}
\newtheorem{lemma}[theorem]{Lemma}
\newtheorem{definition}[theorem]{Definition}
\newtheorem{corollary}[theorem]{Corollary}
\newtheorem{proposition}[theorem]{Proposition}
\newtheorem*{problem*}{Problem}
\newtheorem{remark}[theorem]{Remark}
\newtheorem*{remark*}{Remark}
\newcommand{\st}{\mbox{\rm subject to }}
\numberwithin{equation}{section}
\numberwithin{table}{section}
\renewcommand{\preceq}{\preccurlyeq}
\renewcommand{\succeq}{\succcurlyeq}
\newcommand{\R}{\ensuremath{\mathbb R}}
\newcommand{\N}{\ensuremath{\mathbb N}}
\renewcommand{\S}{\ensuremath{\mathbb S}}
\newcommand{\E}[1]{{\mathbb{E}}\left[#1\right]}
\newcommand{\floor}[1]{\ensuremath{\left\lfloor#1\right\rfloor}}
\newcommand{\junk}[1]{}
\renewcommand{\l}{\lambda}
\newcommand{\vol}{{\rm vol}}
\newcommand{\norm}[1]{\left\lVert#1\right\rVert}
\newcommand{\vertiii}[1]{{\left\vert\kern-0.25ex\left\vert\kern-0.25ex\left\vert #1 \right\vert\kern-0.25ex\right\vert\kern-0.25ex\right\vert}}
\newcommand{\one}{\ensuremath{\mathbbm{1}}}
\newenvironment{proofof}[1]{{\medbreak\noindent \em Proof of #1.  }}{\hfill\qed\medbreak}
\def\b1{{\bf 1}}
\def\eps{{\epsilon}}
\def\R{\mathbb{R}}
\newcommand{\directed}{\ensuremath{\overrightarrow}}
\def\vol{\operatorname{vol}} 
\def\diag{\operatorname{diag}} 
\def\tr{\operatorname{tr}}
\global\long\def\E{\mathbb{E}}
\global\long\def\R{\mathbb{R}}
\newcommand{\inner}[2]{\langle #1, #2 \rangle} 
\newcommand{\biginner}[2]{\Big\langle #1, #2 \Big\rangle} 
\DeclareMathOperator{\argmin}{argmin}
\DeclareMathOperator{\supp}{supp}
\DeclareMathOperator{\Var}{Var}
\newif\ifpaper
\title{Cheeger Inequalities for Vertex Expansion and\\ Reweighted Eigenvalues}
\author{
Tsz Chiu Kwok\footnote{Institute for Theoretical Computer Science, Shanghai University of Finance and Economics. Supported by Science and Technology Innovation 2030 - ``New Generation of Artificial Intelligence'' Major Project No.(2018AAA0100903), NSFC grant 61932002, Program for Innovative Research Team of Shanghai University of Finance and Economics and the Fundamental Research Funds for the Central Universities. 
},~~~~~
Lap Chi Lau\footnote{Cheriton School of Computer Science, University of Waterloo. Supported by NSERC Discovery Grant. 
},~~~~~
Kam Chuen Tung\footnote{Cheriton School of Computer Science, University of Waterloo. Supported by NSERC Discovery Grant. 
}}
\author{}
\date{}
\begin{document}

\begin{titlepage}
\def\thepage{}
\thispagestyle{empty}

\maketitle

\begin{abstract}
The classical Cheeger's inequality relates the edge conductance $\phi$ of a graph and the second smallest eigenvalue $\lambda_2$ of the Laplacian matrix.
Recently, Olesker-Taylor and Zanetti discovered a
Cheeger-type inequality
$\psi^2 / \log |V| \lesssim \lambda_2^* \lesssim \psi$
connecting the vertex expansion $\psi$ of a graph $G=(V,E)$ and the maximum reweighted second smallest eigenvalue $\lambda_2^*$ of the Laplacian matrix.

\vspace{1mm}

In this work, we first improve their result to 
$\psi^2 / \log d \lesssim \lambda_2^* \lesssim \psi$
where $d$ is the maximum degree in $G$,
which is optimal up to a constant factor.
Also, the improved result holds for weighted vertex expansion, answering an open question by Olesker-Taylor and Zanetti.

\vspace{1mm}

Building on this connection, 
we then develop a new spectral theory for vertex expansion.
We discover that several interesting generalizations of Cheeger inequalities relating edge conductances and eigenvalues have a close analog in relating vertex expansions and reweighted eigenvalues.
These include:
\begin{itemize}
\item An analog of Trevisan's result that relates the bipartite vertex expansion $\psi_B$ of a graph and the maximum reweighted lower spectral gap $\zeta^*$ of the adjacency matrix.
This implies the first approximation algorithm for bipartite vertex expansion.

\item An analog of higher-order Cheeger's inequalities that relates the $k$-way vertex expansion $\psi_k$ of a graph 
and the maximum reweighted $k$-th smallest eigenvalue $\lambda_k^*$ of the Laplacian matrix.
This implies the first approximation algorithm for $k$-way vertex expansion.
\item An analog of improved Cheeger's inequality that relates the vertex expansion $\psi$ and the reweighted eigenvalues $\lambda_2^*$ and $\lambda_k^*$.
This provides an improved bound for $\psi$ using $\lambda_2^*$, when the $k$-way vertex expansion $\psi_k$ is large for a small $k$.
\end{itemize}

Finally, inspired by this connection, we present negative evidence to the $0/1$-polytope edge expansion conjecture by Mihail and Vazirani.  
We construct $0/1$-polytopes whose graphs have very poor vertex expansion.
This implies that the fastest mixing time to the uniform distribution on the vertices of these $0/1$-polytopes is almost linear in the graph size.
This does not provide a counterexample to the conjecture, but this is in contrast with known positive results which proved poly-logarithmic mixing time to the uniform distribution on the vertices of subclasses of $0/1$-polytopes.

\end{abstract}

\end{titlepage}

\thispagestyle{empty}

\newpage

\section{Introduction}

The connection between vertex expansion and reweighted eigenvalue is discovered through the study of the fastest mixing time problem introduced by Boyd, Diaconis and Xiao~\cite{BDX04}.
In the fastest mixing time problem, we are given an undirected graph $G=(V,E)$ and a target probability distribution $\pi: V \to \R$.
The task is to find a time-reversible transition matrix $P \in \R^{|V| \times |V|}$ supported on the edges of $G$,
so that the stationary distribution of random walks with transition matrix $P$ is $\pi$.
The objective is to find such a transition matrix that minimizes the mixing time to the stationary distribution $\pi$.
It is well-known that the mixing time to the stationary distribution is approximately inversely proportional to the spectral gap $1-\alpha_2(P)$ of the time-reversible transition matrix $P$, where $1 = \alpha_1(P) \geq \alpha_2(P) \geq \cdots \geq \alpha_{|V|}(P) \geq -1$ are the eigenvalues of $P$.
The fastest mixing time problem is thus formulated as follows in~\cite{BDX04} by the maximum spectral gap achievable through such a ``reweighting'' $P$ of the input graph $G$.

\begin{definition}[Maximum Reweighted Spectral Gap~\cite{BDX04}] \label{def:BDX-primal}
Given an undirected graph $G=(V,E)$ and a probability distribution $\pi$ on $V$,
the maximum reweighted spectral gap is defined as
\begin{align*}
\lambda_2^*(G) ~:=~ \max_{P \geq 0} &~~~ 1-\alpha_2(P) & 
\\
\st &~~~ P(u,v) = P(v,u) = 0 & & \forall uv \notin E
\\
&~~~ \sum_{v \in V} P(u,v) = 1 & & \forall u \in V
\\
&~~~ \pi(u) P(u,v) = \pi(v) P(v,u) & & \forall uv \in E.
\end{align*}
The graph is assumed to have a self-loop on each vertex, to ensure that the optimization problem for $\lambda_2^*(G)$ is always feasible.
In the context of Markov chains, this corresponds to allowing a non-negative holding probability on each vertex.

The last constraint is the time reversible condition to ensure that the transition matrix $P$ corresponds to random walks on an undirected graph (where the edge weight of $uv$ is $\pi(u) P(u,v)$) and that the stationary distribution of $P$ is $\pi$.
Note that $\lambda_2^*(G) = \max_{P \geq 0} (1 - \alpha_2(P)) = \max_{P \geq 0} \lambda_2(I-P)$, which is the maximum reweighted second smallest eigenvalue of the normalized Laplacian matrix of $G$ (where the edge weight of $uv$ is $\pi(u) P(u,v)$) subject to the above constraints.
\end{definition}

Boyd, Diaconis and Xiao showed that this optimization problem can be written as a semidefinite program and thus $\lambda_2^*(G)$ can be computed in polynomial time.
Subsequently, the fastest mixing time problem has been studied in various work (see~\cite{Roc05,BDSX06,BDPX09,FK13,CA15} and more references in~\cite{OTZ22}), but no general characterization was known.
Roch~\cite{Roc05} showed that the vertex expansion $\psi(G)$ is an upper bound on the optimal spectral gap $\lambda_2^*(G)$.

\begin{definition}[Weighted Vertex Expansion] \label{def:vertex-expansion}
Let $G=(V,E)$ be an undirected graph and $\pi$ be a probability distribution on $V$.
For a subset $S \subseteq V$, let $\partial S := \{v \notin S \mid \exists u \in S \textrm{~with~} uv \in E \}$ be the vertex boundary of $S$, and $\pi(S) := \sum_{v \in S} \pi(v)$ be the weight of $S$.
The weighted vertex expansion of a set $S \subseteq V$ and of a graph $G$ are defined as
\[
\psi(S) := \frac{\pi(\partial S)}{\pi(S)}
\quad \textrm{and} \quad
\psi(G) := \min\Big\{ 1, \min_{S \subseteq V:0 < \pi(S) \leq 1/2} \psi(S)\Big\}.
\footnote{When $\pi$ is the uniform distribution, $\min_{S \subseteq V: 0 < \pi(S) \leq 1/2} \psi(S)$ is always at most $1$.
For general $\pi$, however, this could be arbitrarily large.
Consider for example a star graph where the center has most of the $\pi$-weight.
Therefore, we need to put an upper bound of $1$ on $\psi(G)$, 
as otherwise $\psi(G)$ cannot be bounded by eigenvalues of the normalized Laplacian matrix which are always upper bounded by $2$.}\]
When $\pi$ is the uniform distribution, $\psi(S)$ is the usual vertex expansion $|\partial S|/|S|$.
\end{definition}

Recently, Olesker-Taylor and Zanetti~\cite{OTZ22} discovered an elegant Cheeger-type inequality for vertex expansion and the maximum reweighted spectral gap, showing that small vertex expansion is qualitatively the only obstruction for the fastest mixing time to be small.
Note that their result only holds when $\pi$ is the uniform distribution.

\begin{theorem}[Cheeger Inequality for Vertex Expansion~\cite{OTZ22}] \label{thm:OTZ22}
For any undirected graph $G = (V,E)$ and the uniform distribution $\pi = \vec{1}/|V|$,
\[
\frac{\psi(G)^2}{\log |V|} \lesssim \lambda_2^*(G) \lesssim \psi(G).
\]
In terms of the fastest mixing time $\tau^*(G)$ to the uniform distribution,
$\frac{1}{\psi(G)} \lesssim \tau^*(G) \lesssim \frac{\log^2 |V|}{\psi^2(G)}.$
(See \autoref{sec:prelim} for definitions for random walks and mixing time.)
\end{theorem}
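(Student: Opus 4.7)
The proof naturally splits into the easy direction $\lambda_2^*(G) \lesssim \psi(G)$ and the hard direction $\psi(G)^2 / \log|V| \lesssim \lambda_2^*(G)$, which is the analog of the discrete Cheeger inequality. For the mixing time statement, I would simply plug the spectral bounds into the standard mixing-time-vs-spectral-gap estimate (conductance gives the lower bound on $\tau^*$, and $\lambda_2^*$ gives the upper bound via the usual $\tau \lesssim \log|V|/\lambda_2$ bound for reversible chains).

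For the easy direction, the plan is to take a set $S$ with $\pi(S) \leq 1/2$ achieving (approximately) $\psi(G)$ and use $f = \mathbf{1}_S - \pi(S)\mathbf{1}$ as a test vector in the Rayleigh quotient of $L_P = I - P$ for an arbitrary feasible $P$. The denominator $\sum_u \pi(u) f(u)^2$ equals $\pi(S)(1-\pi(S))$. For the numerator, only cut edges $uv$ with $u \in S, v \notin S$ contribute, and each contributes $\pi(u) P(u,v)$. Grouping contributions by their endpoint $v \in \partial S$, time-reversibility and the row-stochastic constraint give
\[
\sum_{u \in S,\, uv \in E} \pi(u) P(u,v) \;=\; \pi(v)\!\!\sum_{u \in S,\, uv \in E}\!\! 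P(v,u) \;\leq\; \pi(v).
\]
Summing over $v \in \partial S$ bounds the numerator by $\pi(\partial S)$, so the Rayleigh quotient is at most $2\pi(\partial S)/\pi(S) = 2\psi(S)$. Since this holds for every feasible $P$, $\lambda_2^*(G) \le 2\psi(G)$.

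For the hard direction, I would set up an SDP/LP min-max reformulation: $\lambda_2^*(G)$ equals the minimum over $f \perp_\pi \mathbf{1}$ of $\max_P (f^\top L_P f)/(f^\top \Pi f)$, where the maximum is over $P$ feasible in Definition~\ref{def:BDX-primal}. The inner maximum, for fixed $f$, is a linear program whose optimum is governed by an $\ell_\infty$-type Dirichlet form — intuitively, each vertex $u$ can route its $\pi(u)$ mass toward the neighbor maximizing $(f(u)-f(v))^2$, subject to reversibility, and by an LP-duality / flow argument this is captured (up to a constant) by a quantity of the form $\sum_u \pi(u)\max_{v: uv\in E}(f(u)-f(v))^2$. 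The remaining step is a Cheeger-style sweep cut on an optimal $f$: I would apply a randomized threshold rounding where the threshold is drawn from a distribution on the range of $f$, and verify that a threshold cut $S_t=\{u: f(u) > t\}$ has vertex boundary dominated (in expectation) by the $\ell_\infty$ Dirichlet form of $f$, yielding $\psi(S_t) \cdot (\text{loss factor}) \lesssim \sqrt{\lambda_2^*}$.

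The crux and main obstacle is controlling the $\log|V|$ factor in the $\ell_\infty$-to-vertex-expansion passage: unlike the classical Cheeger rounding, a single uniform random threshold does not suffice because a max-over-neighbors term can concentrate on a single neighbor while the indicator function $\mathbf{1}_{S_t}$ sees it on many cuts. The standard remedy, which I would follow, is to bucket the support of $f$ into $O(\log|V|)$ dyadic scales (using the ratio $\max f / \min |f|$ bounded polynomially by truncating or padding $f$), apply threshold rounding scale-by-scale, and combine via a union/averaging argument, losing exactly the $\log|V|$ factor. Handling the padding/truncation of $f$ so that the $\log|V|$ loss is unavoidable but not worse — and verifying that the rounded set has $\pi$-measure at most $1/2$ — is where the technical care goes, essentially mirroring the hypergraph-Cheeger argument of Louis–Makarychev.
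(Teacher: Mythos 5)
Your easy direction and the mixing-time translation are fine (the test-vector calculation with $f=\mathbf{1}_S-\pi(S)\mathbf{1}$ is essentially the standard argument, and the relaxation-time bounds do the rest). The hard direction, however, has a genuine gap, and it sits exactly at the two places you wave at. First, the asserted identity $\lambda_2^*(G)=\min_{f\perp_\pi\mathbf{1}}\max_P f^\top L_Pf/f^\top\Pi f$ with a \emph{one-dimensional} $f$ is false: since $R_P(f)$ is not convex in $f$, you only get weak duality $\lambda_2^*(G)\le\min_f\max_P(\cdot)$, and the right-hand side is precisely the one-dimensional program $\gamma^{(1)}(G)$ of \autoref{def:Roch-dual-1D}. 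Strong duality holds only after lifting to vector-valued $f:V\to\R^n$ (Roch's dual, \autoref{prop:Roch-dual}), and the gap between $\gamma^{(1)}(G)$ and $\gamma(G)=\lambda_2^*(G)$ is exactly where the logarithmic factor must be paid, via Johnson--Lindenstrauss in \cite{OTZ22} (\autoref{prop:projection}) or Gaussian projection here (\autoref{prop:improved-projection}). Your plan never performs this dimension reduction: sweeping an optimal one-dimensional $f$ can only show $\psi(G)^2\lesssim\gamma^{(1)}(G)$, which by itself says nothing about $\lambda_2^*(G)$.

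Second, the claim that the inner maximization over reversible $P$ is captured up to a constant by $\sum_u\pi(u)\max_{v:uv\in E}(f(u)-f(v))^2$ is also false. Reversibility forces the edge flow $Q(u,v)=\pi(u)P(u,v)$ to be symmetric with row sums $\pi$, so on a star the LP value is about $\pi(c)\max_i(f(c)-f(l_i))^2$ while the all-neighbors max sum is larger by a factor up to $n$. By LP duality the inner max equals the vertex-cover-type program with constraints $g(u)+g(v)\ge(f(u)-f(v))^2$, whose constant-factor surrogate is the \emph{orientation} form $\min_{\directed{E}}\sum_v\pi(v)\max_{u:uv\in\directed{E}}(f(u)-f(v))^2$ (\autoref{def:compact}); the all-neighbors quantity is the strictly stronger ${\sf sdp}_{\infty}/\lambda_\infty$ relaxation (\autoref{lem:lambda2-vs-sdp-infty}). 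Indeed, if both of your ``up to constant'' claims held, then BHT-style threshold rounding (which loses only constants) would give $\psi(G)^2\lesssim\lambda_2^*(G)$ with no logarithm, contradicting the tight examples of \autoref{sec:tight-example} where $\psi\asymp1$ but $\lambda_2^*\asymp1/\log d$. This also shows your dyadic-bucketing step mis-locates the loss: the one-dimensional rounding needs no $\log|V|$ bucketing (cf.\ \autoref{thm:weighted-Cheeger}); the logarithm is incurred in projecting the vector-valued dual solution to one dimension, a step missing from your proposal.
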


Unlike Cheeger's inequality for edge conductance where $\phi(G)^2 \lesssim \lambda_2(G) \lesssim \phi(G)$, 
it is noted in~\cite{OTZ22} that the $\log |V|$ term might not be completely removed:
Louis, Raghavendra and Vempala~\cite{LRV13} proved that it is NP-hard to distinguish between $\psi(G) \leq \eps$ and $\psi(G) \gtrsim \sqrt{\eps \log d}$ for every $\eps > 0$ where $d$ is the maximum degree of the graph $G$,
assuming the small-set expansion conjecture of Raghavendra and Steurer~\cite{RS10}.

Besides the fastest mixing time problem,
we note that these ``reweighting problems'' relating vertex expansion and reweighted eigenvalues are also well motivated in the study of approximation algorithms.
One example is a conjecture of Arora and Ge~\cite[Conjecture 12]{AG11}, which roughly states that, if a graph $G$ has almost perfect vertex expansion for every set, then there exists a reweighted doubly stochastic matrix $P$ of the adjacency matrix of $G$ so that $P$ has few eigenvalues less than $-\frac{1}{16}$.
They proved that if the conjecture was true, then there is an improved subexponential time algorithm for coloring $3$-colorable graphs.
Another example is a conjecture of Steurer~\cite[Conjecture 9.2]{Ste10}, which is also known to be related to a reweighting problem between vertex expansion and the graph spectrum, that if true would imply an improved subexponential time approximation algorithm for the sparsest cut problem.

\subsection{Our Results}

First we improve and generalize the result of Olesker-Taylor and Zanetti.
Then we build on this new connection to develop a spectral theory for vertex expansion.
Finally we present $0/1$-polytopes with poor vertex expansion and discuss the implications to the $0/1$-polytope expansion conjecture.

\subsubsection{Optimal Cheeger Inequality for Vertex Expansion}

Olesker-Taylor and Zanetti~\cite{OTZ22} posed the problem of reducing the $\log |V|$ factor in \autoref{thm:OTZ22} to $\log d$, and also the problem of generalizing their result to weighted vertex expansion.
Our first result provides a positive answer to these two questions.

\begin{theorem}[Cheeger Inequality for Weighted Vertex Expansion] \label{thm:Cheeger-vertex}
For any undirected graph $G = (V,E)$ with maximum degree $d$ and any probability distribution $\pi$ on $V$,
\[
\frac{\psi(G)^2}{\log d} \lesssim \lambda_2^*(G) \lesssim \psi(G).
\]
In terms of the fastest mixing time $\tau^*(G)$ to the stationary distribution,
$\frac{1}{\psi(G)} \lesssim \tau^*(G) \lesssim \frac{\log d \cdot \log \pi_{\min}^{-1}}{\psi^2(G)}$.
\end{theorem}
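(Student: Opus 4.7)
The theorem splits into the upper bound $\lambda_2^*(G) \lesssim \psi(G)$ and the lower bound $\psi(G)^2/\log d \lesssim \lambda_2^*(G)$; the mixing-time statement then follows from the standard relation $\tau^*(G) \asymp \lambda_2^*(G)^{-1}\log\pi_{\min}^{-1}$, so my focus is the spectral inequalities. \textbf{For the upper bound,} given any $S\subseteq V$ with $\pi(S)\le 1/2$, I would test the Rayleigh quotient of $I-P$ in the $\pi$-weighted inner product with $f := \mathbf{1}_S - \pi(S)\mathbf{1}$. This $f$ is $\pi$-orthogonal to the all-ones vector and satisfies $\|f\|_\pi^2 = \pi(S)(1-\pi(S)) \gtrsim \pi(S)$. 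For any feasible reversible $P$, each term $(f(u)-f(v))^2$ in the Dirichlet form is nonzero only on edges with one endpoint in $S$ and one in $\partial S$; reversibility together with $\sum_v P(u,v)=1$ then bound the numerator by $\pi(\partial S)$. Taking the Rayleigh quotient and maximizing over $P$ gives $\lambda_2^*(G) \lesssim \psi(S)$ for every thin cut, yielding the upper bound.

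\textbf{The hard direction} is the main content. I would pass to an SDP dual of $\lambda_2^*(G)$ — viewing the reweighted-eigenvalue problem as a semidefinite program over vector variables $\{X_u\}_{u\in V}$ — and round. The row-sum constraint $\sum_v P(u,v)=1$ together with reversibility should dualize into a local $\ell_\infty$-type bound of the form $\max_{v\,:\,uv\in E}\|X_u - X_v\|^2 \lesssim \lambda_2^*(G)/\pi(u)$ at each vertex $u$, rather than a global sum-of-edge-lengths-squared constraint as in the classical edge-Cheeger SDP; this per-vertex structure is the vertex-expansion analogue of that SDP. Given such an embedding, I would round by projecting $g(u) := \langle w, X_u\rangle$ against a standard Gaussian $w\sim\mathcal{N}(0,I)$ and sweeping over thresholds. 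For each vertex $u$, whether $u$ ends up in the rounded vertex boundary is controlled by $\max_{v\,:\,uv\in E}|g(u)-g(v)|$; each $g(u)-g(v)$ is Gaussian of standard deviation $\|X_u-X_v\|$, and since $u$ has at most $d$ neighbors, a local union bound and the Gaussian tail estimate give that this maximum is $\lesssim \sqrt{\log d}\cdot \max_{v\,:\,uv\in E}\|X_u-X_v\|$ with constant probability. Averaging over the random threshold and summing vertex by vertex then produces a set with $\psi(S)^2 \lesssim \lambda_2^*(G)\cdot\log d$.

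\textbf{The main obstacle} is isolating this locality in both the SDP formulation and its rounding, so that the concentration step pays only $\sqrt{\log d}$ per vertex rather than $\sqrt{\log |V|}$ globally — precisely what replaces $\log|V|$ in \autoref{thm:OTZ22} by $\log d$ here. Extending to arbitrary $\pi$ beyond the uniform case of~\cite{OTZ22} requires carrying the reversibility relation $\pi(u)P(u,v) = \pi(v)P(v,u)$ through the SDP duality and weighting the rounding sums by $\pi$ throughout, which I expect to be careful bookkeeping rather than a fundamentally new obstacle.
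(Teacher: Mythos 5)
Your easy direction is fine (and is a slightly more direct route than the paper's, which instead exhibits a feasible solution to the one-dimensional dual program; both give $\lambda_2^*(G)\lesssim\psi(G)$), and your overall skeleton for the hard direction — dualize, project with a Gaussian paying only an expected-max-of-$d$-Gaussians factor $\sqrt{\log d}$ per vertex, then sweep — matches the paper's plan. But there is a genuine gap at the pivotal step: the dual of $\lambda_2^*(G)$ does \emph{not} give you the local constraint $\max_{v:uv\in E}\norm{X_u-X_v}^2\lesssim \lambda_2^*(G)/\pi(u)$ at each vertex. Roch's dual (\autoref{prop:Roch-dual}) only has the per-edge \emph{sum} constraints $g(u)+g(v)\ge\norm{f(u)-f(v)}^2$ together with the $\pi$-weighted average bound $\sum_v\pi(v)g(v)=\lambda_2^*(G)$; there is no pointwise bound on individual vertices, and more importantly the per-vertex all-neighbors max you invoke is exactly the constraint of the stronger relaxation ${\sf sdp}_\infty$ of Louis--Raghavendra--Vempala, which satisfies only $\lambda_2^*(G)\le{\sf sdp}_\infty(G)$ (\autoref{lem:lambda2-vs-sdp-infty}) with no reverse inequality. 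So your rounding, which controls each vertex's membership in the boundary by $\max_{v:uv\in E}|g(u)-g(v)|$ over \emph{all} neighbors, uses information that the $\lambda_2^*$ SDP simply does not certify; run from the true dual, the "sum vertex by vertex" step does not go through.

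The missing idea is the paper's orientation device: replace the sum constraint by $\max\{g(u),g(v)\}\ge\norm{f(u)-f(v)}^2$ at a factor-$2$ cost (\autoref{def:directed-gamma}), which is equivalent to choosing an orientation $\directed{E}$ and charging each edge to one endpoint, so that the objective becomes $\sum_v\pi(v)\max_{u:uv\in\directed{E}}\norm{f(u)-f(v)}^2$ (\autoref{def:compact}). The Gaussian projection then pays $\log d$ only over the at most $d$ edges oriented \emph{into} each vertex, and — crucially — the sweep cut must be analyzed with respect to the directed vertex boundary $\directed{\partial}S$ (a vertex cover of the cut edges, \autoref{def:vertex-cover}), followed by a truncation step to control $\pi(\supp)$ and a postprocessing step (\autoref{lem:cleanup}) removing $\directed{\partial}S$ from $S$ to recover genuine vertex expansion at the cost of constants. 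Without this bridge from per-edge sum constraints to an oriented per-vertex max structure, your argument proves the LRV-style bound $\psi(G)^2\lesssim\log d\cdot{\sf sdp}_\infty(G)$ but not the claimed bound against $\lambda_2^*(G)$, which is the actual content of the theorem (and what makes the fastest-mixing consequence available). The mixing-time sentence is otherwise fine via the standard two-sided relaxation-time bounds.
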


In \autoref{sec:tight-example}, we show that the $\log d$ factor in  \autoref{thm:Cheeger-vertex} is optimal, by exhibiting graphs $G$ with $\lambda_2^*(G) \asymp \frac{\psi(G)^2}{\log d}$.
Note that the tightness result does not rely on the small-set expansion hypothesis.

We note that Louis, Raghavendra and Vempala~\cite{LRV13} gave an SDP approximation algorithm for vertex expansion with the same approximation guarantee,
but their SDP is different from and stronger than that in \autoref{def:BDX-primal} (see \autoref{lem:lambda2-vs-sdp-infty}),
and so it does not have the natural interpretation as the reweighted second eigenvalue and does not imply the result on fastest mixing time.
The proof of \autoref{thm:Cheeger-vertex} is based on the techniques in~\cite{LRV13,BHT00}, which we will discuss in detail in \autoref{sec:background-LRV}.

\subsubsection{Maximum Reweighted Lower Spectral Gap and Bipartite Vertex Expansion}
Trevisan~\cite{Tre09} proved that the lower spectral gap $1 + \alpha_{\min}(G)$ of the normalized adjacency matrix of $G=(V,E)$ is small if and only if
there is a subset $S \subseteq V$ which is an almost bipartite component in $G$ with small edge conductance $\phi(S)$.
We define the analogous notions for vertex expansion and for reweighted lower spectral gap.

\begin{definition}[Bipartite Vertex Expansion] 
\label{def:bipartite-vertex-expansion}
Given an undirected graph $G=(V,E)$,
the bipartite vertex expansion of $G$ is defined as
\[
\psi_B(G) := \min\Big\{1, \min_{\emptyset \neq S \subseteq V} \big\{ \psi(S) \mid G[S] {\rm~is~an~induced~bipartite~graph} \big\} \Big\}.
\]
\end{definition}

\begin{definition}[Maximum Reweighted Lower Spectral Gap] \label{def:lambda-max-primal}
Given an undirected graph $G=(V,E)$ and a probability distribution $\pi$ on $V$,
the maximum reweighted lower spectral gap is defined as
\begin{align*}
\zeta^*(G) ~:=~ \max_{P \geq 0} &~~~ \lambda_{\min}(D_P + P) & 
\\
\st &~~~ P(u,v) = P(v,u) = 0 & & \forall uv \notin E
\\
&~~~ \sum_{v \in V} P(u,v) \le 1 & & \forall u \in V
\\
&~~~ \pi(u) P(u,v) = \pi(v) P(v,u) & & \forall uv \in E.
\end{align*}
where $D_P$ is the diagonal matrix of row sums of $P$ such that $D_P(u, u) = \sum_{v \in V} P(u, v)$ for $u \in V$.
We note that this program is slightly different from that in~\autoref{def:BDX-primal}, and the main reason is that self-loops should not be allowed in this problem.
We will explain more about this in \autoref{sec:Cheeger-bipartite}.
\end{definition}

We prove an analog of Trevisan's result that the maximum reweighted lower spectral gap is small if and only if there is an induced bipartite subgraph on $S$ with small vertex expansion $\psi(S)$.

\begin{theorem}[Cheeger Inequality for Bipartite Vertex Expansion] \label{thm:Cheeger-bipartite}
For any undirected graph $G = (V,E)$ with maximum degree $d$ and any probability distribution $\pi$ on $V$,
\[
\frac{\psi_B(G)^2}{\log d} \lesssim \zeta^*(G) \lesssim \psi_B(G).
\]
\end{theorem}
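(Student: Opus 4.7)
I would prove the two directions separately: the easy direction via a test vector, and the hard direction via a sign-aware spectral rounding of the optimal reweighting.

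For the easy direction $\zeta^*(G) \lesssim \psi_B(G)$, fix a set $S$ achieving $\psi_B(G)$ with bipartition $S = A \cup B$. For any feasible $P$ in \autoref{def:lambda-max-primal}, write $W(u,v) := \pi(u) P(u,v)$ and $M := \diag(\pi)$; since $M^{-1/2}(D_W + W)M^{-1/2}$ is conjugate to $D_P + P$, the two matrices share the same spectrum. Plugging in the test vector $x := \mathbf{1}_A - \mathbf{1}_B$ gives the quadratic form
\[
x^\top(D_W + W)x \;=\; \sum_{uv \in E} W(u,v)(x_u + x_v)^2.
\]
Every internal edge of $S$ contributes zero by bipartiteness ($x_u + x_v = 0$ across $A,B$), and every edge with no endpoint in $S$ contributes zero as well; the only surviving contribution comes from edges with exactly one endpoint in $S$, summing to at most $\pi(\partial S)$ after applying the relaxed row-sum constraint $\sum_v P(w,v) \le 1$ at each $w \in \partial S$. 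Dividing by $x^\top M x = \pi(S)$ gives $\lambda_{\min}(D_P + P) \le \psi(S)$, and taking the max over $P$ yields $\zeta^*(G) \le \psi_B(G)$. The $\le 1$ relaxation in \autoref{def:lambda-max-primal} (as opposed to the $=1$ constraint in \autoref{def:BDX-primal}) is essential here, since the test vector is supported on the proper subset $S \subsetneq V$.

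For the hard direction $\psi_B(G)^2 / \log d \lesssim \zeta^*(G)$, let $P^*$ attain $\zeta^*(G) = \eps$ with corresponding eigenvector $x^* \in \R^V$ satisfying
\[
\sum_{uv \in E} W^*(u,v)(x^*_u + x^*_v)^2 \;\le\; \eps \sum_v \pi(v)(x^*_v)^2.
\]
This signless variational inequality forces most heavy edges to have endpoints of opposite sign, so $x^*$ already carries a bipartition-like structure. I would then perform a sign-aware random threshold rounding: sample $t^2$ uniformly in $[0, \|x^*\|_\infty^2]$ and take $A_t := \{v : x^*_v \ge t\}$, $B_t := \{v : x^*_v \le -t\}$, $S_t := A_t \cup B_t$. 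To enforce that $G[S_t]$ is bipartite, I remove the smaller-magnitude endpoint of each monochromatic edge (both endpoints in $A_t$ or both in $B_t$) into the boundary, producing a set $S'_t$ with induced bipartite subgraph and $\partial_G S'_t$ contained in $\partial_G S_t$ together with the removed vertices.

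The bulk of the work, and the main obstacle, is the expectation analysis: the $\pi$-masses of $\partial_G S_t$ and of the removed monochromatic endpoints must be controlled simultaneously against $\E[\pi(S'_t)]$ with at most a $\sqrt{\log d}$ loss factor. This is where the LRV--BHT Cauchy-Schwarz inequality underlying the proof of \autoref{thm:Cheeger-vertex} enters, which I expect can be adapted from the standard $(x_u - x_v)^2$ form to the signless $(x_u + x_v)^2$ form by splitting $x^*$ into its positive and negative parts, charging monochromatic edges against $W^*(u,v)(x^*_u + x^*_v)^2$ and charging boundary edges against magnitude differences of the form $\bigl||x^*_u|^2 - |x^*_v|^2\bigr|$. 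Combining these two bounds with Cauchy-Schwarz should yield $\E[\pi(\partial_G S'_t)] \lesssim \sqrt{\eps \log d}\cdot \E[\pi(S'_t)]$, so some threshold $t$ produces a bipartite set certifying $\psi_B(G) \lesssim \sqrt{\zeta^*(G) \log d}$.
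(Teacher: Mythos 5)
Your easy direction is fine and is essentially the paper's argument seen from the primal side: the paper certifies $\zeta^*(G)=\nu(G)\le\nu^{(1)}(G)\le 2\psi_B(G)$ with the dual test pair $f=\pm 1$ on the bipartition and $g=\mathbf{1}_{\partial S}$, while you plug the test vector $\mathbf{1}_A-\mathbf{1}_B$ into the quadratic form of $D_P+P$ for every feasible $P$; these are the same computation. Two small remarks: the ``$\le 1$ relaxation is essential'' claim is immaterial (the bound $\sum_{u\in S}P(v,u)\le 1$ holds just as well under the equality constraint), and the degenerate case $\psi_B(G)=1$ should be disposed of by the trivial bound $\zeta^*(G)\le\lambda_{\max}(D_P+P)\le 2$.

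The hard direction, however, has a genuine gap. You round the bottom eigenvector $x^*$ of the \emph{optimal reweighting} $P^*$, and your only inequality is $\sum_{uv\in E}W^*(u,v)(x^*_u+x^*_v)^2\le \eps\sum_v\pi(v)(x^*_v)^2$. This inequality only sees edges on which $P^*$ actually places weight; $P^*$ is merely supported inside $E$ and may put zero (or negligible) weight on most edges of $G$. Consequently a monochromatic edge of $G[S_t]$, or an edge of $\delta(S_t)$, with $W^*(u,v)=0$ contributes nothing to the energy and cannot be ``charged'' at all, yet it is exactly such edges that force vertices into the bipartite vertex boundary or into your removal set. More structurally, what $\psi_B$ requires is a bound on the $\pi$-mass of a \emph{vertex} set, i.e.\ a per-vertex quantity of the form $\pi(v)\max_{u\sim v}(\cdot)$, and nothing of that form is controlled by an edge-weighted sum against one particular $W^*$; your plan would at best prove a Trevisan-type statement about the bipartite \emph{edge} conductance of the weighted graph $W^*$, which does not transfer to vertex expansion of $G$. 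Relatedly, a one-dimensional eigenvector admits a lossless Cheeger-type sweep, so your scheme has no mechanism that produces the $\sqrt{\log d}$ factor — you insert it by fiat — and if the scheme worked it would prove $\psi_B^2\lesssim\zeta^*$ outright, which the paper's tightness discussion for \autoref{thm:Cheeger-vertex} strongly suggests is false. The paper avoids all of this by working on the dual side: strong duality $\zeta^*(G)=\nu(G)$ (\autoref{prop:Roch-dual-bipartite}) gives a vector solution $f$ with per-vertex variables $g\ge 0$ satisfying $g(u)+g(v)\ge\norm{f(u)+f(v)}^2$ on \emph{every} edge of $G$ and objective $\sum_v\pi(v)g(v)$; the $\log d$ then arises from Gaussian projection of this vector solution to one dimension via the orientation/directed program (\autoref{prop:projection-nu}), and the two-sided sweep is analyzed against the directed bipartite vertex boundary (\autoref{prop:threshold-rounding-nu}) with a final cleanup (\autoref{lem:cleanup-bipartite}) extracting the induced bipartite set. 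The per-vertex dual variable $g$, i.e.\ $\max_{u\to v}(x(u)+x(v))^2$ after orienting, is precisely the handle on $\pi(\partial\cdot)$ that your eigenvector inequality does not provide.
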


This is the first approximation algorithm for bipartite vertex expansion to our knowledge.
Finding a two-colorable set with small vertex expansion is one of the three ways in Blum's coloring tools~\cite{Blu94} to make progress in designing approximation algorithms for coloring $3$-colorable graphs.
Indeed, it is in this context that Arora and Ge~\cite{AG11} made the reweighting conjecture mentioned in the introduction.
\autoref{thm:Cheeger-bipartite} does not imply anything new about approximating graph coloring, but we hope that it is a step towards answering Arora and Ge's conjecture.

\subsubsection{Higher-Order Cheeger Inequality for Vertex Expansion}

Lee, Oveis Gharan and Trevisan~\cite{LOT12} and Louis, Raghavendra, Tetali and Vempala~\cite{LRTV12} proved the higher-order Cheeger inequalities, which state that the $k$-th smallest eigenvalue $\lambda_k(G)$ of the normalized Laplacian matrix of $G=(V,E)$ is small if and only if the $k$-way edge conductance $\phi_k(G)$ is small.
More precisely, they proved that $\lambda_k(G) \lesssim \phi_k(G) \lesssim k^2 \sqrt{\lambda_k}$ and $\lambda_{\frac{k}{2}}(G) \lesssim \sqrt{\lambda_{k} \log k}$.
We consider the analogous notion of $k$-way vertex expansion.

\begin{definition}[$k$-Way Vertex Expansion] \label{def:k-way-vertex-expansion}
Given an undirected graph $G = (V, E)$ and a probability distribution $\pi$ on $V$, the $k$-way vertex expansion of $G$ is defined as
\[
\psi_k(G) := \min\Big\{1, \min_{S_1, \dots, S_k \subseteq V} \max_{1 \leq i \leq k} \psi(S_i)\Big\},
\]
where the minimum is taken over pairwise disjoint subsets $S_1, \dots, S_k$ of $V$.
\end{definition}

\begin{definition}[Maximum Reweighted $k$-th Smallest Eigenvalue] 
\label{def:reweighted-lambda-k}
Given an undirected graph $G=(V,E)$ and a probability distribution $\pi$ on $V$,
the maximum reweighted $k$-th smallest eigenvalue of the normalized Laplacian matrix of $G$ is defined as $\lambda_k^*(G) := \max_{P \geq 0} \lambda_k(I-P)$, 
where $P$ is subject to the same constraints stated in \autoref{def:BDX-primal}.
\end{definition} 

We prove an analog of higher-order Cheeger inequalities that the maximum reweighted $k$-th smallest eigenvalue is small if and only if the $k$-way vertex expansion is small.
As in previous work~\cite{LOT12,LRTV12}, there is a better approximation guarantee if we consider only $\frac{k}{2}$-way vertex expansion. 

\begin{theorem}[Higher-Order Cheeger Inequality for Vertex Expansion] \label{thm:higher-order-vertex}
For any undirected graph $G = (V,E)$ with maximum degree $d$ and any probability distribution $\pi$ on $V$,
\[
\lambda_k^*(G) \lesssim \psi_k(G) \lesssim k^{\frac{9}{2}} \log k \sqrt{\log d \cdot \lambda_k^*(G)}
\quad {\rm and} \quad
\psi_{\frac{k}{2}}(G) \lesssim \sqrt{k} \log k \sqrt{\log d \cdot \lambda_{k}^*(G)}.
\]
\end{theorem}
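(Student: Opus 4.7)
\emph{Easy direction} ($\lambda_k^*(G)\lesssim \psi_k(G)$). Let $S_1,\ldots,S_k$ be pairwise disjoint subsets attaining $\psi_k(G)$. For any feasible $P$ in \autoref{def:BDX-primal}, time-reversibility makes $I-P$ self-adjoint in the inner product $\langle f,g\rangle_\pi:=\sum_v\pi(v)f(v)g(v)$, with Rayleigh quotient $\tfrac12\sum_{u,v}\pi(u)P(u,v)(f(u)-f(v))^2/\|f\|_\pi^2$. Plug the $\pi$-orthogonal indicators $\mathbf{1}_{S_i}$ into the variational characterization of $\lambda_k(I-P)$; time-reversibility gives
\[
\sum_{u\in S_i,\,v\notin S_i}\pi(u)P(u,v)=\sum_{v\in\partial S_i}\pi(v)\sum_{u\in S_i}P(v,u)\le \pi(\partial S_i),
\]
since $\sum_u P(v,u)\le 1$. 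Each Rayleigh quotient is therefore at most $\psi(S_i)\le\psi_k(G)$, so $\lambda_k(I-P)\le\psi_k(G)$, and the bound follows by maximizing over $P$.

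\emph{Hard direction.} I would follow the template of the higher-order Cheeger inequalities of Lee--Oveis Gharan--Trevisan~\cite{LOT12} and Louis--Raghavendra--Tetali--Vempala~\cite{LRTV12}, replacing the final Cheeger rounding step with the $\ell_\infty$-SDP rounding of Louis--Raghavendra--Vempala~\cite{LRV13} that underlies \autoref{thm:Cheeger-vertex}. Fix an optimal reweighting $P^*$ attaining $\lambda_k^*(G)$, let $f_1,\ldots,f_k$ be its bottom $k$ eigenfunctions, orthonormal in $\ell_2(\pi)$, and form the spectral embedding $F:V\to\R^k$ via $F(v)=(f_1(v),\ldots,f_k(v))$, which is $\pi$-isotropic. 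Apply the LOT disjoint-support machinery to $F$: an \emph{isotropic spread} style construction yields $k$ pairwise disjointly supported nonnegative functions $g_i$ with reweighted Rayleigh quotient $\mathcal R(g_i):=\langle g_i,(I-P^*)g_i\rangle_\pi/\|g_i\|_\pi^2$ bounded by $\poly(k)\cdot\lambda_k^*(G)$; a \emph{random partitioning} style construction yields $k/2$ such functions with $\mathcal R(g_i)\lesssim k\log^2 k\cdot\lambda_k^*(G)$. Finally, run the rounding from \autoref{thm:Cheeger-vertex} on each $g_i$ separately to extract a set $S_i\subseteq\supp(g_i)$ with $\psi(S_i)\lesssim\sqrt{\log d\cdot\mathcal R(g_i)}$; disjointness of supports implies disjointness of the $S_i$, and combining gives the two inequalities.

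\emph{Main obstacle.} The principal technical step is executing the LRV rounding on each localized $g_i$. The original rounding of \cite{LRV13} is driven by a feasible solution to the $\ell_\infty$ vertex-expansion SDP, whereas here the input is a function of small reweighted Dirichlet form $\sum_{uv\in E}\pi(u)P^*(u,v)(g_i(u)-g_i(v))^2$. The bridge is that $\sum_v P^*(u,v)\le 1$, so every combinatorial edge crossing the vertex boundary of a level set of $g_i$ carries enough $P^*$-mass that the reweighted Dirichlet form upper-bounds the combinatorial vertex boundary; concretely, $g_i$ should be convertible into a feasible $\ell_\infty$ SDP vector supported on $\supp(g_i)$ of comparable value, so that the rounding analysis of \autoref{thm:Cheeger-vertex} applies essentially verbatim. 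A secondary but routine task is restating the LOT disjoint-support lemmas in the $\pi$-weighted inner product and with the edge weights $\pi(u)P^*(u,v)$; this should go through because those lemmas are fundamentally geometric on the embedding $F$.
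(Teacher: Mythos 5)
Your easy direction matches the paper's (\autoref{lem:higher-order-easy}); just note that plugging disjointly supported indicators into the variational characterization needs the standard extra step that any function in their span has Rayleigh quotient at most twice the maximum of the individual quotients, which is where the hidden constant comes from.

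The hard direction has a genuine gap, exactly at the step you yourself flag as the main obstacle. You work on the primal side: fix an optimal $P^*$ attaining $\lambda_k^*(G)$, take its bottom-$k$ eigenfunctions, run the LOT machinery, and then claim each localized $g_i$ "should be convertible into a feasible $\ell_\infty$ SDP vector of comparable value." The proposed bridge --- that $\sum_v P^*(u,v)\le 1$ forces every edge crossing a level set of $g_i$ to carry enough $P^*$-mass --- is false: row-stochasticity constrains row sums only, and the optimizer $P^*$ may place weight $0$ on many edges of $G$ (it is merely required to be supported on $E$). Hence a small reweighted Dirichlet form $\sum_{uv\in E}\pi(u)P^*(u,v)\big(g_i(u)-g_i(v)\big)^2$ gives no control on $|g_i(u)-g_i(v)|$ across edges of small or zero $P^*$-weight, which is precisely what the rounding behind \autoref{thm:Cheeger-vertex} needs: that rounding consumes a feasible solution of Roch's dual $\gamma(G)$ (or its directed variant), whose defining feature is the per-edge constraint $g(u)+g(v)\ge \norm{f(u)-f(v)}^2$ for \emph{every} $uv\in E$; this is what ties threshold sets to the combinatorial vertex boundary. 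Eigenfunctions of one particular reweighting carry no such certificate, and nothing in your argument uses optimality of $P^*$ beyond the single equality $\lambda_k(I-P^*)=\lambda_k^*(G)$ --- for a non-optimal $P$ with $\lambda_k(I-P)$ small the same construction would visibly fail, so some use of duality is unavoidable.

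This is why the paper stays on the dual side throughout: since no convex formulation of $\lambda_k^*$ is known, it passes to $\sigma_k^*(G)$ in \autoref{def:reweighted-sum} (losing a factor $k$, which is part of the $k^{9/2}$), derives via von Neumann's minimax theorem and LP duality the dual program $\kappa(G)$ in \autoref{def:sigma-k-dual}, whose solutions automatically satisfy the per-edge constraints together with the sub-isotropy condition $\sum_v\pi(v)f(v)f(v)^T\preceq I$, and only then applies Gaussian dimension reduction, the LOT padded-decomposition and smooth-localization machinery, and the rounding of \autoref{sec:Cheeger-vertex} to each disjointly supported piece. To salvage your plan you would have to replace the spectral embedding of $P^*$ by a (near-)optimal solution of this dual SDP --- at which point you have reconstructed the paper's proof rather than found an alternative to it.
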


Chan, Louis, Tang and Zhang~\cite{CLTZ18} developed a spectral theory for hypergraphs and proved a higher-order Cheeger inequality for hypergraph (edge) expansion.
Through a reduction from vertex expansion to hypergraph expansion, they proved that $\psi_{\frac{k}{2}}(G) \lesssim k^{\frac{5}{2}} \log k \log \log k \cdot \log d \cdot \sqrt{\xi_{k}}$ for graphs with bounded ratio between the maximum degree and the minimum degree, where $\xi_{k} \lesssim \psi_{k}(G)$ is a relaxation for $k$-way vertex expansion.
Compared to their result, \autoref{thm:higher-order-vertex} does not require the assumption about the maximum degree and the minimum degree of $G$, and has a better approximation ratio for $\frac{k}{2}$-way vertex expansion.
Furthermore, \autoref{thm:higher-order-vertex} provides the first true approximation algorithm for $k$-way vertex expansion $\psi_k(G)$ to our knowledge.

\subsubsection{Improved Cheeger Inequality for Vertex Expansion}

Kwok, Lau, Lee, Oveis Gharan, and Trevisan~\cite{KLLOT13} proved an improved Cheeger inequality that $\phi(G) \lesssim k \lambda_2(G) / \sqrt{\lambda_k(G)}$ for any $k \geq 2$.
This shows that $\lambda_2(G)$ is a tighter approximation to $\phi(G)$ when $\lambda_k(G)$ is large for a small $k$.
The result provides an explanation for the good empirical performance of the spectral partitioning algorithm.

We prove an analogous result that if the $\lambda_k^*(G)$ is large for a small $k$, then $\lambda_2^*(G)$ is a tighter approximation to the vertex expansion $\psi(G)$.
The following result is close to the tight result in~\cite{KLLOT13} for edge conductance as we will elaborate in \autoref{rem:tight}.

\begin{theorem}[Improved Cheeger Inequality for Vertex Expansion] \label{thm:improved-Cheeger-vertex}
For any undirected graph $G = (V,E)$ with maximum degree $d$, and for any probability distribution $\pi$ on $V$ and any $k \geq 2$,
\[
\lambda_2^*(G) \lesssim \psi(G) 
\lesssim \frac{k^{\frac{3}{2}} \cdot \lambda_2^*(G) \cdot \log d}{\sqrt{\lambda_k^*(G)}}.
\] 
\end{theorem}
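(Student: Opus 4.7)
The upper bound $\lambda_2^*(G) \lesssim \psi(G)$ is the easy direction and follows immediately from \autoref{thm:Cheeger-vertex}. The substance of the theorem is therefore the improved upper bound on $\psi(G)$, which I would obtain by porting the spectral-partitioning strategy of Kwok--Lau--Lee--Oveis Gharan--Trevisan~\cite{KLLOT13} into the reweighted-eigenvalue framework of this paper.

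First, I would take an optimal reweighting $P^*$ certifying $\lambda_2^*(G)$ together with a second eigenvector $f$ of the reweighted Laplacian $I-P^*$, whose Rayleigh quotient equals $\lambda_2^*$. The heart of~\cite{KLLOT13} is the structural lemma that if $\lambda_k$ is large then the second eigenvector is close in $L_2(\pi)$ to a $(2k-2)$-step function $g$ on the real line, with error $\|f-g\|_\pi^2 \lesssim \lambda_2/\lambda_k$. I would establish a reweighted analog: if no sweep cut of $f$ has good vertex-boundary behavior, then one can localize $f$ into $k$ disjointly-supported pieces $\{f_i\}_{i=1}^{k}$ each of reweighted Dirichlet energy $O(\lambda_2^*)$, and package them into a feasible solution for the program in \autoref{def:reweighted-lambda-k} certifying $\lambda_k^*(G) \lesssim \lambda_2^*/(\text{sweep bound})^2$, contradicting a large value of $\lambda_k^*(G)$.

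Given this step-function approximation, I would then apply the vertex-expansion threshold rounding of Louis--Raghavendra--Vempala~\cite{LRV13} already used to prove \autoref{thm:Cheeger-vertex}, now run on the level sets of the approximating step function, carefully tracking expected $\pi$-weighted vertex boundary against expected $\pi$-mass. The step structure permits a KLLOT-style trade of $\sqrt{\lambda_k^*}$ in the denominator for a polynomial-in-$k$ factor in the numerator, while the vertex-expansion rounding contributes a single $\log d$ factor, just as in \autoref{thm:Cheeger-vertex}. Combining the two ingredients should yield $\psi(G) \lesssim k^{3/2}\, \lambda_2^*(G)\, \log d / \sqrt{\lambda_k^*(G)}$.

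The main obstacle is that $\lambda_2^*$ and $\lambda_k^*$ may be attained by \emph{different} reweightings, whereas~\cite{KLLOT13} crucially exploits the spectrum of a single Laplacian. To get around this, I would avoid pointwise spectral decomposition and instead certify the step-function structure combinatorially: the $k$ disjointly-supported pieces extracted from $f$ must witness a small value of $\lambda_k^*(G)$ over \emph{all} admissible reweightings through the min-max form of \autoref{def:reweighted-lambda-k}, so that the failure of every sweep cut forces a contradiction regardless of which reweighting attains $\lambda_k^*$.
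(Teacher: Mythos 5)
Your high-level plan (a KLLOT-style two-step argument: show that a large $k$-th reweighted eigenvalue forces the one-dimensional solution to be close to a $k$-step function, then run an improved threshold rounding on that step structure) is indeed the paper's strategy, and you correctly identify the central obstacle that $\lambda_2^*$ and $\lambda_k^*$ may be attained by different reweightings. However, as written your proposal has a genuine gap exactly at that obstacle. You propose to localize the second eigenvector $f$ of $I-P^*$ (for one optimal reweighting $P^*$) into $k$ disjointly supported pieces and plug them into ``the min-max form of \autoref{def:reweighted-lambda-k}'' to certify that $\lambda_k^*(G)$ is small. But $\lambda_k^*(G)=\max_P \lambda_k(I-P)$ is a maximum over reweightings, and the paper explicitly notes that no convex (min-max) reformulation of it is known; disjointly supported test functions extracted from $f$ only have controlled Dirichlet energy with respect to the particular quadratic form you localized (that of $P^*$), so they do not upper bound $\lambda_k(I-P)$ for every admissible $P$, and the contradiction you want does not follow. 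The paper's way around this is the new ingredient you are missing: it replaces $\lambda_k^*$ by $\sigma_k^*(G)$, the maximum reweighted sum of the $k$ smallest eigenvalues, and derives (via von Neumann's minimax theorem) its SDP dual $\kappa(G)$ in \autoref{def:sigma-k-dual}, whose per-vertex budget $g$ with constraints $g(u)+g(v)\ge\norm{f(u)-f(v)}^2$ makes a feasible solution a certificate that is valid \emph{uniformly over all reweightings}. Concretely, in \autoref{prop:k-step-construction} the failure of the greedy step-function construction yields pieces $\bar f_1,\dots,\bar f_k$ which, rescaled by $1/\sqrt{\mu}$ together with $\bar g = g/\mu$, form a feasible solution to $\kappa(G)$ of value $\sigma_k^*(G)/2$, contradicting strong duality $\kappa(G)=\sigma_k^*(G)$; the final $k^{3/2}$ comes from $\sigma_k^*(G)\le k\,\lambda_k^*(G)$.

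A second, related inaccuracy: the object you round should not be an eigenvector of $I-P^*$ but an (approximately) optimal solution of Roch's one-dimensional dual $\gamma^{(1)}(G)$ obtained by Gaussian projection (\autoref{prop:improved-projection}); this is where the single $\log d$ factor enters. An eigenvector of the reweighted Laplacian controls the \emph{edge} conductance of the reweighted graph, not the vertex expansion of $G$; the bridge to vertex expansion is the dual variable $g$ (equivalently the orientation/directed program $\directed{\gamma}^{(1)}$), and the improved rounding in \autoref{prop:k-step-rounding} is an analysis of sweep cuts of that dual solution with thresholds sampled proportionally to the distance $\rho(t)$ to the step values, followed by the directed-boundary postprocessing of \autoref{lem:cleanup}. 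So while your outline captures the intended architecture, the proof would not go through without (i) working in the dual programs $\gamma^{(1)}$ and $\kappa=\sigma_k^*$ rather than with eigenvectors of a fixed reweighting, and (ii) using the $g$-budget/sub-isotropy structure of $\kappa(G)$ to make the $k$ disjoint pieces a reweighting-independent certificate.
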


We remark that the reweighting used in $\lambda_2^*(G)$ and $\lambda_k^*(G)$ may be different.
Through \autoref{thm:higher-order-vertex}, we obtain the following corollary that only depends on the graph structure: If the $k$-way vertex expansion $\psi_k(G)$ is large for a small $k$, then $\lambda_2^*(G)$ is a tighter approximation to $\psi(G)$.

\subsubsection{Vertex Expansion of $0/1$-Polytopes} \label{sec:intro-polytope}

Mihail and Vazirani (see~\cite{FM92}) conjectured that the graph $G=(V,E)$ (i.e.~$1$-skeleton) of any $0/1$-polytope is an edge expander, such that $|\delta(S)|/|S| \geq 1$ for every subset $S \subseteq V$ with $|S| \leq |V|/2$, where $\delta(S)$ denotes the set of edges between $S$ and $V \setminus S$.
This conjecture would imply fast mixing time of random walks to the stationary distribution, with applications in designing fast sampling algorithms for many classes of combinatorial objects.
The conjecture is proved to be correct in several cases~\cite{FM92,Kai04,ALOV19}, most notably the recent resolution of the matroid expansion conjecture~\cite{ALOV19} by Anari, Liu, Oveis Gharan and Vinzant.

In all these positive results, the Markov chain can be set up so that the stationary distribution is the uniform distribution, with the mixing time to the stationary distribution poly-logarithmic in the graph size.
Then the fast sampling algorithms can also be used to obtain an approximate counting algorithm on the number of vertices in the given $0/1$-polytope, with poly-logarithmic runtime in the graph size.
Therefore, sampling from the uniform distribution is usually the setting of interest.

Inspired by the connection between fastest mixing time and vertex expansion, we consider a variant of Mihail and Vazirani's conjecture:
Is the graph of every $0/1$-polytope a vertex expander?
Perhaps surprisingly, we show that there are $0/1$-polytopes whose graphs are very poor vertex expanders.

\begin{theorem}[$0/1$-Polytopes with Poor Vertex Expansion] \label{thm:01-vertex-expansion}
Let $\pi$ be the uniform distribution.
For any $k > 2$ and any $n > 2k$ sufficiently large, there is a $0/1$-polytope $Q=Q_{n,k} \subseteq \{0,1\}^n$ with $O(n^k)$ vertices and  
\[
\psi(Q) \lesssim \frac{(4k)^k}{n^{k-2}}.
\]
\end{theorem}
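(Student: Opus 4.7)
My plan is to exhibit an explicit $0/1$-polytope $Q_{n,k} \subseteq \{0,1\}^n$ together with a subset $S$ of its vertex set $V$ for which $|\partial S| / |S| \lesssim (4k)^k / n^{k-2}$.  The structural intuition is to engineer $Q_{n,k}$ so that its $1$-skeleton has a \emph{bottleneck}: a partition of $V$ into two chambers of size $\Theta(n^k)$ joined across a thin interface of size $O((4k)^k \cdot n^2)$, such that every polytope-edge crossing between the chambers has at least one endpoint in the interface.

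For the construction I plan to partition the $n$ coordinates into a ``signature'' block $T$ of constant size $O(k)$ and a ``bulk'' block $B = [n] \setminus T$ of size $n - O(k)$.  The vertex set $V \subseteq \{0,1\}^n$ is then a disjoint union $V_1 \sqcup V_{\text{int}} \sqcup V_2$, where each element of $V_\alpha$ (for $\alpha \in \{1, \text{int}, 2\}$) is a $0/1$-vector obtained by combining a signature $\sigma \in \Sigma_\alpha \subseteq \{0,1\}^T$ with a ``bulk payload'' indexing a subset of $B$, the payloads being chosen so that $|V_1| = |V_2| = \Theta(n^k)$ and $|V_{\text{int}}| = O((4k)^k \cdot n^2)$.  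Each $v \in V$ is automatically an extreme point of $\mathrm{conv}(V)$ as a $0/1$-vector, so $V$ is the vertex set of $Q_{n,k}$ and $|V(Q_{n,k})| = \Theta(n^k)$.

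To establish the bottleneck claim I would invoke the standard \emph{midpoint-uniqueness} characterization for $0/1$-polytopes: distinct $u, v \in V$ span an edge of $\mathrm{conv}(V)$ iff $\tfrac{1}{2}(u+v)$ admits a unique representation as a convex combination of points in $V$, namely $\tfrac{1}{2} u + \tfrac{1}{2} v$.  For a cross-chamber candidate edge $\{u, v\}$ with $u \in V_1$, $v \in V_2$ and neither endpoint in $V_{\text{int}}$, I would display an alternative decomposition of $\tfrac{1}{2}(u+v)$ as $\tfrac{1}{2}(w_1 + w_2)$ with $w_1, w_2 \in V \setminus \{u,v\}$ obtained by mixing the signatures of $u$ and $v$; the membership $w_1, w_2 \in V$ will follow from $\Sigma_1 \cup \Sigma_2$ being closed under the relevant signature-exchange operation.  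This certifies that $\{u, v\}$ is not an edge of $Q_{n,k}$, so every polytope-edge leaving $V_1$ must land in $V_{\text{int}}$.

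Taking $S = V_1$ then gives $|S| = \Theta(n^k)$, $\pi(S) \leq 1/2$ by the $V_1 \leftrightarrow V_2$ symmetry, and $|\partial S| \leq |V_{\text{int}}| = O((4k)^k \cdot n^2)$ by the bottleneck argument, so $\psi(S) \lesssim (4k)^k / n^{k-2}$, establishing the theorem.  The main obstacle will be the combinatorial design of the signature families $\Sigma_1, \Sigma_{\text{int}}, \Sigma_2$ on the constant-size block $T$ so that (i) the required swap witnesses $w_1, w_2$ always lie in $V$, and (ii) the interface bound $|V_{\text{int}}| \leq (4k)^k \cdot n^2$ actually holds; this reduces to a finite case analysis depending only on $k$, whose enumeration accounts for the $(4k)^k$ factor in the final bound.
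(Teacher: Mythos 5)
Your high-level skeleton (two chambers of size $\Theta(n^k)$, a thin interface, and certifying non-edges by exhibiting an alternative convex representation of the midpoint $\tfrac12(u+v)$) matches the paper's strategy, but the actual construction---which is the entire mathematical content of the theorem---is missing, and the mechanism you sketch for producing it does not work. Note first that if $u\in V_1$ and $v\in V_2$ agree on a coordinate, then \emph{every} vertex appearing in any convex representation of $\tfrac12(u+v)$ must agree with them on that coordinate (a weighted average of $0/1$ values equals $0$ or $1$ only if all terms do). Consequently, if your two chambers share bulk payloads, then for a cross pair with equal payload $b$ all potential blockers must have payload exactly $b$, so blocking reduces to demanding that $\sigma_1\sigma_2$ is a non-edge of the polytope spanned by the signatures available in the fiber over $b$; since the graph of a polytope is connected, this forces the interface to place a vertex in (essentially) every shared fiber, i.e.\ $\Omega(n^k)$ interface vertices, destroying the bound. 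If instead the payload families of $V_1$ and $V_2$ are disjoint, your proposed swap witnesses $(\sigma_u,b_v)$ and $(\sigma_v,b_u)$ are generally not in $V$ at all (they would have to populate $\Sigma_1\times\mathcal B_2$ and $\Sigma_2\times\mathcal B_1$, which is again $\Omega(n^k)$ points), so all the blocking must be carried out by the $O((4k)^k n^2)$ interface points simultaneously for all $\Theta(n^{2k})$ cross pairs. Designing such a shared family of blockers is precisely the hard step, and nothing in the proposal addresses it; in particular it is not a ``finite case analysis depending only on $k$'', as betrayed by the $n^2$ factor in your own interface bound, which your constant-size signature block cannot produce.

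For comparison, the paper resolves exactly this difficulty by taking the chambers to be the full slices $L=\{x:|x|=k\}$ and $R=\{x:|x|=n-k\}$, so that every cross midpoint lies on the middle level $|x|=n/2$ and matches the common pattern of $u$ and $v$ (which has at most $k$ zeros and $k$ ones), and then choosing the interface $M$ as $\Theta(4^k n^2)$ \emph{random} middle-level points: for each pattern $p$ and each $p$-consistent separating hyperplane, a random middle-level point matches $p$ with probability $\gtrsim 4^{-k}$ and lands on the nonnegative side with probability $\ge 1/2$, and a union bound over the $\le (k+1)n^{2k}$ patterns and the $\le 2^{n^2}$ linear threshold functions on the cube (Cover's bound---this is where the $n^2$ comes from) shows a single $M$ blocks all cross pairs. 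Your write-up contains no analogue of this sharing mechanism, so as it stands the proposal reduces the theorem to an unproved combinatorial design whose most natural instantiations provably fail; the gap is essential, not cosmetic.
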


\autoref{thm:01-vertex-expansion} and \autoref{thm:OTZ22} together imply that even the fastest mixing time of the reversible random walks on some $0/1$-polytopes is almost linear in the graph size.

\begin{corollary}[Torpid Mixing to Uniform Distribution] \label{cor:01-slow-uniform}
For any constant $k > 2$, 
there exists a $0/1$-polytope $Q$ such that any reversible Markov chain on its graph $G_Q=(V,E)$ with stationary distribution $\vec{1}/|V|$ has mixing time $\Omega\big(|V|^{1-\frac{2}{k}}\big)$.
\end{corollary}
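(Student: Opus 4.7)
The plan is to obtain the corollary by directly combining the construction in \autoref{thm:01-vertex-expansion} with the lower-bound direction of the Cheeger inequality from \autoref{thm:OTZ22}. The key observation is that by the definition of the fastest mixing time, any reversible Markov chain on $G_Q$ with uniform stationary distribution has mixing time at least $\tau^*(G_Q)$, so it suffices to lower-bound $\tau^*(G_Q)$, and the lower bound $\tau^*(G) \gtrsim 1/\psi(G)$ in \autoref{thm:OTZ22} reduces the task to exhibiting a $0/1$-polytope whose $1$-skeleton has correspondingly poor vertex expansion.

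Fix the constant $k > 2$ and take $Q := Q_{n,k}$ from \autoref{thm:01-vertex-expansion} for a sufficiently large $n$, with $V := V(G_Q)$. Since $k$ is constant, the hypothesis $|V| = \Theta(n^k)$ gives $n = \Theta(|V|^{1/k})$, so substituting into the expansion bound yields
\[
\psi(G_Q) \;\lesssim\; \frac{(4k)^k}{n^{k-2}} \;\lesssim\; \frac{(4k)^k}{|V|^{(k-2)/k}} \;=\; O_k\!\left(|V|^{-(1-2/k)}\right),
\]
where the $O_k$ hides a constant depending only on $k$. Plugging this into the lower bound of \autoref{thm:OTZ22} gives
\[
\tau^*(G_Q) \;\gtrsim\; \frac{1}{\psi(G_Q)} \;\gtrsim\; \frac{|V|^{1-2/k}}{(4k)^k} \;=\; \Omega\!\left(|V|^{1-2/k}\right),
\]
which transfers to every reversible chain with uniform stationary distribution because $\tau^*$ is by definition the minimum over such chains.

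All the real work is contained in \autoref{thm:01-vertex-expansion}: the hard step is constructing a $0/1$-polytope whose $1$-skeleton exhibits vertex expansion that decays polynomially in $n$. Once that construction is in hand, the corollary itself is a routine substitution, and I do not anticipate any additional obstacle.
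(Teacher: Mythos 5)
Your proposal is correct and follows essentially the same route as the paper: combine the expansion bound $\psi(G_Q) \lesssim (4k)^k/n^{k-2}$ from \autoref{thm:01-vertex-expansion} with the easy direction $\tau^*(G) \gtrsim 1/\psi(G)$ of \autoref{thm:OTZ22}, then convert $n$ to $|V|$ via the $|V| = O(n^k)$ bound (the paper uses $|V| \lesssim \binom{n}{k} \le (en/k)^k$, which is the same direction you need). The only cosmetic difference is that you state $|V| = \Theta(n^k)$ where only the upper bound $O(n^k)$ is needed and stated in the theorem, but that direction is exactly the one your substitution uses, so there is no gap.
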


While \autoref{thm:01-vertex-expansion} does not provide a counterexample to the conjecture of Mihail and Vazirani, it shows that even if the conjecture is true, there are $0/1$-polytopes for which random walks cannot be used for efficient uniform sampling and for efficient approximate counting.

\begin{remark} \label{rem:Gillmann}
  After posting the first version of this paper on arXiv, we recently found out that Gillmann \cite[Chapter 3.2]{Gil07} has already constructed examples of 0/1-polytopes whose graphs have poor vertex expansion. The polytopes $Q \subseteq \{0, 1\}^n$ constructed have $2^{(h(c) + o(1)) n}$ vertices and satisfy
  \[
    \psi(Q) \lesssim 2^{- (h(c) - 2c) n},
  \]
  where $h(x) := -x \log x - (1-x) \log(1-x)$ is the binary entropy function and $c := 1/5$ (correspondingly $h(c) = 0.7219...$). Applying $\autoref{thm:OTZ22}$, this would imply a fastest mixing time bound of $\Omega(|V|^{0.4459...})$.
  By choosing smaller values of $c$, an almost linear fastest mixing time bound can be obtained as in \autoref{cor:01-slow-uniform}.
\end{remark}

\subsection{Related Work}

In this subsection, we review previous spectral approaches for vertex expansion and compare them to the current approach using reweighted eigenvalues.
For previous results about Cheeger's inequalities for edge conductances mentioned in the introduction, they will be discussed in the corresponding technical sections.

{\bf Second Eigenvalue and Vertex Expansion:}
There are classical results in spectral graph theory relating vertex expansions and (ordinary) eigenvalues.
For any graph $G=(V,E)$ with maximum degree $d$,
let $\lambda'_2(G)$ be the second smallest eigenvalue of the (unnormalized) Laplacian matrix, it is known that
\[
\psi(G) \geq \frac{2\lambda'_2(G)}{d+2\lambda'_2(G)}
\quad {\rm and} \quad
\lambda'_2(G) \geq \frac{\psi(G)^2}{4+2\psi(G)^2},
\]
where the first inequality is the ``easy'' direction proved by Tanner~\cite{Tan84} and Alon and Milman~\cite{AM85}, and the second inequality is the ``hard'' direction proved by Alon~\cite{Alo86}.
These imply that $\lambda'_2(G)$ can be used to give an $O(\sqrt{d \cdot \psi(G)})$-approximation algorithm to $\psi(G)$.
Compared to Cheeger's inequality for edge conductance that $\phi(G)^2/2 \leq \lambda_2(G) \leq 2\phi(G)$ where $\lambda_2(G)$ is the second smallest eigenvalue of the normalized Laplacian matrix, there is an extra factor $d$ between the upper and lower bounds.

{\bf Spectral Formulation:}
Bobkov, Houdr\'e and Tetali~\cite{BHT00} defined an interesting ``spectral'' quantity called $\lambda_{\infty}$ (see \autoref{def:lambda-infty}), which satisfies an exact analog of Cheeger's inequality for symmetric vertex expansion:
\[
\frac{1}{2} \psi_{\text{sym}}(G)^2 \leq \lambda_{\infty} \leq 2 \psi_{\text{sym}}(G),
\]
where the symmetric vertex boundary of a set $S \subseteq V$ is defined as $\partial_{\text{sym}}(S) := \partial(S) \cup \partial (V-S)$ and the symmetric vertex expansion of $S$ is defined as $\psi_{\text{sym}}(S) := |\partial_{\text{sym}}(S)| / |S|$, and the symmetric vertex expansion of a graph $G$ is defined as $\psi_{\text{sym}}(G) := \min_{S:|S| \leq |V|/2} \psi_{\text{sym}}(S)$.
However, it is not known how to compute $\lambda_{\infty}$ efficiently, and it is recently shown to be NP-hard to compute $\lambda_{\infty}$ by Farhadi, Louis and Tetali~\cite{FLT20}.

{\bf Semidefinite Programming Relaxations:}
Louis, Raghavendra and Vempala~\cite{LRV13} gave a semidefinite programming relaxation ${\sf sdp}_{\infty}$ for $\lambda_{\infty}$, and proved that for any graph $G=(V,E)$ with maximum degree $d$,
\[
\frac{\psi_{\text{sym}}(G)^2}{\log d} \lesssim {\sf sdp}_{\infty} \lesssim \psi_{\text{sym}}(G).
\]

Then, by constructing a graph $H$ such that $\psi_{\text{sym}}(H) = \Theta(\psi(G))$, they reduce vertex expansion to symmetric vertex expansion and obtain a Cheeger's inequality for $\psi(G)$, one that is of the same form as in \autoref{thm:Cheeger-vertex} for $\lambda_2^*(G)$.
We will show in \autoref{lem:lambda2-vs-sdp-infty} that $\lambda_2^*(G)$ and ${\sf sdp}_{\infty}$ are different and ${\sf sdp}_{\infty}$ is a stronger relaxation such that $\lambda_2^*(G) \leq {\sf sdp}_{\infty}$.

The current best known approximation algorithm for vertex expansion $\psi(G)$ is an $O(\sqrt{\log |V|})$ SDP-based approximation algorithm by Feige, Hajiaghayi and Lee~\cite{FHL08}.
This is an extension of the $O(\sqrt{\log |V|})$ SDP-based approximation algorithm for edge conductance $\phi(G)$ by Arora, Rao, and Vazirani~\cite{ARV09}.
The SDP formulation of~\cite{ARV09} is known to be strictly more powerful than the spectral formulation by the second eigenvalue.

Even though $\lambda_2^*(G)$, ${\sf sdp}_{\infty}$ and the SDP in~\cite{FHL08} are all semidefinite programming relaxations for $\psi(G)$ and satisfy similar inequalities, we note that the approach of using reweighted eigenvalues has some additional features.
One important feature is that $\lambda_2^*(G)$ is closely related to fastest mixing time.
This allows one to develop a spectral theory for vertex expansion that relates (i) vertex expansion, (ii) reweighted eigenvalues and (iii) fastest mixing time,
which parallels the classical spectral graph theory that relates (i) edge conductance, (ii) eigenvalues and (iii) mixing time.
Another feature is that it allows one to extend known generalizations of Cheeger inequalities to the vertex expansion setting, and as a consequence to obtain approximation algorithms for bipartite vertex expansion and $k$-way vertex expansion.

{\bf Spectral Hypergraph Theory:}
Louis~\cite{Lou15} and Chan, Louis, Tang, Zhang~\cite{CLTZ18} developed a spectral theory for hypergraphs.
They defined a continuous time diffusion process on a hypergraph $H=(V,E)$ and used it to define the Laplacian operator and its eigenvalues $\gamma_1 \leq \gamma_2 \leq \ldots \leq \gamma_{|V|}$.
The formulation is similar to the one in~\cite{BHT00} for vertex expansion, and they proved that there is an exact analog of Cheeger's inequality for hypergraphs:
\[
\frac{1}{2} \phi(H) \leq \gamma_2 \leq \sqrt{2 \phi(H)},
\]
where $\phi(H)$ is the hypergraph edge conductance of $H$.
As in~\cite{BHT00}, the quantity $\gamma_2$ is not polynomial time computable, 
and a semidefinite programming relaxation similar to that in~\cite{LRV13} is used to design a $O(\sqrt{\phi(G) \log r})$-approximation algorithm for hypergraph edge conductance where $r$ is the maximum size of a hyperedge.
Using this spectral theory, they prove an analog of higher-order Cheeger inequality for hypergraph edge conductance, and also an approximation algorithm for small-set hypergraph edge conductance. 
Through a reduction from vertex expansion to hypergraph edge conductance,  they obtain an analog of higher-order Cheeger inequality for vertex expansion as mentioned earlier after \autoref{thm:higher-order-vertex} and also an approximation algorithm for small-set vertex expansion.
This theory also relates (i) expansion, (ii) eigenvalues and (iii) mixing time, and so the work in~\cite{Lou15,CLTZ18} is closest to the current work.

Compared to the theory in~\cite{Lou15,CLTZ18} for hypergraphs and for vertex expansion through reduction, we note that the current approach using reweighted eigenvalues is more direct and effective for vertex expansion.
The reduction in~\cite[Fact 3]{CLTZ18} from vertex expansion $\psi(G)$ of graph $G$ with maximum degree $d_{\max}$ and minimum degree $d_{\min}$ to edge conductance $\phi(H)$ only satisfies
\[
d_{\min} \cdot \phi(H) \leq \psi(G) \leq d_{\max} \cdot \phi(H),
\]
and so the approximation ratio depends on the ratio between the maximum degree and the minimum degree.
The current approach using reweighted eigenvalues does not have this dependency and also proves stronger bounds in $k$-way vertex expansion as discussed after Theorem~\ref{thm:higher-order-vertex}.
Also, the definitions of the hypergraph diffusion process and its eigenvalues are quite technically involved and require considerable effort to make rigorous~\cite{CTWZ17}.
We believe that the definitions of reweighted eigenvalues are more intuitive and more closely related to ordinary eigenvalues.
Also, reweighted eigenvalues have close connections to other important problems such as fastest mixing time and the reweighting conjectures in approximation algorithms.

\subsection{Techniques} \label{sec:techniques}

From a technical perspective, the advantage of relating reweighted eigenvalues to vertex expansions is that many ideas relating eigenvalues to edge conductances can be carried over to the new setting.
So, many steps in our proofs are natural extensions of previous arguments, and we focus our discussion here on the new elements.

{\bf Vertex Expansion:}
The proof of \autoref{thm:OTZ22} by Olesker-Taylor and Zanetti is based on the dual characterization of \autoref{def:BDX-primal} in \autoref{prop:Roch-dual}, due to Roch~\cite{Roc05}, and it has two main steps.
In the first step, they used the Johnson-Lindenstrass lemma to project the SDP solution into a $O(\log |V|)$-dimension solution, and then further reduce it to a $1$-dimensional ``spectral'' solution by taking the best coordinate.
This is the step where the $\log |V|$ factor is lost.
In the second step, they introduced an interesting new concept called the ``matching conductance'', and used some combinatorial arguments about greedy matchings for the analysis of Cheeger rounding on Roch's dual program.

In our proof of \autoref{thm:Cheeger-vertex}, we also use Roch's dual characterization and follow the same two steps.
In the first step, we use the Gaussian projection method in~\cite{LRV13} to reduce the SDP solution to a $1$-dimensional solution directly, and adapt their analysis to show that only a factor of $\log d$ is lost.
In the second step, we bypass the concept of matching conductance and do a more traditional analysis of Cheeger rounding as in Bobkov, Houdr\'e and Tetali~\cite{BHT00}.
It turns out that this analysis works smoothly for weighted vertex conductance, while the approach using matching conductance faced some difficulty as described in~\cite{OTZ22}.
A new element in our proof is the introduction of an intermediate dual program using graph orientation, which is important in the analysis of both steps.
In \autoref{sec:Cheeger-vertex}, we will review the background from~\cite{OTZ22, Roc05, LRV13, BHT00} and give a more detailed comparison and overview.

{\bf Bipartite Vertex Expansion:}
The proof of \autoref{thm:Cheeger-bipartite} for bipartite vertex expansion follows closely the proof of \autoref{thm:Cheeger-vertex} and Trevisan's result~\cite{Tre09}, once the correct formulation in \autoref{def:lambda-max-primal} is found.

{\bf Multiway Vertex Expansion:}
For the proof of higher-order Cheeger inequality for vertex expansion in \autoref{thm:higher-order-vertex}, one technical issue is that we do not know of a convex relaxation for the maximum reweighted $k$-th smallest eigenvalue in \autoref{def:reweighted-lambda-k}.
Instead, we define a related quantity $\sigma_k^*(G)$ called the maximum reweighted sum of the $k$ smallest eigenvalues in \autoref{def:reweighted-sum}, which can be written as a semidefinite program.
We show in \autoref{def:sigma-k-dual} that this quantity has a nice dual characterization that satisfies the sub-isotropy condition. 
This allows us to adapt the techniques in~\cite{LOT12} to decompose the SDP solution into $k$ disjointly supported SDP solutions with small objective values, so that we can apply \autoref{thm:Cheeger-vertex} to find $k$ disjoint sets with small vertex expansion.
We will review the background in~\cite{LOT12} needed for the proof in \autoref{sec:embedding}. 

{\bf Improved Cheeger Inequality:}
The proof of improved Cheeger inquality for vertex expansion is similar to that in~\cite{KLLOT13}, which has two main steps.
The first step is to prove that if the $1$-dimensional solution to Roch's dual program is close to a $k$-step function, then Cheeger rounding performs well.
The second step is to prove that if the $1$-dimensional solution to Roch's dual program is far from a $k$-step function, then we can construct an SDP solution to $\sigma_k^*$ with small objective value, which proves that $\lambda_k^*$ is small.
Therefore, if $\lambda_k^*$ is large, then the $1$-dimensional solution must be close to a $k$-step function, and hence Cheeger rounding performs well.
One interesting aspect in this proof is to relate the performance of a rounding algorithm of one SDP (in this case $\lambda_2^*(G)$) to the objective value of another SDP (in this case $\sigma_k^*(G)$).

{\bf Vertex Expansion of $0/1$-Polytopes:}
The examples in \autoref{thm:01-vertex-expansion} for $0/1$-polytope is obtained by a simple probabilistic construction.
The graph of a $0/1$-polytope is defined by the set of points chosen in $\{0,1\}^n$.
Let $L$ be the set of points with $k$ ones, and let $R$ be the set of points with $(n-k)$ ones.
We prove that if we choose a random set $M$ of points with $n/2$ ones and set $|M| \asymp 4^k n^2$, then with high probability there are no edges between $L$ and $R$ in the resulting polytope, and so $M$ is a small vertex separator of $L$ and $R$ where each has $\binom{n}{k}$ points.
The proof is by elementary geometric arguments about the edges of a polytope, and a simple result bounding the number of linear threshold functions in the boolean hypercube $\{0, 1\}^n$.

\subsection{Concurrent Work}

Jain, Pham, and Vuong \cite{JPV22} independently published a proof of \autoref{thm:Cheeger-vertex} for the uniform distribution case.
Their approach is based on a better analysis of dimension reduction for maximum matching, which is quite different from our approach as we bypassed the concept of matching conductance in~\cite{OTZ22}.
\section{Preliminaries} \label{sec:prelim}

Given two functions $f, g$, we use $f \lesssim g$ to denote the existence of a positive constant $c > 0$, such that $f \le c \cdot g$ always holds.
We use $f \asymp g$ to denote $f \lesssim g$ and $g \lesssim f$.
We use $f \approx_{k} g$ to denote $f / g = 1 + o_k(1)$.
For positive integers $k$, we use $[k]$ to denote the set $\{1, 2, \dots, k\}$.
For a function $f: X \rightarrow \R$, $\supp(f)$ denotes the domain subset on which $f$ is nonzero.
For an event $E$, $\one[E]$ denotes the indicator function that is $1$ when $E$ is true and $0$ otherwise.

\subsubsection*{Graphs}

Let $G = (V, E)$ be an undirected graph.
Throughout this paper, we use $n := |V|$ to denote the number of vertices and $m := |E|$ to denote the number of edges in the graph. 
If $uv$ is an edge in $G$, we either write $uv \in E$ or use the notation $u \sim v$.
The degree of a vertex $v$, denoted by $\deg(v)$, is the number of edges incident to $v$.
The maximum degree of a graph is defined as $\max_{v \in V} \deg(v)$.
We usually associate $G$ with a probability distribution $\pi: V \to \R$ on the set of vertices, and we write $\pi(S) := \sum_{v \in S} \pi(v)$ for a subset $S \subseteq V$. We assume without loss that $\pi(u) > 0$ for all $u \in V$.

Let $S \subseteq V$ be a subset of vertices.
The edge boundary of $S$ is defined as $\delta(S) := \{uv \in E \mid u \in S, v \notin S\}$.
The volume of $S$ is defined as $\vol(S) := \sum_{v \in S} \deg(v)$.
The edge conductance of $S$ is defined as $\phi(S) := |\delta(S)|/|S|$.
The vertex boundary of $S$ is defined as $\partial S := \{v \in V \setminus S \mid \exists u \in S \text{ with } uv \in E\}$.
The $\pi$-weighted vertex expansion of $S$ is defined as $\psi(S) := \pi(\partial S) / \pi(S)$, and when $\pi$ is the uniform distribution $\psi(S) = |\partial S|/|S|$ is the usual vertex expansion.
The induced edge set of $S$ is defined as $E[S] := \{uv \in E \mid u \in S \text{ and } v \in S\}$.

Let $G = (V,\directed{E})$ be a directed graph.
If $uv$ is a directed edge in $G$, we either write $uv \in \directed{E}$ or use the notation $u \to v$.
The indegree of a vertex $v$ is defined as $\deg^{\text{in}}(v) := |\{u \in V \mid u \to v\}|$.
We will define some directed analogs of vertex expansion in \autoref{sec:Cheeger-vertex} and \autoref{sec:Cheeger-bipartite}.  
They are not standard definitions and so we defer them to the respective sections.

\subsubsection*{Linear Algebra}

Let $M \in \R^{n \times n}$ be a matrix. 
When $M$ is symmetric, the spectral theorem states that $M$ admits an orthonormal eigendecomposition $M = UDU^{-1}$, where $D$ is a diagonal matrix and $U$ is a unitary matrix such that $U^{-1} U = I_n$ where $I_n$ is the $n \times n$ identity matrix.

Two matrices $M, N \in \R^{n \times n}$ are said to be cospectral if they are both diagonalizable, and their eigenvalues are the same. 
There are two cases of cospectral matrices that we will use.

\begin{fact} \label{fact:similar}
Let $M, N \in \R^{n \times n}$. Suppose that $M$ is diagonalizable and that $M$ and $N$ are similar (i.e. $M = X^{-1}NX$ for some invertible matrix $X \in \R^{n \times n}$). Then, $N$ is also diagonalizable, and $M$ and $N$ are cospectral.
\end{fact}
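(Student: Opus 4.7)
The plan is to take the given diagonalization of $M$ and conjugate it by $X$ to obtain a diagonalization of $N$ with exactly the same diagonal matrix. Concretely, since $M$ is diagonalizable, the spectral (more generally, Jordan) theory guarantees an invertible $U \in \R^{n \times n}$ and a diagonal $D \in \R^{n \times n}$ with $M = UDU^{-1}$, and the diagonal entries of $D$ are precisely the eigenvalues of $M$ (with algebraic multiplicities).

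Next, I would rewrite the similarity relation as $N = XMX^{-1}$, which follows by multiplying $M = X^{-1}NX$ on the left by $X$ and on the right by $X^{-1}$. Substituting the diagonalization of $M$ then gives
\[
N \;=\; X\,(UDU^{-1})\,X^{-1} \;=\; (XU)\,D\,(XU)^{-1},
\]
where $XU$ is invertible because both $X$ and $U$ are invertible. This exhibits $N$ as conjugate to the diagonal matrix $D$, which directly shows that $N$ is diagonalizable.

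Finally, to conclude cospectrality I would read off the eigenvalues from this common diagonal factor $D$: the eigenvalues of $N$ are exactly the diagonal entries of $D$, which are exactly the eigenvalues of $M$, with the same multiplicities. So $M$ and $N$ share the same spectrum, and are both diagonalizable, which is the definition of cospectral given earlier in the excerpt.

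There is no real obstacle here; the statement is a standard invariance-under-similarity fact, and the only mild subtlety is to make sure we genuinely produce a diagonalization of $N$ (not merely argue ``similarity preserves eigenvalues'') so that the diagonalizability conclusion is explicit. The identity $N = (XU)D(XU)^{-1}$ accomplishes both tasks at once.
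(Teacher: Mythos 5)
Your proof is correct and is exactly the standard argument one would expect here: conjugating the diagonalization $M = UDU^{-1}$ by $X$ to get $N = (XU)D(XU)^{-1}$ simultaneously establishes diagonalizability of $N$ and equality of spectra. The paper states this as a Fact without proof, so there is nothing further to compare against.
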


\begin{fact} \label{fact:AB=BA}
Let $M, N \in \R^{n \times n}$. Suppose that there exist $A, B \in \R^{n \times n}$ such that $M = AB$ and $N = BA$. 
If $M$ is diagonalizable, then $N$ is also diagonalizable, and $M$ and $N$ are cospectral.
\end{fact}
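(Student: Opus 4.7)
The statement has two components: cospectrality and preservation of diagonalizability. The plan is to establish cospectrality first through the classical block-matrix determinant identity, and then derive diagonalizability either by a similarity reduction to Fact~\ref{fact:similar} or by tracking eigenspaces directly.

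For cospectrality, I would compute the two $2n \times 2n$ block-matrix products $\begin{pmatrix} tI & A \\ B & I \end{pmatrix}\begin{pmatrix} I & 0 \\ -B & tI \end{pmatrix}$ and $\begin{pmatrix} I & 0 \\ -B & tI \end{pmatrix}\begin{pmatrix} tI & A \\ B & I \end{pmatrix}$. Each product is block-triangular with $tI - AB$ or $tI - BA$ on the diagonal, and $\det(XY) = \det(YX)$ yields $t^n \det(tI - AB) = t^n \det(tI - BA)$ as polynomials in $t$. Cancelling the factor $t^n$ in $\R[t]$ gives equality of characteristic polynomials, so $M$ and $N$ share the same eigenvalues with identical algebraic multiplicities.

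For diagonalizability, the cleanest route is via similarity. Whenever $A$ (or $B$) is invertible, the identity $N = BA = A^{-1}(AB)A = A^{-1} M A$ exhibits $N$ as similar to $M$, and Fact~\ref{fact:similar} immediately yields that $N$ is diagonalizable. This covers the case where $M$ itself is invertible, since then both $A$ and $B$ must be invertible. In the general case I would argue at the level of eigenspaces: for each nonzero eigenvalue $\lambda$ of $M$, the map $v \mapsto Bv$ restricts to a linear bijection $E_\lambda(M) \to E_\lambda(N)$ with inverse $w \mapsto \lambda^{-1} Aw$, so geometric multiplicities match on every nonzero eigenvalue; combined with diagonalizability of $M$ and the cospectrality above, all nonzero-eigenvalue eigenspaces of $N$ already attain their algebraic multiplicities.

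I expect the main obstacle to be the zero eigenvalue, since the map $B$ need not preserve the dimension of $\ker$ when passing between $AB$ and $BA$. I would look to the application contexts of Fact~\ref{fact:AB=BA} in the paper for implicit structural assumptions (such as invertibility of $A$ or $B$, or a symmetric factorization arising from the reweighted SDPs) that pin down $\dim \ker(N)$ from the matched algebraic multiplicities and complete the argument cleanly.
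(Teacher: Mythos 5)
Your block-determinant identity does correctly give $\det(tI-AB)=\det(tI-BA)$, and your eigenspace correspondence $v\mapsto Bv$, $w\mapsto \lambda^{-1}Aw$ for nonzero eigenvalues is also correct. But the gap you flag at the zero eigenvalue is not something you can close by hunting for a cleverer argument: the fact as literally stated is false in general, so no proof exists. Take
\[
A=\begin{pmatrix} 0 & 1 \\ 0 & 0 \end{pmatrix},\qquad B=\begin{pmatrix} 1 & 0 \\ 0 & 0 \end{pmatrix}.
\]
Then $M=AB=0$ is diagonalizable, while $N=BA=\begin{pmatrix} 0 & 1 \\ 0 & 0\end{pmatrix}$ is a nonzero nilpotent matrix and is not diagonalizable, even though $M$ and $N$ share the same characteristic polynomial. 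The zero eigenvalue is precisely where the Jordan structures of $AB$ and $BA$ may differ (their nilpotent block sizes can differ by one), so your plan cannot be completed without an extra hypothesis such as invertibility of $A$ or $B$, invertibility of $M$, or something else that pins down $\dim\ker(BA)$.

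Your instinct to look at how the fact is used in the paper is the right one, and it is in fact the only way to rescue the statement: the paper states this fact without proof, and its sole invocation is in the proof of \autoref{def:sigma-k-dual}, where $A=\Pi^{1/2}F$ and $B=F^{T}\Pi^{1/2}=A^{T}$. There $AB=AA^{T}$ and $BA=A^{T}A$ are symmetric, hence both are automatically diagonalizable, and all that is needed is equality of spectra, which your determinant computation (or, for $\lambda\neq 0$, your eigenspace bijection) already provides. So your argument is complete in the symmetric setting actually used, but as a standalone statement the fact needs an added hypothesis (e.g.\ $B=A^{T}$, or one of the factors invertible); as written, only the nonzero spectra and their multiplicities transfer.
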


Given that $M$ is symmetric, we say that $M$ is positive semidefinite (PSD) if $v^T M v \ge 0$ for all $v \in \R^n$, and we write $M \succeq 0$. 
Equivalently, $M$ is PSD if all its eigenvalues are nonnegative.
Also equivalently, $M$ is PSD if there exists $X$ such that $M = X^T X$.
Let $x_i \in \R^n$ be the $i$-th column of $X$.
Then $M$ is called the Gram matrix of $x_1, \ldots, x_n \in \R^n$ as $M(i,j) = \inner{x_i}{x_j}$ for all $i,j \in [n]$.

The trace of a matrix $M \in \R^{n \times n}$ is defined as $\tr(M) := \sum_{i=1}^n M(i,i)$.
We will often use the fact that $\tr(AB) = \tr(BA)$ for two matrices of compatible dimensions.

\subsubsection*{Random Walks}

Given a finite state space $X$, a Markov chain on $X$ is represented by a matrix $P \in \R^{X \times X}$, where $P(u, v)$ is the probability of traversing from state $u$ to state $v$ in one step. 
Thus, $P$ has nonnegative entries and satisfies $\sum_{v \in X} P(u, v) = 1$ for all $u \in X$. 
A distribution $\pi: X \rightarrow \R$ is said to be a stationary distribution of $P$ if $\pi^T P = \pi^T$. 

A transition matrix $P$ is said to be time-reversible with respect to $\pi$ if $\pi(u) P(u, v) = \pi(v) P(v, u)$ for any $u, v \in X$.
Note that this implies that $\pi$ is a stationary distribution of $P$.
The time reversibility condition can be written as $\Pi P = P^T \Pi$, where $\Pi := \diag(\pi)$. Thus, $\Pi^{1/2}P\Pi^{-1/2}$ is symmetric, hence diagonalizable with eigenvalues $1 = \alpha_1(P) \ge \alpha_2(P) \ge \dots \ge \alpha_n(P) \ge -1$. 
As $P$ is similar to $\Pi^{1/2} P \Pi^{-1/2}$, they have the same eigenvalues by \autoref{fact:similar}.
The spectral gap of $P$ is defined as $1 - \alpha_2(P)$. 

For $\eps \in (0, 1)$, we define the $\eps$-mixing time $\tau_{\rm{mix}}(P, \eps)$ of $P$ to be the smallest $t \in \N$ such that $d_{TV}(\pi, \rho) \le \eps$ for any initial distribution $\rho$. Here, $d_{TV}(\cdot, \cdot)$ is the total variation distance, defined as $d_{TV}(\rho_1, \rho_2) := \max_{S \subseteq V}|\rho_1(S) - \rho_2(S)|$ for any two distributions $\rho_1, \rho_2: X \rightarrow \R_{\ge 0}$.
The relaxation time $\tau_{\rm{rel}}(P)$ of $P$ is defined as the reciprocal of the spectral gap, so $\tau_{\rm{rel}}(P) := \frac{1}{1 - \alpha_2(P)}$.
Let $\pi_{\min} := \min_{u \in V} \pi(u)$. 
It is known that (see e.g. Chapter 12 of \cite{LP17})
\[
  \big( \tau_{\rm{rel}}(P) - 1 \big) \cdot \log \frac{1}{2 \eps} \le \tau_{\rm{mix}}(P, \eps) \le \tau_{\rm{rel}}(P) \cdot \log \frac{1}{\eps \cdot \pi_{\min}}.
\]
Because of this connection between the spectral gap and the mixing time of $P$, the optimization problem of maximizing the spectral gap of the random walk matrix is referred to as ``fastest mixing time'' in~\cite{BDX04}.

\subsubsection*{Spectral Graph Theory}

Given a graph $G = (V, E)$, its adjacency matrix $A = A(G)$ is a $n \times n$ matrix where the $(u, v)$-th entry is $\mathbbm{1}_{uv \in E}$. 
The Laplacian matrix is defined as $L := D - A$, where $D:=\diag(\{\deg(v)\}_{v \in V})$ is the diagonal degree matrix. 
For a vector $x \in \R^n$, the Laplacian matrix has a useful quadratic form $x^T L x = \sum_{uv \in E} \big(x(u)-x(v)\big)^2$.

The normalized adjacency matrix is defined as $\mathcal{A} = D^{-1/2} A D^{-1/2}$, and the normalized Laplacian matrix is defined as $\mathcal{L} := I - \mathcal{A}$. 
Observe that $\mathcal{A}$ is similar to the simple random walk matrix on $G$, so it is diagonalizable with eigenvalues $1 = \alpha_1({\cal A}) \geq \alpha_2({\cal A}) \geq \cdots \geq \alpha_n({\cal A}) \geq -1$.
Therefore, $\mathcal{L}$ is diagonalizable, and its eigenvalues are $0 = \lambda_1({\cal L}) \le \lambda_2({\cal L}) \le \dots \le \lambda_n({\cal L}) \le 2$.
Note that we use $\alpha_i$ to denote the eigenvalues of the normalized adjacency matrix ${\cal A}$ and random walk matrix $P$, and we use $\lambda_i$ to denote the eigenvalues of the normalized Laplacian matrix ${\cal L}$.

Let $\phi(G) := \min_{S \subseteq V: 0 < \pi(S) \le 1/2} \phi(S)$ be the edge conductance of the graph $G$.
Cheeger's inequality~\cite{Che70,AM85,Alo86} states that
\[
  \frac{\lambda_2}{2} \le \phi(G) \le \sqrt{2 \lambda_2}.
\]
This theorem is important because it connects (i) the spectral gap of the normalized Laplacian matrix, (ii) the edge conductance of the graph and (iii) the mixing time of random walks.

\subsubsection*{Convex Optimization}

Consider optimization programs in the standard form
\begin{eqnarray*}
  \mathcal{P} :=
  & \min_{x \in \Omega} & 
  f(x)
  \\
  & \st & g_i(x) \le 0 \quad \forall i \in [l] \\
  & & h_j(x) = 0 \quad \forall i \in [p]
\end{eqnarray*}
where $\Omega \subseteq \R^n$ and $f, g_i, h_j: \R^n \rightarrow \R \cup \{ \pm \infty \}$. To define its Lagrangian dual, consider
\[
  \Lambda(x, \lambda, \mu) := f(x) + \sum_{i \in [l]} \lambda_i g_i(x) + \sum_{j \in [p]} \mu_j h_j(x)
\]
defined on $\Omega \times \R^l \times \R^p$. The dual program is
\[
  \mathcal{D} := \max_{\lambda \ge 0, \mu} \inf_{x \in \Omega} \Lambda(x, \lambda, \mu).
\]
Weak duality always holds, that is, $\mathcal{D} \le \mathcal{P}$. 
We say that strong duality holds if $\mathcal{D} = \mathcal{P}$.

Linear programs are optimization programs where $\Omega = \R^n$ and $f, g_i, h_j$ are all affine functions. It is well-known that strong duality always holds for linear programs.

Semidefinite programs (SDP) are optimization programs where the ambient space is $\R^{n \times n}$, $\Omega := \{X \in \R^{n \times n}: X \succeq 0\}$, and $f, g_i, h_j$ are all affine functions. 
Unlike linear programs, there are SDP's where strong duality does not hold.
We will use Von Neumann's minimax theorem in \autoref{sec:higher-order-vertex} to establish strong duality for the SDP for multiway vertex expansion. 

\begin{theorem}[Von Neumann's Minimax Theorem (see \cite{Sim95})] \label{thm:von-Neumann}
Let $X,Y$ be compact convex sets.
If $f$ is a real-valued continuous function on $X \times Y$ with 
$f(x,\cdot)$ concave on $Y$ for all $x \in X$
and $f(\cdot,y)$ convex on $X$ for all $y \in Y$,
then 
\[
\min_{x \in X} \max_{y \in Y} f(x,y) = \max_{y \in Y} \min_{x \in X} f(x,y).
\]
\end{theorem}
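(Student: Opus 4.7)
The plan is to establish the two inequalities separately. The easy direction, $\max_{y \in Y} \min_{x \in X} f(x,y) \le \min_{x \in X} \max_{y \in Y} f(x,y)$, follows from the tautology that $\min_{x' \in X} f(x', y) \le f(x, y) \le \max_{y' \in Y} f(x, y')$ holds for every $(x,y) \in X \times Y$; taking the maximum over $y$ on the left and the minimum over $x$ on the right yields the inequality.

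For the nontrivial direction, I would use Kakutani's fixed-point theorem applied to best-response correspondences. Define the set-valued maps
\[
  \phi(x) := \argmax_{y \in Y} f(x, y), \qquad \psi(y) := \argmin_{x \in X} f(x, y).
\]
Compactness of $X, Y$ together with continuity of $f$ guarantee that both have nonempty values. Convexity of the values is precisely where the hypotheses on $f$ enter: since $f(x, \cdot)$ is concave on $Y$, the set of its maximizers $\phi(x)$ is convex, and symmetrically $\psi(y)$ is convex because $f(\cdot, y)$ is convex on $X$. Continuity of $f$ combined with compactness of the domains yields closed graphs for $\phi$ and $\psi$ (a standard consequence of Berge's maximum theorem). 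The product correspondence $\Phi: X \times Y \to 2^{X \times Y}$ defined by $\Phi(x, y) := \psi(y) \times \phi(x)$ inherits all these properties and sends the compact convex set $X \times Y$ to its nonempty convex subsets with a closed graph.

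Kakutani's fixed-point theorem then produces a point $(x^*, y^*) \in X \times Y$ with $x^* \in \psi(y^*)$ and $y^* \in \phi(x^*)$. The conclusion now follows in two lines: since $y^* \in \phi(x^*)$, we have
\[
  \min_{x \in X} \max_{y \in Y} f(x, y) \le \max_{y \in Y} f(x^*, y) = f(x^*, y^*),
\]
and since $x^* \in \psi(y^*)$, we have
\[
  f(x^*, y^*) = \min_{x \in X} f(x, y^*) \le \max_{y \in Y} \min_{x \in X} f(x, y).
\]
Chaining the two inequalities completes the proof.

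The main obstacle I anticipate is verifying the closed-graph condition for $\phi$ and $\psi$ carefully, since this is where continuity of $f$ and compactness of $X, Y$ need to be combined properly; everything else is a direct consequence of the hypotheses. An alternative route, which avoids Kakutani, would be to prove the finite case first via LP duality (or a separating-hyperplane argument) and then approximate the compact convex domains by finite $\varepsilon$-nets, passing to the limit using uniform continuity of $f$ on the compact set $X \times Y$. That approach is more elementary but notationally heavier, so I would prefer the fixed-point route.
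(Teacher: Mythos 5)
Your proposal is correct, but note that the paper does not prove this statement at all: it is quoted as a known theorem with a pointer to Simons' survey \cite{Sim95}, and is then only \emph{applied} (in \autoref{def:sigma-k-dual}) to establish strong duality for the $\sigma_k^*$ program. So there is no ``paper proof'' to compare against; what you have written is a self-contained argument. As such, the Kakutani route you outline is sound: the weak-duality direction is the standard tautology, the values $\phi(x)$ and $\psi(y)$ are nonempty by compactness and continuity, convex exactly because of the concavity/convexity hypotheses, and the closed-graph property follows from joint continuity of $f$ together with compactness (if $y_n \in \phi(x_n)$ and $(x_n,y_n) \to (x,y)$, then $f(x_n,y_n) \ge f(x_n,y')$ for all $y'$, and passing to the limit gives $y \in \phi(x)$); the fixed point then closes the chain of inequalities exactly as you say, with attainment of the inner and outer extrema again guaranteed by compactness and continuity of the marginal functions. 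The one caveat worth stating explicitly is that Kakutani's theorem in its classical form lives in finite-dimensional Euclidean space, so either you should assume $X,Y \subseteq \R^N$ (which suffices for the paper's application, where $X$ and $Y$ are compact convex sets of $n \times n$ matrices cut out by linear, PSD, and trace constraints) or invoke the Kakutani--Fan--Glicksberg generalization for compact convex subsets of locally convex spaces to match the generality of the statement as phrased. Your alternative sketch (finite case via LP duality plus $\varepsilon$-net approximation using uniform continuity) is also viable and closer in spirit to the elementary proofs surveyed in \cite{Sim95}, but the fixed-point argument as written is complete enough once the dimensionality caveat is addressed.
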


Several eigenvalue optimization problems can be formulated as semidefinite programs.
Boyd, Diaconis and Xiao~\cite{BDX04} showed that the maximum reweighted spectral gap in \autoref{def:BDX-primal} can be written as a semidefinite program.
Roch~\cite{Roc05} showed that the maximum reweighted lower spectral gap problem in \autoref{def:lambda-max-primal} can be written as a semidefinite program; see \autoref{prop:Roch-dual-bipartite}.
For the higher-order Cheeger inequality for vertex expansion,
we will use the following proposition in writing the maximum reweighted sum of $k$ smallest eigenvalue problem in \autoref{def:reweighted-sum} as a semidefinite program.

\begin{proposition} \label{prop:sum-of-lambda-k}
Let $X \in \R^{n \times n}$ be a symmetric matrix and let $1 \le k \le n$. 
Suppose the eigenvalues of $X$ are $\lambda_1 \le \lambda_2 \le \dots \le \lambda_n$. Then, $\lambda_1 + \lambda_2 + \cdots + \lambda_k$ is the value of the following semidefinite program:
  \begin{eqnarray*}
    \min_{Y \in \R^{n \times n}} && \tr(XY) \\
    \st && 0 \preceq Y \preceq I_n \\
    && \tr(Y) = k.
  \end{eqnarray*}    
\end{proposition}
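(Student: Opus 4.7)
The plan is to prove the two inequalities separately. This is the classical Ky Fan trace minimization principle, and the strategy is to (i) exhibit a feasible $Y$ attaining the value $\lambda_1 + \cdots + \lambda_k$, and (ii) show that no feasible $Y$ can do better.

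For the upper bound, I would diagonalize $X = U D U^T$ where the columns $v_1, \ldots, v_n$ of $U$ are orthonormal eigenvectors corresponding to $\lambda_1 \le \cdots \le \lambda_n$. Then I take the candidate $Y := \sum_{i=1}^k v_i v_i^T$, which is the orthogonal projection onto the span of the first $k$ eigenvectors. Clearly $0 \preceq Y \preceq I_n$ (it is a projection), and $\tr(Y) = k$. A direct calculation gives $\tr(XY) = \sum_{i=1}^k v_i^T X v_i = \sum_{i=1}^k \lambda_i$, showing that the SDP value is at most $\lambda_1 + \cdots + \lambda_k$.

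For the lower bound, I would again use the diagonalization $X = UDU^T$ and make the change of variable $Z := U^T Y U$. Since $U$ is unitary, the constraints $0 \preceq Y \preceq I_n$ and $\tr(Y) = k$ are equivalent to $0 \preceq Z \preceq I_n$ and $\tr(Z) = k$, and the objective becomes
\[
\tr(XY) = \tr(U D U^T \cdot U Z U^T \cdot \text{(cyclically)}) = \tr(DZ) = \sum_{i=1}^n \lambda_i Z_{ii}.
\]
The constraints $0 \preceq Z \preceq I_n$ imply that each diagonal entry satisfies $0 \le Z_{ii} \le 1$, and $\tr(Z) = k$ gives $\sum_{i=1}^n Z_{ii} = k$. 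So it suffices to minimize $\sum_i \lambda_i z_i$ over $z \in [0,1]^n$ with $\sum_i z_i = k$. Since the $\lambda_i$ are sorted in increasing order, a standard rearrangement/exchange argument shows the minimum is achieved by $z_i = 1$ for $i \in [k]$ and $z_i = 0$ otherwise, yielding $\sum_{i=1}^k \lambda_i$.

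The main subtlety, though a minor one, is justifying that $0 \preceq Z \preceq I_n$ implies $Z_{ii} \in [0,1]$; this follows immediately from $e_i^T Z e_i \ge 0$ and $e_i^T (I - Z) e_i \ge 0$. Beyond that, everything reduces to a linear-programming argument on the diagonal entries, so no deep obstacle is expected.
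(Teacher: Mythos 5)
Your proof is correct. The paper states this proposition without proof, treating it as the standard Ky Fan trace-minimization fact, and your argument is exactly the canonical one: the projection $\sum_{i=1}^k v_i v_i^T$ onto the bottom-$k$ eigenspace certifies the upper bound, and conjugating by the eigenbasis reduces the lower bound to minimizing $\sum_i \lambda_i z_i$ over $z \in [0,1]^n$ with $\sum_i z_i = k$, which an exchange argument settles. Both directions, including the observation that $0 \preceq Z \preceq I_n$ forces $Z_{ii} \in [0,1]$, are sound, so there is nothing to add.
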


\subsubsection*{Polytopes}

A polytope $\emptyset \neq Q \subseteq \R^n$ is a solution set to affine inequalities, i.e. $Q = \{x \in \R^n: Ax \le b\}$ for some $A \in \R^{m \times n}$ and  $b \in \R^m$.
Given a point set $X \subseteq \R^n$, its convex hull ${\rm conv}(X) \subset \R^n$ is the set of all convex combinations of points in $X$. 
Equivalently, ${\rm conv}(X)$ is the smallest convex set containing $X$. 
A basic result is that the convex hull of a finite point set in $\R^n$ is a polytope.

Given a polytope $Q \subseteq \R^n$. A subset $F \subseteq Q$ is a face of $Q$ if, for any $x, y \in Q$, if $tx + (1-t)y \in F$ for some $t \in (0, 1)$ then $x, y \in F$. If $F$ is a face of $Q$, then it is the intersection of $Q$ and an affine subspace $Y \subseteq \R^n$. The dimension of a face $F$ is the dimension of the smallest (by inclusion) affine subsapce $Y \subseteq \R^n$ such that $F = Q \cap Y$.

Dimension-$0$ faces are called extreme points or vertices, whereas dimension-$1$ faces are called edges. 
The graph $G_Q=(V,E)$ of $Q$ has the dimension-$0$ faces as the vertices and the dimension-$1$ faces as the edges.
This is also called the $1$-skeleton of the polytope.
The following fact about the non-existence of edges between two vertices of a polytope will be useful.

\begin{proposition}[\cite{KR03}]
Let $Q \subseteq \R^n$ be a bounded polytope with vertex set $F_0 = F_0(Q)$. Let $x \neq y \in F_0$, and let $L(x, y)$ be the line segment with endpoints $x$ and $y$. Suppose that ${\rm conv}\big(F_0 \setminus \{x,y\}\big) \cap L(x, y)$ is nonempty. Then, $L(x, y)$ is not an edge of $Q$.
\end{proposition}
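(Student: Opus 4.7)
The plan is to argue by contradiction. Suppose $L(x,y)$ is an edge of $Q$, so it is a $1$-dimensional face $F$ of $Q$. By hypothesis there exists a point $p \in {\rm conv}(F_0 \setminus \{x,y\}) \cap L(x,y)$, so I can write $p = \sum_{i=1}^{k} \lambda_i v_i$ with $v_i \in F_0 \setminus \{x,y\}$, $\lambda_i > 0$, and $\sum_i \lambda_i = 1$.

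First I would extend the defining face property (stated in the excerpt for binary convex combinations) to arbitrary convex combinations: if a point of $F$ is expressed as a positive convex combination of finitely many points of $Q$, then every summand lies in $F$. This is a routine induction on $k$. For $k = 2$ it is exactly the definition. For $k \ge 3$, decompose $p = \lambda_1 v_1 + (1-\lambda_1) w$, where $w := \sum_{i \ge 2} \tfrac{\lambda_i}{1-\lambda_1} v_i$ is a convex combination of points of $Q$ and hence lies in $Q$; applying the binary form yields $v_1, w \in F$, and recursing on the decomposition of $w$ puts every $v_i$ in $F$.

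Applying this extended property to $p = \sum_i \lambda_i v_i$, I conclude that every $v_i$ lies in $F = L(x,y)$. On the other hand, any point of $L(x,y)$ distinct from $x$ and $y$ is a proper convex combination $t x + (1-t) y$ with $t \in (0,1)$, hence is not an extreme point of $Q$, hence is not in $F_0$ (since the excerpt identifies vertices with extreme points via the dimension-$0$ face characterization). Therefore the only vertices of $Q$ lying on $L(x,y)$ are $x$ and $y$ themselves, contradicting the choice $v_i \in F_0 \setminus \{x,y\}$.

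There is no genuine obstacle here: the argument is a direct unpacking of the definition of a face, and the only small technical step is the inductive extension of the face property from two summands to arbitrarily many. The boundedness of $Q$ is used implicitly to guarantee that $p$ admits a convex combination representation in terms of vertices of $Q$, as given in the hypothesis.
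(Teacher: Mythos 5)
Your proof is correct. Note that the paper itself gives no proof of this proposition—it is quoted from [KR03]—so there is nothing internal to compare against; judged on its own, your argument is sound: the induction extending the binary face condition to arbitrary positive convex combinations is routine and stated correctly, and the observation that a dimension-$0$ face is an extreme point (so no vertex can lie in the relative interior of $L(x,y)$) closes the contradiction cleanly. One tiny remark: boundedness of $Q$ is not actually needed anywhere in your argument, since the hypothesis already hands you $p$ as a convex combination of finitely many vertices other than $x,y$; it only matters for the proposition to be stated in terms of a vertex set at all. A common alternative route—and the one implicitly used later in the paper (cf.\ \autoref{lem:blocking-one-edge})—is via supporting hyperplanes: if $L(x,y)$ were an edge, there would be an affine function vanishing on it and strictly negative on all other vertices, hence strictly negative on ${\rm conv}\big(F_0 \setminus \{x,y\}\big)$, contradicting that this hull meets $L(x,y)$. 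Your face-definition argument is the more elementary of the two and needs no separation theorem.
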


We will also use the following version of hyperplane separation theorem.

\begin{proposition} \label{prop:separation-thm}
  Let $M \subseteq \R^n$ be convex and let $x \in \R^n$ be such that $x \not\in M$. Then there exists an affine function $l: \R^n \rightarrow \R$ such that $l(x) = 0$ and $l(y) < 0$ for all $y \in M$.
\end{proposition}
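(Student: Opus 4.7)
The plan is to invoke the classical projection-based proof of the hyperplane separation theorem. First I would dispose of the trivial case $M = \emptyset$ by taking any affine function that vanishes at $x$. For the nontrivial case, I would replace $M$ by its closure $\overline{M}$; in every application of this result in the paper, $M$ is a convex hull of a finite point set and is therefore already closed, so $x \notin M$ implies $x \notin \overline{M}$, and any affine function that strictly separates $x$ from $\overline{M}$ automatically strictly separates $x$ from $M$.

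The central step is to construct the separating direction by projecting $x$ onto $\overline{M}$. Let $y^{*} := \argmin_{y \in \overline{M}} \|y - x\|^{2}$, which exists and is unique by closedness of $\overline{M}$ together with strict convexity of the squared Euclidean norm. Set $v := x - y^{*}$, which is nonzero precisely because $x \notin \overline{M}$. A standard first-order optimality argument — differentiating the convex function $t \mapsto \|x - ((1-t) y^{*} + t y)\|^{2}$ at $t = 0^{+}$ and using that $(1-t) y^{*} + t y \in \overline{M}$ for small $t \geq 0$ — yields the obtuse-angle condition
\[
\langle v,\, y - y^{*} \rangle \;\le\; 0 \qquad \text{for all } y \in \overline{M}.
\]

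Finally, I would define $l(z) := \langle v,\, z - x \rangle$, which is affine and satisfies $l(x) = 0$ by construction. For any $y \in M$, decomposing $y - x = (y - y^{*}) - v$ and invoking the obtuse-angle condition gives
\[
l(y) \;=\; \langle v,\, y - y^{*} \rangle - \|v\|^{2} \;\le\; -\|v\|^{2} \;<\; 0,
\]
which is the desired strict inequality. The only subtle point is this strict inequality: it relies on $v \neq 0$, which in turn needs $x \notin \overline{M}$, and that is what the closedness of $M$ in the paper's polytope applications guarantees. In a hypothetical case where $x$ sat in $\overline{M} \setminus M$ the argument would only give $l(y) \leq 0$, but this degenerate possibility does not occur in the subsequent use.
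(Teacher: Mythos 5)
Your proof is correct, and there is nothing in the paper to compare it against: \autoref{prop:separation-thm} is stated in the Preliminaries as a background fact with no proof given, so your projection argument (project $x$ onto the closed convex set, derive the obtuse-angle condition $\langle v, y-y^*\rangle \le 0$, and take $l(z)=\langle v, z-x\rangle$) is exactly the standard route one would supply. Your caveat about closedness is not just a technicality but a genuine correction to the statement as written: for a general convex $M$ with $x \in \overline{M}\setminus M$, strict separation with $l(x)=0$ can fail (e.g.\ $M=\{(a,b): a>0\}\cup\{(0,b): b>0\}$ and $x=(0,0)$ admits no affine $l$ vanishing at $x$ and strictly negative on all of $M$), so the proposition really needs $x\notin\overline{M}$ or, equivalently, $M$ closed. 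As you observe, in the paper's only application (\autoref{lem:blocking-one-edge}, where $M$ is the convex hull of a finite subset of $\{0,1\}^n$) the set is compact, so your proof covers everything that is actually used; the minor details you rely on (existence and uniqueness of the projection, the one-sided derivative computation, and the decomposition $y-x=(y-y^*)-v$ giving $l(y)\le -\|v\|^2<0$) are all handled correctly.
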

\section{Optimal Cheeger Inequality for Vertex Expansion} \label{sec:Cheeger-vertex}

The goal of this section is to prove \autoref{thm:Cheeger-vertex}.
We will first review the proofs in~\cite{OTZ22,LRV13} in \autoref{sec:background-Cheeger},
and then present how to combine their proofs with a graph orientation idea to prove \autoref{thm:Cheeger-vertex} in \autoref{sec:proof-Cheeger}.

\subsection{Background} \label{sec:background-Cheeger}

We will first review the proofs by Olesker-Taylor and Zanetti~\cite{OTZ22} in \autoref{sec:background-OTZ}, and then the proofs by Louis, Raghavendra and Vempala in \cite{LRV13} in \autoref{sec:background-LRV}.

In this subsection, the stationary distribution $\pi$ is assumed to be the uniform distribution.
This will slightly simplify the presentation and was also the setting considered in previous works.

\subsubsection{Review of~\cite{OTZ22}} \label{sec:background-OTZ}

Recall the fastest mixing time problem formulated in \autoref{def:BDX-primal}.
When $\pi$ is the uniform distribution, the problem is to find a doubly stochastic reweighted matrix $P$ of $G$ that minimizes the second largest eigenvalue of $P$.

The starting point is the following dual characterization of the primal program in \autoref{def:BDX-primal} obtained by Roch~\cite{Roc05},
which is stated in the form for a general distribution $\pi$ that we will use.

\begin{proposition}[Dual Program for Fastest Mixing~\cite{Roc05,OTZ22}] \label{prop:Roch-dual}
Given an undirected graph $G=(V,E)$ and a probability distribution $\pi$ on $V$,
the following semidefinite program is dual to the primal program in \autoref{def:BDX-primal} with strong duality $\lambda_2^*(G) = \gamma(G)$ where
\begin{eqnarray*}
\gamma(G) := \min_{f: V \to \R^n,~g: V \to \R_{\geq 0}} & & \sum_{v \in V} \pi(v) g(v)
\\
\st & & {\sum_{v \in V} \pi(v) \norm{f(v)}^2} = 1
\\
& & \sum_{v \in V} \pi(v) f(v) = \vec{0}
\\
& & g(u) + g(v) \geq \norm{f(u) - f(v)}^2 \quad \quad \forall uv \in E.
\end{eqnarray*}
\end{proposition}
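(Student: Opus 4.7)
The plan is to derive the stated dual from the primal of \autoref{def:BDX-primal} by (i) a symmetrizing change of variables, (ii) expressing $1 - \alpha_2(P)$ as a Rayleigh-quotient minimum and lifting it to an SDP, (iii) applying von Neumann's minimax theorem to swap the outer max with the inner min, and (iv) a direct LP duality computation on the resulting inner problem over edge weights.

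First, I substitute $w(uv) := \pi(u) P(u,v)$, which is symmetric in $u,v$ by time-reversibility. The primal constraints on $P$ translate to $w \ge 0$, $w(uv) = 0$ for $uv \notin E$, and $\sum_{v \in V} w(uv) = \pi(u)$ for all $u$, where the sum includes the self-loop term $w(u,u)$. Since $P$ is similar to the symmetric matrix $Q := \Pi^{1/2} P \Pi^{-1/2}$, \autoref{fact:similar} gives $\alpha_2(P) = \alpha_2(Q)$, and the top eigenvector of $Q$ is $\Pi^{1/2}\vec{1}$ with eigenvalue $1$. Courant--Fischer applied to $I-Q$, combined with the substitution $y = \Pi^{1/2} x$ and the identity $\sum_u w(uv) = \pi(v)$, yields
\[
1 - \alpha_2(P) = \min \left\{ \frac{\sum_{uv \in E} w(uv)\big(x(u) - x(v)\big)^2}{\sum_v \pi(v) x(v)^2} \;:\; \sum_v \pi(v) x(v) = 0,\; x \not\equiv 0 \right\}.
\]
Lifting the scalar $x$ to vectors $f : V \to \R^n$ is lossless, because the Gram-matrix SDP relaxation of this eigenvalue problem is tight (the optimal Gram matrix has rank one). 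After normalizing the denominator,
\[
1 - \alpha_2(P) = \min_f \Big\{ \sum_{uv \in E} w(uv) \|f(u) - f(v)\|^2 \;:\; \sum_v \pi(v) \|f(v)\|^2 = 1,\; \sum_v \pi(v) f(v) = \vec{0} \Big\}.
\]

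Substituting this into the primal gives $\lambda_2^*(G) = \max_w \min_f [\cdots]$. Passing to the Gram matrix $X_{uv} = \langle f(u), f(v)\rangle$, the objective is bilinear in $(w, X)$, the feasible set for $w$ is a compact polytope, and $\{X \succeq 0 : \tr(\Pi X) = 1,\; X\pi = 0\}$ is compact and convex. Applying \autoref{thm:von-Neumann} permits the swap $\lambda_2^*(G) = \min_f \max_w \sum_{uv \in E} w(uv) \|f(u) - f(v)\|^2$. For fixed $f$, the inner maximum over $w$ is a linear program; dualizing with one variable $g(u)$ per row-sum constraint $\sum_v w(uv) = \pi(u)$, the dual constraint for each edge variable $w(uv)$ becomes $g(u) + g(v) \ge \|f(u) - f(v)\|^2$, and the dual constraint for each self-loop variable $w(u,u)$ becomes $g(u) \ge 0$ (its objective coefficient vanishes). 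The LP dual has objective $\sum_u \pi(u) g(u)$; merging it with the outer $\min_f$ recovers $\gamma(G)$ exactly as stated.

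The main obstacle is executing the minimax swap rigorously: the scalar Rayleigh-quotient domain is nonconvex and noncompact, so the Gram-matrix lift is essential to land in a convex and compact setting where \autoref{thm:von-Neumann} applies; tightness of the SDP relaxation for the scalar eigenvalue problem is what ensures the lift does not increase the value. Everything else is a routine Rayleigh-quotient identity and an LP duality computation, the latter cleanly producing the sign constraint $g \ge 0$ from the self-loop terms.
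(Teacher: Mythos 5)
Your derivation is correct and mirrors the paper's own treatment: the paper states this proposition by citing Roch, but it proves the direct generalization in \autoref{def:sigma-k-dual} using exactly your route --- symmetrizing via $Q = \Pi P$, lifting the eigenvalue quantity to a (losslessly tight) Gram-matrix/vector program, swapping the outer maximum and inner minimum with von Neumann's minimax theorem over compact convex sets, and LP-dualizing the inner problem so that the self-loop variables produce the constraint $g \ge 0$. No gaps to report.
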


We note that this is equivalent to the dual program given in~\cite{BDX04}, but Roch's program is written in a vector program form that will be more convenient for rounding.
In \autoref{sec:higher-order-vertex}, we will use von Neumann's minimax theorem to derive a generalization of \autoref{prop:Roch-dual} for proving the higher-order Cheeger inequality for vertex expansion.

The proof of \autoref{thm:OTZ22} has two main steps.
The first step is to project the above dual program to the following one-dimensional ``spectral'' program.

\begin{definition}[One-Dimensional Dual Program for Fastest Mixing~\cite{OTZ22}] \label{def:Roch-dual-1D}
Given an undirected graph $G=(V,E)$ and a probability distribution $\pi$ on $V$, $\gamma^{(1)}(G)$ is defined to be the program:
\begin{eqnarray*}
  \gamma^{(1)}(G) := \min_{f: V \to \R,~g: V \to \R_{\geq 0}} & & \sum_{v \in V} \pi(v) g(v) 
\\
\st & & \sum_{v \in V} \pi(v) f(v)^2 = 1
\\
& & \sum_{v \in V} \pi(v) f(v) = 0
\\
  & & g(u) + g(v) \geq (f(u) - f(v))^2 \quad \quad \forall uv \in E.
\end{eqnarray*}
\end{definition}

Olesker-Taylor and Zanetti use the Johnson-Lindenstrauss lemma to first project the solution in \autoref{prop:Roch-dual} to $O(\log n)$ dimensions with constant distortion, and then take the best coordinate to obtain a $1$-dimensional solution with the following guarantee.
Note that this step works for any probability distribution $\pi$ on $V$.

\begin{proposition}[\cite{OTZ22}, Proposition 2.9] \label{prop:projection}
For any undirected graph $G=(V,E)$ and any probability distribution $\pi$ on $V$,
\[
\gamma(G) \leq \gamma^{(1)}(G) \lesssim \log |V| \cdot \gamma(G). 
\]
\end{proposition}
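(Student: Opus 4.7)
The plan is to handle the two directions separately. The direction $\gamma(G) \leq \gamma^{(1)}(G)$ is immediate: any feasible pair $(f,g)$ for $\gamma^{(1)}(G)$ with scalar-valued $f$ lifts to a feasible pair for $\gamma(G)$ by regarding $f(v) \in \R$ as the first coordinate of a vector in $\R^n$ with all other coordinates zero, so every constraint and the objective are preserved verbatim.

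For the harder direction $\gamma^{(1)}(G) \lesssim \log|V|\cdot\gamma(G)$, I would follow the two-step outline given just after the proposition statement: Johnson--Lindenstrauss dimension reduction, followed by selection of a single good coordinate. Fix an optimal $(f,g)$ for $\gamma(G)$, and draw a random Gaussian projection $\Pi:\R^n\to\R^k$ with $k=\Theta(\log|V|)$, scaled so that $\mathbb{E}\bigl[\norm{\Pi x}^2\bigr]=\norm{x}^2$ for every fixed $x$. The standard JL tail bound combined with a single union bound over the $|V|$ points $\{f(v)\}_{v\in V}$ and the $|E|\leq\binom{|V|}{2}$ differences $\{f(u)-f(v) : uv\in E\}$ shows that with positive probability the following two statements hold up to a constant factor $C$: $\sum_v \pi(v)\norm{\Pi f(v)}^2 \geq 1/C$, and $\norm{\Pi f(u)-\Pi f(v)}^2 \leq C\bigl(g(u)+g(v)\bigr)$ for every $uv\in E$. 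The centering condition $\sum_v\pi(v)\Pi f(v)=\vec{0}$ transfers automatically by linearity of $\Pi$.

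The final step is a pigeonhole on coordinates. Writing $f_i(v):=(\Pi f(v))_i$ for $i\in[k]$, the first bound above gives $\sum_{i=1}^k \sum_v \pi(v) f_i(v)^2 = \sum_v \pi(v)\norm{\Pi f(v)}^2 \geq 1/C$, so some coordinate $i^\ast$ satisfies $\alpha := \sum_v\pi(v) f_{i^\ast}(v)^2 \geq 1/(Ck)$. Since $\sum_v\pi(v)f_{i^\ast}(v)=0$ and $(f_{i^\ast}(u)-f_{i^\ast}(v))^2 \leq \norm{\Pi f(u)-\Pi f(v)}^2 \leq C\bigl(g(u)+g(v)\bigr)$, the rescaled scalar pair $\bigl(f_{i^\ast}/\sqrt{\alpha},\ Cg/\alpha\bigr)$ is feasible for $\gamma^{(1)}(G)$ with objective value $(C/\alpha)\sum_v\pi(v)g(v) \leq C^2 k\cdot\gamma(G) = O(\log|V|)\cdot\gamma(G)$, as required.

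The only nontrivial point is arranging that both JL-style events (norm preservation at every $v$ and distance preservation at every edge) hold simultaneously for some realization of $\Pi$. This is handled routinely by choosing $k=\Theta(\log|V|)$ large enough that all $O(|V|^2)$ JL events succeed with probability at least $1-|V|^{-\Omega(1)}$, after which a single union bound suffices. No ideas beyond the standard JL lemma and a coordinate pigeonhole are needed, which is why the loss is exactly a factor of $\log|V|$ and not more.
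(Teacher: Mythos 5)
Your proof is correct, and it follows exactly the route the paper attributes to Olesker-Taylor and Zanetti for this proposition: Johnson--Lindenstrauss projection to $O(\log |V|)$ dimensions with constant distortion (via a union bound over the vertex vectors and edge differences), followed by a pigeonhole selection of the best coordinate and rescaling to restore feasibility. The feasibility checks (centering by linearity, the per-edge constraint via the coordinate-squared bound, and the rescaled objective $O(k)\cdot\gamma(G)$) are all handled correctly, so there is nothing to add.
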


In the second step, Olesker-Taylor and Zanetti observed that the dual program in \autoref{def:Roch-dual-1D} is similar to the weighted vertex cover problem with edge weights $(f(u)-f(v))^2$ for each edge $uv \in E$, which is equivalent to the fractional matching problem by linear programming duality.
To analyze \autoref{def:Roch-dual-1D}, they introduced an interesting new concept called ``matching conductance'', and used some combinatorial arguments about greedy matching as well as some spectral arguments to prove the following Cheeger-type inequality.

\begin{theorem}[\cite{OTZ22}, Theorem 2.10] \label{thm:OTZ-Cheeger}
For any undirected graph $G=(V,E)$ and the uniform distribution $\pi = \vec{1}/|V|$,
\[
\psi(G)^2 \lesssim \gamma^{(1)}(G) \lesssim \psi(G).
\]
\end{theorem}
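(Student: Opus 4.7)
\textbf{Easy direction $\gamma^{(1)}(G) \lesssim \psi(G)$.} I would construct an explicit feasible solution to \autoref{def:Roch-dual-1D} from the optimal vertex cut. Take $S \subseteq V$ achieving $\psi(G)$ with $\pi(S) \le 1/2$, and set $f := \alpha\one_S - \beta\one_{V \setminus S}$ where $\alpha := \sqrt{\pi(V \setminus S)/\pi(S)}$ and $\beta := \sqrt{\pi(S)/\pi(V \setminus S)}$. A direct calculation gives $\sum_v \pi(v) f(v) = 0$, $\sum_v \pi(v) f(v)^2 = 1$, and $(\alpha + \beta)^2 = 1/(\pi(S)\pi(V \setminus S)) \le 2/\pi(S)$. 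Setting $g(v) := (\alpha + \beta)^2 \cdot \one[v \in \partial S]$ makes $(f, g)$ feasible, because every edge either has $f(u) = f(v)$ (trivial) or crosses between $S$ and $\partial S$, in which case the $\partial S$-endpoint alone supplies $g(u) + g(v) \ge (\alpha + \beta)^2 = (f(u) - f(v))^2$. The objective value is $(\alpha + \beta)^2 \pi(\partial S) \le 2\pi(\partial S)/\pi(S) = 2\psi(G)$.

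\textbf{Hard direction $\psi(G)^2 \lesssim \gamma^{(1)}(G)$: Cheeger sweep.} Given an optimal $(f, g)$ with $\gamma := \gamma^{(1)}(G)$, the plan is a Cheeger-type sweep. First I would translate $f$ by its $\pi$-weighted median $m$, which preserves the constraint $(f(u) - f(v))^2 \le g(u) + g(v)$ and forces $\pi(\{f > 0\}), \pi(\{f < 0\}) \le 1/2$. Splitting $f = f_+ - f_-$, the Lipschitz bound $|f_\pm(u) - f_\pm(v)| \le |f(u) - f(v)|$ shows that the same $g$ continues to dominate along each edge, and WLOG $\sum_v \pi(v) f_+(v)^2 \ge 1/2$ (the shifted total is $1 + m^2 \ge 1$). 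With $h := f_+^2$ and $M := \max_v f_+(v)$, pick $t$ uniformly from $[0, M^2]$ and consider $S_t := \{v : h(v) > t\}$, so $\pi(S_t) \le 1/2$ always, and a direct computation gives
\[
\expe[\pi(S_t)] = \tfrac{1}{M^2}\textstyle\sum_v \pi(v) h(v), \quad \expe[\pi(\partial S_t)] = \tfrac{1}{M^2}\textstyle\sum_v \pi(v)\bigl(\max_{u \sim v} h(u) - h(v)\bigr)_+.
\]
By the standard averaging argument it suffices to establish $\expe[\pi(\partial S_t)] \lesssim \sqrt{\gamma}\cdot\expe[\pi(S_t)]$, which would produce a threshold $t^\star$ with $\psi(S_{t^\star}) \lesssim \sqrt{\gamma}$.

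\textbf{Main obstacle.} The crux is bounding the ``max-over-neighbors'' sum $\sum_v \pi(v)\bigl(\max_{u \sim v} h(u) - h(v)\bigr)_+$. Factoring $h(u) - h(v) = (f_+(u) - f_+(v))(f_+(u) + f_+(v))$, using the constraint $(f_+(u) - f_+(v))^2 \le g(u) + g(v)$, and applying Cauchy-Schwarz after choosing for each $v$ a maximizing neighbor $u^\star(v)$, one arrives at
\[
\sqrt{\textstyle\sum_v \pi(v)\bigl(g(v) + g(u^\star(v))\bigr)} \cdot \sqrt{\textstyle\sum_v \pi(v)\bigl(f_+(v) + f_+(u^\star(v))\bigr)^2}.
\]
The hard part is that a given $u$ may be selected as $u^\star(v)$ by up to $\deg(u)$ different vertices $v$, so both sums can inflate by a factor of the maximum degree $d$, which would wreck the target $O(\sqrt{\gamma})$ bound --- this is the essential difference from the edge-Cheeger setting, where the corresponding quantity is a sum over edges. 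The resolution, due to Olesker-Taylor and Zanetti, is to replace pointwise charging by a matching-based one: for fixed $f$, the inner minimization over $g$ is a weighted fractional vertex cover whose LP dual is a fractional matching with vertex capacities $\pi(v)$ and edge weights $(f(u) - f(v))^2$. This motivates defining a \emph{matching conductance} in which each $v \in \partial S_t$ is paired with a \emph{distinct} $u \in S_t$ via a greedy/K\"onig-type matching, so each $u$ is charged at most once and the degree-dependent blowup disappears. Relating the matching conductance back to the vertex conductance within constant factors and plugging the resulting bound into the sweep analysis then completes the proof.
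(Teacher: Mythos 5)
Your easy direction is fine and essentially matches the paper's \autoref{lem:Cheeger-easy} (two-valued $f$, $g$ supported on $\partial S$, objective at most $2\psi(G)$). The hard direction, however, has a genuine gap: after setting up the sweep and correctly identifying the obstacle --- that charging $\max_{u\sim v}$ terms via a chosen neighbor $u^\star(v)$ lets a single vertex be charged $\deg(u)$ times, inflating the bound by the maximum degree --- you do not resolve it. You write that ``the resolution, due to Olesker-Taylor and Zanetti, is to replace pointwise charging by a matching-based one'' and that ``relating the matching conductance back to the vertex conductance within constant factors \dots then completes the proof,'' but none of this is actually carried out: there is no definition of matching conductance, no greedy-matching lemma bounding the sweep, and no proof of the comparison between matching conductance and vertex expansion. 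Since these are precisely the nontrivial steps of the hard direction, the proposal at this point is an appeal to the cited result rather than a proof of it; as written, $\psi(G)^2 \lesssim \gamma^{(1)}(G)$ is not established.

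For contrast, the paper closes exactly this gap without matching conductance, and the fix can be grafted onto your own sweep. Doubling $g$ shows $\gamma^{(1)}(G) \leq \directed{\gamma}^{(1)}(G) \leq 2\gamma^{(1)}(G)$ (\autoref{lem:factor2}), and the max-constraint program can be rewritten by orienting each edge toward an endpoint whose $g$-value dominates (\autoref{def:compact}); consequently the problematic quantity $\sum_{v}\pi(v)\max_{u:u\to v}\big(x(u)-x(v)\big)^2$ is not something you must charge to $g$ neighbor-by-neighbor --- it \emph{is} the objective of the directed program, hence at most $O(\gamma^{(1)}(G))$ after truncation (\autoref{lem:truncation}). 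The sweep is then analyzed against the directed vertex boundary $\directed{\partial}S_t$ of \autoref{def:vertex-cover} (a vertex cover of $\delta(S_t)$, which may intersect $S_t$), where the numerator bound uses only $|x(u)^2-x(v)^2|\le (x(u)-x(v))^2+2x(v)|x(u)-x(v)|$ and one Cauchy--Schwarz, with the max kept inside the sum (\autoref{prop:threshold-rounding}); a final cleanup $S':=S\setminus\directed{\partial}S$ converts small $\directed{\psi}(S)$ into small $\psi(S')$ (\autoref{lem:cleanup}). This route also yields the statement for arbitrary $\pi$ (\autoref{thm:weighted-Cheeger}), whereas the matching-conductance argument you invoke is specific to the uniform distribution. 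If you intend to keep the matching-based route, you must actually supply the greedy-matching analysis of the sweep and the matching-to-vertex-expansion comparison; otherwise, switching to the orientation argument above is the simpler way to complete your proof.
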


Combining \autoref{prop:Roch-dual} and \autoref{prop:projection} and \autoref{thm:OTZ-Cheeger} gives 
\[
\psi(G)^2 \lesssim \gamma^{(1)}(G) \lesssim \log |V| \cdot \gamma(G) = \log |V| \cdot \lambda_2^*(G)
\quad {\rm and} \quad
\lambda_2^*(G) = \gamma(G) \leq \gamma^{(1)}(G) \lesssim \psi(G),
\]
proving \autoref{thm:OTZ22}.

Note that the proof of the second step only works when $\pi$ is the uniform distribution.  
Olesker-Taylor and Zanetti discussed some difficulty in generalizing their combinatorial arguments to the weighted setting, and left it as an open question to prove \autoref{thm:OTZ-Cheeger} for any probability distribution $\pi$.

\subsubsection{Review of~\cite{LRV13}} \label{sec:background-LRV}

Our proof is based on the techniques in~\cite{LRV13} which we review here.
Their algorithm is based on the following ``spectral'' formulation $\lambda_{\infty}$ by Bobkov, Houdr\'e and Tetali~\cite{BHT00}, which is for the uniform distribution $\pi$.

\begin{definition}[$\lambda_{\infty}$ in~\cite{BHT00}] \label{def:lambda-infty}
Given an undirected graph $G = (V, E)$, 
\begin{eqnarray*}
\lambda_{\infty}(G) :=
\min_{x: V \to \R,~x \perp \vec{1}} 
\frac{\sum_{u \in V} \max_{v: (v, u) \in E}~(x(u) - x(v))^2}{\sum_{u \in V} x(u)^2}.
\end{eqnarray*}
\end{definition}

Bobkov, Houdr\'e and Tetali~\cite{BHT00} proved an exact analog of Cheeger's inequality for symmetric vertex expansion that $\frac12 \psi_{\text{sym}}(G)^2 \leq \lambda{\infty}(G) \leq 2 \psi_{\text{sym}}(G)$.
We will use some of their arguments to prove a similar statement in \autoref{thm:weighted-Cheeger} in \autoref{sec:proof-Cheeger}.

The issue is that $\lambda_{\infty}$ is not known to be efficiently computable, and indeed recently Farhadi, Louis and Tetali~\cite{FLT20} proved that it is NP-hard to compute $\lambda_{\infty}(G)$ exactly.
To design an approximation algorithm for $\psi(G)$, Louis, Raghavendra and Vempala~\cite{LRV13} defined the following semidefinite programming relaxation for $\lambda_{\infty}$, which we denote by ${\sf sdp}_{\infty}$.

\begin{definition}[${\sf sdp}_{\infty}$ in~\cite{LRV13}] \label{def:sdp-infty}
Given an undirected graph $G = (V, E)$, 
\begin{eqnarray*}
{\sf sdp}_{\infty}(G) :=
\min_{f: V \to \R^n,~g: V \to \R} & &
\sum_{v \in V} g(v)
\\
\st & & \sum_{v \in V} \norm{f(v)}^2 = 1 
\\
& & \sum_{v \in V} f(v) = \vec{0}
\\
& & g(v) \geq \norm{f(u) - f(v)}^2	\quad \quad \forall u \in V {\rm~with~} uv \in E.
\end{eqnarray*}
\end{definition}

The rounding algorithm in~\cite{LRV13} is to project the solution to ${\sf sdp}_{\infty}$ into a $1$-dimensional solution by setting $x(v) = \inner{f(v)}{h}$ where $h \sim N(0,1)^n$ is a random Gaussian vector.
They proved that the $1$-dimensional solution is a $O(\log d)$-approximation to ${\sf sdp}_{\infty}$ where $d$ is the maximum degree of the graph.

\begin{theorem}[\cite{LRV13}, Lemma 9.6] \label{thm:sdp-vs-lambda}
For any undirected graph $G=(V,E)$ with maximum degree $d$,
\[
{\sf sdp}_{\infty}(G) \leq \lambda_{\infty} \lesssim \log d \cdot {\sf sdp}_{\infty}(G).
\]
\end{theorem}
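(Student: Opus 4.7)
The plan is to prove the two inequalities in \autoref{thm:sdp-vs-lambda} separately. The left inequality ${\sf sdp}_{\infty}(G) \leq \lambda_{\infty}(G)$ is the standard ``SDP is a relaxation'' direction: given any scalar function $x$ feasible for $\lambda_{\infty}$, I would lift it to a vector solution by setting $f(v) := x(v) \cdot e_1$ for a fixed unit vector $e_1 \in \R^n$ and $g(v) := \max_{u : uv \in E} (x(u) - x(v))^2$. All constraints of \autoref{def:sdp-infty} are then immediate, and the two objective values coincide.

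For the right inequality $\lambda_{\infty}(G) \lesssim \log d \cdot {\sf sdp}_{\infty}(G)$, I would round an optimal ${\sf sdp}_{\infty}$ solution $(f,g)$ by Gaussian projection. Sample $h \sim N(0, I_n)$ and define the scalar function $x_h(v) := \inner{f(v)}{h}$. Since $\sum_v f(v) = \vec{0}$, one has $\sum_v x_h(v) = 0$ almost surely, so $x_h \perp \vec{1}$ automatically. Moreover, for every edge $uv \in E$, the difference $x_h(u) - x_h(v) = \inner{f(u) - f(v)}{h}$ is a centered Gaussian whose variance $\norm{f(u) - f(v)}^2$ is bounded by $g(v)$ by the SDP constraint.

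The crux of the argument is the following expectation bound on the numerator of $\lambda_{\infty}$ after projection:
\[
\E_h\!\Bigg[\sum_{v \in V} \max_{u : uv \in E} \bigl(x_h(u) - x_h(v)\bigr)^2\Bigg] \lesssim \log d \cdot \sum_{v \in V} g(v).
\]
This follows from the classical Gaussian maximal inequality, which states that for $k$ (possibly correlated) centered Gaussians with variances at most $\sigma^2$, the expected maximum of their squares is $O(\sigma^2 \log k)$. Applying this at each vertex $v$ to the at most $d$ edge-differences incident to $v$, each of variance at most $g(v)$, gives a vertex-wise bound, and summing over $v$ finishes the display. This Gaussian maximal step is the main technical ingredient and the source of the $\log d$ factor, which is the tightness bottleneck of the entire approach.

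Finally, since the denominator satisfies $\E_h[\sum_v x_h(v)^2] = \sum_v \norm{f(v)}^2 = 1$ and the Rayleigh-type ratio defining $\lambda_{\infty}$ is scale-invariant in $x_h$, a standard ratio-of-expectations argument produces a realization $h$ for which the numerator-to-denominator ratio is at most the ratio of their expectations. After rescaling $x_h$ to the unit sphere, this yields a feasible $\lambda_{\infty}$-solution with objective value $O(\log d) \cdot {\sf sdp}_{\infty}(G)$, completing the proof.
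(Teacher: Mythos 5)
Your proposal is correct and follows essentially the same route as \cite{LRV13} (the argument the paper reviews here and adapts in \autoref{prop:improved-projection}): the easy direction by lifting a scalar solution to a one-dimensional vector solution, and the hard direction by Gaussian projection, where the expected maximum of at most $d$ squared centered Gaussians per vertex (cf.\ \autoref{Gaussian-expected-max}) is precisely the source of the $\log d$ factor. The only deviation is your concluding step: you use a ratio-of-expectations averaging argument (valid here, since the event $\sum_v x_h(v)^2 = 0$ has measure zero), whereas \cite{LRV13} and the paper instead combine Markov's inequality on the numerator with the anti-concentration bound of \autoref{Gaussian-denominator} on the denominator, which additionally yields a constant-probability, hence algorithmic, rounding guarantee; for the stated inequality your existential version suffices.
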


For the analysis, they used the following properties of Gaussian random variables, for which we will also use in our proofs and so we state them here.
The first fact is for the analysis of the numerator and the second fact is for the analysis of the denominator of $\lambda_{\infty}$.

\begin{fact}[\cite{LRV13}, Fact 9.7] \label{Gaussian-expected-max}
Let $Y_1, Y_2, \ldots, Y_d$ be $d$ Gaussian random variables with mean $0$ and variance at most $\sigma^2$.
Let $Y$ be the random variable defined as $Y := \max\{Y_i \mid i \in [d]\}$.
Then 
\[
\E[Y] \leq 2 \sigma \sqrt{\log d}.
\]
\end{fact}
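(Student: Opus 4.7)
The plan is to use the standard Chernoff-style argument based on Jensen's inequality and the moment generating function of a Gaussian, which makes no use of independence and hence works for arbitrary joint distribution of the $Y_i$'s. First I would introduce a free parameter $t > 0$ to be optimized at the end, and apply Jensen's inequality to the convex function $z \mapsto e^{tz}$ to obtain
\[
  e^{t \E[Y]} \;\leq\; \E\bigl[e^{tY}\bigr] \;=\; \E\!\left[\max_{i \in [d]} e^{t Y_i}\right] \;\leq\; \sum_{i=1}^d \E\bigl[e^{t Y_i}\bigr].
\]
The key point in this chain is that replacing the maximum of the exponentials by the sum costs only a factor of $d$, and this is where the $\log d$ in the final bound will come from.

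Next I would plug in the moment generating function of a centered Gaussian: if $Y_i$ has mean $0$ and variance $\sigma_i^2 \leq \sigma^2$, then $\E[e^{t Y_i}] = e^{t^2 \sigma_i^2 / 2} \leq e^{t^2 \sigma^2 / 2}$. Substituting this back gives
\[
  e^{t \E[Y]} \;\leq\; d \cdot e^{t^2 \sigma^2 / 2},
\]
and taking logarithms and dividing by $t$ yields the one-parameter bound
\[
  \E[Y] \;\leq\; \frac{\log d}{t} + \frac{t \sigma^2}{2}.
\]

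Finally I would optimize over $t > 0$. Taking the derivative in $t$ and setting it to zero gives the minimizer $t^* = \sigma^{-1}\sqrt{2 \log d}$, which yields $\E[Y] \leq \sigma \sqrt{2 \log d} \leq 2 \sigma \sqrt{\log d}$, as claimed. There is no real obstacle here: the only thing to keep in mind is that the argument is a union-bound-in-exponential-form and therefore does not require the $Y_i$ to be independent, which matches how the fact is applied in the rounding analysis of \cite{LRV13}.
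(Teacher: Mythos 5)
Your proof is correct, and it is the standard moment-generating-function/Jensen argument for this fact (which the paper does not prove itself but imports from \cite{LRV13}, where it is established the same way); crucially, as you note, it requires no independence, which is exactly how the fact is used in the Gaussian projection analysis. The optimization at $t = \sigma^{-1}\sqrt{2\log d}$ gives $\sigma\sqrt{2\log d} \leq 2\sigma\sqrt{\log d}$, matching the stated constant.
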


\begin{fact}[\cite{LRV13}, Lemma 9.8] \label{Gaussian-denominator}
Suppose $z_1, \ldots, z_m$ are Gaussian random variables (not necessarily independent) such that $\E[\sum_{i=1}^m z_i^2] = 1$.
Then 
\[
\Pr\left[ \sum_{i=1}^m z_i^2 \geq \frac{1}{2} \right] \geq \frac{1}{12}.
\]
\end{fact}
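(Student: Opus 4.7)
The plan is to combine the Paley--Zygmund second-moment inequality with Isserlis' (Wick's) formula for fourth moments of jointly Gaussian variables. Set $S := \sum_{i=1}^m z_i^2$, so $\E[S] = 1$. Paley--Zygmund with threshold $\theta = \tfrac12$ gives
\[
\Pr\!\left[S \geq \tfrac{1}{2}\right] \;\geq\; \left(1-\tfrac{1}{2}\right)^2 \frac{(\E[S])^2}{\E[S^2]} \;=\; \frac{1}{4\,\E[S^2]},
\]
so it suffices to show $\E[S^2] \leq 3$.

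The key step is to control the cross terms in $\E[S^2] = \sum_{i,j} \E[z_i^2 z_j^2]$. Since $z_1, \dots, z_m$ are (jointly) Gaussian with mean zero, Isserlis' formula yields
\[
\E[z_i^2 z_j^2] \;=\; \E[z_i^2]\,\E[z_j^2] + 2\,\bigl(\E[z_i z_j]\bigr)^2 .
\]
Applying Cauchy--Schwarz to the covariance gives $\bigl(\E[z_i z_j]\bigr)^2 \leq \E[z_i^2]\,\E[z_j^2]$, so $\E[z_i^2 z_j^2] \leq 3\,\E[z_i^2]\,\E[z_j^2]$. Summing,
\[
\E[S^2] \;\leq\; 3 \sum_{i,j} \E[z_i^2]\,\E[z_j^2] \;=\; 3 \left(\sum_{i=1}^m \E[z_i^2]\right)^{\!2} \;=\; 3\,(\E[S])^2 \;=\; 3.
\]
Plugging into the Paley--Zygmund bound yields $\Pr[S \geq 1/2] \geq 1/12$.

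The only subtlety, and the main thing to state carefully at the outset, is that ``Gaussian random variables, not necessarily independent'' should be read as jointly Gaussian; this is what legitimizes the use of Isserlis' formula (equivalently, one writes each $z_i = \inner{a_i}{h}$ for some fixed vectors $a_i$ and a standard Gaussian $h$, and the identity $\E[z_i^2 z_j^2] = \E[z_i^2]\E[z_j^2] + 2(\E[z_i z_j])^2$ follows from a direct calculation in this representation). Apart from this observation, the argument is a clean two-line second-moment computation, and no additional structural properties of the $z_i$ are needed.
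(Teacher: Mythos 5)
Your argument is correct, but note that the paper does not prove \autoref{Gaussian-denominator} at all — it is imported verbatim from \cite{LRV13} (Lemma 9.8) — so the comparison is really with that source, whose proof is the same second-moment computation: Paley--Zygmund with threshold $\tfrac12$ reduces everything to $\E[S^2]\le 3(\E[S])^2$, which is what you establish. One remark on your ``only subtlety'': the appeal to Isserlis, and hence to joint Gaussianity (and, implicitly, to mean zero, which the statement never assumes), is not actually needed. You can bound each cross term by Cauchy--Schwarz directly, $\E[z_i^2 z_j^2]\le\sqrt{\E[z_i^4]\,\E[z_j^4]}$, and then use the one-dimensional inequality $\E[z^4]\le 3\,(\E[z^2])^2$, which holds for every Gaussian, centered or not (for $z=\mu+\sigma g$ one has $\E[z^4]=\mu^4+6\mu^2\sigma^2+3\sigma^4\le 3(\mu^2+\sigma^2)^2$). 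This uses only the marginal Gaussianity of each $z_i$, so the fact is true exactly as stated, with no joint-Gaussianity hypothesis; in the paper's application (\autoref{prop:improved-projection}) the variables $z_v=\sqrt{\pi(v)}\,\inner{f(v)}{h}$ are jointly Gaussian and centered anyway, so your reading is harmless there, but the Cauchy--Schwarz route is both more elementary and more faithful to the statement.
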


\subsection{Proof of \autoref{thm:Cheeger-vertex}} \label{sec:proof-Cheeger}

We follow the same two-step plan as in~\cite{OTZ22}.
We will prove in \autoref{prop:improved-projection} in \autoref{sec:projection} that $\gamma^{(1)}(G) \lesssim \gamma(G) \cdot \log d$ for any probability distribution $\pi$.
Note that this already improves \autoref{thm:OTZ22} to the optimal bound,
when $\pi$ is the uniform distribution.
Then, we will prove in \autoref{thm:weighted-Cheeger} in \autoref{sec:Cheeger-rounding} that $\psi(G)^2 \lesssim \gamma^{(1)}(G) \lesssim \psi(G)$ for any probability distribution $\pi$ on $V$.
As in~\cite{OTZ22}, combining \autoref{prop:Roch-dual} and \autoref{prop:improved-projection} and \autoref{thm:weighted-Cheeger} gives \autoref{thm:Cheeger-vertex}.

\subsubsection{Dual Program on Graph Orientation} \label{sec:dual-orientation}

To extend the techniques in~\cite{LRV13,BHT00} to prove the two steps,
we will introduce a ``directed'' program $\directed{\gamma}(G)$ to bring $\gamma(G)$ in \autoref{prop:Roch-dual} closer to ${\sf sdp}_{\infty}(G)$ in \autoref{def:sdp-infty}.

Observe that the two SDP programs $\gamma(G)$ and ${\sf sdp}_{\infty}(G)$ have very similar form.
The only difference is that the last constraint in \autoref{prop:Roch-dual} only requires that $g(u) + g(v) \geq \norm{f(u) - f(v)}^2$ for $uv \in E$, while the last constraint in \autoref{def:sdp-infty} has a stronger requirement that $\min\{g(u),g(v)\} \geq \norm{f(u)-f(v)}^2$ for $uv \in E$.
So ${\sf sdp}_{\infty}$ is a stronger relaxation than $\gamma(G) = \lambda_2^*(G)$.

\begin{lemma}
\label{lem:lambda2-vs-sdp-infty}
For any undirected graph $G=(V,E)$ and any probability distribution $\pi$ on $V$,
\[\lambda_2^*(G) \leq {\sf sdp}_{\infty}(G).\]
\end{lemma}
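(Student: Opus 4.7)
The plan is to exhibit a feasibility-preserving map from solutions of ${\sf sdp}_{\infty}(G)$ into solutions of the dual program $\gamma(G)$ from \autoref{prop:Roch-dual}, and then invoke strong duality $\lambda_2^*(G) = \gamma(G)$ to conclude. The whole point is that the only real difference between the two SDPs, after accounting for normalization, is that the edge constraint in ${\sf sdp}_{\infty}$ is pointwise ($g(v) \geq \|f(u)-f(v)\|^2$ for every neighbor $u$ of $v$) while the edge constraint in $\gamma(G)$ is a weaker averaged version ($g(u)+g(v) \geq \|f(u)-f(v)\|^2$). So the map should simply be the identity, up to a deterministic rescaling that aligns the isotropy constraint and the objective.

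Concretely, I would proceed as follows. First, take an arbitrary feasible pair $(f,g)$ for ${\sf sdp}_{\infty}(G)$, which in \autoref{def:sdp-infty} is normalized by $\sum_v \|f(v)\|^2 = 1$ and uses the uniform measure; the version of the statement we need is in that same regime (with $\pi = \vec 1/|V|$, or a straightforward $\pi$-weighted analog of ${\sf sdp}_{\infty}$). Define the rescaled solution $f'(v) := \sqrt{n}\, f(v)$ and $g'(v) := n\, g(v)$, so that
\[
\sum_{v \in V} \pi(v)\, \|f'(v)\|^2 \;=\; \sum_{v \in V} \|f(v)\|^2 \;=\; 1,
\qquad
\sum_{v \in V} \pi(v)\, f'(v) \;=\; \frac{1}{\sqrt n}\sum_{v \in V} f(v) \;=\; \vec 0,
\]
which are exactly the first two constraints in \autoref{prop:Roch-dual}.

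Second, check the edge constraint. From the ${\sf sdp}_{\infty}$ constraints applied symmetrically in $u,v$, one has both $g(v) \geq \|f(u)-f(v)\|^2$ and $g(u) \geq \|f(u)-f(v)\|^2$ for every $uv \in E$, and in particular $g,g' \geq 0$. Rescaling gives $g'(u), g'(v) \geq \|f'(u)-f'(v)\|^2$, and since $g' \geq 0$ we certainly have
\[
g'(u) + g'(v) \;\geq\; \|f'(u)-f'(v)\|^2 \qquad \forall\, uv \in E,
\]
which is the edge constraint in \autoref{prop:Roch-dual}. Thus $(f', g')$ is feasible for $\gamma(G)$, and its objective value equals $\sum_v \pi(v)\, g'(v) = \sum_v g(v) = {\sf sdp}_{\infty}(G)$'s objective at $(f,g)$. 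Taking the infimum over $(f,g)$ yields $\gamma(G) \leq {\sf sdp}_{\infty}(G)$, and \autoref{prop:Roch-dual} then gives $\lambda_2^*(G) = \gamma(G) \leq {\sf sdp}_{\infty}(G)$.

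There is essentially no hard step: the proof is a one-line containment of feasible regions once the scaling is right. The only thing to be careful about is matching normalizations between the ${\sf sdp}_{\infty}$ formulation (which in \autoref{def:sdp-infty} is written in the uniform/unnormalized form) and the $\pi$-weighted formulation of $\gamma(G)$; this is a bookkeeping issue, not a conceptual one. I would also remark afterwards that the containment is strict in general, which is why ${\sf sdp}_{\infty}$ does not have a fastest-mixing-time interpretation while $\lambda_2^*$ does, motivating the introduction of the intermediate directed program $\directed{\gamma}(G)$ discussed immediately after.
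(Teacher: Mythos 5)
Your proof is correct and matches the paper's own (essentially one-line) justification: the pointwise constraint $g(v) \geq \norm{f(u)-f(v)}^2$ together with $g \geq 0$ implies the averaged constraint $g(u)+g(v) \geq \norm{f(u)-f(v)}^2$, so after the normalization bookkeeping the feasible region of ${\sf sdp}_{\infty}$ sits inside that of $\gamma(G)$, and strong duality $\lambda_2^*(G)=\gamma(G)$ from \autoref{prop:Roch-dual} finishes it. Your explicit rescaling to reconcile the uniform normalization of \autoref{def:sdp-infty} with the $\pi$-weighted form of $\gamma(G)$ is a reasonable way to handle a point the paper leaves implicit.
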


For our analysis of $\lambda_2^*(G)$, we consider the following ``directed'' program $\directed{\gamma}(G)$ where the last constraint is $\max\{g(u),g(v)\} \geq \norm{f(u)-f(v)}^2$ for $uv \in E$.
We also state the corresponding $1$-dimensional version as in \autoref{def:Roch-dual-1D} in the following definition.

\begin{definition}[Directed Dual Programs for $\gamma(G)$] \label{def:directed-gamma}
Given an undirected graph $G=(V,E)$ and a probability distribution $\pi$ on $V$,
\begin{eqnarray*}
\directed{\gamma}(G) := \min_{f: V \to \R^n,~g: V \to \R_{\geq 0}} & & \sum_{v \in V} \pi(v) g(v)
\\
\st & & {\sum_{v \in V} \pi(v) \norm{f(v)}^2} = 1
\\
& & \sum_{v \in V} \pi(v) f(v) = \vec{0}
\\
& & \max\{g(u),g(v)\} \geq \norm{f(u) - f(v)}^2 \quad \quad \forall uv \in E.
\end{eqnarray*}
$\directed{\gamma}^{(1)}(G)$ is defined as the $1$-dimensional program of $\directed{\gamma}(G)$ where $f:V \to \R$ instead of $f: V \to \R^n$.
\end{definition}

Note that $\directed{\gamma}(G)$ is not a semidefinite program because of the max constraint, but $\gamma(G)$ and $\directed{\gamma}(G)$ are closely related and $\directed{\gamma}(G)$ is only used in the analysis as a proxy for $\gamma(G)$.

\begin{lemma} \label{lem:factor2}
For any undirected graph $G=(V,E)$ and any probability distribution $\pi$ on $V$,
\[
\gamma(G) \leq \directed{\gamma}(G) \leq 2\gamma(G)
\quad {\rm and} \quad
\gamma^{(1)}(G) \leq \directed{\gamma}^{(1)}(G) \leq 2\gamma^{(1)}(G).
\]
\end{lemma}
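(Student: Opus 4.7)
The plan is to prove both chains of inequalities by exploiting the elementary observation that for any two nonnegative reals $a,b \geq 0$, we have
\[
\max\{a,b\} \;\leq\; a + b \;\leq\; 2\max\{a,b\}.
\]
Since $g: V \to \R_{\geq 0}$ in both programs, this inequality is the only link we need between the ``sum'' constraint $g(u) + g(v) \geq \|f(u)-f(v)\|^2$ appearing in $\gamma(G)$ and the ``max'' constraint $\max\{g(u), g(v)\} \geq \|f(u)-f(v)\|^2$ appearing in $\directed{\gamma}(G)$. All the other constraints (normalization, centering, the ambient dimension) are identical between the two programs, so feasibility is controlled solely by this single edge constraint.

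For the lower bound $\gamma(G) \leq \directed{\gamma}(G)$, I would take any feasible solution $(f,g)$ to $\directed{\gamma}(G)$ and observe that
\[
g(u) + g(v) \;\geq\; \max\{g(u), g(v)\} \;\geq\; \|f(u) - f(v)\|^2 \qquad \forall uv \in E,
\]
so $(f,g)$ is feasible for $\gamma(G)$ with the same objective $\sum_v \pi(v) g(v)$. Taking the infimum over such solutions yields $\gamma(G) \leq \directed{\gamma}(G)$.

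For the upper bound $\directed{\gamma}(G) \leq 2\gamma(G)$, I would take any feasible $(f,g)$ for $\gamma(G)$ and define the modified pair $(f, g')$ with $g'(v) := 2g(v)$. Nonnegativity of $g'$ is immediate, and for any edge $uv \in E$,
\[
\max\{g'(u), g'(v)\} \;=\; 2\max\{g(u), g(v)\} \;\geq\; g(u) + g(v) \;\geq\; \|f(u) - f(v)\|^2,
\]
so $(f, g')$ is feasible for $\directed{\gamma}(G)$ with objective $\sum_v \pi(v) g'(v) = 2 \sum_v \pi(v) g(v)$. Taking the infimum gives $\directed{\gamma}(G) \leq 2\gamma(G)$. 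The very same argument, verbatim, handles the one-dimensional versions $\gamma^{(1)}(G)$ and $\directed{\gamma}^{(1)}(G)$, since the only difference between the one-dimensional and vector programs is the codomain of $f$, which plays no role in either direction of the comparison. There is no real obstacle here; the lemma is essentially a bookkeeping statement that justifies using the (non-convex) directed program as a factor-$2$ proxy for $\gamma(G)$ throughout the later analysis.
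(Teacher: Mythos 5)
Your proof is correct and follows exactly the paper's argument: the directed program's constraint implies the sum constraint since $g \geq 0$, giving $\gamma(G) \leq \directed{\gamma}(G)$, and doubling $g$ turns a feasible solution of $\gamma(G)$ into one for $\directed{\gamma}(G)$, giving the factor-$2$ bound, with the same reasoning applying verbatim to the one-dimensional programs.
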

\begin{proof}
As $g \geq 0$, any feasible solution $f,g$ to $\directed{\gamma}(G)$ is a feasible solution to $\gamma(G)$ and so the first inequalities follow.
On the other hand, for any feasible solution $f,g$ to $\gamma(G)$, note that $f,2g$ is a feasible solution to $\directed{\gamma}(G)$ and so the second inequalities follow.
\end{proof}

The reason that we call $\directed{\gamma}(G)$ the ``directed'' program is as follows.
For each edge $uv \in E$, the constraint in ${\sf sdp}_{\infty}(G)$ requires both $g(u)$ and $g(v)$ to be at least $\norm{f(u)-f(v)}^2$, while the constraint in $\directed{\gamma}(G)$ only requires at least one of $g(u)$ or $g(v)$ to be at least $\norm{f(u)-f(v)}^2$. 
We think of $\directed{\gamma}(G)$ as assigning a direction to each edge and requiring that $g(v) \geq \norm{f(u)-f(v)}^2$ for each directed edge $u \to v$.
Then, we can rewrite the programs $\directed{\gamma}(G)$ and $\directed{\gamma}^{(1)}(G)$ by eliminating the variables $g(v)$ for $v \in V$, by minimizing over all possible orientations of the edge set $E$.
 
\begin{lemma}[Directed Dual Programs Using Orientation for $\gamma(G)$] \label{def:compact}
Let $G=(V,E)$ be an undirected graph and $\pi$ be a probability distribution on $V$.
Let $\directed{E}$ be an orientation of the undirected edges in $E$.
Then 
\begin{eqnarray*}
\directed{\gamma}(G) = \min_{f: V \to \R^n} \min_{\directed{E}} & & \sum_{v \in V} \pi(v) \max_{u: uv \in \directed{E}} \norm{f(u)-f(v)}^2
\\
\st & & {\sum_{v \in V} \pi(v) \norm{f(v)}^2} = 1
\\
& & \sum_{v \in V} \pi(v) f(v) = \vec{0}.
\end{eqnarray*}
Similarly, $\directed{\gamma}^{(1)}(G)$ can be written in the same form with $f:V \to \R$ instead of $f: V \to \R^n$.
\end{lemma}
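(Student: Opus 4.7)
The plan is to prove this by showing two inequalities between the two programs, each obtained by exhibiting how feasible solutions of one side correspond to feasible solutions of the other.

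For the direction that the orientation program is at most $\directed{\gamma}(G)$, I would start from any feasible $(f, g)$ for $\directed{\gamma}(G)$ and build a compatible orientation $\directed{E}$. The key observation is that the constraint $\max\{g(u), g(v)\} \geq \norm{f(u)-f(v)}^2$ for each edge $uv \in E$ says that at least one endpoint ``pays'' for that edge. So I orient $uv$ as $u \to v$ whenever $g(v) \geq \norm{f(u) - f(v)}^2$ (breaking ties arbitrarily). With this orientation, for each $v \in V$,
\[
\max_{u: uv \in \directed{E}} \norm{f(u) - f(v)}^2 \leq g(v),
\]
where the max is $0$ by convention if no edge is oriented into $v$. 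Summing against $\pi$ gives a value at most $\sum_v \pi(v) g(v)$, and $f$ itself still satisfies the two normalization constraints. Hence the orientation program's value is at most $\directed{\gamma}(G)$.

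For the reverse direction, given any feasible $f$ and orientation $\directed{E}$ for the orientation program, I define $g(v) := \max_{u: uv \in \directed{E}} \norm{f(u) - f(v)}^2$ (or $0$ if $v$ has no in-edges under $\directed{E}$). Clearly $g \geq 0$. For any $uv \in E$, the edge is oriented either $u \to v$ or $v \to u$; in the first case $g(v) \geq \norm{f(u)-f(v)}^2$, in the second case $g(u) \geq \norm{f(u)-f(v)}^2$, so the max constraint is satisfied. The objective $\sum_v \pi(v) g(v)$ then equals the orientation program's objective on $(f, \directed{E})$, showing $\directed{\gamma}(G)$ is at most the orientation program's value.

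Putting the two inequalities together yields equality, and the argument for $\directed{\gamma}^{(1)}(G)$ is identical with scalar-valued $f$ replacing vector-valued $f$. There is no real obstacle here; the lemma is essentially a tautological repackaging, with the only ``content'' being the elementary observation that optimizing $g$ pointwise under the disjunctive constraints $\max\{g(u), g(v)\} \geq \norm{f(u)-f(v)}^2$ is equivalent to choosing, edge-by-edge, which endpoint is responsible for the bound, i.e., choosing an orientation.
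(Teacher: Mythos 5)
Your proposal is correct and follows essentially the same argument as the paper: one direction defines $g(v)$ as the maximum of $\norm{f(u)-f(v)}^2$ over edges oriented into $v$, and the other direction orients each edge toward an endpoint whose $g$-value covers it, with $g \geq 0$ handling vertices with no in-edges. The paper's proof is the same tautological correspondence, so there is nothing to add.
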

\begin{proof}
In one direction, given an orientation $\directed{E}$, we can define $g(v) := \max_{u:uv \in \directed{E}} \norm{f(u)-f(v)}^2$, so that $f,g$ is a feasible solution to $\directed{\gamma}(G)$ as stated in \autoref{def:directed-gamma} with the same objective value.

In the other direction, given a solution $f,g$ in \autoref{def:directed-gamma}, we can define an orientation $\directed{E}$ of $E$ so that each directed edge $uv$ satisfies $g(v) \geq \norm{f(u)-f(v)}^2$.
Note that $g(v) \geq \max_{u:uv \in \directed{E}} \norm{f(u)-f(v)}^2$, and setting it to be an equality would satisfy all the constraints and not increase the objective value as $g \geq 0$.
\end{proof}

This formulation will be useful in both the Gaussian projection step for \autoref{prop:improved-projection} and the threshold rounding step for \autoref{thm:weighted-Cheeger}.

\subsubsection{Gaussian Projection} \label{sec:projection}

The following proposition is an improvement of \autoref{prop:projection} in~\cite{OTZ22}.
The formulation in \autoref{def:compact} allows us to use the expected maximum of Gaussian random variables in \autoref{Gaussian-expected-max} to analyze the projection as was done in~\cite{LRV13}.

\begin{proposition}[Gaussian Projection for $\gamma(G)$] \label{prop:improved-projection}
For any undirected graph $G=(V,E)$ with maximum degree $d$ and any probability distribution $\pi$ on $V$,
\[
\gamma(G) \leq \gamma^{(1)}(G) \lesssim \gamma(G) \cdot \log d. 
\]
\end{proposition}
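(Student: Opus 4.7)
The plan is to combine the directed dual formulation of \autoref{def:compact} with the Gaussian projection technique of~\cite{LRV13}. The lower bound $\gamma(G) \leq \gamma^{(1)}(G)$ is immediate, since any $1$-dimensional feasible solution to $\gamma^{(1)}(G)$ gives a feasible solution to $\gamma(G)$ with the same objective value (embed $\R$ into $\R^n$). For the upper bound, by \autoref{lem:factor2} it suffices to prove $\directed{\gamma}^{(1)}(G) \lesssim \log d \cdot \directed{\gamma}(G)$; I would work with the orientation formulation of \autoref{def:compact} because it is precisely what matches the shape of $\lambda_\infty$ and ${\sf sdp}_\infty$ in~\cite{LRV13}, so that the Gaussian projection argument can be ported over.

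First I would let $(f, \directed{E})$ be an optimal solution to $\directed{\gamma}(G)$ in the orientation form, then project by setting $\hx(v) := \inner{f(v)}{h}$ for $h \sim N(0, I_n)$, while keeping the \emph{same} orientation $\directed{E}$ for the $1$-dimensional solution. The mean constraint is preserved because $\sum_v \pi(v) \hx(v) = \bigip{\sum_v \pi(v) f(v)}{h} = 0$. For the numerator, each difference $\hx(u) - \hx(v) = \inner{f(u) - f(v)}{h}$ is a Gaussian with variance $\norm{f(u) - f(v)}^2$, and in the directed formulation the max at each vertex $v$ is only over $u$ with $uv \in \directed{E}$, so over at most $d$ Gaussians. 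A standard tail integration upgrades \autoref{Gaussian-expected-max} from a bound on $\E[\max_i Y_i]$ to the squared-version $\E[\max_i Y_i^2] \lesssim \sigma^2 \log d$, where $\sigma^2$ is the maximum variance; summing over $v$ with weights $\pi(v)$ gives
\[
\E\Big[\sum_{v \in V} \pi(v) \max_{u: uv \in \directed{E}}\!(\hx(u) - \hx(v))^2\Big] \lesssim \log d \cdot \directed{\gamma}(G).
\]

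For the denominator, linearity gives $\E\bigl[\sum_v \pi(v) \hx(v)^2\bigr] = \sum_v \pi(v) \norm{f(v)}^2 = 1$, and since $\hx(v)^2$ is a quadratic form in the Gaussian vector $h$, the sum $\sum_v \pi(v) \hx(v)^2$ is a sum of (possibly correlated) squared Gaussians with total expectation $1$, so \autoref{Gaussian-denominator} yields $\Pr\bigl[\sum_v \pi(v) \hx(v)^2 \geq 1/2\bigr] \geq 1/12$. Applying Markov to the numerator and union bounding with this event produces some realization of $h$ for which the numerator is $\lesssim \log d \cdot \directed{\gamma}(G)$ and the denominator is $\geq 1/2$. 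Rescaling $\hx$ by $1/\sqrt{\sum_v \pi(v) \hx(v)^2}$ then gives a feasible solution to $\directed{\gamma}^{(1)}(G)$ in the orientation form of \autoref{def:compact} with objective $\lesssim \log d \cdot \directed{\gamma}(G)$, which combined with \autoref{lem:factor2} yields the desired bound $\gamma^{(1)}(G) \lesssim \log d \cdot \gamma(G)$.

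The main obstacle is getting a squared-max bound rather than the plain max bound stated in \autoref{Gaussian-expected-max}; this requires a standard sub-Gaussian tail-integration but must be done carefully because it is the only place the $\log d$ factor enters. The role of the directed/orientation formulation is essential here: in the original undirected constraint $g(u) + g(v) \geq \norm{f(u)-f(v)}^2$, an edge $uv$ is charged to both endpoints, whereas in the directed form it is charged to a single endpoint, so each vertex only sees at most $d$ Gaussians and we lose a factor $\log d$ instead of a naive $\log |V|$ (or $\log m$). Once this is in place, the probabilistic combination of numerator and denominator is routine.
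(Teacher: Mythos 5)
Your proposal is correct and follows essentially the same route as the paper: reduce to the directed orientation form via \autoref{lem:factor2} and \autoref{def:compact}, project with a single Gaussian vector, bound the numerator per-vertex using the expected maximum of at most $d$ Gaussians, bound the denominator via \autoref{Gaussian-denominator}, combine with Markov and a union bound, and rescale. Your explicit remark that \autoref{Gaussian-expected-max} must be upgraded to a squared-max bound $\E[\max_i Y_i^2] \lesssim \sigma^2 \log d$ by tail integration is in fact slightly more careful than the paper, which applies the cited fact to the squared maximum directly.
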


\begin{proof}
We will prove that $\directed{\gamma}(G) \leq \directed{\gamma}^{(1)}(G) \lesssim \log d \cdot \directed{\gamma}(G)$, and the proposition will follow from \autoref{lem:factor2}.
The first inequality is immediate as $\directed{\gamma}^{(1)}(G)$ is a restriction of $\directed{\gamma}(G)$, so we focus on proving the second inequality.

Let $f:V \to \R^n$ and $\directed{E}$ be a solution to $\directed{\gamma}(G)$ as stated in \autoref{def:compact}.
As in~\cite{LRV13}, we construct a $1$-dimensional solution $y \in \R^n$ to $\directed{\gamma}^{(1)}(G)$ by setting $y(v) = \inner{f(v)}{h}$, where $h \sim N(0,1)^n$ is a Gaussian random vector with independent entries.

First, consider the expected objective value of $y$ to $\directed{\gamma}^{(1)}(G)$.
For each max term in the summand, 
\[
\E\bigg[ \max_{u: u \to v} \big(y(u)-y(v)\big)^2 \bigg]
=\E\bigg[ \max_{u: u \to v} \biginner{f(u)-f(v)}{h}^2 \bigg]
\leq 2 \max_{u: u \to v} \norm{f(u)-f(v)}^2 \cdot \log d,
\]
where the last inequality is by applying \autoref{Gaussian-expected-max} on normal random variable $\inner{f(u)-f(v)}{h}$ with variance $\norm{f(u)-f(v)}^2$ for each of the at most $d$ terms.
By linearity of expectation, the expected objective value of $\directed{\gamma}^{(1)}(G)$ is
\[
\E\Bigg[ \sum_{v \in V} \pi(v) \max_{u: u \to v} \big(y(u)-y(v)\big)^2 \Bigg]
\leq 2 \log d \cdot \sum_{v \in V} \pi(v) \max_{u: u \to v} \norm{f(u)-f(v)}^2
= 2 \log d \cdot \directed{\gamma}(G).
\]
Therefore, by Markov's inequality, 
\[
\Pr\Bigg[ \sum_{v \in V} \pi(v) \max_{u: u \to v} \big(y(u)-y(v)\big)^2 \geq 48 \log d \cdot \directed{\gamma}(G)\Bigg] \leq \frac{1}{24}.
\]
Next, by applying \autoref{Gaussian-denominator} with $z_v = \sqrt{\pi(v)} \cdot y(v)$, it follows that
\[
\E\Bigg[ \sum_{v \in V} \pi(v) y(v)^2 \Bigg]
= \sum_{v \in V} \pi(v) \norm{f(v)}^2 = 1 
\quad \implies \quad
\Pr\Bigg[ \sum_{v \in V} \pi(v) y(v)^2 \geq \frac{1}{2} \Bigg] \geq \frac{1}{12}.
\]
Finally, since $\sum_{v \in V} \pi(v) f(v) = \vec{0}$, it holds that 
\[
\sum_{v \in V} \pi(v) y(v) = \sum_{v \in V} \pi(v) \inner{f(v)}{h} = \biginner{\sum_{v \in V}\pi(v) f(v)}{h} = 0.
\]
Therefore, with probability at least $\frac{1}{24}$, all of these events hold simultaneously. The second event
\[
 \sum_{v \in V} \pi(v) y(v)^2 \geq \frac{1}{2}
\]
means that we can rescale $y$ by a factor of at most $\sqrt{2}$, so that the constraint $\sum_{v \in V} \pi(v) y(v)^2 = 1$ is satisfied and the objective value is at most $96 \log d \cdot \directed{\gamma}(G)$.
Hence we conclude that $\directed{\gamma}^{(1)}(G) \lesssim \directed{\gamma}(G)\cdot \log d$.
\end{proof}

\subsubsection{Cheeger Rounding for Vertex Expansion} \label{sec:Cheeger-rounding}

We generalize \autoref{thm:OTZ-Cheeger} to weighted vertex expansion.
Our proof does not use the concept of matching conductance in~\cite{OTZ22},
rather it is based on a more traditional analysis as in~\cite{BHT00} using the directed program $\directed{\gamma}^{(1)}(G)$ in \autoref{def:compact}.

\begin{theorem}[Cheeger Inequality for Weighted Vertex Expansion] \label{thm:weighted-Cheeger}
For any undirected graph $G=(V,E)$ and any probability distribution $\pi$ on $V$,
\[
\psi(G)^2 \lesssim \gamma^{(1)}(G) \lesssim \psi(G).
\]
\end{theorem}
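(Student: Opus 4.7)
The plan is to establish the two inequalities of $\psi(G)^2 \lesssim \gamma^{(1)}(G) \lesssim \psi(G)$ separately. The easy direction is a direct feasible-solution construction from any low-expansion set, and the hard direction follows from a BHT-style Cheeger rounding applied to the \emph{directed} one-dimensional program, with the orientation form of \autoref{def:compact} playing the key role.

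For the upper bound $\gamma^{(1)}(G) \lesssim \psi(G)$, I would fix $S \subseteq V$ with $0 < \pi(S) \leq 1/2$ achieving $\psi(G)$, and construct an explicit $(f,g)$ of objective $\lesssim \psi(S)$. Take $f$ to be the shifted indicator of $S$, normalised so $\sum_v \pi(v) f(v) = 0$ and $\sum_v \pi(v) f(v)^2 = 1$, i.e.\ $f \equiv \sqrt{(1-\pi(S))/\pi(S)}$ on $S$ and $f \equiv -\sqrt{\pi(S)/(1-\pi(S))}$ off $S$. The only edges $uv \in E$ with $(f(u)-f(v))^2 > 0$ are the crossings $\delta(S)$, each with common value $(\pi(S)(1-\pi(S)))^{-1}$, so placing the entire fractional cover weight on the outer boundary by setting $g(v) := (\pi(S)(1-\pi(S)))^{-1}$ for $v \in \partial S$ and $g(v) = 0$ elsewhere gives a feasible solution of objective $\pi(\partial S)/(\pi(S)(1-\pi(S))) \leq 2\psi(S)$, since $\pi(S) \leq 1/2$. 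Infimising over $S$ yields $\gamma^{(1)}(G) \leq 2\psi(G)$.

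For the lower bound $\psi(G)^2 \lesssim \gamma^{(1)}(G)$, by \autoref{lem:factor2} it suffices to prove $\psi(G)^2 \lesssim \directed{\gamma}^{(1)}(G)$. Fix an optimal $(f, \directed{E})$ from \autoref{def:compact} and set $g(v) := \max_{u: u \to v}(f(u)-f(v))^2$, so that $\sum_v \pi(v) g(v) = \directed{\gamma}^{(1)}(G)$ and the in-orientation constraint $g(v) \geq (f(u)-f(v))^2$ holds for every $u \to v$. Perform the standard shift-and-truncate: let $c$ be a $\pi$-median of $f$ and form $f^{\pm} := (\pm(f-c))_+$; since $\sum_v \pi(v)(f^+(v)^2+f^-(v)^2) = 1+c^2 \geq 1$, we may assume WLOG that $\sum_v \pi(v) f^+(v)^2 \gtrsim 1$ and $\pi(\supp(f^+)) \leq 1/2$. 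Replace $g$ by $g^+(v) := \max_{u: u \to v}(f^+(u)-f^+(v))^2 \leq g(v)$, which retains feasibility and does not increase the objective. Apply threshold rounding $S_t := \{v : f^+(v)^2 > t\}$ with $t$ uniform on $[0, M^2]$, $M := \max_v f^+(v)$: the coarea identity gives $\int_0^{M^2} \pi(S_t)\, dt = \sum_v \pi(v) f^+(v)^2 \gtrsim 1$ and $\int_0^{M^2} \pi(\partial S_t)\, dt = \sum_v \pi(v)(f^{+,*}(v)^2-f^+(v)^2)_+$ with $f^{+,*}(v) := \max_{u \sim v} f^+(u)$. Following BHT, factor $a^2 - b^2 = (a-b)(a+b)$, use the in-orientation bound $(f^{+,*}(v)-f^+(v))_+ \leq \sqrt{g^+(v)}$, and then Cauchy--Schwarz against $\sum_v \pi(v)(f^+)^2$ and $\sum_v \pi(v) g^+$ to conclude $\int_0^{M^2} \pi(\partial S_t)\, dt \lesssim \sqrt{\directed{\gamma}^{(1)}(G)}$; the standard averaging $\min_t \pi(\partial S_t)/\pi(S_t) \leq \int \pi(\partial S_t)\, dt / \int \pi(S_t)\, dt$ then produces a level set with $\psi(S_t) \lesssim \sqrt{\directed{\gamma}^{(1)}(G)} \lesssim \sqrt{\gamma^{(1)}(G)}$.

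The hard part will be handling boundary vertices $v$ whose $f^+$-maximising neighbour $u^*(v)$ lies on an out-edge $v \to u^*$ of the optimal orientation, so that the in-orientation bound $(f^{+,*}(v)-f^+(v))_+ \leq \sqrt{g^+(v)}$ fails at $v$ and a naive Cauchy--Schwarz incurs a spurious degree factor by summing $\pi(v) g^+(u^*(v))$ over many $v$'s sharing the same $u^*$. I would resolve this by running the same threshold argument in parallel on $f^-$ (equivalently on $-f$) using the same orientation: every edge $v u$ with $f(v) < f(u)$ that is ``bad'' for the $f^+$ analysis (oriented $v \to u$) becomes ``good'' for the $f^-$ analysis on the complementary side, since now the directed-in constraint at $u$ controls the $f^-$-gap there. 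Taking the better of the two resulting level sets gives a single set $S$ with $\psi(S) \lesssim \sqrt{\gamma^{(1)}(G)}$, completing the proof.
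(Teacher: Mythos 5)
Your easy direction is essentially the paper's own construction and is correct, up to one small omission: when $\psi(G)=1$ because the weighted minimum over sets exceeds the cap in \autoref{def:vertex-expansion} (e.g.\ a star with a heavy center), there is no set $S$ attaining $\psi(G)$, and your argument needs the separate one-line observation that a singleton already gives $\gamma^{(1)}(G)\le 2$; the paper handles this case explicitly in \autoref{lem:Cheeger-easy}. Your setup for the hard direction also matches the paper — passing to the directed program via \autoref{lem:factor2}, the median shift and truncation, and the BHT-style coarea plus Cauchy--Schwarz computation — and you correctly isolate the crux: when the neighbour certifying $v\in\partial S_t$ sits on an out-edge $v\to u$, the available constraint is $g(u)\ge (f(u)-f(v))^2$ at $u$, not at $v$.

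The proposed fix, however, does not close this gap. Running the rounding on $f^-$ in parallel and taking the better level set is unsound for two reasons. First, a bad edge both of whose endpoints lie on the positive side of the median never crosses any level set of $f^-$, so the complementary analysis simply does not see it; and even for straddling edges the quantities do not match (the uncontrolled term is $\pi(v)\big(f^+(u)^2-f^+(v)^2\big)$, while the $f^-$ analysis only controls $\pi(u)\big(f^-(v)^2-f^-(u)^2\big)$), so nothing is transferred and the bad terms remain in the $f^+$ numerator. Second, the guarantee your rounding step needs — that any feasible truncated solution $x\ge 0$ with small directed Rayleigh quotient has \emph{some} level set of small ordinary vertex expansion — is false. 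For instance, let $Z$ carry $\pi$-mass $1/2$ with $f=0$, let $P$ and $N$ each carry mass $\approx 1/4$ with $f=\pm a$, and let $h,\ell$ be two vertices of tiny mass $\eps$ with $f=\pm a$; join every vertex of $P$ and of $Z$ to $h$, and every vertex of $N$ and of $Z$ to $\ell$, orienting all edges into $h$ and $\ell$. After normalising $\sum_v\pi(v)f(v)^2=1$, only $h$ and $\ell$ pay, so the directed objective is $O(\eps)$, yet the only nonempty level sets of $f^+$ and of $f^-$ are $P\cup\{h\}$ and $N\cup\{\ell\}$, each having all of $Z$ as ordinary vertex boundary and hence expansion $\Theta(1)\gg\sqrt{\eps}$ — even though $\psi(P)=\pi(h)/\pi(P)\approx 4\eps$ is exactly of the right order. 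Your argument never uses optimality of $(f,\directed{E})$ (and after truncation the function is no longer optimal for anything), so this refutes the step as justified. The missing idea is the paper's: bound instead the \emph{directed} vertex boundary $\directed{\partial}S_t$ of \autoref{def:vertex-cover} — the heads of crossing edges, which may lie inside $S_t$ and whose measure is controlled purely by in-edge gaps thanks to the self-loops (\autoref{prop:threshold-rounding}) — and then post-process $S\mapsto S\setminus\directed{\partial}S$ as in \autoref{lem:cleanup}, with the case $\directed{\psi}(S)\ge 1/2$ dismissed trivially. In the example, $\directed{\partial}\big(P\cup\{h\}\big)=\{h\}$ and the cleaned set $P$ is the good set that no level set of $f^{\pm}$ can reach.
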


The organization is as follows.  
We will prove the easy direction in \autoref{lem:Cheeger-easy} in \autoref{app:Cheeger-vertex}.
For the hard direction, we will work on $\directed{\gamma}^{(1)}(G)$ instead.
First we do the standard preprocessing step to truncate the solution to have $\pi$-weight at most $1/2$.
Then the main step is to define a modified vertex boundary condition for directed graphs and use it for the analysis of the standard threshold rounding.
Finally we clean up the solution obtained from threshold rounding to find a set with small vertex expansion in the underlying undirected graph.

\begin{lemma}[Easy Direction] \label{lem:Cheeger-easy}
For any undirected graph $G=(V,E)$ and any probability distribution $\pi$ on $V$, 
\[\gamma^{(1)}(G) \leq 2\psi(G).\]
\end{lemma}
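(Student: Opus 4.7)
The plan is to construct an explicit feasible solution $(f,g)$ to the program $\gamma^{(1)}(G)$ whose objective value is at most $2\psi(G)$. First I would pick a set $S \subseteq V$ with $0 < \pi(S) \leq \tfrac12$ that minimizes $\psi(S)$, choose the normalization $\alpha := 1/\sqrt{\pi(S)(1-\pi(S))}$, and define the indicator-based test function $f(v) := \alpha(\chi_S(v) - \pi(S))$. A direct calculation shows $\sum_{v \in V} \pi(v) f(v) = 0$ and $\sum_{v \in V} \pi(v) f(v)^2 = \alpha^2 \pi(S)(1-\pi(S)) = 1$, so the first two constraints of $\gamma^{(1)}(G)$ are satisfied. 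Because $f$ takes only the two values $\alpha(1-\pi(S))$ and $-\alpha \pi(S)$, the quantity $(f(u)-f(v))^2$ equals $\alpha^2$ on edges crossing between $S$ and $V \setminus S$ and equals $0$ on all other edges. Moreover, the outside endpoint of every crossing edge lies in $\partial S$ by definition, so setting $g(v) := \alpha^2$ for $v \in \partial S$ and $g(v) := 0$ elsewhere satisfies the edge constraint $g(u)+g(v) \geq (f(u)-f(v))^2$. The resulting objective value is
\[
\sum_{v \in V} \pi(v) g(v) \;=\; \alpha^2 \pi(\partial S) \;=\; \frac{\pi(\partial S)}{\pi(S)(1-\pi(S))} \;\leq\; \frac{2\pi(\partial S)}{\pi(S)} \;=\; 2\psi(S),
\]
where the inequality uses $\pi(S) \leq \tfrac12$. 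This already proves $\gamma^{(1)}(G) \leq 2\psi(G)$ in the case $\psi(G) < 1$, when $\psi(G) = \min_S \psi(S)$.

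The only remaining subtlety is the case $\psi(G) = 1$: under a non-uniform $\pi$, the quantity $\min_S \psi(S)$ can be arbitrarily large (as noted in the footnote of \autoref{def:vertex-expansion}), so the construction above need not give an objective of at most $2$. To cover this case I would keep the same $f$ but swap the choice of $g$, taking $g(v) := 2f(v)^2$. The edge constraint then reduces to the elementary inequality $(a-b)^2 \leq 2a^2 + 2b^2$, and the objective collapses to $2\sum_{v \in V} \pi(v) f(v)^2 = 2 = 2\psi(G)$. Taking the better of the two constructions yields $\gamma^{(1)}(G) \leq 2\psi(G)$ in all cases.

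I do not anticipate a real obstacle here; this is a direct adaptation of the textbook easy direction of Cheeger's inequality, and the only mild care needed is the two-case split (indicator-based $g$ in the non-trivial regime, the pointwise bound $g = 2f^2$ as a fallback when $\psi(G) = 1$). Note also that self-loops $uv$ with $u = v$ impose the trivial constraint $2g(v) \geq 0$, so they do not affect either construction.
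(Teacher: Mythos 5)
Your proposal is correct and follows essentially the same route as the paper: both build a two-valued feasible solution from an optimal set $S$ with $\pi(S)\le 1/2$, place the $g$-mass of size $(a-b)^2=\alpha^2$ on $\partial S$, and use $\pi(S)\le 1/2$ to bound the objective by $2\psi(S)$, treating $\psi(G)=1$ as a separate case. The only (minor) difference is the fallback case: the paper reuses the singleton-indicator construction with $S=\{v\}$ for a vertex of weight at most $1/2$, whereas you take $g=2f^2$ and invoke $(a-b)^2\le 2a^2+2b^2$, which is an equally valid and arguably cleaner way to certify $\gamma^{(1)}(G)\le 2$.
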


We now turn to proving the hard direction.
Given a solution $y:V\to \R$ to $\directed{\gamma}^{(1)}(G)$ in \autoref{def:compact} satisfying $y \perp \pi$,
we do the standard preprocessing step to truncate $y$ to obtain a non-negative solution $x$ with $\pi(\supp(x)) \leq 1/2$ and comparable objective value.
Note that we no longer require that $x \perp \pi$.
The proof of the following lemma is standard and is deferred to \autoref{app:Cheeger-vertex}.

\begin{lemma}[Truncation] \label{lem:truncation}
Let $G=(V,E)$ be an undirected graph and $\pi$ be a probability distribution on $V$.
Given a solution $y$ and $\directed{E}$ to $\directed{\gamma}^{(1)}(G)$ as stated in \autoref{def:compact},
there is a solution $x$ and $\directed{E}$ with $x \geq 0$ and $\pi(\supp(x)) \leq 1/2$ and 
\[
\frac{\sum_{v \in V} \pi(v) \max_{u:uv \in \directed{E}} (x(u)-x(v))^2}{\sum_{v \in V} \pi(v) x(v)^2} \leq 4 \directed{\gamma}^{(1)}(G).
\]
\end{lemma}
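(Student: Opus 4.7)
The plan is to execute the standard median-truncation trick, adapted to handle the directed max-objective of $\directed{\gamma}^{(1)}(G)$. Since $\sum_v \pi(v) y(v) = 0$ and $\sum_v \pi(v) y(v)^2 = 1$, I first pick a weighted median $m$ of $y$ with respect to $\pi$, meaning both $\pi(\{v : y(v) \geq m\}) \geq 1/2$ and $\pi(\{v : y(v) \leq m\}) \geq 1/2$. Then I split $y-m$ into its positive and negative parts, defining $y^+(v) := \max(y(v) - m, 0)$ and $y^-(v) := \max(m - y(v), 0)$. Both are nonnegative, have disjoint supports, and each has $\pi$-measure of support at most $1/2$ by construction. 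The final $x$ will be whichever of $y^+, y^-$ yields the smaller Rayleigh-type ratio (reusing the orientation $\directed{E}$ inherited from $y$).

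For the denominator, the disjoint-support identity $(y(v)-m)^2 = y^+(v)^2 + y^-(v)^2$ combined with $\sum_v \pi(v)(y(v)-m)^2 = 1 + m^2 \geq 1$ gives
\[
\sum_v \pi(v) y^+(v)^2 + \sum_v \pi(v) y^-(v)^2 \;\geq\; 1.
\]
For the numerator, the key observation is that the truncation map $t \mapsto \max(t-m,0)$ is $1$-Lipschitz, so $|y^+(u)-y^+(v)| \leq |y(u)-y(v)|$ for every pair $u,v$, and likewise for $y^-$. This inequality is preserved under taking maxima over the in-neighborhood of $v$ with respect to $\directed{E}$, so
\[
\sum_v \pi(v)\!\max_{u: uv \in \directed{E}}\!(y^+(u)-y^+(v))^2 \;+\; \sum_v \pi(v)\!\max_{u: uv \in \directed{E}}\!(y^-(u)-y^-(v))^2 \;\leq\; 2\directed{\gamma}^{(1)}(G).
\]

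Combining the two displays via the standard mediant inequality, at least one of $y^+$ or $y^-$ achieves numerator/denominator ratio at most $2\directed{\gamma}^{(1)}(G) \leq 4\directed{\gamma}^{(1)}(G)$, and choosing that as $x$ completes the proof. There is no genuine obstacle here; the only point requiring care is that the objective involves a max over in-neighbors rather than a sum over edges, and it is precisely the pointwise $1$-Lipschitz property of truncation — not any global identity — that lets the bound pass through the max. This is why splitting off $y^+$ and $y^-$ using the same orientation $\directed{E}$ works directly and no re-orientation is needed.
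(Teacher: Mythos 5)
Your proof is correct, and its skeleton is the same as the paper's: shift by a weighted median, split into the nonnegative parts $y^+$ and $y^-$ (each supported on a set of $\pi$-measure at most $1/2$), and push the bound through the max-objective using the pointwise $1$-Lipschitz property of truncation, exactly as the paper does via $|y^+(u)-y^+(v)| \leq |y(u)-y(v)|$ with the same orientation $\directed{E}$. Where you diverge is the denominator: the paper invokes a separate variance-decomposition lemma (\autoref{lem:vector-rounding}, stating $2\big(\Var_{\pi}(f^+)+\Var_{\pi}(f^-)\big) \geq \Var_{\pi}(f)$ and proved at some length in the appendix) together with the shift-invariance of variance to conclude that the part with the larger second moment satisfies $\sum_v \pi(v)\,y^{\pm}(v)^2 \geq 1/4$, which is where the factor $4$ comes from; you instead use the one-line identity $\sum_v \pi(v)(y(v)-m)^2 = 1+m^2 \geq 1$ together with the mediant inequality to select whichever part has the smaller ratio, which bypasses \autoref{lem:vector-rounding} entirely and even yields the sharper constant $2$ in place of $4$. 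The only point you leave implicit is the degenerate case where one of the two parts is identically zero, so the mediant inequality cannot be applied verbatim; but then the other part alone has second moment at least $1$ and numerator at most $\directed{\gamma}^{(1)}(G)$, so the bound holds trivially and the argument is complete.
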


For the standard threshold rounding, we define the appropriate vertex boundary $\directed{\partial} S$ for the analysis of the directed program $\directed{\gamma}^{(1)}(G)$.
Note that, unlike $\partial S$, $\directed{\partial} S$ may contain vertices in $S$.
A good interpretation is to think of $\directed{\partial} S$ as a vertex cover of the edge boundary $\delta(S)$ in the undirected sense.

\begin{definition}[Directed Vertex Boundary and Expansion] \label{def:vertex-cover}
Let $G=(V,\directed{E})$ be a directed graph.
For $S \subseteq V$, define the directed vertex boundary and the directed vertex expansion as
\[
\directed{\partial} S := \big\{ v \in S \mid \exists u \notin S {\rm~with~} uv \in \directed{E}\big\} \cup \big\{ v \notin S \mid \exists u \in S {\rm~with~} uv \in \directed{E}\big\}
\quad {\rm and} \quad
\directed{\psi}(S) := \frac{\pi(\directed{\partial}S)}{\pi(S)}.
\]
\end{definition}

The main step is to prove that the standard threshold rounding will find a set $S$ with small directed vertex expansion $\directed{\psi}(S)$.

\begin{proposition}[Threshold Rounding for $\gamma(G)$] \label{prop:threshold-rounding}
Let $G=(V,E)$ be an undirected graph and $\pi$ be a probability distribution on $V$.
Given a solution $x$ and $\directed{E}$ with $x \geq 0$ and 
\[
\frac{\sum_{v \in V} \pi(v) \max_{u:u \to v} (x(u)-x(v))^2}{\sum_{v \in V} \pi(v) x(v)^2} \leq \gamma(G),
\]
there is a set $S \subseteq \supp(x)$ with $\directed{\psi}(S) \lesssim \sqrt{\gamma(G)}$.
\end{proposition}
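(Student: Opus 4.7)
The plan is the classical Cheeger threshold rounding, applied to squared level sets of $x$, where we must be careful about the asymmetric definition of $\directed{\partial}$. After rescaling so that $\max_v x(v) = 1$, we draw a threshold $s$ uniformly from $[0,1]$ and set $S_s := \{v : x(v)^2 > s\}$; this is automatically contained in $\supp(x)$. For brevity write $g(v) := \max_{u:\, u \to v}(x(u)-x(v))^2$, so the hypothesis reads $\sum_v \pi(v) g(v) \leq \gamma(G) \sum_v \pi(v) x(v)^2$, and note that with $x_{\max}=1$ we have $\E_s[\pi(S_s)] = \sum_v \pi(v) x(v)^2$.

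The key step is to bound $\Pr_s[v \in \directed{\partial} S_s]$ for each $v$. Unwinding \autoref{def:vertex-cover}, $v \in \directed{\partial} S_s$ iff some in-edge $u \to v$ is cut by $s$, i.e.\ $s$ falls strictly between $x(u)^2$ and $x(v)^2$. The union of these intervals over $u\to v$ is contained in $[m^2, M^2)$ where $M := \max\big(x(v), \max_{u\to v} x(u)\big)$ and $m := \min\big(x(v), \min_{u \to v} x(u)\big)$. Since every in-edge at $v$ satisfies $|x(u)-x(v)| \leq \sqrt{g(v)}$, a short case check (on whether $x(v)$ sits above all, below all, or between its in-neighbors' values) gives $M - m \leq 2\sqrt{g(v)}$ and $M + m \leq 2\big(x(v) + \sqrt{g(v)}\big)$, so
\[
\Pr_s[v \in \directed{\partial} S_s] \;\leq\; M^2 - m^2 \;\leq\; 4\, x(v)\sqrt{g(v)} + 4\, g(v).
\]
Summing with $\pi$-weights and applying Cauchy--Schwarz to the cross term,
\[
\sum_v \pi(v)\, x(v)\sqrt{g(v)} \;\leq\; \Big(\sum_v \pi(v) x(v)^2\Big)^{\!1/2}\Big(\sum_v \pi(v) g(v)\Big)^{\!1/2} \;\leq\; \sqrt{\gamma(G)}\, \sum_v \pi(v) x(v)^2,
\]
while the $g(v)$ term is bounded directly by $\gamma(G) \sum_v \pi(v) x(v)^2$. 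Combining these yields $\E_s[\pi(\directed{\partial} S_s)] \leq \big(4\sqrt{\gamma(G)} + 4\gamma(G)\big) \cdot \E_s[\pi(S_s)]$.

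We may assume $\gamma(G) \leq 1$, since otherwise $\sqrt{\gamma(G)} \geq 1 \geq \psi(G)$ and the conclusion of \autoref{thm:weighted-Cheeger} is vacuous; then the ratio above is $O(\sqrt{\gamma(G)})$. A standard averaging argument on the ratio of expectations (using $\min_s \pi(\directed{\partial} S_s)/\pi(S_s) \leq \E_s[\pi(\directed{\partial} S_s)] / \E_s[\pi(S_s)]$ over $s$ with $\pi(S_s) > 0$, which is a positive-measure event since $\E_s[\pi(S_s)] > 0$) then extracts a threshold $s^*$ for which $S := S_{s^*} \subseteq \supp(x)$ satisfies $\directed{\psi}(S) \lesssim \sqrt{\gamma(G)}$. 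The only non-routine point is the length estimate for the ``cut interval'': the asymmetry in $\directed{\partial} S$ (which counts only the heads of crossing edges) is precisely what lets the cut probability at $v$ be controlled by $x(v)$ and $g(v)$ alone, without any in-degree factor, so that Cauchy--Schwarz closes the loop.
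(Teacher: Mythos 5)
Your proof is correct and takes essentially the same route as the paper's: sweep thresholds over the squared values $x(\cdot)^2$, bound for each vertex $v$ the measure of cutting thresholds by the spread of its in-neighbours' values (your $4x(v)\sqrt{g(v)}+4g(v)$ is the paper's $|x(u)^2-x(v)^2|\le (x(u)-x(v))^2+2x(v)\,|x(u)-x(v)|$ step), and finish with Cauchy--Schwarz and an averaging argument. The only cosmetic difference is the final reduction: instead of assuming $\gamma\le 1$ by appealing to the vacuousness of \autoref{thm:weighted-Cheeger}, the paper simply uses $\gamma(G)\le 2$ (from \autoref{lem:Cheeger-easy}), under which your bound $4\sqrt{\gamma}+4\gamma\lesssim\sqrt{\gamma}$ already yields the stated conclusion in every case.
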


\begin{proof}
For any $t \geq 0$, define $S_t := \{v \in V \mid x(v)^2 > t\}$.
By a standard averaging argument,
\[
\min_t \directed{\psi}(S_t)
\leq \frac{\int_{0}^{\infty} \pi\big(\directed{\partial} S_t\big) \, dt}{\int_{0}^{\infty} \pi(S_t) \, dt}.
\]
The denominator is
\begin{equation*}
\int_{0}^{\infty} \pi(S_t) \, dt
= \sum_{v \in V} \pi(v) \int_{0}^{\infty} \mathbbm{1}[v \in S_t] \, dt
= \sum_{v \in V} \pi(v) x(v)^2.
\end{equation*}
For the numerator, note that a vertex $v$ is in $\directed{\partial}S_t$ if and only if $\min\{x(u)^2 \mid uv \in \directed{E} \} \leq t \leq \max\{x(u)^2 \mid uv \in \directed{E}\}$, where we recall the assumption that every vertex has a self loop, and so $vv \in \directed{E}$ and thus $\min\{x(u)^2 \mid uv \in \directed{E} \} \leq x(v)^2 \leq \max\{x(u)^2 \mid uv \in \directed{E}\}$.
Hence the numerator is
\begin{eqnarray*}
& & \int_{0}^{\infty} \pi\Big(\directed{\partial} S_t\Big) \, dt
\\
&=&
\sum_{v \in V} \pi(v) \cdot \int_{0}^{\infty} \mathbbm{1}\Big[v \in \directed{\partial} S_t\Big] \, dt
  \\
&=&
\sum_{v \in V} \pi(v) \cdot \int_{0}^{\infty} 
\mathbbm{1}\Big[ \min\big\{x(u)^2 \mid uv \in \directed{E} \big\} \leq t 
\leq \max\big\{x(u)^2 \mid uv \in \directed{E}\big\}  \Big]\, dt
  \\
&=&
  \sum_{v \in V} \pi(v) \bigg[
    \max_{\substack{u: u \to v \\ x(u) > x(v)}} \big\{x(u)^2 - x(v)^2\big\} +
    \max_{\substack{u: u \to v \\ x(u) < x(v)}} \big\{x(v)^2 - x(u)^2\big\}
  \bigg]
  \\
&\leq&
2 \sum_{v \in V} \pi(v) \bigg[ \max_{u:u \to v} \big\{ |x(u)^2-x(v)^2| \big\}\bigg]
\\
  &\leq&
  2\sum_{v \in V} \pi(v) \biggl[
    \max_{u: u \to v}
    \Big\{ 
      (x(u) - x(v))^2 + 2 x(v) \cdot |x(u) - x(v)|
    \Big\}
  \biggr]
  \\
  &\leq&
  2\sum_{v \in V} \pi(v) \max_{u: u \to v} (x(u) - x(v))^2
  + 4\sqrt{
   \sum_{v \in V} \pi(v) x(v)^2 \cdot \sum_{v \in V} \pi(v) \max_{u: u \to v} (x(u) - x(v))^2,
  }
\end{eqnarray*}
where the second-last inequality is by $|x(u)^2 - x(v)^2| \leq |x(u) - x(v)| \cdot \big( |x(u)-x(v)| + 2|x(v)| \big)$,
and the last inequality is by the Cauchy-Schwarz inequality.

Combining the numerator and the denominator bounds,
\begin{eqnarray*}
  \frac
    {\int_{0}^{\infty} \pi\Big(\directed{\partial} S_t\Big) \, dt}
    {\int_{0}^{\infty} \pi(S_t) \, dt}
  &\leq&
  \frac{\sum_{v \in V} \pi(v) \max_{u: u \to v} (x(u) - x(v))^2}{\sum_{v \in V} \pi(v) x(v)^2}
  +
  2\sqrt{
    \frac{\sum_{v \in V} \pi(v) \max_{u: u \to v} (x(u) - x(v))^2}{\sum_{v \in V} \pi(v) x(v)^2}
  }
  \\
  &=& \gamma + 2\sqrt{\gamma} \lesssim \sqrt{\gamma},
\end{eqnarray*}
where the last inequality is by $\gamma \leq 2$ as was shown in the proof of the easy direction in \autoref{lem:Cheeger-easy}.
Therefore, $\min_t \directed{\psi}(S_t) \lesssim \sqrt{\gamma}$ and $S_t \subseteq \supp(x)$ by construction.
\end{proof}

Finally, given a set $S$ with small directed vertex expansion $\directed{\psi}(S)$, we show how to find a set $S' \subseteq S$ with small vertex expansion $\psi(S')$.
This step is similar to the step in~\cite[Proposition 2.2]{OTZ22} from matching conductance to vertex expansion.

\begin{lemma}[Postprocessing for Vertex Expansion] \label{lem:cleanup}
Let $G=(V,\directed{E})$ be a directed graph.
Given a set $S$ with $\directed{\psi}(S) < 1/2$,
there is a set $S' \subseteq S$ with $\psi(S') \leq 2\directed{\psi}(S)$ in the underlying undirected graph of $G$. 
\end{lemma}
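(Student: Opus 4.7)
The natural candidate is $S' := S \setminus \directed{\partial} S$. The plan is first to verify that the undirected vertex boundary $\partial S'$ is contained in the directed vertex boundary $\directed{\partial} S$, and second to use the hypothesis $\directed{\psi}(S) < 1/2$ to lower bound $\pi(S')$ in terms of $\pi(S)$.

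Partition $\directed{\partial} S = A \cup B$ (disjointly), where
\[
A := S \cap \directed{\partial} S = \big\{v \in S \mid \exists u \notin S,\; uv \in \directed{E}\big\}
\quad\text{and}\quad
B := \directed{\partial} S \setminus S = \big\{v \notin S \mid \exists u \in S,\; uv \in \directed{E}\big\}.
\]
With this notation $S' = S \setminus A$. The main claim is $\partial S' \subseteq A \cup B = \directed{\partial} S$. Suppose $v \in \partial S'$, so $v \notin S'$ and there is $u \in S'$ with $\{u,v\}$ an edge of the underlying undirected graph. If $v \in S$ then $v \in S \setminus S' = A$ and we are done. Otherwise $v \notin S$, and the underlying undirected edge lifts to either $uv \in \directed{E}$ or $vu \in \directed{E}$. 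In the first case $u \in S$ and $v \notin S$ immediately give $v \in B$. In the second case, the directed edge $vu$ with $v \notin S$ and $u \in S$ would force $u \in A$ by the definition of $A$, contradicting $u \in S' = S \setminus A$. This establishes the claim.

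Once the containment is in hand, $\pi(\partial S') \leq \pi(\directed{\partial} S)$. The hypothesis $\directed{\psi}(S) < 1/2$ gives $\pi(A) \leq \pi(\directed{\partial} S) < \pi(S)/2$, so $\pi(S') = \pi(S) - \pi(A) > \pi(S)/2 > 0$ (in particular $S'$ is nonempty so $\psi(S')$ is well-defined). Combining these two estimates yields
\[
\psi(S') \;=\; \frac{\pi(\partial S')}{\pi(S')} \;\leq\; \frac{\pi(\directed{\partial} S)}{\pi(S)/2} \;=\; 2\directed{\psi}(S),
\]
as desired. The only step requiring any care is the second subcase of the containment claim, where one must exploit the specific orientation of the edge $vu$ to locate $u$ in $A$ and thereby contradict $u \in S'$; everything else is routine bookkeeping on the definitions of $A$, $B$, and the undirected boundary.
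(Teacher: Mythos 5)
Your proposal is correct and follows essentially the same route as the paper: take $S' = S \setminus \directed{\partial}S$, show $\partial S' \subseteq \directed{\partial}S$, and use $\directed{\psi}(S) < 1/2$ to get $\pi(S') \geq \pi(S)/2$. Your explicit case analysis (via the sets $A$ and $B$) just fleshes out the containment that the paper states in one line.
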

\begin{proof}
From \autoref{def:vertex-cover}, the observation is that all undirected edges in $\delta(S)$ are incident to at least one vertex in $\directed{\partial}S$.
Define $S' := S - \directed{\partial}S$.
Then observe that $\partial S' \subseteq \directed{\partial}S$, as there are no incoming edges to $S'$ from $V-(S' \cup \directed{\partial}S)$ and all outgoing edges from $S'$ go to $\directed{\partial}(S)$.
This implies that
\[
\pi(\partial S') 
\leq \pi\Big(\directed{\partial}S\Big) 
= \directed{\psi}(S) \cdot \pi(S)
\leq 2\directed{\psi}(S) \cdot \pi(S'),
\]
where the last inequality uses the assumption that 
$\directed{\psi}(S) = \pi\big(\directed{\partial} S\big) / \pi(S) < 1/2$ 
and so $\pi(S') \geq \pi(S) - \pi\big(\directed{\partial} S\big) \geq \pi(S)/2$.
We conclude that $\psi(S') \leq 2\directed{\psi}(S)$.
\end{proof}

We put together the steps and complete the proof of \autoref{thm:weighted-Cheeger} in \autoref{app:Cheeger-vertex}.
\section{Cheeger Inequality for Bipartite Vertex Expansion} \label{sec:Cheeger-bipartite}

The goal of this section is to prove \autoref{thm:Cheeger-bipartite}, which relates the maximum reweighted lower spectral gap $\zeta^*(G)$ in \autoref{def:lambda-max-primal} and the bipartite vertex expansion $\psi_B(G)$ in \autoref{def:bipartite-vertex-expansion}.
The proof follows closely the proof of \autoref{thm:Cheeger-vertex} in the previous section, so some steps will be stated without proofs, and the focus will be on the threshold rounding step.

\subsection{Primal and Dual Programs}
The primal program  $\zeta^*(G)$ in \autoref{def:lambda-max-primal} has the following dual which is similar to $\gamma(G)$ in \autoref{prop:Roch-dual}.

\begin{proposition}[Dual Program for Lower Spectral Gap~\cite{Roc05}] \label{prop:Roch-dual-bipartite}
Given an undirected graph $G=(V,E)$ and a probability distribution $\pi$ on $V$,
the following semidefinite program is dual to the primal program in \autoref{def:lambda-max-primal} with strong duality $\zeta^*(G) = \nu(G)$ where
\begin{eqnarray*}
\nu(G) := \min_{f: V \to \R^n,~g: V \to \R_{\geq 0}} & & \sum_{v \in V} \pi(v) g(v)
\\
\st & & {\sum_{v \in V} \pi(v) \norm{f(v)}^2} = 1
\\
& & g(u) + g(v) \geq \norm{f(u) + f(v)}^2 \quad \quad \forall uv \in E.
\end{eqnarray*}
\end{proposition}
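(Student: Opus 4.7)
The plan is to derive $\nu(G)$ as the Lagrangian dual of an SDP reformulation of $\zeta^*(G)$, following the same template as Roch's proof of \autoref{prop:Roch-dual} in~\cite{Roc05} but adapted to the smallest eigenvalue of $D_P + P$. First, I would put $\zeta^*(G)$ into symmetric SDP form via the change of variable $W := \Pi P$: the reversibility condition $\pi(u) P(u,v) = \pi(v) P(v,u)$ becomes the symmetry $W(u,v) = W(v,u)$, the row-sum bound becomes $\sum_v W(u,v) \le \pi(u)$, and a direct computation gives $\Pi(D_P + P) = D_W + W$. Since $\Pi^{1/2}(D_P + P)\Pi^{-1/2}$ is symmetric, the variational characterization yields
\[
\lambda_{\min}(D_P + P) \;=\; \min_{f \ne 0} \frac{f^T (D_W + W) f}{f^T \Pi f},
\]
so $\zeta^*(G)$ equals $\max t$ subject to $D_W + W \succeq t\Pi$ together with the linear constraints on $W$.

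Next, I would form the Lagrangian with multipliers $X \succeq 0$ for the PSD constraint, $\alpha_u \ge 0$ for the row-sum bounds, and $\beta_{uv} \ge 0$ for the entrywise nonnegativity constraints $W(u,v) \ge 0$. Maximizing over $t \in \R$ forces the normalization $\sum_v \pi(v) X(v,v) = \langle X, \Pi \rangle = 1$. Maximizing over each free scalar $W(u,v)$ (symmetric, $uv \in E$) forces its coefficient in the Lagrangian to vanish; the key bookkeeping step is that $W(u,v)$ contributes simultaneously to $D_W(u,u)$ and $D_W(v,v)$, so its coefficient is
\[
X(u,u) + X(v,v) + 2 X(u,v) + \beta_{uv} - \alpha_u - \alpha_v \;=\; 0,
\]
and after eliminating $\beta_{uv} \ge 0$ this becomes $\alpha_u + \alpha_v \ge X(u,u) + 2 X(u,v) + X(v,v)$ for every $uv \in E$, with remaining objective $\min \sum_u \pi(u) \alpha_u$.

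Factoring $X = F^T F$ with columns $f(v) \in \R^n$ converts the normalization to $\sum_v \pi(v) \|f(v)\|^2 = 1$ and the edge constraint to $\alpha_u + \alpha_v \ge \|f(u) + f(v)\|^2$; relabeling $\alpha_u$ as $g(u)$ gives exactly $\nu(G)$. For strong duality I would verify Slater's condition on both sides: a strictly feasible primal is given by $P(u,v) := \varepsilon / \pi(u)$ for $uv \in E$ (and $0$ otherwise) for $\varepsilon > 0$ small enough that $\sum_v P(u,v) < 1$, combined with any $t$ well below $\lambda_{\min}(D_P + P)$ to give strict PSD slack; a strictly feasible dual is $X := I$ (rescaled so that $\sum_v \pi(v) X(v,v) = 1$) together with all $g(u)$ equal to a large constant.

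The main subtlety, beyond the mechanical application of SDP duality, is tracking the coefficients correctly so that the dual matches $\nu(G)$ exactly --- this hinges on treating $W(u,v) = W(v,u)$ as a single variable contributing symmetrically to both $D_W(u,u)$ and $D_W(v,v)$, producing the factor $X(u,u) + 2 X(u,v) + X(v,v) = \|f(u) + f(v)\|^2$ in the edge constraint. Two cosmetic differences from Roch's derivation of $\gamma(G)$ fall out of the sign change $D - P \to D + P$ in passing from the Laplacian to the signless Laplacian: the dual constraint involves $\|f(u) + f(v)\|^2$ instead of $\|f(u) - f(v)\|^2$, and there is no centering constraint $\sum_v \pi(v) f(v) = 0$, because the variational formula for $\lambda_{\min}$ imposes no orthogonality restriction (in contrast to $\lambda_2(I - P)$, which excludes the top eigenvector $\pi$).
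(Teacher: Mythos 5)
Your proposal is correct and follows essentially the route the paper intends: the paper reduces $\lambda_{\min}(D_P+P)$ to the $\pi$-weighted quadratic form $\sum_{uv\in E}\pi(u)P(u,v)(x(u)+x(v))^2$ and then defers to Roch's Lagrangian/SDP duality derivation, which is exactly what you carry out (substitution $W=\Pi P$, dualizing with $X\succeq 0$, Gram factorization $X=F^TF$, and Slater points on both sides). Your bookkeeping of the edge variable's contribution to both diagonal entries, the origin of $g\ge 0$ from the inequality row-sum constraint, and the absence of a centering constraint all match the paper's discussion.
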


There are two differences between $\gamma(G)$ in \autoref{prop:Roch-dual} and $\nu(G)$ in \autoref{prop:Roch-dual-bipartite}.
One is that the constraint $g(u)+g(v) \geq \norm{f(u)-f(v)}^2$ in $\gamma(G)$ is replaced by the constraint $g(u)+g(v) \geq \norm{f(u)+f(v)}^2$ in $\nu(G)$, which are handled in a similar way.
The other is that the constraint of $\sum_{v \in V} \pi(v) f(v) = \vec{0}$ in $\gamma(G)$ is not present in $\nu(G)$, and so it is slightly easier to work with $\nu(G)$, e.g. no truncation step needed.

The nice form of the dual program $\nu(G)$ is the main reason behind the definition of the primal program $\zeta^*(G)$. By the variational characterization of eigenvalues, the quadratic form of $D_P + P$, and the $\pi$-reversibility of $P$,
\[
  \lambda_{\min}(D_P + P) = \min_{x: V \rightarrow \R} \frac{x^T(D_P + P)x}{x^T x}
  = \min_{x: V \rightarrow \R} \frac{\sum_{uv \in E} \pi(u) P(u, v) (x(u) + x(v))^2}{\sum_{u \in V} \pi(u) x(u)^2},
\]

and this is the intermediate form we need to derive the dual, see~\cite{Roc05}.

Later, as in \autoref{sec:dual-orientation}, we will define a directed dual program $\directed{\nu}(G)$, and the dual constraint $g \ge 0$ crucially enables us to relate the two programs. The dual constraint $g \ge 0$ comes from the primal constraint $\sum_{v \in V} P(u, v) \le 1$, whereas if we use $\sum_{v \in V} P(u, v) = 1$ then $g$ will be unconstrained.

For $\lambda_2^*(G)$, we sidestep the issue by adding self loops to each vertex of $G$. The non-negativity of $g$ in the dual program $\gamma(G)$ follows indirectly from $g(u) + g(u) \ge \norm{f(u) - f(u)}^2$, as now $(u, u) \in E$.
Moreover, adding self loops does not change the vertex expansion of $G$. 
Therefore, $\lambda_2^*(G)$ can take the more natural form where $P$ does correspond to a transition matrix.
However, we cannot do the same for $\zeta^*(G)$, because the additional constraint on $g$ becomes $g(u) + g(u) \ge \norm{f(u) + f(u)}^2$ which changes the objective value, and also that adding self loops takes away the bipartiteness of subgraphs.

\subsection{Proof of \autoref{thm:Cheeger-bipartite}}

We use the same two-step plan as in \autoref{sec:proof-Cheeger}.
In the first step, we project the solution to the dual program in \autoref{prop:Roch-dual-bipartite} into a $1$-dimensional solution to the following program.

\begin{definition}[One-Dimensional Dual Program for Lower Spectral Gap] \label{def:Roch-dual-bipartite-1D}
Given an undirected graph $G=(V,E)$ and a probability distribution $\pi$ on $V$,
$\nu^{(1)}(G)$ is defined as the following program:
\begin{eqnarray*}
\nu^{(1)}(G) := \min_{f: V \to \R,~g: V \to \R_{\geq 0}} & & \sum_{v \in V} \pi(v) g(v)
\\
\st & & {\sum_{v \in V} \pi(v) \norm{f(v)}^2} = 1
\\
& & g(u) + g(v) \geq \norm{f(u) + f(v)}^2 \quad \quad \forall uv \in E.
\end{eqnarray*}
\end{definition}

As in \autoref{prop:improved-projection}, we use the Gaussian projection method in~\cite{LRV13} to prove the following guarantee.

\begin{proposition}[Gaussian Projection for $\nu(G)$] \label{prop:projection-nu}
For any undirected graph $G=(V,E)$ with maximum degree $d$ and any probability distribution $\pi$ on $V$,
\[
\nu(G) \leq \nu^{(1)}(G) \lesssim \nu(G) \cdot \log d. 
\]
\end{proposition}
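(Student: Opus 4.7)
The plan is to mirror the proof of \autoref{prop:improved-projection} with the sole change that the relevant quadratic form is $\norm{f(u)+f(v)}^2$ rather than $\norm{f(u)-f(v)}^2$. The first inequality $\nu(G) \leq \nu^{(1)}(G)$ is immediate because $\nu^{(1)}(G)$ is the restriction of $\nu(G)$ to one-dimensional embeddings. For the second inequality, I would introduce a directed analog $\directed{\nu}(G)$ in which the edge constraint is replaced by $\max\{g(u),g(v)\} \geq \norm{f(u)+f(v)}^2$ for $uv\in E$, together with its one-dimensional version $\directed{\nu}^{(1)}(G)$. Exactly as in \autoref{lem:factor2}, using that $g \geq 0$ for the first direction and that $(f,2g)$ is feasible for $\directed{\nu}$ whenever $(f,g)$ is feasible for $\nu$ for the second, we obtain $\nu(G) \leq \directed{\nu}(G) \leq 2\nu(G)$ and likewise $\nu^{(1)}(G) \leq \directed{\nu}^{(1)}(G) \leq 2\nu^{(1)}(G)$. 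Eliminating $g$ by orienting each edge $uv \in E$ and setting $g(v) = \max_{u:u\to v}\norm{f(u)+f(v)}^2$, as in \autoref{def:compact}, rewrites both directed programs in compact form. It therefore suffices to prove $\directed{\nu}^{(1)}(G) \lesssim \log d \cdot \directed{\nu}(G)$.

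For this key step, fix a near-optimal $(f,\directed{E})$ for $\directed{\nu}(G)$ in compact form and sample $h \sim N(0,1)^n$ as in \cite{LRV13}, setting $y(v) := \inner{f(v)}{h}$. The crucial observation is that for each directed edge $u \to v$ the scalar $y(u)+y(v) = \inner{f(u)+f(v)}{h}$ is Gaussian with mean zero and variance $\norm{f(u)+f(v)}^2$. Applying \autoref{Gaussian-expected-max} to the at most $d$ such Gaussians incident to each $v$ yields
\[
\E\Big[\max_{u:u\to v}\big(y(u)+y(v)\big)^2\Big] \leq 4 \log d \cdot \max_{u:u\to v}\norm{f(u)+f(v)}^2,
\]
so summing with weights $\pi(v)$ bounds the expected numerator of $\directed{\nu}^{(1)}$ by $4 \log d \cdot \directed{\nu}(G)$. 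Markov's inequality then says this numerator exceeds $96 \log d \cdot \directed{\nu}(G)$ with probability at most $\tfrac{1}{24}$. For the denominator, \autoref{Gaussian-denominator} applied to $z_v := \sqrt{\pi(v)}\, y(v)$ gives $\sum_{v \in V} \pi(v) y(v)^2 \geq \tfrac{1}{2}$ with probability at least $\tfrac{1}{12}$. A union bound produces both events simultaneously with positive probability, and rescaling $y$ by a factor of at most $\sqrt{2}$ yields a feasible one-dimensional solution of value $O(\log d) \cdot \directed{\nu}(G)$, as desired.

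Compared to the proof of \autoref{prop:improved-projection}, this argument is in fact slightly simpler because $\nu(G)$ lacks the centering constraint $\sum_{v \in V} \pi(v) f(v) = \vec{0}$, so there is no orthogonality-to-$\pi$ condition that must be preserved by the projection. The only point requiring care is the bookkeeping of the sign change from $-$ to $+$ when passing to the directed reformulation; once this is done, the variance of $\inner{f(u)+f(v)}{h}$ is exactly $\norm{f(u)+f(v)}^2$, and the two Gaussian facts apply verbatim. I therefore do not anticipate any serious obstacle.
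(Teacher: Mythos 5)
Your proposal is correct and follows essentially the same route as the paper: the paper also introduces the directed programs $\directed{\nu}(G)$, $\directed{\nu}^{(1)}(G)$ with the orientation reformulation, and then explicitly states that the Gaussian projection argument of \autoref{prop:improved-projection} applies verbatim with $\norm{f(u)+f(v)}^2$ in place of $\norm{f(u)-f(v)}^2$, concluding via \autoref{lem:factor2-nu}. Your observation that the missing centering constraint only simplifies matters matches the paper's treatment.
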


In the second step, we prove a Cheeger-type inequality relating $\psi_B(G)$ and $\nu(G)$.

\begin{theorem} \label{thm:Cheeger-nu}
For any undirected graph $G=(V,E)$ and any probability distribution $\pi$ on $V$,
\[
\psi_B(G)^2 \lesssim \nu^{(1)}(G) \lesssim \psi_B(G).
\]
\end{theorem}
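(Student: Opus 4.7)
The plan mirrors the proof of \autoref{thm:weighted-Cheeger}, adapted for the ``plus'' constraint of $\nu^{(1)}(G)$ and the need to output a set that induces a bipartite subgraph. For the easy direction $\nu^{(1)}(G) \lesssim \psi_B(G)$, given a witness $S$ with bipartition $(A,B)$ such that $G[S]$ is induced bipartite, take $f(v) := \pi(S)^{-1/2}(\mathbbm{1}[v \in A] - \mathbbm{1}[v \in B])$ and $g(v) := \pi(S)^{-1}\mathbbm{1}[v \in \partial S]$. The normalization $\sum_v \pi(v) f(v)^2 = 1$ holds; internal edges of $G[S]$ satisfy $f(u)+f(v) = 0$ by bipartiteness; edges from $S$ to $\partial S$ are covered by the boundary mass of $g$; and the objective $\sum_v \pi(v) g(v)$ evaluates to exactly $\psi(S)$.

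For the hard direction, I introduce a directed program $\directed{\nu}^{(1)}(G)$ analogous to \autoref{def:compact}, namely the minimum of $\sum_v \pi(v) \max_{u:u\to v}(f(u)+f(v))^2$ over $f: V \to \R$ with $\sum_v \pi(v) f(v)^2 = 1$ and over orientations $\directed{E}$ of $E$. Since the dual enforces $g \ge 0$, the same comparison as in \autoref{lem:factor2} yields $\nu^{(1)}(G) \le \directed{\nu}^{(1)}(G) \le 2\nu^{(1)}(G)$, so it suffices to round an optimal pair $(f,\directed{E})$. Combining ideas from \autoref{prop:threshold-rounding} with the signed-threshold idea of Trevisan~\cite{Tre09}, I take level sets $A_t := \{v : f(v) > 0,\, f(v)^2 > t\}$, $B_t := \{v: f(v) < 0,\, f(v)^2 > t\}$, $S_t := A_t \cup B_t$, and define the directed bipartite vertex boundary $\directed{\partial}^B(A_t,B_t) := \{v : \exists u \to v \text{ with } (u,v) \text{ bad at time } t\}$, where an edge $(u,v)$ is bad if either (a) it crosses $\partial S_t$ or (b) both endpoints lie in $S_t$ but on the same side of the bipartition. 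A standard averaging argument reduces the task to bounding the ratio of $\int_0^\infty \pi(\directed{\partial}^B(A_t,B_t))\,dt$ to $\int_0^\infty \pi(S_t)\,dt = \sum_v \pi(v) f(v)^2 = 1$.

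The main obstacle is bounding the numerator, since the SDP controls $(f(u)+f(v))^2$ but the crossing case (a) naturally involves $|f(u)^2-f(v)^2|$, and the two types of bad events must be treated simultaneously. For each vertex $v$, the case-(b) contribution has measure at most $\max_{u:u\to v,\,\text{same sign}}\min(f(u)^2, f(v)^2) \le \tfrac14 \max_{u:u\to v}(f(u)+f(v))^2$ via the same-sign identity $(f(u)+f(v))^2 \ge 4\min(f(u)^2,f(v)^2)$, so summing over $v$ contributes at most $\tfrac14 \directed{\nu}^{(1)}(G)$. The case-(a) per-vertex contribution is at most $2\max_{u:u\to v}|f(u)^2-f(v)^2|$, exactly as in \autoref{prop:threshold-rounding}; the absence of self-loops is harmless because the union of the intervals $[\min(f(u)^2,f(v)^2),\max(f(u)^2,f(v)^2))$ still has measure $\max^+ f(u)^2 - \min^- f(u)^2$, with $f(v)^2$ effectively playing the role that a self-loop would. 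Using $|f(u)-f(v)| \le |f(u)+f(v)| + 2|f(v)|$, so $|f(u)^2 - f(v)^2| \le (f(u)+f(v))^2 + 2|f(v)| \cdot |f(u)+f(v)|$, and then Cauchy-Schwarz against $\sum_v \pi(v) f(v)^2 = 1$, the case-(a) total is at most $O(\directed{\nu}^{(1)}(G) + \sqrt{\directed{\nu}^{(1)}(G)}) = O(\sqrt{\directed{\nu}^{(1)}(G)})$, and thus some threshold $t$ achieves $\directed{\psi}_B(A_t,B_t) \lesssim \sqrt{\nu^{(1)}(G)}$.

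The final step is post-processing analogous to \autoref{lem:cleanup}: set $S' := S_t \setminus \directed{\partial}^B(A_t,B_t)$ with bipartition $(A_t \cap S', B_t \cap S')$. Any surviving edge of $G[S']$ must be good (otherwise its head under $\directed{E}$ would sit in $\directed{\partial}^B$), so $G[S']$ is induced bipartite. Any vertex of $\partial S'$ lying outside $S_t$ would witness a bad crossing edge whose $S_t$-endpoint would be pushed into $\directed{\partial}^B$, contradicting its survival in $S'$; hence $\partial S' \subseteq \directed{\partial}^B(A_t,B_t)$. In the nontrivial range $\directed{\psi}_B(A_t,B_t) < 1/2$ (otherwise $\nu^{(1)}(G) \gtrsim 1$ and $\psi_B(G) \le 1 \lesssim \sqrt{\nu^{(1)}(G)}$ is immediate) the surviving set satisfies $\pi(S') \ge \pi(S_t)/2$, so $\psi_B(G) \le \psi(S') \lesssim \sqrt{\nu^{(1)}(G)}$, completing the proof.
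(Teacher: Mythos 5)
Your proposal is correct and follows essentially the same route as the paper's proof: the directed orientation program with the factor-$2$ comparison enabled by $g \ge 0$, the signed threshold sets (your $A_t, B_t$ are the paper's $S_t, S_{-t}$, and your ``bad-edge'' boundary coincides with \autoref{def:vertex-cover-bipartite}), the same two-case numerator bound (same-sign edges via $(f(u)+f(v))^2 \ge 4\min(f(u)^2,f(v)^2)$ giving $\tfrac14\directed{\nu}^{(1)}$, crossing edges via $|f(u)^2-f(v)^2| \le (f(u)+f(v))^2 + 2|f(v)|\,|f(u)+f(v)|$ and Cauchy--Schwarz), and the same cleanup $S' := S_t \setminus \directed{\partial}^B$ as in \autoref{lem:cleanup-bipartite}. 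Two cosmetic points to tighten, both handled the same way in the paper: in the cleanup, a bad crossing edge oriented away from $S'$ does not contradict survival but instead puts its head (the outside endpoint) into $\directed{\partial}^B$, which is exactly what $\partial S' \subseteq \directed{\partial}^B(A_t,B_t)$ requires; and the easy direction needs the separate trivial case $\psi_B(G)=1$ (e.g.\ $f \equiv 1$, $g \equiv 2$ gives $\nu^{(1)}(G) \le 2$), since the best induced-bipartite witness may have $\psi(S) > 1$ under general $\pi$.
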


Combining \autoref{prop:Roch-dual-bipartite} and \autoref{prop:projection-nu} and \autoref{thm:Cheeger-nu} gives 
\[
\psi_B(G)^2 \lesssim \nu^{(1)}(G) \lesssim \nu(G) \cdot \log d = \zeta^*(G) \log d
\quad {\rm and} \quad
\zeta^*(G) = \nu(G) \leq \nu^{(1)}(G) \lesssim \psi_B(G),
\]
proving \autoref{thm:Cheeger-bipartite}.
We will prove \autoref{prop:projection-nu} and \autoref{thm:Cheeger-nu} in the following subsections.

\subsection{Dual Program on Graph Orientation} \label{sec:dual-orientation-bipartite}

As in \autoref{sec:dual-orientation}, we introduce a directed program for the analysis of both steps.

\begin{definition}[Directed Dual Programs for $\nu(G)$] 
\label{def:directed-nu}
Given an undirected graph $G=(V,E)$ and a probability distribution $\pi$ on $V$,
\begin{eqnarray*}
\directed{\nu}(G) := \min_{f: V \to \R^n,~g: V \to \R_{\geq 0}} & & \sum_{v \in V} \pi(v) g(v)
\\
\st & & {\sum_{v \in V} \pi(v) \norm{f(v)}^2} = 1
\\
& & \max\{g(u),g(v)\} \geq \norm{f(u) + f(v)}^2 \quad \quad \forall uv \in E.
\end{eqnarray*}
$\directed{\nu}^{(1)}(G)$ is defined as the $1$-dimensional program of $\directed{\nu}(G)$ where $f:V \to \R$ instead of $f: V \to \R^n$.
\end{definition}

As in \autoref{lem:factor2}, we show that $\nu(G)$ and $\directed{\nu}(G)$ are closely related.  
The proof is the same as in \autoref{lem:factor2} and is omitted,
but note that $g \geq 0$ is needed.

\begin{lemma} \label{lem:factor2-nu}
For any undirected graph $G=(V,E)$ and any probability distribution $\pi$ on $V$,
\[
\nu(G) \leq \directed{\nu}(G) \leq 2\nu(G)
\quad {\rm and} \quad
\nu^{(1)}(G) \leq \directed{\nu}^{(1)}(G) \leq 2\nu^{(1)}(G).
\]
\end{lemma}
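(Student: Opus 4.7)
The plan is to mirror the template of \autoref{lem:factor2} almost verbatim, since the only structural difference between $\gamma(G)$ and $\nu(G)$ is that $\|f(u)-f(v)\|^2$ is replaced by $\|f(u)+f(v)\|^2$, and this sign plays no role when comparing a sum-type edge constraint with a max-type edge constraint. I would handle the two inequalities separately and then note that the one-dimensional versions require no extra argument.

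For $\nu(G) \leq \directed{\nu}(G)$, I would take any feasible pair $(f,g)$ for $\directed{\nu}(G)$. Because $g(u), g(v) \geq 0$, we have $g(u)+g(v) \geq \max\{g(u),g(v)\} \geq \|f(u)+f(v)\|^2$ for every $uv \in E$, while the remaining constraints (the unit norm condition and non-negativity of $g$) are identical in the two programs. Hence $(f,g)$ is also feasible for $\nu(G)$ with the same objective value, and the inequality follows by taking the infimum.

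For $\directed{\nu}(G) \leq 2\nu(G)$, I would take any feasible $(f,g)$ for $\nu(G)$ and consider $(f,2g)$. For each $uv \in E$, since $g \geq 0$ we get $\max\{2g(u),2g(v)\} \geq g(u)+g(v) \geq \|f(u)+f(v)\|^2$, so $(f,2g)$ is feasible for $\directed{\nu}(G)$ with objective exactly twice that of $(f,g)$; minimizing yields the bound. The one-dimensional inequalities $\nu^{(1)}(G) \leq \directed{\nu}^{(1)}(G) \leq 2\nu^{(1)}(G)$ follow by the same construction, since restricting $f$ to map into $\R$ instead of $\R^n$ does not interact with either step.

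The only delicate point — which is exactly what the remark after the statement is flagging — is that the non-negativity constraint $g \geq 0$ is genuinely used, both to dominate $\max\{g(u),g(v)\}$ by $g(u)+g(v)$ and to keep $2g$ non-negative. This is also why \autoref{def:directed-nu} and \autoref{prop:Roch-dual-bipartite} explicitly enforce $g \geq 0$ (coming on the primal side from $\sum_v P(u,v)\le 1$ rather than equality, as discussed in the text); without it the two programs would not be within a factor of two. I anticipate no other obstacle.
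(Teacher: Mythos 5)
Your proof is correct and is exactly the argument the paper intends: the paper omits the proof of this lemma, referring to the proof of \autoref{lem:factor2}, which uses the same two observations (a feasible $(f,g)$ for $\directed{\nu}(G)$ is feasible for $\nu(G)$ since $g\geq 0$, and $(f,2g)$ converts a feasible solution of $\nu(G)$ into one for $\directed{\nu}(G)$), together with the same remark that $g\geq 0$ is needed. The only cosmetic quibble is that the inequality $\max\{2g(u),2g(v)\}\geq g(u)+g(v)$ holds for arbitrary reals, so in the second direction non-negativity is only needed to keep $2g$ feasible, not for that inequality; this does not affect correctness.
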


As in \autoref{def:compact}, we use an orientation of the edges to eliminate the variables $g(v)$ for $v \in V$ in $\directed{\nu}(G)$.
The proof is the same as in \autoref{def:compact} and is omitted,
but note that $g \geq 0$ is needed.

\begin{lemma}[Directed Dual Programs Using Orientation for $\nu(G)$] \label{def:orientation-nu}
Let $G=(V,E)$ be an undirected graph and $\pi$ be a probability distribution on $V$.
Let $\directed{E}$ be an orientation of the undirected edges in $E$.
Then 
\begin{eqnarray*}
\directed{\nu}(G) = \min_{f: V \to \R^n} \min_{\directed{E}} & & \sum_{v \in V} \pi(v) \max_{u: uv \in \directed{E}} \norm{f(u)+f(v)}^2
\\
\st & & {\sum_{v \in V} \pi(v) \norm{f(v)}^2} = 1.
\end{eqnarray*}
Similarly, $\directed{\nu}^{(1)}(G)$ can be written in the same form with $f:V \to \R$ instead of $f: V \to \R^n$.
\end{lemma}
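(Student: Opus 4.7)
The plan is to prove the two inequalities separately, closely following the proof of \autoref{def:compact} for $\gamma(G)$, with the substitution of $\norm{f(u)+f(v)}^2$ in place of $\norm{f(u)-f(v)}^2$ throughout. The non-negativity of the dual variable $g$, which is explicitly required in \autoref{def:directed-nu} and traces back to the primal constraint $\sum_{v \in V} P(u,v) \leq 1$ rather than equality, will play the central role in both directions.

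For the inequality $\directed{\nu}(G) \leq$ (orientation program value), I would start from any feasible pair $(f, \directed{E})$ in the orientation program and construct a feasible solution $(f, g)$ for $\directed{\nu}(G)$ by defining $g(v) := \max_{u: uv \in \directed{E}} \norm{f(u)+f(v)}^2$, which is manifestly non-negative. Every undirected edge $uv \in E$ is oriented exactly one way in $\directed{E}$, say $u \to v$, which gives $g(v) \geq \norm{f(u)+f(v)}^2$ and hence $\max\{g(u),g(v)\} \geq \norm{f(u)+f(v)}^2$ as required in \autoref{def:directed-nu}. The normalization $\sum_v \pi(v) \norm{f(v)}^2 = 1$ depends only on $f$ and is therefore preserved, and the objective values match by construction of $g$.

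For the reverse inequality, given a feasible $(f, g)$ for $\directed{\nu}(G)$, I would orient each edge $uv \in E$ toward whichever endpoint realizes the larger of $g(u)$ and $g(v)$, breaking ties arbitrarily, to produce an orientation $\directed{E}$. Then every directed edge $u \to v$ in $\directed{E}$ satisfies $g(v) = \max\{g(u),g(v)\} \geq \norm{f(u)+f(v)}^2$, so $g(v) \geq \max_{u: uv \in \directed{E}} \norm{f(u)+f(v)}^2$ at every vertex $v$. At this point I would replace $g(v)$ by exactly this max; the new values are still non-negative, all directed-edge constraints still hold, and the objective value does not increase, yielding a feasible orientation-program solution of value at most $\directed{\nu}(G)$.

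The step to watch for is this final substitution: it is legal precisely because the max of squared norms is non-negative, so it respects the $g \geq 0$ constraint, which is exactly why the problem was formulated with this constraint in the first place. If $g$ were allowed to be negative then an optimal $g$ could be strictly less than this max, and no orientation would reproduce the objective. The $1$-dimensional version for $\directed{\nu}^{(1)}(G)$ requires no separate argument, since the reasoning above never uses the vector-valued structure of $f$, only the scalar quantities $\norm{f(u)+f(v)}^2$, and therefore applies verbatim when $f : V \to \R$.
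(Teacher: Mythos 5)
Your proposal is correct and follows essentially the same argument as the paper, which proves this lemma by reusing the proof of \autoref{def:compact} verbatim (defining $g(v)$ as the incoming max in one direction, and orienting each edge toward an endpoint certifying the constraint and then lowering $g$ to the max in the other), with $g \geq 0$ invoked exactly where you invoke it. The only point left implicit in both your write-up and the paper is that at a vertex with no incoming edges the max is over an empty set (conventionally $0$), and it is precisely there that $g(v) \geq 0$ is needed for the objective comparison.
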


Once we have this formulation using orientation, we can use the same proof as in \autoref{prop:improved-projection} to show that $\directed{\nu}(G) \leq \directed{\nu}^{(1)}(G) \lesssim \log d \cdot \directed{\nu}(G)$, and thus \autoref{prop:projection-nu} follows from \autoref{lem:factor2-nu} and we omit the proof.
It remains to prove \autoref{thm:Cheeger-nu}, which will be done in the next subsection.

\subsection{Cheeger Rounding for Bipartite Vertex Expansion}

The goal of this subsection is to prove \autoref{thm:Cheeger-nu}.
We will prove the following easy direction in \autoref{app:Cheeger-bipartite}.

\begin{lemma}[Easy Direction] \label{lem:Cheeger-easy-bipartite}
For any undirected graph $G=(V,E)$ and any probability distribution $\pi$ on $V$, 
\[\nu^{(1)}(G) \leq 2\psi_B(G).\]
\end{lemma}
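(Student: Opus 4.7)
The plan is to exhibit an explicit feasible solution $(f,g)$ to $\nu^{(1)}(G)$ whose objective value is at most $\psi_B(G)$, mirroring the argument for $\gamma^{(1)}(G) \leq 2\psi(G)$ in \autoref{lem:Cheeger-easy} but considerably simpler because the program in \autoref{def:Roch-dual-bipartite-1D} lacks the centering constraint $\sum_v \pi(v) f(v) = 0$.

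First I would dispose of the trivial case $\psi_B(G) = 1$ by taking $f \equiv 1$ and $g \equiv 2$: the normalization $\sum_v \pi(v) f(v)^2 = 1$ holds, the dual constraint $g(u)+g(v) = 4 \geq (f(u)+f(v))^2 = 4$ is satisfied with equality, and the objective is $2$. So assume $\psi_B(G) < 1$ and pick a nonempty $S \subseteq V$ attaining the minimum, so that $G[S]$ is induced bipartite with some bipartition $S = A \sqcup B$ and $\psi(S) = \psi_B(G)$.

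The main construction exploits bipartiteness through a sign cancellation. Define
\[
f(v) := \frac{1}{\sqrt{\pi(S)}}\bigl(\one[v \in A] - \one[v \in B]\bigr),
\qquad
g(v) := \frac{1}{\pi(S)}\,\one[v \in \partial S].
\]
Normalization $\sum_v \pi(v) f(v)^2 = (\pi(A)+\pi(B))/\pi(S) = 1$ is immediate. For the edge constraints, I would case-split on $uv \in E$: if both endpoints lie in $S$, then $uv \in E[S]$, so by bipartiteness of the induced graph one endpoint is in $A$ and the other in $B$, forcing $f(u)+f(v) = 0$; if exactly one endpoint is in $S$, the other lies in $\partial S$, whence $g(u)+g(v) \geq 1/\pi(S) = (f(u)+f(v))^2$; and if neither endpoint is in $S$, then $f(u)+f(v) = 0$. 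In every case the dual constraint holds, and the objective is
\[
\sum_{v \in V} \pi(v) g(v) = \frac{\pi(\partial S)}{\pi(S)} = \psi_B(G) \le 2\psi_B(G).
\]

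I do not expect any real obstacle: the factor of $2$ is pure slack, absorbing the trivial regime $\psi_B(G) = 1$ and keeping the statement parallel to \autoref{lem:Cheeger-easy}. The only subtle point is that it is the \emph{induced}-bipartite requirement in \autoref{def:bipartite-vertex-expansion} — not merely a $2$-coloring of $S$ — that guarantees $f(u)+f(v)=0$ on every edge inside $S$, which is the key cancellation driving the argument.
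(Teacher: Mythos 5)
Your proposal is correct and follows essentially the same route as the paper's proof: the paper also takes the $\pm 1$ signed indicator of the bipartition of an optimal $S$ (with $g$ the indicator of $\partial S$), verifies the edge constraints by the identical three-case analysis using induced bipartiteness, and handles the $\psi_B(G)=1$ regime with $f \equiv 1$, $g \equiv 2$. The only cosmetic difference is that you normalize $f$ and $g$ by $\pi(S)$ up front, whereas the paper works with the unnormalized ratio form.
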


For the hard direction, we will work with $\directed{\nu}^{(1)}(G)$ instead.
There is no need to do the truncation step as in \autoref{lem:truncation}, as there is no constraint about $\pi(S)$ of the output set $S$.
The main step is to define a modified bipartite vertex expansion condition for directed graphs and use it for the analysis of the threshold rounding.

Let $S_1, S_2$ be two disjoint subsets of $V$.
In the edge conductance setting, rephrasing using our terminology,
Trevisan~\cite{Tre09} defined the ``bipartite edge boundary'' $\delta(S_1,S_2)$ as $E(S_1) \cup E(S_2) \cup \delta(S_1 \cup S_2)$ where $E(S_i)$ is the set of induced edges in $S_i$ for $i \in \{1,2\}$, and the ``bipartite edge conductance'' $\phi(S_1,S_2)$ as $|\delta(S_1,S_2)|/\vol(S_1 \cup S_2)$.
We define the appropriate bipartite vertex boundary $\directed{\partial}(S_1,S_2)$ for vertex expansion and for directed graphs in the following definition.
As in \autoref{def:vertex-cover}, note that $\directed{\partial}(S_1,S_2)$ could contain vertices in $V-(S_1 \cup S_2)$.
Again, a good intuition is to think of $\directed{\partial}(S_1,S_2)$ as a vertex cover of the edges in the bipartite edge boundary $\delta(S_1,S_2)$ in the undirected sense.

\begin{definition}[Directed Bipartite Vertex Boundary and Expansion] \label{def:vertex-cover-bipartite}
Let $G=(V,\directed{E})$ be a directed graph.
Let $S_1, S_2$ be two disjoint subsets of $V$.
The directed bipartite vertex boundary of $(S_1,S_2)$ is defined as
\begin{eqnarray*}
\directed{\partial}(S_1,S_2) & := &
\big\{ v \in S_1 \mid \exists u \in S_1 {\rm~with~} uv \in \directed{E}, {\rm~or~} \exists u \notin S_1 \cup S_2 {\rm~with~} uv \in \directed{E}\big\} \cup
\\ 
& & \big\{ v \in S_2 \mid \exists u \in S_2 {\rm~with~} uv \in \directed{E}, {\rm~or~} \exists u \notin S_1 \cup S_2 {\rm~with~} uv \in \directed{E}\big\} \cup
\\
& & \big\{ v \notin S_1 \cup S_2 \mid \exists u \in S_1 \cup S_2 {\rm~with~} uv \in \directed{E}\big\},
\end{eqnarray*}
and the directed bipartite vertex expansion as
\[
\directed{\psi}(S_1,S_2) := \frac{\pi\big(\directed{\partial}(S_1,S_2)\big)}{\pi(S_1 \cup S_2)}.
\]
\end{definition}

An example of directed bipartite vertex expansion is provided in Figure \ref{fig:bipartite-vb-example}. 

\begin{figure}[h]
  \centering
  \includegraphics[scale=0.55]{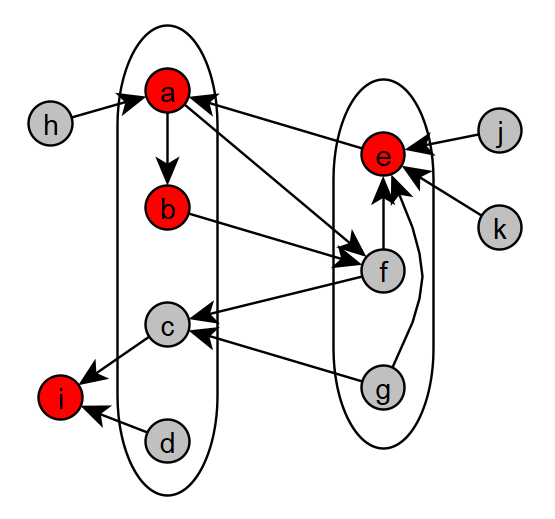}
  \caption{In the graph shown, the bipartition is $S_1 = \{a, b, c, d\}$ and $S_2 = \{e, f, g\}$. The vertices $a, b, e, i$, colored in red, are the vertices in $\protect \directed{\partial}(S_1, S_2)$.}
  \label{fig:bipartite-vb-example}
\end{figure}

We prove that the threshold rounding defined in~\cite{Tre09}, when applied on $\directed{\nu}^{(1)}(G)$, will give a set with small directed bipartite vertex expansion.

\begin{proposition}[Threshold Rounding for $\nu(G)$] \label{prop:threshold-rounding-nu}
Let $G=(V,E)$ be an undirected graph and $\pi$ be a probability distribution on $V$.
Given a solution $x$ and $\directed{E}$ to $\directed{\nu}^{(1)}(G)$, 
there is a polynomial time algorithm to find two disjoint subsets $S_1,S_2 \subseteq V$ with $\directed{\psi}(S_1,S_2) \lesssim \sqrt{\directed{\nu}^{(1)}(G)}$.
\end{proposition}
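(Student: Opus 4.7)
The plan is to reduce the claim to \autoref{prop:threshold-rounding}, the threshold-rounding statement we already proved for $\gamma(G)$, via the bipartite double cover of $G$. The observation driving this reduction is that the bipartite-type constraint $(x(u)+x(v))^2$ in $\directed{\nu}^{(1)}(G)$ becomes an ordinary squared-difference $(x'(u^+) - x'(v^-))^2$ on the double cover, so the bipartite problem on $G$ becomes the non-bipartite problem on $G'$.

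Let $G' = (V^+ \sqcup V^-, E')$ where $V^\pm$ are two copies of $V$ (the copies of $v$ written as $v^+$ and $v^-$) and $E' = \{u^+ v^- : uv \in E\}$, with a self-loop added at every vertex of $V'$ as required by the convention in \autoref{def:BDX-primal}. Inherit an orientation $\directed{E'}$ from $\directed{E}$ by setting $u^+ \to v^-$ and $u^- \to v^+$ whenever $u \to v$ in $\directed{E}$, and set $\pi'(v^\pm) := \tfrac{1}{2}\pi(v)$. Starting from the given solution $x$ and $\directed{E}$ to $\directed{\nu}^{(1)}(G)$, define $x'(v^+) := x(v)$ and $x'(v^-) := -x(v)$. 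By the antisymmetric construction, $\sum_w \pi'(w) x'(w) = 0$ and $\sum_w \pi'(w) x'(w)^2 = \sum_v \pi(v) x(v)^2 = 1$; moreover, since $x'(u^+) - x'(v^-) = x(u) + x(v)$ and the in-neighbors of $v^\pm$ under $\directed{E'}$ are copies of the in-neighbors of $v$ under $\directed{E}$, the objective of $(x', \directed{E'})$ written in the form of \autoref{def:compact} for $\directed{\gamma}^{(1)}(G')$ equals $\sum_v \pi(v) \max_{u:u \to v}(x(u)+x(v))^2 = \directed{\nu}^{(1)}(G)$. The added self-loops contribute $0$ to the max and so do not affect the equality.

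The next step is to apply the truncation of \autoref{lem:truncation} to $(x', \directed{E'})$: since the median of $x'$ is $0$ by symmetry, the truncated solution is simply $x'_+ := \max(x', 0)$, and $\supp(x'_+) = \{v^+ : x(v) > 0\} \cup \{v^- : x(v) < 0\}$ has $\pi'$-weight $\le 1/2$. Then \autoref{prop:threshold-rounding} produces $S' \subseteq \supp(x'_+)$ with $\directed{\psi}_{G'}(S') \lesssim \sqrt{\directed{\nu}^{(1)}(G)}$. Define $S_1 := \{v : v^+ \in S'\}$ and $S_2 := \{v : v^- \in S'\}$; these are automatically disjoint because $\supp(x'_+)$ contains at most one of $\{v^+, v^-\}$ per $v$. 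Polynomial-time computability is inherited from \autoref{prop:threshold-rounding}.

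The main obstacle is the boundary correspondence: one must verify that $v \in \directed{\partial}_G(S_1, S_2)$ holds iff at least one of $v^+, v^-$ lies in $\directed{\partial}_{G'}(S')$, matching the three clauses of \autoref{def:vertex-cover-bipartite} against the in-neighbor structure of $G'$. A case analysis works out cleanly: for $v \in S_1$ (so $v^+ \in S'$ but $v^- \notin S'$), the condition ``$\exists u \to v$ with $u \notin S_2$'' in $G$ matches ``$\exists u^- \to v^+$ with $u^- \notin S'$'' in $G'$; the case $v \in S_2$ is symmetric; for $v \notin S_1 \cup S_2$, the in-neighbors from $S_1$ (resp.\ $S_2$) cause $v^-$ (resp.\ $v^+$) to enter $\directed{\partial}_{G'}(S')$. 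One also checks that the self-loops do not create spurious boundary vertices, since a self-loop $w \to w$ always has both endpoints on the same side of any bipartition. Combining these observations with $\pi'(S') = \tfrac{1}{2}\pi(S_1 \cup S_2)$ and $\pi'(\directed{\partial}_{G'}(S')) \geq \tfrac{1}{2} \pi(\directed{\partial}_G(S_1, S_2))$ yields $\directed{\psi}_G(S_1, S_2) \leq \directed{\psi}_{G'}(S') \lesssim \sqrt{\directed{\nu}^{(1)}(G)}$, as required.
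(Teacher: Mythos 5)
Your proposal is correct, but it proves the proposition by a genuinely different route than the paper. The paper argues directly, Trevisan-style: it sweeps the two-sided thresholds $S_t=\{v: x(v)>\sqrt{t}\}$ and $S_{-t}=\{v: x(v)<-\sqrt{t}\}$, bounds $\int_0^\infty \pi\big(\directed{\partial}(S_t,S_{-t})\big)\,dt$ by a case analysis of when a vertex enters the directed bipartite boundary (induced edges of $S_t$ or $S_{-t}$ versus edges leaving $S_t\cup S_{-t}$), and finishes with Cauchy--Schwarz and the easy direction. You instead reduce to \autoref{prop:threshold-rounding} via the antisymmetric lift $x'(v^{\pm})=\pm x(v)$ on the bipartite double cover, which turns $(x(u)+x(v))^2$ into $(x'(u^+)-x'(v^-))^2$; the crux of your argument is the clause-by-clause correspondence between $\directed{\partial}_{G'}(S')$ and $\directed{\partial}_G(S_1,S_2)$, and I checked that it is correct (only the forward implication is needed, and your matching of the three clauses of \autoref{def:vertex-cover-bipartite} against the in-neighbor structure of $G'$ gives exactly that), as are the identities $\pi'(S')=\tfrac12\pi(S_1\cup S_2)$, the mass and orthogonality constraints for $x'$, the Lipschitz argument behind taking $x'_+=\max(x',0)$, and the harmlessness of the added self-loops. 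Unwound, your algorithm outputs exactly the same candidate pairs $(S_t,S_{-t})$ as the paper's, so the difference is purely in the analysis: your reduction is more modular and explains structurally why the $\nu$-rounding mirrors the $\gamma$-rounding, while the paper's direct computation stays inside $G$ and gives explicit constants. One small piece of bookkeeping you should add: \autoref{prop:threshold-rounding} is stated with the ratio bound $\gamma(G)$, and its conclusion $\gamma+2\sqrt{\gamma}\lesssim\sqrt{\gamma}$ needs the ratio to be $O(1)$; when you invoke it with the value $2\,\directed{\nu}^{(1)}(G)$ you should note that $\directed{\nu}^{(1)}(G)\le 2\nu^{(1)}(G)\le 4\psi_B(G)\le 4$ by \autoref{lem:factor2-nu} and \autoref{lem:Cheeger-easy-bipartite} -- the same fact the paper's own proof uses at its final step.
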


\begin{proof}
For any $t > 0$, define $S_t := \{v \in V \mid x(v) > \sqrt{t}\}$ and $S_{-t} := \{v \in V \mid x(v) < -\sqrt{t}\}$ as in~\cite{Tre09}.
By a standard averaging argument,
\[
\min_t \directed{\psi}(S_t,S_{-t})
\leq \frac{\int_{0}^{\infty} \pi\big(\directed{\partial}(S_t,S_{-t})\big) \, dt}{\int_{0}^{\infty} \pi(S_t \cup S_{-t}) \, dt}.
\]
The denominator is
\begin{equation*}
\int_{0}^{\infty} \pi(S_t \cup S_{-t}) \, dt
= \sum_{v \in V} \pi(v) \int_{0}^{\infty} \mathbbm{1}[v \in S_t \cup S_{-t}] \, dt
= \sum_{v \in V} \pi(v) x(v)^2 
= 1.
\end{equation*}
For the numerator, we consider when a vertex $v$ is in $\directed{\partial}(S_t,S_{-t})$.
Assume without loss that $x(v) \geq 0$; the other case is symmetric.
There are two scenarios where $v \in \directed{\partial}(S_t,S_{-t})$:
\begin{enumerate}
\item The first scenario is when $uv \in E[S_t]$ in the undirected sense for some directed edge $u \to v$.
For a fixed edge $u \to v$, 
this happens when $t < \max\{0, \min\{x(u),x(v)\}\}^2 \leq ((x(u)+x(v))/2)^2$,
where the last inequality can be verified by considering the cases $x(v) \geq 0$ and $x(v) \leq 0$ separately.
Therefore, the first scenario happens when
\[
t < \max_{u:u \to v} \max\{0, \min\{x(u),x(v)\}\}^2 
\leq \max_{u:u \to v} \Big(\frac{x(u)+x(v)}{2}\Big)^2.
\]
\item 
The second scenario is when $uv \in \delta(S_t \cup S_{-t})$ in the undirected sense for some directed edge $u \to v$.
For a fixed edge $u \to v$,
this happens when $x(u)^2 \leq t < x(v)^2$ (so $u \notin S_t \cup S_{-t}$ and $v \in S_t$), or when $x(v)^2 \leq t < x(u)^2$ (so $u \in S_t$ and $v \notin S_t \cup S_{-t}$).
Therefore, the second scenario happens when
\[
\min_{u:u \to v} x(u)^2 \leq t < x(v)^2
\quad \quad {\rm or} \quad \quad
x(v)^2 \leq t < \max_{u:u \to v} x(u)^2.
\]
\end{enumerate}
Hence the numerator is
\begin{eqnarray*}
  &&
  \int_0^{\infty} \pi\big(\directed{\partial}(S_t,S_{-t})\big)   \,dt
  \\
  &=&
  \sum_{v \in V} \pi(v) \cdot \int_{0}^{\infty} \mathbbm{1}\Big[v \in \directed{\partial} (S_t,S_{-t})\Big] \, dt
  \\
  &\leq&
  \sum_{v \in V} \pi(v) \Bigg[
    \max_{u:u \to v} \Big( \frac{x(u) + x(v)}{2} \Big)^2 +
    \bigg(  x(v)^2 - \min_{\substack{u:u \to v}} x(u)^2\bigg) + 
    \bigg( \max_{\substack{u:u \to v}} x(u)^2 - x(v)^2\bigg)
  \Bigg]
  \\
  &=&
  \frac{1}{4} \directed{\nu}^{(1)}(G) +
  \sum_{v \in V} \pi(v) \Bigg[
    \max_{\substack{u:u \to v \\ |x(u)| < |x(v)|}} \big\{x(v)^2 - x(u)^2\big\} + 
    \max_{\substack{u:u \to v \\ |x(u)| > |x(v)|}} \big\{x(u)^2 - x(v)^2\big\}
  \Bigg]
 \\
  &\leq&
  \frac{9}{4} \directed{\nu}^{(1)}(G) +
  2\sum_{v \in V} \pi(v) \Bigg[
    \max_{\substack{u:u \to v \\ |x(u)| < |x(v)|}} |x(u)| \cdot \big|x(u)+x(v)\big| + 
    \max_{\substack{u:u \to v \\ |x(u)| > |x(v)|}} |x(v)| \cdot \big|x(u)+x(v)\big|
  \Bigg]
 \\
  &\leq&
  \frac{9}{4} \directed{\nu}^{(1)}(G) +
  2\sum_{v \in V} \pi(v) |x(v)| 
    \max_{\substack{u:u \to v}} \big|x(u)+x(v)\big|
 \\
  &\leq&
  \frac{9}{4} \directed{\nu}^{(1)}(G) + 2\sqrt{
   \sum_{v \in V} \pi(v) x(v)^2 \cdot \sum_{v \in V} \pi(v) \max_{u: u \to v} \big(x(u) + x(v)\big)^2
  }
  \\
  &\lesssim& \directed{\nu}^{(1)}(G) + \sqrt{\directed{\nu}^{(1)}(G)}.
\end{eqnarray*}
We explain these steps one by one.
The first inequality is by the two scenarios explained in detail above.
The second equality uses the definition that $\directed{\nu}^{(1)}(G) = \sum_{v \in V} \pi(v) \max_{u:u \to v} (x(u)+x(v))^2$.
For the second inequality, in the first max we write $x(v)^2 - x(u)^2 = (x(u)+x(v))^2 - 2x(u)(x(v)+x(u)) \leq (x(u)+x(v))^2 + 2|x(u)||x(v)+x(u)|$, and then take out $(x(u)+x(v))^2$ from the summation by using again the definition that $\directed{\nu}^{(1)}(G) = \sum_{v \in V} \pi(v) \max_{u:u \to v} (x(u)+x(v))^2$, while the second max is handled similarly.
In the third inequality we replace $|x(u)|$ in the first max term by $|x(v)|$.
The fourth inequality is by an application of the Cauchy-Schwarz inequality.
The final inequality is by $\sum_{v \in V} \pi(v) x(v)^2=1$ in the constraint of $\directed{\nu}^{(1)}(G)$.

We showed in the easy direction in \autoref{lem:Cheeger-easy-bipartite} that $\directed{\nu}^{(1)}(G) \leq 2$ and thus $\directed{\nu}^{(1)}(G) \lesssim \sqrt{\directed{\nu}^{(1)}(G)}$.
We conclude that there exists $(S_t,S_{-t})$ with $\directed{\psi}(S_t,S_{-t}) \lesssim \sqrt{\directed{\nu}^{(1)}(G)}$.  
\end{proof}

Finally, as in \autoref{lem:cleanup}, given $(S_1,S_2)$ with small directed bipartite vertex expansion, we show how to extract an induced bipartite graph with small vertex expansion.

\begin{lemma}[Postprocessing for Bipartite Vertex Expansion] \label{lem:cleanup-bipartite}
Let $G=(V,\directed{E})$ be a directed graph.
Given two disjoint subsets $S_1,S_2 \subseteq V$ with $\directed{\psi}(S_1,S_2) < 1/2$,
there are $S_1' \subseteq S_1$ and $S_2' \subseteq S_2$ with $\psi(S_1',S_2') \leq 2\directed{\psi}(S_1,S_2)$ and $S_1' \cup S_2'$ is an induced bipartite graph in the underlying undirected graph of $G$. 
\end{lemma}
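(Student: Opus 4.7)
The plan is to mimic the strategy of \autoref{lem:cleanup}: define $S_i' := S_i - \directed{\partial}(S_1, S_2)$ for $i \in \{1,2\}$, and then verify the two required properties: (i) the undirected subgraph induced on $S_1' \cup S_2'$ is bipartite with bipartition $(S_1', S_2')$, and (ii) the undirected vertex boundary satisfies $\partial(S_1' \cup S_2') \subseteq \directed{\partial}(S_1, S_2)$.

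For (i), I would argue by contradiction. Suppose $uv$ is an edge in the underlying undirected graph with $u, v \in S_1'$. Then at least one of $u \to v$ or $v \to u$ lies in $\directed{E}$; say $u \to v \in \directed{E}$ without loss of generality. But then $v \in S_1$ has an incoming directed edge from $u \in S_1$, and the first set in \autoref{def:vertex-cover-bipartite} forces $v \in \directed{\partial}(S_1, S_2)$, contradicting $v \in S_1'$. The same argument rules out edges inside $S_2'$, so every edge of the induced subgraph runs between $S_1'$ and $S_2'$, giving a valid bipartition.

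For (ii), let $v \in \partial(S_1' \cup S_2')$, so $v \notin S_1' \cup S_2'$ and some $u \in S_1' \cup S_2'$ is an undirected neighbor of $v$. If $v \in (S_1 \cup S_2) - (S_1' \cup S_2')$, then $v \in \directed{\partial}(S_1, S_2)$ directly from the definition of $S_i'$. Otherwise $v \notin S_1 \cup S_2$, and I claim $u \to v \in \directed{E}$: if only $v \to u$ were in $\directed{E}$, then $u \in S_1 \cup S_2$ would have an incoming edge from $v \notin S_1 \cup S_2$, placing $u$ in $\directed{\partial}(S_1, S_2)$ and contradicting $u \in S_1' \cup S_2'$. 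With $u \to v \in \directed{E}$, the third set in \autoref{def:vertex-cover-bipartite} gives $v \in \directed{\partial}(S_1, S_2)$.

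Combining (ii) with the hypothesis $\directed{\psi}(S_1, S_2) < 1/2$ yields $\pi(S_1' \cup S_2') \geq \pi(S_1 \cup S_2) - \pi(\directed{\partial}(S_1, S_2)) \geq \tfrac{1}{2} \pi(S_1 \cup S_2)$, so
\[
\psi(S_1' \cup S_2') \leq \frac{\pi(\directed{\partial}(S_1, S_2))}{\tfrac{1}{2}\pi(S_1 \cup S_2)} = 2 \directed{\psi}(S_1, S_2),
\]
as required. I do not anticipate a significant obstacle: once the choice $S_i' := S_i - \directed{\partial}(S_1, S_2)$ is made, the three clauses in \autoref{def:vertex-cover-bipartite} are tailored precisely to cover every case needed in (i) and (ii), making this a direct bipartite analog of the single-set cleanup in \autoref{lem:cleanup}.
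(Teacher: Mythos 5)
Your proposal is correct and follows essentially the same route as the paper: the same choice $S_i' := S_i - \directed{\partial}(S_1,S_2)$, the same two observations (that $\directed{\partial}(S_1,S_2)$ covers all edges inside $S_1$, inside $S_2$, and leaving $S_1 \cup S_2$, hence $(S_1',S_2')$ induces a bipartite graph and $\partial(S_1' \cup S_2') \subseteq \directed{\partial}(S_1,S_2)$), and the same final mass calculation using $\directed{\psi}(S_1,S_2) < 1/2$. Your case analysis simply spells out in more detail what the paper states as a vertex-cover observation.
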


\begin{proof}
From \autoref{def:vertex-cover-bipartite}, the observation is that $\directed{\partial}(S_1,S_2)$ is a vertex cover of $E(S_1) \cup E(S_2) \cup \delta(S_1 \cup S_2)$. 
So, by setting $S_1' := S_1 - \directed{\partial}(S_1,S_2)$ and $S_2' := S_2 - \directed{\partial}(S_1,S_2)$, then $(S_1',S_2')$ is an induced bipartite graph as there could be no edges induced in $S_1'$ and no edges induced in $S_2'$.
Also, $\partial(S_1' \cup S_2') \subseteq \directed{\partial}(S_1,S_2)$, as there could be no edges between $S_1' \cup S_2'$ and $V - (S_1 \cup S_2 \cup \directed{\partial}(S_1,S_2))$. 
Therefore, 
\[
\pi(\partial(S_1' \cup S_2')) 
\leq \pi\big(\directed{\partial}(S_1,S_2)\big)
= \directed{\psi}(S_1,S_2) \cdot \pi(S_1 \cup S_2)
\leq 2 \directed{\psi}(S_1,S_2) \cdot \pi(S_1' \cup S_2'),
\]
where the last inequality uses the assumption that $\directed{\psi}(S_1,S_2) = \pi\big(\directed{\partial}(S_1,S_2)\big) / \pi(S_1 \cup S_2) < 1/2$ and so $\pi(S_1' \cup S_2') \geq \pi(S_1 \cup S_2) - \pi\big(\directed{\partial}(S_1,S_2)\big) \geq \pi(S_1 \cup S_2)/2$.
We thus conclude that $\psi(S_1',S_2') \leq 2\directed{\psi}(S_1,S_2)$. 
\end{proof}

We complete the proof of \autoref{thm:Cheeger-nu} in \autoref{app:Cheeger-bipartite}.
\section{Higher-Order Cheeger Inequality for Vertex Expansion} \label{sec:higher-order-vertex}

The goal of this section is to prove \autoref{thm:higher-order-vertex}.
There are four main steps in the proof.

The first step is to reformulate the problem as a semidefinite program using the maximum reweighted sum of the $k$ smallest eigenvalues $\sigma_k^*(G)$.
Using von Neumann's minimax theorem, we construct the dual program of $\sigma_k^*(G)$ and see that it satisfies the so-called sub-isotropy condition.
The main focus in this section will then be to relate $\sigma_k^*(G)$ and $\psi_k(G)$, rather than to relate $\lambda_k^*(G)$ and $\psi_k(G)$ directly.

The second step is to project the dual solution to $\sigma_k^*(G)$ into a low-dimensional solution.
In this step, we use a similar apporach as in \autoref{sec:proof-Cheeger}, by introducing an intermediate directed dual program and then using the Gaussian projection method.
Also, we use a theorem in~\cite{LOT12} that proves that Gaussian projection approximately preserves the sub-isotropy condition.

The third step is to partition the low-dimensional solution into $k$ disjointly supported functions each with small objective value.
In this step, we closely follow the techniques in~\cite{LOT12}, such as radial projection distance, smooth localization and random parititoning.
We will review the techniques in~\cite{LOT12} in \autoref{sec:embedding} before presenting our proofs.

The last step is to apply the Cheeger inequality for vertex expansion in \autoref{sec:Cheeger-vertex} on these $k$ functions to find disjoint sets with small vertex expansion.
We will prove the easy direction and put together the steps to prove \autoref{thm:higher-order-vertex} in \autoref{sec:wrapup-higher}.

\subsection{Primal and Dual Programs}

As mentioned in \autoref{sec:techniques}, the maximum reweighted $k$-th smallest eigenvalue $\lambda_k^*(G)$ as formulated in \autoref{def:reweighted-lambda-k} is not a convex program.
Instead, we will study the following related quantity.

\begin{definition}[Maximum Reweighted Sum of $k$ Smallest Eigenvalues] 
\label{def:reweighted-sum}
Given an undirected graph $G=(V,E)$ and a probability distribution $\pi$ on $V$,
the maximum reweighted sum of $k$ smallest eigenvalues of the normalized Laplacian matrix of $G$ is defined as 
$\sigma_{k}^*(G) := \max_{P \geq 0} \sum_{i=1}^k \lambda_{k}(I-P)$, 
where $P$ is subject to the same constraints stated in \autoref{def:BDX-primal}.
Note that 
\[
\lambda_k^*(G) \leq \sigma_k^*(G) \leq k \cdot \lambda_k^*(G).
\]
\end{definition}

We reformulate the primal program in \autoref{def:reweighted-sum} as the semidefinite program in \autoref{def:sigma-k-dual}.
The proof of \autoref{def:sigma-k-dual} has a few small steps.
First we rewrite the sum $\sum_{i=1}^k \lambda_k(I-P)$ as $\sum_{i=1}^k \lambda_k(I - {\cal Q})$ for a symmetric matrix ${\cal Q}$.
Then we use \autoref{prop:sum-of-lambda-k} to write $\sum_{i=1}^k \lambda_k(I - {\cal Q})$ as a minimization problem using semidefinite programming.
Next we apply von Neumann's minimax theorem to change the order of max-min to min-max.
Then we do a change of variable and rewrite the program into a vector program form.
Finally, we use linear programming duality to rewrite the inner maximization problem as a minimization problem as was done in~\cite{Roc05}.

\begin{proposition}[Dual Program for $\sigma_k^*(G)$] \label{def:sigma-k-dual}
For any undirected graph $G=(V,E)$ with a self loop at each vertex and any probability distribution $\pi$ on $V$,
the following semidefinite program is dual to the primal program in \autoref{def:reweighted-sum} with strong duality $\sigma_k^*(G) = \kappa(G)$ where
\begin{eqnarray*}
\kappa(G) :=  \min_{f: V \to \R^n,~g: V \to \R_{\geq 0}} & & \sum_{v \in V} \pi(v) g(v)
  \\
  \st
  & & g(u) + g(v) \geq \norm{f(u) - f(v)}^2 \quad \forall uv \in E
  \\
  & & \sum_{v \in V} \pi(v) f(v) f(v)^T \preceq I_n
  \\
  & & \sum_{v \in V} \pi(v) \norm{f(v)}^2 = k.
\end{eqnarray*}
\end{proposition}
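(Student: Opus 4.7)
The plan is to carry out the five-step reduction sketched in the paragraph above the statement. First, since $P$ is $\pi$-reversible, the matrix ${\cal Q} := \Pi^{1/2} P \Pi^{-1/2}$ is symmetric and cospectral with $P$ by \autoref{fact:similar}. Applying \autoref{prop:sum-of-lambda-k} to the symmetric matrix $I - {\cal Q}$ rewrites
\[
\sum_{i=1}^k \lambda_i(I - P) \;=\; \sum_{i=1}^k \lambda_i(I - {\cal Q}) \;=\; \min_{Y} \tr((I - {\cal Q}) Y),
\]
where $Y$ ranges over $\{Y \in \R^{n \times n} : 0 \preceq Y \preceq I_n,\ \tr(Y) = k\}$. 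Hence $\sigma_k^*(G) = \max_P \min_Y \tr((I-{\cal Q}) Y)$ with $P$ in the feasible set of \autoref{def:BDX-primal}. Both feasible sets are convex and compact (the $P$-set has entries in $[0,1]$ and is cut out by linear equalities, while the $Y$-set is bounded because $0 \preceq Y \preceq I_n$), and the objective is bilinear in $(P, Y)$. Therefore \autoref{thm:von-Neumann} applies and exchanges the order, yielding $\sigma_k^*(G) = \min_Y \max_P \tr((I-{\cal Q})Y)$.

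Next I would parameterize $Y$ in vector form by writing $Y = \Pi^{1/2} \tilde F^T \tilde F \Pi^{1/2}$, where $\tilde F$ has columns $\{f(v)\}_{v \in V} \subseteq \R^n$; any PSD $Y$ admits such a representation via $\tilde F = Y^{1/2} \Pi^{-1/2}$. Under this change of variable, $\tr(Y) = k$ becomes $\sum_v \pi(v) \|f(v)\|^2 = k$, and a short computation using the Laplacian quadratic form with symmetric edge weights $\pi(u) P(u,v)$ gives
\[
\tr((I-{\cal Q})Y) \;=\; \sum_{uv \in E} \pi(u) P(u,v) \, \|f(u) - f(v)\|^2.
\]
The constraint $Y \preceq I_n$ becomes $\sum_v \pi(v) f(v) f(v)^T \preceq I_n$: indeed, the two PSD matrices $\Pi^{1/2} \tilde F^T \tilde F \Pi^{1/2}$ and $\tilde F \Pi \tilde F^T = \sum_v \pi(v) f(v) f(v)^T$ are cospectral by \autoref{fact:AB=BA}, so they share the same maximum eigenvalue.

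Finally, for fixed $f$, the inner maximization over $P$ is a linear program in the symmetric non-negative edge weights $w(u,v) := \pi(u) P(u,v)$ supported on $E$: maximize $\sum_{uv \in E} w(u,v) \|f(u) - f(v)\|^2$ subject to $\sum_v w(u,v) = \pi(u)$ for each $u$. Because self-loops contribute zero to the objective and every vertex has a self-loop, the equality row-sum may be relaxed to $\sum_v w(u,v) \leq \pi(u)$ with the slack absorbed into $w(u,u)$. Introducing a non-negative dual variable $g(u)$ per row-sum constraint, as in \cite{Roc05}, LP duality yields $\min_{g \geq 0} \sum_v \pi(v) g(v)$ subject to $g(u) + g(v) \geq \|f(u) - f(v)\|^2$ for all $uv \in E$ (the condition that the coefficient of $w(u,v)$ in the Lagrangian be non-positive). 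Combining with the outer $\min_f$ recovers $\kappa(G)$, and strong LP duality closes the chain. The main obstacle is the bookkeeping of the change of variable: one must verify that the spectral constraint $Y \preceq I_n$ transforms cleanly into the Gram-matrix constraint $\sum_v \pi(v) f(v) f(v)^T \preceq I_n$, which is where the $AB$-versus-$BA$ cospectrality of \autoref{fact:AB=BA} plays a crucial, if easily-missed, role.
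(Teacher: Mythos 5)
Your proposal is correct and follows essentially the same route as the paper's proof: pass to the cospectral symmetric matrix ${\cal Q}$, invoke \autoref{prop:sum-of-lambda-k}, exchange max and min via \autoref{thm:von-Neumann}, convert the $Y \preceq I_n$ constraint to the sub-isotropy form via \autoref{fact:AB=BA}, and dualize the inner linear program as in \cite{Roc05}. The only differences are bookkeeping ones — you apply the minimax theorem directly in $(P,Y)$ coordinates and obtain $g \ge 0$ by relaxing the row-sum equalities using the self-loops, whereas the paper first substitutes $Q = \Pi P$, $Y = \Pi^{1/2} Z \Pi^{1/2}$ and reads $g(v) \ge 0$ off the self-loop primal variable — and these are equivalent.
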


\begin{proof}
By \autoref{def:reweighted-sum}, $\sigma_k^*(G) = \max_{P \geq 0} \sum_{i=1}^k \lambda_k(I-P)$, where the maximum is over all $P$ satisfying the constraints in \autoref{def:BDX-primal}.
Consider the sum of eigenvalues for a fixed $P$ that satisfies the constraints.
The time reversible constraint $\pi(u) P(u,v) = \pi(v) P(v,u)$ for all $uv \in E$ is equivalent to the matrix $Q := \Pi P$ being symmetric where $\Pi := \diag(\pi)$.
Let ${\cal Q} := \Pi^{-1/2} Q \Pi^{-1/2}$ be the normalized adjacency matrix of $Q$.
Note that $P$ and ${\cal Q}$ have the same spectrum,
as ${\cal Q} = \Pi^{-1/2} Q \Pi^{-1/2} = \Pi^{1/2} P \Pi^{-1/2}$.
Therefore $\sum_{i=1}^k \lambda_i(I - P) = \sum_{i=1}^k \lambda_i(I - \cal{Q})$
where $I - {\cal Q}$ is a symmetric matrix.

By \autoref{prop:sum-of-lambda-k}, the sum of the $k$ smallest eigenvalue of the symmetric matrix $I - {\cal Q}$ can be written as the following semidefinite program:
\begin{eqnarray*}
  \sum_{i=1}^k \lambda_i(I-{\cal Q}) = \min_{Y \in \R^{n \times n}} && \tr\big(Y \cdot (I - {\cal Q})\big)
  \\
  \st && 0 \preceq Y \preceq I
  \\
  && \tr(Y) = k.
\end{eqnarray*}

Note that $I-{\cal Q}$ is the normalized Laplacian matrix of $Q$.
We consider the change of variable $Y = \Pi^{1/2} Z \Pi^{1/2}$, so as to rewrite the objective function in terms of $\Pi - Q$ which is the Laplacian matirx of $Q$:
\begin{eqnarray*}
  \sum_{i=1}^k \lambda_i(I-{\cal Q}) = \min_{Z \in \R^{n \times n}} && 
  \tr\big(Z \cdot (\Pi - Q)\big)
  \\
  \st && 0 \preceq \Pi^{\frac12} Z \Pi^{\frac12} \preceq I
  \\
  && \tr(\Pi^{\frac12} Z \Pi^{\frac12}) = k.
\end{eqnarray*}

Therefore, the primal program for $\sigma_k^*(G)$ can be rewritten in terms of $Q$ as follows:
\begin{align*}
\sigma_k^*(G) ~=~ \max_{Q \geq 0} \min_{Z \in \R^{n \times n}} &~~~ \tr\big( Z \cdot (\Pi - Q) \big) & 
\\
\st &~~~ Q(u,v) = 0 & & \forall uv \notin E
\\
&~~~ \sum_{v \in V} Q(u,v) = \pi(u) & & \forall u \in V
\\
&~~~ Q(u,v) = Q(v,u) & & \forall uv \in E
\\
&~~~ 0 \preceq \Pi^{\frac12} Z \Pi^{\frac12} \preceq I 
\\
&~~~ \tr\big( \Pi^{\frac12} Z \Pi^{\frac12} \big) = k. 
\end{align*}

Now we write the dual program by using von Neumann's minimax theorem in \autoref{thm:von-Neumann} to switch the max-min to min-max in the objective function.
Note that von Neumann's theorem can be applied because the objective function is multiliner in $Z$ and $Q$ (hence concave in $Q$ and convex in $Z$), the feasible region of $Q$ is compact and convex as it is bounded and defined by linear constraints, and the feasible region of $Z$ is compact and convex as it is bounded and defined by PSD and trace constraints.
Hence, we can switch the order of $\max_Q \min_Z$ to obtain the dual program by rewriting the objective function as
\[
\min_{Z \in \R^{n \times n}} \max_{Q \geq 0}~\tr\big( Z \cdot (\Pi - Q) \big).
\]

Next we rewrite this dual program into a vector program form.
As $Z \succeq 0$, we can write $Z = FF^T$ where $F$ is an $n \times n$ matrix.
We denote the $i$-th column of $F$ by $f_i \in \R^n$ for $1 \leq i \leq n$ and think of it as an eigenvector, denote the $v$-th row of $F$ by $f(v) \in \R^n$ and think of it as the spectral embedding of a vertex $v$, and denote the $(v,i)$-th entry of $F$ by $f_i(v)$ for $1 \leq i \leq n$ and $v \in V$.
As $\Pi-Q$ is the Laplacian matrix of $Q$, the quadratic form for a vector $x \in \R^n$ is $x^T (\Pi - Q) x = \sum_{uv \in E} (x(u)-x(v))^2 \cdot Q(u,v)$, and thus the objective function can be rewritten as
\[
\tr(F^T (\Pi - Q) F) 
= \sum_{i=1}^n f_i^T (\Pi - Q) f_i
= \sum_{i=1}^n \sum_{uv \in E} (f_i(u) - f_i(v))^2 \cdot Q(u,v)
= \sum_{uv \in E} \norm{f(u)-f(v)}^2 \cdot Q(u,v).
\]
Note that $\Pi^{1/2} Z \Pi^{1/2} = (\Pi^{1/2} F) (F^T \Pi^{1/2})$ and $F^T \Pi F = (F^T \Pi^{1/2}) (\Pi^{1/2} F)$ have the same spectrum by \autoref{fact:AB=BA}.
So the first constraint can be rewritten as
\[
0 \preceq F^T \Pi F = \sum_{v \in V} \pi(v) f(v) f(v)^T \preceq I,
\]
and the second constraint can be rewritten as
\[
\tr(F^T \Pi F) 
= \tr\Big( \sum_{v \in V} \pi(v) f(v) f(v)^T  \Big)
= \sum_{v \in V} \pi(v) \norm{f(v)}^2 = k.
\]

Therefore, the dual program for $\sigma_k^*(G)$ can be rewritten as follows:
\begin{align*}
\kappa(G) ~:=~ \min_{f:V \to \R^n} \max_{Q \geq 0}&~~~ \sum_{uv \in E} \norm{f(u)-f(v)}^2 \cdot Q(u,v) & 
\\
\st &~~~ Q(u,v) = 0 & & \forall uv \notin E
\\
&~~~ \sum_{v \in V} Q(u,v) = \pi(u) & & \forall u \in V
\\
&~~~ Q(u,v) = Q(v,u) & & \forall uv \in E
\\
&~~~ \sum_{v \in V} \pi(v) f(v) f(v)^T \preceq I 
\\
&~~~ \sum_{v \in V} \pi(v) \norm{f(v)}^2 = k. 
\end{align*}

Finally, as in~\cite{Roc05}, note that the inner maximization problem is just a linear program in $Q$.
For a fixed embedding $f: V \to \R^n$, we can use the linear programming duality theorem to rewrite the inner maximization problem into the following minimization problem:
\begin{align*}
\min_{g:V \to \R_{\geq 0}} &~~~ \sum_{v \in V} \pi(v) g(v) & 
\\
\st &~~~ g(u) + g(v) \geq \norm{f(u)-f(v)}^2 & & \forall uv \in E,
\end{align*}
where $g(u)$ is a dual variable for the constraint $\sum_{v \in V} Q(u,v) = \pi(u)$.
Recall that we assumed the graph has a self-loop $Q(v,v)$ at each vertex $v$ so that the primal program is always feasible, and the primal variable $Q(v,v)$ gives the dual constraint $g(v) \geq 0$.

To summarize, we rewrite the max-min optimization problem in the primal program into a min-min optimization problem using von Neumann's minimax theorem and linear programming duality.
The resulting program in the statement is a semidefinite program in the vector program form.
\end{proof}

\subsection{Gaussian Projection}

The second step is to project a solution to $\kappa(G)$ in \autoref{def:sigma-k-dual} into a low-dimensional solution and prove that several properties are approximately preserved.
The projection algorithm is a high dimensional version of the simple Guassian projection algorithm in \autoref{sec:projection}.

\begin{definition}[Gaussian Projection] \label{def:projection-higher}
Let $f : V \to \R^n$ be an embedding where each vertex $v$ is mapped to a vector $f(v) \in \R^n$.
Given an integer $h \leq n$,
let $\Gamma$ be an $h \times n$ matrix where each entry $\Gamma_{i,j}$ for $1 \leq i \leq h$ and $1 \leq j \leq n$ is an independent standard Gaussian random variable $N(0,1)$.
The Gaussian projection $\bar{f} : V \to \R^h$ of $f$ is an embedding of each vertex $v \in V$ to an $h$-dimensional vector defined as 
\[\bar{f}(v) = \frac{1}{\sqrt{h}} \cdot \Gamma \big(f(v)\big).\]
\end{definition}

As in \autoref{sec:projection}, we consider a related directed program $\directed{\kappa}(G)$ for the analysis of the Gaussian projection algorithm.
The proof of the following lemma is the same as in \autoref{lem:factor2} and \autoref{def:compact} and is omitted.

\begin{lemma}[Directed Dual Program Using Orientation for $\kappa(G)$] \label{lem:orientation-kappa}
Let $G=(V,E)$ be an undirected graph and $\pi$ be a probability distribution on $V$.
Let $\directed{E}$ be an orientation of the undirected edges in $E$.
Define 
\begin{eqnarray*}
\directed{\kappa}(G) := \min_{f: V \to \R^n} \min_{\directed{E}} & & \sum_{v \in V} \pi(v) \max_{u: uv \in \directed{E}} \norm{f(u)-f(v)}^2
\\
\st & & \sum_{v \in V} \pi(v) f(v) f(v)^T \preceq I_n
\\
& & \sum_{v \in V} \pi(v) \norm{f(v)}^2 = k.
\end{eqnarray*}
Then $\kappa(G) \leq \directed{\kappa}(G) \leq 2\kappa(G)$.
\end{lemma}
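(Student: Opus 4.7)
The plan is to adapt the two-step argument used in \autoref{lem:factor2} and \autoref{def:compact} for $\gamma(G)$, which also carried over verbatim in \autoref{lem:factor2-nu} and \autoref{def:orientation-nu} for $\nu(G)$. I would first introduce an intermediate ``directed'' program using auxiliary nonnegative variables $g:V\to\R_{\geq 0}$ and the constraint $\max\{g(u),g(v)\}\geq \|f(u)-f(v)\|^2$ for every $uv\in E$ (keeping the sub-isotropy condition $\sum_v \pi(v) f(v)f(v)^T \preceq I_n$ and the normalization $\sum_v \pi(v)\|f(v)\|^2=k$ unchanged), and then show that (i) this intermediate program equals the orientation form $\directed{\kappa}(G)$, and (ii) it sandwiches $\kappa(G)$ up to a factor of $2$.

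For step (i), given any orientation $\directed{E}$ and embedding $f$, the assignment $g(v):=\max_{u:uv\in\directed{E}} \|f(u)-f(v)\|^2$ is nonnegative and certifies the max-constraint with the same objective $\sum_v \pi(v) g(v)$. Conversely, given $(f,g)$ satisfying the max constraint, orient each edge $uv$ toward whichever endpoint has the larger $g$-value; then $g(v)\geq \max_{u:uv\in\directed{E}}\|f(u)-f(v)\|^2$, and replacing $g(v)$ by this max value preserves feasibility and does not increase the objective because $g\geq 0$. This is the same reduction as in \autoref{def:compact}, and only uses the $f$-side constraints which are identical in all three programs.

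For step (ii), the inequality $\kappa(G)\leq\directed{\kappa}(G)$ is immediate: any $(f,g)$ feasible for the intermediate program is feasible for $\kappa(G)$ because
\[
g(u)+g(v)\;\geq\;\max\{g(u),g(v)\}\;\geq\;\|f(u)-f(v)\|^2
\]
using $g\geq 0$. For the other direction, given $(f,g)$ feasible for $\kappa(G)$, the pair $(f,2g)$ is feasible for the intermediate program, since $\max\{2g(u),2g(v)\}\geq g(u)+g(v)\geq \|f(u)-f(v)\|^2$, and its objective is exactly $2\sum_v \pi(v)g(v)$. Therefore $\directed{\kappa}(G)\leq 2\kappa(G)$.

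I do not anticipate any real obstacle: the sub-isotropy $\sum_v\pi(v)f(v)f(v)^T\preceq I_n$ and the normalization $\sum_v\pi(v)\|f(v)\|^2=k$ involve only $f$ and hence transfer unchanged between all three formulations, so the only substantive content is the already-standard $g\mapsto 2g$ / orientation trick that was used for $\gamma$ and $\nu$. The main (very mild) thing to double-check is that the dual constraint $g\geq 0$ in $\kappa(G)$—which comes from the self-loops built into the primal in \autoref{def:sigma-k-dual}—is indeed available, since this nonnegativity is what makes both directions of the sandwich go through.
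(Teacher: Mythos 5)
Your proposal is correct and follows exactly the paper's route: the paper omits the proof precisely because it is the same combination of the factor-$2$ sandwich from \autoref{lem:factor2} and the orientation reformulation from \autoref{def:compact}, which is what you reproduce, and your observation that the constraints involving only $f$ transfer unchanged and that $g \geq 0$ (present in \autoref{def:sigma-k-dual}) is what makes both steps work is the right point to check.
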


To prove that several properties of $f$ to $\directed{\kappa}(G)$ are preserved in $\bar{f}$,
we define the following quantities.
The first quantity is the objective value of $\directed{\kappa}(G)$, which is called the energy of the function $f$.

\begin{definition}[Energy] \label{def:energy}
Given a directed graph $G = (V, \directed{E})$ and a probability distribution $\pi$ on $V$, the energy of an embedding $f : V \to \R^h$ is defined as 
\[
{\cal E}(f) = \sum_{v \in V} \pi(v) \max_{u: u \to v} \norm{f(u)-f(v)}^2.
\]
\end{definition}

The second quantity is the LHS of the last constraint, which is called the mass of the function $f$.

\begin{definition}[Mass] \label{def:mass}
Given an embedding $f: V \to \R^h$, the mass of $f$ is defined as
\[
\mu(f) = \sum_{v \in V} \pi(v) \norm{f(v)}^2.
\]
\end{definition}

The final quantity is related to the constraint $\sum_{v \in V} \pi(v) f(v) f(v)^T \preceq I_n$, which is called the sub-isotropy condition for the vectors $\big\{\sqrt{\pi(v)} f(v)\big\}_{v \in V}$.
In~\cite{LOT12}, the sub-isotropy condition is used to establish the following spreading property, which is used crucially in the spectral partitioning algorithm for $k$-way edge conductance.

\begin{definition}[Spreading Property~\cite{LOT12}] \label{def:spreading}
Let $\pi$ be a probability distribution on $V$. 
For two parameters $\Delta \in [0, 1]$ and $\eta \in [0, 1]$, 
an embedding $f: V \rightarrow \R^h$ is called $(\Delta, \eta)$-spreading if for every subset $S \subseteq V$,
\[
{\rm diam}_{d_f}(S) \le \Delta
\quad \implies \quad
\sum_{v \in S} \pi(v) \norm{f(v)}^2 \le \eta \cdot \sum_{v \in V} \pi(v) \norm{f(v)}^2,
\]
where ${\rm diam}_{d_f}(S) := \max_{u,v \in S} d_f(u,v)$ is the diameter of the set $S$ under the radial projection distance function $d_f$ to be defined in \autoref{def:radial-distance}. 
\end{definition}

As we will state formally in \autoref{lem:isotropy-and-spreading} in the next subsection, any feasible solution $f$ to $\directed{\kappa}(G)$ is $(\Delta, \frac{1}{k(1-\Delta^2)})$-spreading, so we can regard $\eta = \frac{1}{k(1-\Delta^2)}$ in the following.
The precise parameters and also the definition of the radial projection distance are not important in this subsection.

The goal of this subsection is to prove that the energy, the mass, and the spreading property of $f : V \to \R^n$ are approximately preserved in the projection $\bar{f} : V \to \R^h$ for a small $h$.
In~\cite{LOT12}, it was already proved that the mass and the spreading property of $f$ are approximately preserved in $\bar{f}$.

\begin{lemma}[\cite{LOT12}, Lemma 4.3] \label{lem:LOT-dimension-reduction}
Let $\pi$ be a probability distribution on $V$.
Let $f : V \to \R^n$ be an embedding
 that is $(\Delta,\eta)$-spreading.
Let $\bar{f} : V \to \R^h$ be a Gaussian projection of $f$ as defined in \autoref{def:projection-higher}.
For some value 
\footnote{
Lemma 4.3 in~\cite{LOT12} was stated slightly differently.
Their assumptions are that $f : V \to \R^k$ and $\eta \geq 1/k$, and their conclusion is that $h \lesssim \frac{1}{\Delta^2} \log (\frac{k}{\Delta})$.
We note that the dependency on $k$ in their conclusion is based on the substitution $\eta = 1/k$ in the bound on $h$ we stated, which has no dependency on the ambient dimension $n$.
Their proof, without the substitution $\eta = 1/k$, gives the bound we stated.
}
\[
h \lesssim \frac{1}{\Delta^2} \Big( \log\Big( \frac{1}{\eta \Delta}\Big) \Big),
\] 
with probability at least $1/2$, the following two properties hold simultaneously:
\[
\mu(\bar{f}) \geq \frac{1}{2} \mu(f)
\quad {\rm and} \quad
\bar{f} {\rm~is~} \Big(\frac{\Delta}{4}, \big(1+\Delta\big)\eta \Big){\rm-spreading}.
\]
\end{lemma}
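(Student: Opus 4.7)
The plan is to establish both conclusions of the lemma — mass preservation and approximate preservation of the $(\Delta,\eta)$-spreading property — through standard concentration bounds for Gaussian quadratic forms, combined with a net argument over directions in $\R^h$ to handle all candidate subsets $S$ uniformly in the spreading statement.

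For the mass bound I would first observe that for each fixed $v \in V$, $\bar{f}(v) = \frac{1}{\sqrt{h}}\Gamma f(v)$ is a centered Gaussian in $\R^h$ with covariance $\frac{\norm{f(v)}^2}{h} I_h$, so $\E\left[\norm{\bar{f}(v)}^2\right] = \norm{f(v)}^2$ and $\Var\left(\norm{\bar{f}(v)}^2\right) = \frac{2}{h}\norm{f(v)}^4$. More generally, ${\rm Cov}\left(\norm{\bar{f}(u)}^2, \norm{\bar{f}(v)}^2\right) = \frac{2}{h}\inner{f(u)}{f(v)}^2$, which by Cauchy--Schwarz is at most $\frac{2}{h}\norm{f(u)}^2\norm{f(v)}^2$. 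Summing against the weights $\pi(u)\pi(v)$ gives $\Var(\mu(\bar{f})) \leq \frac{2}{h}\mu(f)^2$, and Chebyshev's inequality then yields $\Pr[\mu(\bar{f}) \geq \tfrac{1}{2}\mu(f)] \geq 1 - 8/h$, which is comfortably larger than $3/4$ once $h$ exceeds a small absolute constant (automatic from the asserted bound on $h$).

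The spreading conclusion requires more care, and this is where I would invest most of the effort. I would pick an $\epsilon$-net $\mathcal{N}$ of the unit sphere $S^{h-1}$ with $\epsilon$ a small multiple of $\Delta$ and $|\mathcal{N}| \leq (C/\Delta)^h$, and for each $\theta \in \mathcal{N}$ set $T_\theta \subseteq V$ to be the vertices $v$ whose normalized projected image $\bar{f}(v)/\norm{\bar{f}(v)}$ lies within an $O(\Delta)$ angular neighborhood of $\theta$; the choice of $\epsilon$ guarantees that every $S$ with ${\rm diam}_{d_{\bar{f}}}(S) \leq \Delta/4$ sits inside some $T_\theta$. For a fixed $\theta$, Gaussian angle concentration (the quantitative Johnson--Lindenstrauss inner-product estimate applied pairwise to vectors paired with $\theta$'s preimage-direction) implies that, except with probability $\exp\bigl(-\Omega(h\Delta^2)\bigr)$, the pullback $T_\theta$ itself has $d_f$-diameter at most $\Delta$; the $(\Delta,\eta)$-spreading of $f$ then gives $\sum_{v \in T_\theta} \pi(v)\norm{f(v)}^2 \leq \eta\mu(f)$. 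A complementary Gaussian concentration bound shows that $\sum_{v \in T_\theta}\pi(v)\norm{\bar{f}(v)}^2 \leq (1+\Delta)\eta\mu(\bar{f})$ with tail quality of the same order. Intersecting the mass event with a union bound of these per-direction events over $\mathcal{N}$ then finishes the argument.

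The main obstacle is the quantitative calibration between the net resolution, the dimension $h$, and the slacks in diameter ($\Delta \to \Delta/4$) and in mass ($\eta \to (1+\Delta)\eta$). These must be chosen so that (a) each $T_\theta$ in the net exhausts every candidate $S$ with ${\rm diam}_{d_{\bar{f}}}(S) \leq \Delta/4$; (b) per-direction Gaussian tail estimates are small enough to survive the union bound over $\mathcal{N}$ of cardinality at most $(C/\Delta)^h$; and (c) the angular and norm distortions inherited from $\Gamma$ are jointly within the allotted slacks. A careful accounting — using the spreading hypothesis on $f$ to bound the contribution of ``heavy'' directions and Gaussian concentration to control both norms and inner products — pins the required dimension at $h \asymp \frac{1}{\Delta^2}\log\!\left(\frac{1}{\eta\Delta}\right)$, exactly as stated.
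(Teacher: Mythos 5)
You should first note that the paper does not prove this statement at all: it is quoted directly from \cite{LOT12} (Lemma 4.3), with a footnote explaining only that the substitution $\eta = 1/k$ is undone. So the relevant comparison is with the proof in \cite{LOT12}. Your first step, the Chebyshev argument for $\mu(\bar{f}) \geq \tfrac12 \mu(f)$, is fine: the covariance computation ${\rm Cov}(\norm{\bar{f}(u)}^2,\norm{\bar{f}(v)}^2) = \tfrac{2}{h}\inner{f(u)}{f(v)}^2$ and the resulting bound ${\rm Var}(\mu(\bar f)) \le \tfrac{2}{h}\mu(f)^2$ are correct and give failure probability $O(1/h)$.

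The spreading half, however, has a genuine gap, and it is structural rather than a matter of ``calibration.'' Your plan is a union bound over an $\epsilon$-net $\mathcal N$ of $\S^{h-1}$ with $|\mathcal N| \le (C/\Delta)^h$, with per-direction failure probability $\exp(-\Omega(h\Delta^2))$. This cannot close: $(C/\Delta)^h\cdot e^{-\Omega(h\Delta^2)} = \exp\big(h(\log(C/\Delta)-\Omega(\Delta^2))\big)$ diverges as $h$ grows, since $\log(C/\Delta) \gg \Delta^2$ for small $\Delta$; Johnson--Lindenstrauss-type events at distortion scale $\Delta$ (or even at constant scale, failure $e^{-\Omega(h)}$) are never strong enough to beat a net of size $\exp(\Theta(h\log(1/\Delta)))$ in the \emph{same} dimension $h$. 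Moreover, the per-direction claim itself --- that the pullback $T_\theta$ has $d_f$-diameter at most $\Delta$ except with probability $e^{-\Omega(h\Delta^2)}$ --- is a statement about \emph{all} pairs $u,v$ with $d_f(u,v) > \Delta$ simultaneously avoiding the cap around $\theta$; establishing it pair by pair forces a union bound over up to $|V|^2$ pairs and hence $h \gtrsim \Delta^{-2}\log|V|$, which is exactly the dependence on $|V|$ that the lemma (and the paper's footnote) insists must not appear. The same objection applies to the ``complementary'' claim that all relevant $\bar f$-norms are controlled. The proof in \cite{LOT12} avoids every union bound of this kind: it bounds \emph{in expectation} a single scalar quantity, namely the $\pi\otimes\pi$-weighted (by $\norm{f(u)}^2\norm{f(v)}^2$) measure of ``bad pairs'' that are far in $d_f$ but close in $d_{\bar f}$, applies Markov's inequality once, and then, for an arbitrary set $S$ with ${\rm diam}_{d_{\bar f}}(S) \le \Delta/4$, splits its mass into a piece lying in a small-$d_f$-diameter ball (controlled by the $(\Delta,\eta)$-spreading of $f$) plus a piece charged to the bad-pair weight. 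Requiring the bad-pair weight to be at most roughly $(\Delta\eta)^2\mu(f)^2$ is what produces $h \lesssim \Delta^{-2}\log\frac{1}{\eta\Delta}$ with no dependence on $n$ or $|V|$. To repair your argument you would need to replace the net-plus-high-probability scheme with such an averaging/Markov argument (or reproduce the \cite{LOT12} proof); as written, the sketch does not yield the stated bound on $h$.
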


We prove that the energy is also approximately preserved.
We use the following proposition whose proof is deferred to \autoref{app:higher-order-vertex}.

\begin{proposition}[Expected Maximum of $\chi$-Squared Distribution] \label{prop:max-of-gaussians}
Let $(\Gamma_{ij})$ for $1 \le i \le d$ and $1 \le j \le m$ be Gaussian random variables with mean $0$ and variance at most $1$, and such that $\Gamma_{i1}, \Gamma_{i2}, \dots, \Gamma_{im}$ are mutually independent for each $i \in [d]$. Let
$Y_i := \frac{1}{m} \sum_{1 \leq j \le m} \Gamma_{ij}^2$ and let $Y := \max_{1 \leq i \leq d} Y_i$. Then,
  \[
    \E[Y] \le 4 \left(
      1 + \frac{1 + \log d}{m}
    \right).
  \]
\end{proposition}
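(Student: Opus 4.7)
\textbf{Proof Plan for \autoref{prop:max-of-gaussians}.}
The plan is to use the standard moment generating function (MGF) bound for the maximum of nonnegative random variables. That is, for any $\lambda > 0$,
\[
\E[Y] \;=\; \frac{1}{\lambda} \E\Bigl[\log e^{\lambda \max_i Y_i}\Bigr]
\;\le\; \frac{1}{\lambda} \log \E\Bigl[\max_i e^{\lambda Y_i}\Bigr]
\;\le\; \frac{1}{\lambda} \log \sum_{i=1}^d \E\bigl[e^{\lambda Y_i}\bigr],
\]
by Jensen's inequality and the crude bound $\max \le \sum$. It therefore suffices to control the MGF of a single $Y_i$ and then optimize $\lambda$.

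Next I would compute $\E[e^{\lambda Y_i}]$. For each fixed $i$, write $\Gamma_{ij} = \sigma_{ij} Z_{ij}$ where $\sigma_{ij} \in [0,1]$ and $Z_{ij} \sim N(0,1)$ are mutually independent (for that $i$). Using the standard identity $\E[e^{t Z^2}] = (1-2t)^{-1/2}$ for $t < 1/2$ and the independence of the $Z_{ij}$'s, we get
\[
\E\bigl[e^{\lambda Y_i}\bigr] \;=\; \prod_{j=1}^m \E\Bigl[e^{(\lambda \sigma_{ij}^2 / m) Z_{ij}^2}\Bigr]
\;=\; \prod_{j=1}^m \bigl(1 - 2\lambda \sigma_{ij}^2/m\bigr)^{-1/2},
\]
valid whenever $\lambda < m/2$. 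Since $\sigma_{ij}^2 \le 1$ and $x \mapsto -\tfrac{1}{2}\log(1-x)$ is increasing on $[0,1)$, each factor is bounded above by $(1 - 2\lambda/m)^{-1/2}$, giving $\E[e^{\lambda Y_i}] \le (1-2\lambda/m)^{-m/2}$.

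Combining the two steps,
\[
\E[Y] \;\le\; \frac{\log d}{\lambda} - \frac{m}{2\lambda}\log\Bigl(1 - \frac{2\lambda}{m}\Bigr).
\]
Setting $\lambda = m/4$ makes $2\lambda/m = 1/2$, and the bound simplifies to
\[
\E[Y] \;\le\; \frac{4 \log d}{m} + 2 \log 2.
\]
Since $2\log 2 \le 4$ and $\tfrac{4\log d}{m} \le \tfrac{4(1+\log d)}{m}$, this yields the claim $\E[Y] \le 4 \bigl(1 + \tfrac{1 + \log d}{m}\bigr)$.

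There is no real obstacle here; the only choice that matters is the value of $\lambda$, which I picked to make the additive constant match the form $4 \cdot 1$ in the target bound while keeping the coefficient of $\log d / m$ equal to $4$. Any other constant $c \in (0,1)$ in place of $2\lambda/m$ would give a similar inequality with slightly different constants; the choice $c = 1/2$ is natural because it cleanly gives the stated constants.
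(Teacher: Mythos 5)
Your proof is correct, but it takes a different route from the paper. The paper's argument is a tail-bound computation: it invokes the Laurent--Massart concentration inequality for chi-squared variables (Lemma 1 of \cite{LM00}) to get $\Pr[Y_i \ge 4 + 4\delta/m] \le e^{-\delta}$, takes a union bound over the $d$ indices, and then integrates the tail, $\E[Y] = \int_0^\infty \Pr[Y \ge t]\,dt \le 4 + \frac{4\log d}{m} + \frac{4}{m}$. You instead use the soft-max/MGF device $\E[\max_i Y_i] \le \frac{1}{\lambda}\log\sum_i \E[e^{\lambda Y_i}]$, compute the chi-squared moment generating function exactly via $\E[e^{tZ^2}] = (1-2t)^{-1/2}$, and optimize at $\lambda = m/4$ to get $\E[Y] \le \frac{4\log d}{m} + 2\log 2$. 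Both arguments use independence only within each row $i$ and handle the (possibly dependent) maximum over $i$ by a union-type step, which is exactly what the hypotheses permit; your reduction $\Gamma_{ij} = \sigma_{ij}Z_{ij}$ with $\sigma_{ij}^2 \le 1$ and the monotonicity of each MGF factor correctly absorbs the ``variance at most $1$'' condition. What each approach buys: yours is self-contained (no appeal to an external concentration lemma) and in fact gives a slightly sharper additive constant ($2\log 2$ in place of the paper's $4 + 4/m$), while the paper's route produces a high-probability tail bound as a byproduct — though only the expectation is needed downstream in \autoref{lem:dimension-reduction}, so nothing is lost by your argument.
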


The main result in this subsection is the following lemma which compares the energy, the mass, and the spreading property of $f$ and of its Gaussian projection $\bar{f}$.
Note that the second and the third items of the following lemma are directly from \autoref{lem:LOT-dimension-reduction}.

\begin{lemma}[Dimension Reduction] \label{lem:dimension-reduction}
Let $G = (V, \directed{E})$ be a directed graph with maximum indegree $d$ and $\pi$ be a probability distribution on $V$.
Let $f : V \to \R^n$ be an embedding that is $(\Delta,\eta)$-spreading.
Let $\bar{f} : V \to \R^h$ be a Gaussian projection of $f$ as defined in \autoref{def:projection-higher}.
By setting $h \lesssim \frac{1}{\Delta^2} \log (\frac{1}{\eta \Delta})$, with probability at least $1/4$, the following three properties hold simultaneously:
\[
{\cal E}(\bar{f}) \lesssim \Big( 1 + \frac{\log d}{h}\Big) \cdot {\cal E}(f)
\quad {\rm and} \quad
\mu(\bar{f}) \geq \frac{1}{2} \mu(f)
\quad {\rm and} \quad
\bar{f} {\rm~is~} \Big(\frac{\Delta}{4}, \big(1+\Delta\big)\eta \Big){\rm-spreading}.
\]
\end{lemma}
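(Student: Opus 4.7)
The plan is to observe that the second and third conclusions are exactly what \autoref{lem:LOT-dimension-reduction} already gives, so the only new work is to bound the energy of $\bar{f}$, and then finish with a Markov plus union-bound argument.

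\textbf{Bounding the expected energy.} Fix a vertex $v \in V$. For each in-neighbour $u$ with $u \to v$, write $w_u := f(u) - f(v) \in \R^n$ and $a_u := \|w_u\|^2 = \|f(u) - f(v)\|^2$. By the definition of Gaussian projection, $\bar{f}(u) - \bar{f}(v) = \frac{1}{\sqrt{h}} \Gamma w_u$, so
\[
\|\bar{f}(u) - \bar{f}(v)\|^2 \;=\; \frac{1}{h} \sum_{i=1}^h \big( \Gamma_i^T w_u \big)^2 \;=\; a_u \cdot Y_u,
\]
where $\Gamma_i$ is the $i$-th row of $\Gamma$ and $Y_u := \frac{1}{h}\sum_{i=1}^h Z_{u,i}^2$ with $Z_{u,i} := \Gamma_i^T w_u / \sqrt{a_u}$ standard Gaussians (for those $u$ with $a_u > 0$; the case $a_u = 0$ contributes nothing). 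Crucially, for each fixed $u$, the variables $Z_{u,1},\dots,Z_{u,h}$ are independent across $i$, since the rows $\Gamma_i$ are independent. Since $Y_u \geq 0$, the key pointwise inequality
\[
\max_{u : u \to v} \|\bar{f}(u) - \bar{f}(v)\|^2 \;=\; \max_{u : u \to v} a_u Y_u \;\leq\; \Big(\max_{u : u \to v} a_u\Big) \cdot \Big(\max_{u : u \to v} Y_u\Big)
\]
reduces the problem to bounding the expected maximum of at most $d$ scaled chi-squared variables $Y_u$. This is precisely what \autoref{prop:max-of-gaussians} gives: $\E[\max_{u: u \to v} Y_u] \lesssim 1 + \frac{\log d}{h}$.

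\textbf{Summing over vertices.} Taking expectations and summing over $v$ with weights $\pi(v)$,
\[
\E\big[\mathcal{E}(\bar{f})\big] \;=\; \sum_{v \in V} \pi(v) \,\E\!\Big[\max_{u: u \to v}\|\bar{f}(u)-\bar{f}(v)\|^2\Big] \;\lesssim\; \Big(1 + \frac{\log d}{h}\Big) \sum_{v \in V} \pi(v) \max_{u: u \to v} \|f(u)-f(v)\|^2,
\]
which is $\lesssim (1 + \frac{\log d}{h}) \cdot \mathcal{E}(f)$.

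\textbf{Combining the three events.} By Markov's inequality applied to the nonnegative random variable $\mathcal{E}(\bar{f})$, there is an absolute constant $C$ such that $\Pr[\mathcal{E}(\bar{f}) > C(1 + \tfrac{\log d}{h}) \mathcal{E}(f)] \leq 1/4$, i.e., the energy bound fails with probability at most $1/4$. On the other hand, \autoref{lem:LOT-dimension-reduction} guarantees that the mass bound $\mu(\bar{f}) \geq \tfrac{1}{2}\mu(f)$ and the $(\Delta/4, (1+\Delta)\eta)$-spreading property jointly fail with probability at most $1/2$, provided $h \lesssim \frac{1}{\Delta^2}\log\frac{1}{\eta\Delta}$. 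A union bound then shows that all three conclusions hold simultaneously with probability at least $1 - \tfrac{1}{4} - \tfrac{1}{2} = \tfrac{1}{4}$, completing the proof. The main technical step is the product-separation inequality $\max_u a_u Y_u \leq (\max_u a_u)(\max_u Y_u)$, which decouples the edge lengths from the Gaussian fluctuations and is what allows \autoref{prop:max-of-gaussians} to be applied directly in spite of the correlations among $Y_u$ for different $u$.
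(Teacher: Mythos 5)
Your proof is correct and follows essentially the same route as the paper: reduce the energy bound to \autoref{prop:max-of-gaussians} applied per vertex (your per-edge normalization plus the separation $\max_u a_u Y_u \le (\max_u a_u)(\max_u Y_u)$ is just a cosmetic variant of the paper's normalization by the maximum in-neighbour length), then conclude via Markov for the energy, \autoref{lem:LOT-dimension-reduction} for the mass and spreading properties, and a union bound. No gaps.
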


\begin{proof}
The second and the third item are from \autoref{lem:LOT-dimension-reduction}.
We will prove the first item holds with probability at least $3/4$ by using \autoref{prop:max-of-gaussians}, and this would imply the lemma by union bound.
Let $\Gamma_i$ be the $i$-th row of $\Gamma$ in \autoref{def:projection-higher}.
For $uv \in \directed{E}$,
\begin{eqnarray*}
\norm{\bar{f}(u) - \bar{f}(v)}^2
&=&
\sum_{i = 1}^h \bigg( \frac{1}{\sqrt{h}} \biginner{\Gamma_i}{f(u) - f(v)} \bigg)^2
= \frac{1}{h} \sum_{i=1}^h g_i^2,
\end{eqnarray*}
where $g_i = \inner{\Gamma_i}{f(u) - f(v)}$ is an independent Gaussian random variable with mean zero and variance $\norm{f(u)-f(v)}^2$.
Applying Proposition \ref{prop:max-of-gaussians} on each $v \in V$ with indegree at most $d$,
\[
  \E\Big[\max_{u:u \to v} \norm{\bar{f}(u) - \bar{f}(v)}^2 \Big]
  \le
  4 \left(
  1 + \frac{1 + \log d}{h}
  \right) \cdot \max_{u: u \to v} \norm{f(u) - f(v)}^2.
\]
By linearity of expectation and Markov's inequality,
\[
  \Pr \left[
    \sum_{u \in V} \pi(u) \max_{u: u \to v} \norm{\bar{f}(u) - \bar{f}(v)}^2
    \le
    16 \Big(1 + \frac{1 + \log d}{h} \Big) \cdot \sum_{u \in V} \pi(u) \max_{u: u \to v} \norm{f(u) - f(v)}^2
  \right] \ge \frac{3}{4},
\]
implying that ${\cal E}(\bar{f}) \lesssim (1 + \frac{\log d}{h}) \cdot {\cal E}(f)$ with probability at least $3/4$.
\end{proof}

\subsection{Spectral Partitioning} \label{sec:embedding}

The third step is to show that given $\bar{f} : V \to \R^h$ in \autoref{lem:dimension-reduction}, we can construct $l$ disjointly supported functions $\bar{f}_1, \ldots, \bar{f}_l : V \to \R^h$ with comparable energy and mass to that of $\bar{f}$.

\begin{lemma}[Spectral Partitioning] \label{lem:spectral-partitioning}
Let $G = (V,E)$ be an undirected graph and $\pi$ be a probability distribution on $V$.
Let $\directed{E}$ be an orientation of $E$ and $\bar{f} : V \to \R^h$ be an embedding.
Let $l$ be the targeted number of disjointly supported functions where $1 \leq l \leq k$.
Suppose that there exist $\Delta \in (0,1)$ and $\delta \in (0,1)$ such that
\[
\bar{f} {\rm~is~} \Big( \frac{\Delta}{4}, \frac{1}{k(1-\Delta)} \Big){\rm-spreading}
\qquad {\rm and} \qquad
\Delta \leq 1 - \frac{2(l-1)}{2(1-\delta)k - 1}.
\]
Then there exist embeddings $\bar{f}_1, \ldots, \bar{f}_l : V \to \R^h$ such that the supports of $\{\bar{f}_i\}_{i=1}^l$ are pairwise disjoint and 
\[
{\cal E}(\bar{f}_i) \lesssim \Big( 1 + \frac{h}{\Delta \delta} \Big)^2 \cdot {\cal E}(\bar{f}) 
\qquad {\rm and} \qquad 
\mu(\bar{f}_i) \gtrsim \frac{1}{k} \mu(\bar{f}).
\]
\end{lemma}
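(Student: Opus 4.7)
The plan is to follow the ``random partitioning plus smooth localization'' paradigm from \cite{LOT12}, adapted to the directed energy functional of \autoref{def:energy}. The output embeddings will take the form $\bar{f}_i(v) := \varphi_i(v) \bar{f}(v)$, where $\varphi_1, \ldots, \varphi_l : V \to [0, 1]$ are scalar ``bump'' functions that are Lipschitz with respect to the radial projection distance $d_{\bar{f}}$ and whose supports are pairwise disjoint. With this form in place, the mass lower bound and the energy upper bound on $\bar{f}_i$ reduce to choosing the bumps carefully.

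The first step is to locate $l$ attractor centers $x_1, \ldots, x_l \in V$ so that each ball $B_i := \{v : d_{\bar{f}}(v, x_i) \le \Delta/4\}$ carries mass $\sum_{v \in B_i} \pi(v) \norm{\bar{f}(v)}^2 \gtrsim \mu(\bar{f})/k$, while the fattened balls $\widetilde{B}_i := \{v : d_{\bar{f}}(v, x_i) \le \Delta/4 + \Delta\delta/c\}$, for a suitable absolute constant $c$, remain pairwise disjoint. I would use a greedy region-growing procedure: in each round, pick $x_i$ that maximizes the mass of $B(x_i, \Delta/4)$ among vertices still far from all previously chosen centers, then exclude a ball of radius $\Delta/2$ around $x_i$ from further consideration. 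The $(\Delta/4, 1/(k(1-\Delta)))$-spreading hypothesis caps how much mass each excluded region can absorb, and the arithmetic inequality $\Delta \le 1 - 2(l-1)/(2(1-\delta)k - 1)$ is precisely what guarantees that enough mass remains after $l-1$ rounds to force the next $(\Delta/4)$-ball to also carry mass $\gtrsim \mu(\bar{f})/k$.

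With the centers in hand, I would define the standard Lipschitz cutoff
\[
\varphi_i(v) := \max\bigg( 0, \, 1 - \frac{c}{\Delta\delta}\cdot d_{\bar{f}}(v, B_i) \bigg),
\]
where $d_{\bar{f}}(v, B_i) := \min_{u \in B_i} d_{\bar{f}}(v, u)$. Then $\varphi_i \equiv 1$ on $B_i$, $\varphi_i$ is supported inside $\widetilde{B}_i$ (hence the supports are pairwise disjoint by construction of the centers), and $\varphi_i$ is $O(1/(\Delta\delta))$-Lipschitz in $d_{\bar{f}}$. Setting $\bar{f}_i := \varphi_i \bar{f}$, the mass bound $\mu(\bar{f}_i) \ge \sum_{v \in B_i} \pi(v) \norm{\bar{f}(v)}^2 \gtrsim \mu(\bar{f})/k$ is then immediate.

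The main obstacle will be the energy estimate. Starting from the decomposition
\[
\bar{f}_i(u) - \bar{f}_i(v) = \varphi_i(v)\bigl(\bar{f}(u) - \bar{f}(v)\bigr) + \bigl(\varphi_i(u) - \varphi_i(v)\bigr)\bar{f}(u),
\]
the first summand contributes at most $2 {\cal E}(\bar{f})$ since $\varphi_i \le 1$. The delicate piece is the second summand: the $d_{\bar{f}}$-Lipschitz control on $\varphi_i$ must be translated into a bound in terms of the Euclidean distance $\norm{\bar{f}(u) - \bar{f}(v)}$ that actually appears in ${\cal E}(\bar{f})$. Using the definition $d_{\bar{f}}(u, v) = \norm{\bar{f}(u)/\norm{\bar{f}(u)} - \bar{f}(v)/\norm{\bar{f}(v)}}$, a short computation gives
\[
|\varphi_i(u) - \varphi_i(v)| \lesssim \frac{\norm{\bar{f}(u) - \bar{f}(v)}}{\Delta\delta \cdot \min(\norm{\bar{f}(u)}, \norm{\bar{f}(v)})},
\]
so that multiplying by $\norm{\bar{f}(u)}^2$ and summing against $\pi$ leaves a residual factor coming from aggregated norm ratios $\norm{\bar{f}(u)}^2 / \norm{\bar{f}(v)}^2$. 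The dimension $h$ enters this aggregation through the sub-isotropy property of the projected embedding at the relevant scale, and careful bookkeeping converts the $1/(\Delta\delta)$ Lipschitz constant together with the dimension $h$ into the claimed $(1 + h/(\Delta\delta))^2$ factor. Taking the maximum over incoming neighbors in $\directed{E}$ then completes the estimate.
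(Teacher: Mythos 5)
There is a genuine gap at the very first step of your construction. Your greedy region-growing assumes that you can find $l$ centers $x_1,\ldots,x_l$ whose balls $B(x_i,\Delta/4)$ each carry mass $\gtrsim \mu(\bar{f})/k$, and you invoke the spreading hypothesis plus the arithmetic condition to justify this. But the $\big(\frac{\Delta}{4},\frac{1}{k(1-\Delta)}\big)$-spreading property of \autoref{def:spreading} is only an \emph{upper} bound on the mass of sets of small radial diameter; it gives no lower bound at all, and in general no ball of radius $\Delta/4$ need carry mass anywhere near $\mu(\bar{f})/k$. For instance, if the $n$ vertices embed in nearly orthogonal directions with equal norms and $\pi$ is uniform, then all pairwise radial distances are about $\sqrt{2}$, every ball of radius $\Delta/4<1$ is a singleton of mass $\mu(\bar{f})/n \ll \mu(\bar{f})/k$, and yet the spreading hypothesis holds. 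The condition $\Delta \le 1 - \frac{2(l-1)}{2(1-\delta)k-1}$ controls how much mass can be wasted or over-packed per group; it cannot force mass to concentrate in a single ball. This is precisely why the paper does not use balls: it takes a $\big(\frac{\Delta}{4}, O(\frac{h}{\delta}), 1-\delta\big)$-padded random partition of $(V,d_{\bar{f}})$ from \autoref{thm:padded-partition} into parts of diameter at most $\Delta/4$, removes the points within distance $\Theta(\frac{\Delta\delta}{h})$ of the part boundaries (losing at most a $\delta$ fraction of mass), and then greedily \emph{merges} many of the resulting pairwise-separated parts into groups $S_1,\ldots,S_l$, each of mass at least $\frac{1}{2k}\mu(\bar{f})$; the groups may have large diameter, and it is the spreading upper bound $\mu(P_i') \le \frac{1}{k(1-\Delta)}\mu(\bar{f})$ on each part that lets the greedy merging produce $l$ groups under exactly your arithmetic condition. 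Some partitioning device of this kind (padded decomposition or an equivalent ball-carving with padding guarantees) is an essential missing ingredient in your argument, not an optional implementation detail.

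A secondary issue is your bookkeeping of the factor $h$ in the energy bound. In the paper, $h$ enters only through the padding parameter $O(h/\delta)$ of \autoref{thm:padded-partition}, which forces the separation scale $\eps = \Theta(\frac{\Delta\delta}{h})$ used in smooth localization, and \autoref{lem:smooth-localization} then gives the edgewise Euclidean bound $\norm{\bar{f}_i(u)-\bar{f}_i(v)} \le (1+\frac{2}{\eps})\norm{\bar{f}(u)-\bar{f}(v)}$ directly, so no ``aggregated norm ratios'' or appeal to sub-isotropy is needed — the radial-to-Euclidean conversion is already internal to that lemma. In your construction the fattened balls are separated by $\Delta\delta/c$ with no $h$ anywhere, so even granting the existence of heavy balls, the claimed $(1+\frac{h}{\Delta\delta})^2$ factor would not arise by the route you describe; the vague step ``the dimension $h$ enters this aggregation through the sub-isotropy property'' does not correspond to a valid estimate.
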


The proof of this step follows closely the proof in~\cite{LOT12}, 
so we will first review the ideas and results in~\cite{LOT12} before presenting the proof of \autoref{lem:spectral-partitioning}.

\subsubsection{Review of~\cite{LOT12}} \label{sec:review-LOT}

In~\cite{LOT12}, given the first $k$ eigenvectors $f_1,\ldots,f_k : V \to \R$ of the normalized Laplacian matrix, the spectral embedding $f: V \to \R^k$ is defined for each vertex $v \in V$ as $f(v) = \big(f_1(v), \ldots, f_k(v)\big)$. 
Since the eigenvectors are orthonormal, the spectral embedding satisfies the isotropy condition $\sum_{v \in V} f(v) f(v)^T = I_n$.
Lee, Oveis Gharan and Trevisan observed that the isotropy condition implies that not many points can be close in the radial projection distance defined below.

\begin{definition}[Radial Projection Distance~\cite{LOT12}] \label{def:radial-distance}
Let $G=(V,E)$ be a graph and $f: V \rightarrow \R^h$ be an embedding of the vertices. 
For each pair of vertices $u,v \in V$, 
the radial projection distance between $u$ and $v$ is defined as
\[
    d_f(u, v) = \norm{\frac{f(u)}{\norm{f(u)}} - \frac{f(v)}{\norm{f(v)}}}
\]
if $\norm{f(u)} > 0$ and $\norm{f(v)} > 0$.
Otherwise, if $f(u) = f(v) = 0$ then $d_f(u,v) := 0$, else $d_f(u,v) = \infty$.
\end{definition}

More precisely, they proved the following bound on the spreading property in \autoref{def:spreading} of the embedding $f$.
Their result is stated for an embedding $f$ satisfying the isotropy condition, but the same proof works for an embedding $f$ satisfying the sub-isotropy condition in \autoref{def:sigma-k-dual}.

\begin{lemma}[Sub-Isotropy Implies Spreading~\cite{LOT12}(Lemma 3.2)] \label{lem:isotropy-and-spreading}
Let $G = (V, E)$ be an undirected graph and $\pi$ be a probability distribution on $V$.
Suppose $f: V \rightarrow \R^h$ is an embedding with mass $\mu(f)$ in \autoref{def:mass}.
Then, for any $\Delta \in [0,1)$,
\[
\sum_{u \in V} \pi(u) f(u) f(u)^T \preceq I_h
\quad \implies \quad
f {\rm~is~} \Big( \Delta, \frac{1}{\mu(f) \cdot (1 - \Delta^2)} \Big){\rm-spreading}.
\]
\end{lemma}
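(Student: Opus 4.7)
The plan is to prove the spreading property directly from the sub-isotropy constraint by picking a single ``anchor'' direction $w \in \R^h$ that is simultaneously close to the unit vector $f(v)/\norm{f(v)}$ of every $v \in S$, and then reading off a lower bound on $\langle w, f(v)\rangle^2$ via sub-isotropy.

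More concretely, fix any $S \subseteq V$ with ${\rm diam}_{d_f}(S) \le \Delta$. If $f(v) = 0$ for all $v \in S$, the desired inequality is trivial. Otherwise, pick any $v_0 \in S$ with $\norm{f(v_0)} > 0$ and set $w := f(v_0)/\norm{f(v_0)}$. By the convention in \autoref{def:radial-distance}, $\Delta < \infty$ forces $\norm{f(v)} > 0$ for every $v \in S$, so the unit vector $\hat f(v) := f(v)/\norm{f(v)}$ is well-defined throughout $S$. The diameter condition gives $\norm{w - \hat f(v)} \le \Delta$ for each $v \in S$, and the standard identity $\norm{w - \hat f(v)}^2 = 2 - 2\langle w, \hat f(v)\rangle$ yields $\langle w, \hat f(v)\rangle \ge 1 - \Delta^2/2$. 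Squaring and using $(1 - \Delta^2/2)^2 \ge 1 - \Delta^2$ gives
\[
\langle w, f(v)\rangle^2 \;\ge\; (1 - \Delta^2)\, \norm{f(v)}^2 \qquad \forall\, v \in S.
\]

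Summing against $\pi$ and combining with sub-isotropy produces the bound in one line:
\[
(1 - \Delta^2) \sum_{v \in S} \pi(v)\, \norm{f(v)}^2 \;\le\; \sum_{v \in S} \pi(v)\, \langle w, f(v)\rangle^2 \;\le\; w^{T}\!\Big(\sum_{v \in V} \pi(v)\, f(v) f(v)^{T}\Big) w \;\le\; w^{T} I_h\, w \;=\; 1,
\]
where the second inequality drops the nonnegative terms outside $S$ and the third uses $\sum_v \pi(v) f(v) f(v)^T \preceq I_h$. Dividing through by $(1-\Delta^2)$ and by $\mu(f) = \sum_v \pi(v)\norm{f(v)}^2$ yields exactly $\sum_{v \in S} \pi(v)\norm{f(v)}^2 \le \frac{1}{\mu(f)(1-\Delta^2)} \cdot \mu(f)$, which is the $(\Delta, \frac{1}{\mu(f)(1-\Delta^2)})$-spreading condition.

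There is no real obstacle here: the argument is a clean one-anchor covering bound and uses only the Cauchy--Schwarz-type inequality $\langle w, \hat f(v)\rangle \ge 1 - \Delta^2/2$ together with the PSD constraint. The only mild subtlety is handling the degenerate case where some $f(v)$ vanishes, which is resolved by the convention $d_f(u,v) = \infty$ whenever exactly one of $f(u), f(v)$ is zero, forcing any finite-diameter set to be uniformly nonzero or uniformly zero.
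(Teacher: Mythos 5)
Your proof is correct, and it is essentially the same argument as the one in \cite{LOT12} (Lemma 3.2) that the paper invokes: fix an anchor direction $w = f(v_0)/\norm{f(v_0)}$ in the small-diameter set, use the polarization identity to get $\inner{w}{f(v)}^2 \geq (1-\Delta^2)\norm{f(v)}^2$ on $S$, and then apply the sub-isotropy constraint in the direction $w$ to bound the mass of $S$ by $\frac{1}{1-\Delta^2}$. The degenerate case where some $f(v)$ vanishes is handled exactly as you note, via the $d_f = \infty$ convention, so there is no gap.
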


As the embedding is spreading, any subset of points with small diameter in radial projection distance cannot have too much mass.
In order to construct many disjointly supported embeddings $f_1,\ldots,f_l:V \to \R^h$,
the points in $\R^h$ are partitioned into many groups of small diameter using the following definition and theorem from metric geometry.

\begin{definition}[Padded Decomposition~\cite{LOT12}] \label{def:padded-decomposition}
Let $(X, d_X)$ be a finite metric space. 
For $\Delta, \alpha, \delta > 0$, a random partition $\mathcal{P}$ of $X$ is called $(\Delta, \alpha, \delta)$-padded if
\vspace{-2mm}
\begin{itemize}
    \item each part in $\mathcal{P}$ has diameter at most $\Delta$ with respect to the distance function $d_X$;
    \item $\Pr\big[B\big(x, \frac{\Delta}{\alpha}\big) \subseteq \mathcal{P}(x)\big] \ge \delta$ for every $x \in X$, 
where $B\big(x, \frac{\Delta}{\alpha}\big)$ is the open ball of radius $\frac{\Delta}{\alpha}$ centered at $x$ and $\mathcal{P}(x)$ is the part in $\mathcal{P}$ that contains $x$.
  \end{itemize}
\end{definition}

\begin{theorem}[\cite{LOT12}~Theorem 2.3, \cite{GKL03}]
  \label{thm:padded-partition}
Let $(X,d)$ be a finite metric space.
If $X \subseteq \R^h$, 
then for every $\Delta > 0$ and $\delta \in (0, 1)$, 
$X$ admits a $\big(\Delta, O(\frac{h}{\delta}), 1 - \delta\big)$-padded random partition.
\end{theorem}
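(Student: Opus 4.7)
The plan is to use the classical Calinescu--Karloff--Rabani random-ball partition, closing the padding analysis by exploiting the $O(h)$-doubling dimension of Euclidean space. I would sample a single radius $R$ uniformly in $[\Delta/4, \Delta/2]$ together with an independent uniformly random permutation $\pi$ of the points $y_1, \ldots, y_n$ of $X$, and assign each $x \in X$ to the cluster $y_{\pi(i)}$ for the smallest index $i$ with $d(x, y_{\pi(i)}) \leq R$. Such an $i$ always exists since $d(x,x) = 0$, and each cluster is contained in some ball $B(y, R)$, hence has diameter at most $2R \leq \Delta$, satisfying the first condition of \autoref{def:padded-decomposition}.

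\emph{Padding analysis.} Fix $x \in X$, set $r := \Delta/\alpha$, and write $N(t) := |X \cap B(x, t)|$. The failure $B(x, r) \not\subseteq \mathcal{P}(x)$ is caused by a ``cutting center'' $w$ whose ball $B(w, R)$ partially overlaps $B(x, r)$ and which appears earliest among the relevant candidates in $\pi$. A standard case analysis shows this requires $R \in [d_w - r, d_w + r]$ (an event of probability $\leq 8r/\Delta$ since $R$ is uniform in a length-$\Delta/4$ interval), together with $w$ preceding in $\pi$ every $w'$ with $d_{w'} \leq R + r$ (an event of probability at most $1/N(d_w)$, using $R + r \geq d_w$). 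Enumerating contributing $w$ in order of distance to $x$ and union-bounding,
\[
\Pr[B(x, r) \not\subseteq \mathcal{P}(x)] \;\leq\; \frac{8r}{\Delta} \sum_{w : d_w \in [\Delta/4 - r,\, \Delta/2 + r]} \frac{1}{N(d_w)} \;\leq\; \frac{8r}{\Delta} \bigg(\ln \frac{N(\Delta/2 + r)}{N(\Delta/4 - r)} + 1\bigg),
\]
where the last step uses $\sum_{i=a}^{b} 1/i \leq \ln(b/a) + 1$ after identifying the ranks of the contributing $w$'s in the full distance-ordering around $x$.

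\emph{Closing via Euclidean doubling.} This is where the hypothesis $X \subseteq \R^h$ enters. Euclidean $\R^h$ has doubling constant $2^{O(h)}$: every ball of radius $2t$ can be covered by $2^{O(h)}$ balls of radius $t$, and intersecting such a cover with $X$ yields $N(2t)/N(t) \leq 2^{O(h)}$ uniformly in $t$. Iterating this comparison across the constant radius ratio $(\Delta/2 + r)/(\Delta/4 - r) = O(1)$ (valid once $r \leq \Delta/8$) gives $\ln\big(N(\Delta/2+r)/N(\Delta/4-r)\big) = O(h)$. The failure probability is therefore $O(hr/\Delta)$, and choosing $r = c\Delta\delta/h$ for a sufficiently small absolute constant $c$ makes this at most $\delta$, giving the desired $\alpha = \Delta/r = O(h/\delta)$.

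\emph{Main obstacle.} The crux of the argument is the doubling step: the elementary harmonic bound $\sum 1/i \leq \ln n$ only yields $\alpha = O(\log n/\delta)$, which is cardinality-dependent rather than dimension-dependent, and so would not prove the stated theorem. Extracting the $O(h)$ factor requires the Euclidean packing argument -- the single point where $X \subseteq \R^h$ is actually used -- to convert the constant radius ratio into a multiplicative comparison of the counts $N(\cdot)$. A minor technical care point is tracking the $\pm r$ slack around the nominal distances $\Delta/4$ and $\Delta/2$ in the distance ranges, which is absorbed into constants by ensuring $r \leq \Delta/8$, consistent with the final scaling $r = \Theta(\Delta\delta/h)$.
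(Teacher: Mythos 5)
The set-up (random radius $R$ uniform in $[\Delta/4,\Delta/2]$, random permutation over candidate centers), the diameter bound, the $8r/\Delta$ interval bound, the $1/N(d_w)$ permutation bound and the harmonic sum are all fine and standard. The step that fails is precisely the one labelled ``closing via Euclidean doubling'', which is the only place the dimension was supposed to enter. The doubling property of $\R^h$ bounds covering and packing numbers; it does \emph{not} bound the ratio $N(2t)/N(t)$ of point counts in concentric balls of a finite $X \subseteq \R^h$. Intersecting a cover of $B(x,2t)$ by $2^{O(h)}$ radius-$t$ balls with $X$ only bounds $N(2t)$ by $2^{O(h)}$ times the largest number of points of $X$ in a radius-$t$ ball centered \emph{anywhere}, not centered at $x$. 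Already in $\R^1$, one point near $x$ together with a dense cluster of $n'$ points at distance roughly $\Delta/3$ from $x$ gives $N(\Delta/4-r)=O(1)$ and $N(\Delta/2+r)=n'+O(1)$, so your bound degenerates to the cardinality-dependent $O(r\log n'/\Delta)$ of Calinescu--Karloff--Rabani, which is exactly what the theorem is supposed to improve upon.

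This is not merely a loss in the analysis: the scheme that uses \emph{all} points of $X$ as candidate centers genuinely does not satisfy \autoref{thm:padded-partition}. Take $h=1$, $x=0$, a point $y$ at distance $r/2$ from $x$, and for $j=1,\dots,m$ with $m\approx\Delta/(4r)$ a group of $2^j$ points at distance $\Delta/4+(j-1)r$ on the opposite side of $x$. For every $R$, the centers claiming $x$ are $x$, $y$ and groups $1,\dots,i$ (about $2^{i+1}$ centers), of which the $2^i$ group-$i$ points claim $x$ but, for the first half of each radius window, not $y$; so $x$ and $y$ are separated with constant probability however small $r/\Delta$ is, and the failure probability is not $O(hr/\Delta)$. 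The repair — and the route of the cited proof \cite{GKL03} (the paper itself only imports this theorem, so there is no internal proof to compare against) — is to run the same random-radius/random-permutation scheme over a $\Theta(\Delta)$-net of $X$ (say a $\Delta/8$-net) instead of over all of $X$: every point is still within $R$ of some center, the candidate cutting centers for $B(x,r)$ are pairwise $\Delta/8$-separated points inside a ball of radius $O(\Delta)$ around $x$, and a Euclidean packing bound (this is where $X\subseteq\R^h$ is legitimately used) caps their number at $2^{O(h)}$. Your harmonic sum then runs over at most $2^{O(h)}$ terms, giving failure probability $O(hr/\Delta)$ and hence $\alpha = O(h/\delta)$ as claimed.
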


Let ${\cal P} = P_1 \sqcup P_2 \sqcup \ldots \sqcup P_m$ be a random partition sampled from \autoref{thm:padded-partition}.
By the second property in \autoref{def:padded-decomposition},
there is only a small fraction of points close to the boundary of the partition.
The points close to the boundary are removed to form $P_1' \sqcup P'_2 \sqcup \ldots \sqcup P_m'$, so that the distance between each pair $P_i'$ and $P_j'$ is lower bounded by say $2\eps$.
As each $P'_i$ does not have too much mass, they can be grouped into disjoint sets $S_1, \ldots, S_k$ where each has mass at least $\frac{1}{2}$.
Then the disjoint supported functions $f_1, \ldots, f_k$ are constructed on $S_1, \ldots, S_k$ by the following smooth localization procedure.

\begin{lemma}[Smooth Localization~\cite{LOT12}(Lemma 3.3)] \label{lem:smooth-localization}
Let $G=(V,E)$ be an undirected graph and $f : V \to \R^h$ be an embedding.
For any $S \subseteq V$ and any $\eps > 0$,
there is a mapping $f' : V \to \R^h$ which satsifies the following three properties: 
\begin{enumerate}
\item
$f'(v) = f(v)$ for all $v \in S$,
\item
$\supp(f') \subseteq N_{\eps}(S)$ where $N_{\eps}(S) := \{v \in V \mid \exists u \in S {\rm~with~} d_f(u,v) \leq \eps\}$ denotes the set of vertices with radial projection distance at most $\eps$ from $S$,
\item
for $uv \in E$, $\norm{f'(u) - f'(v)} \leq \big(1+\frac{2}{\eps}\big) \norm{f(u)-f(v)}$.
\end{enumerate}
\end{lemma}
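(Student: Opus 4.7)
The plan is to construct $f'$ as a pointwise product $f'(v) := \phi(v) \cdot f(v)$, where $\phi: V \to [0,1]$ is a ``bump'' cutoff function that equals $1$ on $S$ and vanishes outside $N_\eps(S)$. A natural choice is
\[
\phi(v) := \max\Big\{0,\; 1 - \tfrac{1}{\eps}\, d_f(v, S)\Big\},
\qquad d_f(v, S) := \min_{u \in S} d_f(v, u),
\]
with $d_f(v, S) = 0$ for $v \in S$ by convention. Properties (1) and (2) are then immediate: for $v \in S$ we have $\phi(v) = 1$, so $f'(v) = f(v)$; and $f'(v) \neq 0$ forces $\phi(v) > 0$, hence $d_f(v, S) < \eps$ and $v \in N_\eps(S)$.

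The substantive content is the Lipschitz-type estimate (3). The key observation is the telescoping identity
\[
f'(u) - f'(v) \;=\; \phi(u)\big(f(u) - f(v)\big) \;+\; \big(\phi(u) - \phi(v)\big)\,f(v),
\]
so that $\norm{f'(u) - f'(v)} \leq \norm{f(u) - f(v)} + |\phi(u) - \phi(v)| \cdot \norm{f(v)}$, using $\phi \in [0,1]$. The function $\phi$ is $\tfrac{1}{\eps}$-Lipschitz with respect to $d_f$, since the truncation $t \mapsto \max(0, 1 - t/\eps)$ is $\tfrac{1}{\eps}$-Lipschitz and $v \mapsto d_f(v, S)$ is $1$-Lipschitz by the triangle inequality. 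In the generic case $f(u), f(v) \neq 0$, it therefore suffices to prove the radial projection estimate
\[
d_f(u, v) \cdot \norm{f(v)} \;\leq\; 2\,\norm{f(u) - f(v)},
\]
which when combined with the preceding bound yields $\norm{f'(u) - f'(v)} \leq (1 + \tfrac{2}{\eps})\norm{f(u) - f(v)}$.

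The main (and essentially only) technical step is this radial projection estimate, which I would prove via the identity
\[
\frac{f(u)}{\norm{f(u)}} - \frac{f(v)}{\norm{f(v)}} \;=\; \frac{f(u) - f(v)}{\norm{f(u)}} \;+\; \frac{f(v)\,\big(\norm{f(v)} - \norm{f(u)}\big)}{\norm{f(u)}\,\norm{f(v)}}.
\]
Taking norms, using the reverse triangle inequality $\bigl|\norm{f(v)} - \norm{f(u)}\bigr| \leq \norm{f(u) - f(v)}$, and symmetrizing in $u, v$ gives $d_f(u, v) \leq 2\,\norm{f(u) - f(v)}/\max(\norm{f(u)}, \norm{f(v)})$; then multiplying by $\norm{f(v)} \leq \max(\norm{f(u)}, \norm{f(v)})$ completes it. The degenerate cases where $f(u) = 0$ or $f(v) = 0$ require only a direct check: e.g.\ if $f(u) = 0$ then $f'(u) = 0$ and $\norm{f'(u) - f'(v)} = \phi(v)\norm{f(v)} \leq \norm{f(v)} = \norm{f(u) - f(v)}$, satisfying (3) with the trivial constant $1$. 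Overall I do not expect any real obstacle; once the right cutoff is chosen, everything reduces to this standard normalization estimate.
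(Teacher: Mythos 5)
Your proposal is correct, and the paper itself does not reprove this lemma (it is quoted from \cite{LOT12}); your argument --- the multiplicative cutoff $\phi(v)=\max\{0,\,1-d_f(v,S)/\eps\}$ combined with the radial projection estimate $d_f(u,v)\cdot\max\big(\norm{f(u)},\norm{f(v)}\big)\le 2\norm{f(u)-f(v)}$ and the telescoping decomposition --- is essentially the same as the original proof of Lemma 3.3 in \cite{LOT12}, including the direct handling of the degenerate cases where $f(u)=0$ or $f(v)=0$.
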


To summarize, the energy of each $f_i$ is upper bounded by the third item in \autoref{lem:smooth-localization}, and the mass of each $f_i$ is lower bounded by $\frac{1}{2k}$ fraction of the total mass by the spreading property in \autoref{lem:isotropy-and-spreading}.

\subsubsection{Proof of \autoref{lem:spectral-partitioning}}

As our proof follows closely the steps in~\cite{LOT12}, the review in the previous subsubsection also serves well as an overview of our proof.

Given an embedding $\bar{f} : V \to \R^h$ 
and a target $l \leq k$,
we would like to find $l$ disjoint subsets $S_1, \ldots, S_l$ of $V$ such that 
\begin{enumerate}
\item 
for $1 \leq i \leq l$, the mass $\mu(S_i) := \sum_{u \in S_i} \pi(u) \norm{\bar{f}(u)}^2$ of each $S_i$ is at least $\frac{1}{2k} \cdot \mu(V)$, where $\mu(V)=\mu(\bar{f})$ in \autoref{def:mass}, and
\item 
for $1 \leq i \neq j \leq l$, the distance $d_{\bar{f}}(S_i,S_j) := \min_{u \in S_i, v \in S_j} d_{\bar{f}}(u,v)$ between $S_i$ and $S_j$ is at least $2\eps$ for some $\eps > 0$ to be determined later, where $d_{\bar{f}}$ is the radial projection distance in \autoref{def:radial-distance}.
\end{enumerate}
To this end, equip $V$ with the pseudo-metric $d_{\bar{f}}$ and consider the metric space $(V, d_{\bar{f}})$. 
Let $\mathcal{P} = P_1 \sqcup P_2 \sqcup \cdots \sqcup P_m$ be a $\big(\frac{\Delta}{4}, \frac{ch}{\delta}, 1 - \delta\big)$-padded random partition sampled from \autoref{thm:padded-partition}, where $c$ is a universal constant and $\Delta \in (0, 1)$ and $\delta \in (0, 1)$ are to be determined. 
By the assumption that $\bar{f}$ is $(\frac{\Delta}{4}, \frac{1}{k(1 - \Delta)})$-spreading, 
the first property in \autoref{def:padded-decomposition} implies that $\mu(P_i) \leq \frac{1}{k(1 - \Delta)} \cdot \mu(\bar{f})$ for $1 \leq i \leq m$.
Let $U := \{x \in V \mid B(x, \frac{\Delta \delta}{4 c h}) \not \subseteq P(x) \} $ be the set of points that are close to the boundaries of ${\cal P}$.
The second property in \autoref{def:padded-decomposition} implies that there exists a realization of $\mathcal{P}$ such that $\mu(V-U) \geq (1 - \delta) \cdot \mu(\bar{f})$.
We take such a realization $\mathcal{P} = P_1 \sqcup P_2 \sqcup \cdots \sqcup P_m$ and remove all points in $U$ to obtain $P_i' := P_i - U$ for $1 \leq i \leq m$.
By doing so, we end up with disjoint sets $P_1', P_2', \dots, P_m'$ with the following properties: 
\begin{enumerate}
\item $\mu(P_i') \leq \frac{1}{k(1 - \Delta)} \cdot \mu(\bar{f})$ for $1 \leq i \leq m$, 
\item $\sum_{i=1}^m \mu(P_i') \geq (1 - \delta) \cdot \mu(\bar{f})$,
\item $d_{\bar{f}}(P_i', P_j') \ge \frac{2\Delta \delta}{ch}$ for $i \neq j \in [m]$.
\end{enumerate}
Next, we will merge some of the sets $P_1', \ldots, P_m'$ to form disjoint sets $S_1, \ldots, S_l$ so that $\mu(S_i) \geq \frac{1}{2k} \cdot \mu(\bar{f})$ for $1 \leq i \leq l$. 
This can be done by a simple greedy process, where we sort the $P_i'$ by nonincreasing mass, and put consecutive sets into a group $S_j$ until $\mu(S_j) \geq \frac{1}{2k} \cdot \mu(\bar{f})$.
By the first property and the greedy process, each group $S_j$ produced has $\mu(S_j) \leq \frac{1}{k(1 - \Delta)} \cdot \mu(f)$.
Hence, by the second property, the greedy process will succeed to produce at least $l$ groups with mass at least $\frac{1}{2k} \cdot \mu(\bar{f})$ as long as
\[
(1-\delta) \cdot \mu(\bar{f}) - (l-1) \cdot \frac{\mu(\bar{f})}{k(1 - \Delta)} \geq \frac{\mu(\bar{f})}{2k} 
\quad \iff \quad 
\Delta \le 1 - \frac{2(l-1)}{2(1 - \delta)k - 1},
\]
which is exactly the assumption we made in the statement about $\Delta$.
Therefore, we can produce $S_1, \ldots, S_l$ satisfying the two requirements $\mu(S_i) \geq \frac{1}{2k} \cdot \mu(\bar{f})$ for $1 \leq i \leq l$ and $d_{\bar{f}}(S_i,S_j) \geq 2\eps := \frac{2\Delta \delta}{ch}$ for $i \neq j \in [l]$ by the third property of $P'_1, \ldots, P'_m$.

Now, we apply the smooth localization procedure in \autoref{lem:smooth-localization} on each $S_i$ with $\eps = \frac{\Delta \delta}{ch}$ to obtain an embedding $\bar{f}_i : V \to \R^h$ for $1 \leq i \leq l$.
First, we check that $\bar{f}_1, \ldots, \bar{f}_l$ are disjointly supported.
This follows from $d_{\bar{f}}(S_i, S_j) \geq 2\eps$ for $i \neq j$ and the second property in \autoref{lem:smooth-localization}.
Second, since $\mu(S_i) \geq \frac{1}{2k} \cdot \mu(\bar{f})$ and $\bar{f}_i(v) = \bar{f}(v)$ for $v \in S_i$ by the first property in \autoref{lem:smooth-localization}, it follows that $\mu(\bar{f}_i) = \mu(S_i) \geq \frac{1}{2k} \cdot \mu(\bar{f})$.
Finally, by the third property in \autoref{lem:smooth-localization}, it follows that
\[
{\cal E}(\bar{f}_i)
= \sum_{v \in V} \pi(v) \max_{u: u \to v} \norm{\bar{f}_i(u) - \bar{f}_i(v)}^2
\leq \Big(1 + \frac{2c h}{\Delta \delta} \Big)^2 
\sum_{v \in V} \pi(v) \max_{u: u \to v} \norm{\bar{f}(u) - \bar{f}(v)}^2
\lesssim \Big(1 + \frac{h}{\Delta \delta} \Big)^2 {\cal E}(\bar{f}).
\]
Therefore, we conclude that $\bar{f}_1, \ldots, \bar{f}_l$ satisfy all the properties stated in \autoref{lem:spectral-partitioning}.

\subsection{Cheeger Rounding} \label{sec:wrapup-higher}

The fourth step is to apply the results in \autoref{sec:Cheeger-vertex} on $\bar{f}_1, \ldots, \bar{f}_l$ from \autoref{lem:spectral-partitioning} to obtain disjoint subsets with small vertex expansion.

\begin{lemma}[Cheeger Rounding] \label{lem:higher-order-rounding}
Let $G = (V, E)$ be an undirected graph with maximum degree $d$ and $\pi$ be a probability distribution $\pi$ on $V$.
Given an orientation $\directed{E}$ and an embedding $\bar{f}: V \rightarrow \R^h$, there exists a set $S \subseteq \supp(\bar{f})$ with
\[
\psi(S)^2 \lesssim \min\{h, \log d\} \cdot \frac{{\cal E}(\bar{f})}{\mu(\bar{f})}.
\]
\end{lemma}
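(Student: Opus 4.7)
The plan is to reduce the $h$-dimensional embedding $\bar{f}$ to a one-dimensional embedding $y: V \to \R$ and then apply the Cheeger rounding machinery from \autoref{sec:Cheeger-rounding} (namely \autoref{prop:threshold-rounding} and \autoref{lem:cleanup}). The reduction uses two complementary approaches: coordinate projection yields the factor $h$, while Gaussian projection yields the factor $\log d$. Taking the better of the two gives $\min\{h,\log d\}$.

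For the first approach, write $\bar{f} = (\bar{f}_1, \ldots, \bar{f}_h)$ in coordinates. Since $\sum_{i=1}^h (\bar{f}_i(u) - \bar{f}_i(v))^2 = \norm{\bar{f}(u) - \bar{f}(v)}^2$, each coordinate satisfies ${\cal E}(\bar{f}_i) \leq {\cal E}(\bar{f})$ under the given orientation, while $\sum_{i=1}^h \mu(\bar{f}_i) = \mu(\bar{f})$. An averaging argument shows there exists $i^*$ with $\mu(\bar{f}_{i^*}) \geq \mu(\bar{f})/h$, so the coordinate $y := \bar{f}_{i^*}$ satisfies ${\cal E}(y)/\mu(y) \leq h \cdot {\cal E}(\bar{f})/\mu(\bar{f})$. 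For the second approach, we mimic the argument in \autoref{prop:improved-projection} directly on $\bar{f}$: sample $h' \sim N(0,1)^h$ and set $y(v) := \inner{\bar{f}(v)}{h'}$. Applying \autoref{Gaussian-expected-max} to the (at most $d$) edges oriented into each vertex bounds the expected energy by $\lesssim \log d \cdot {\cal E}(\bar{f})$, and Markov's inequality then gives a numerator bound with probability close to $1$, while \autoref{Gaussian-denominator} gives $\mu(y) \geq \mu(\bar{f})/2$ with probability at least $1/12$. By union bound, with positive probability there is a realization of $y$ with ratio $\lesssim \log d \cdot {\cal E}(\bar{f})/\mu(\bar{f})$. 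Taking the better of the two approaches produces a one-dimensional $y$ with ratio $\lesssim \min\{h, \log d\} \cdot {\cal E}(\bar{f})/\mu(\bar{f})$.

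To apply \autoref{prop:threshold-rounding}, we need $y \geq 0$. Define $y^+(v) := \max\{y(v), 0\}$ and $y^-(v) := \max\{-y(v), 0\}$. Since $|y^+(u) - y^+(v)| \leq |y(u) - y(v)|$ pointwise (and likewise for $y^-$), we have ${\cal E}(y^+), {\cal E}(y^-) \leq {\cal E}(y)$ under the same orientation $\directed{E}$. Since $\mu(y^+) + \mu(y^-) = \mu(y)$, without loss of generality $\mu(y^+) \geq \mu(y)/2$, so the ratio at most doubles. Applying \autoref{prop:threshold-rounding} to the non-negative $y^+$ and the given orientation $\directed{E}$ yields $S \subseteq \supp(y^+) \subseteq \supp(\bar{f})$ with $\directed{\psi}(S)^2 \lesssim \min\{h, \log d\} \cdot {\cal E}(\bar{f})/\mu(\bar{f})$. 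We may assume the right-hand side is smaller than a sufficiently small constant (otherwise the conclusion is vacuous), so $\directed{\psi}(S) < 1/2$, and then \autoref{lem:cleanup} produces $S' \subseteq S$ in the underlying undirected graph with $\psi(S') \leq 2 \directed{\psi}(S)$, yielding $\psi(S')^2 \lesssim \min\{h, \log d\} \cdot {\cal E}(\bar{f})/\mu(\bar{f})$ as desired.

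The main obstacle is obtaining the $\min\{h,\log d\}$ factor rather than simply $\log d$. Since both the coordinate projection and the Gaussian projection deliver one-dimensional embeddings into the same downstream pipeline, combining them by taking whichever yields a better ratio resolves this cleanly; the remaining steps essentially reuse \autoref{prop:threshold-rounding} and \autoref{lem:cleanup} verbatim.
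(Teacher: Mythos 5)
Your proposal is correct and follows essentially the same route as the paper's own proof: take the best coordinate of $\bar{f}$ to lose only a factor $h$, or Gaussian-project as in \autoref{prop:improved-projection} to lose only a factor $\log d$, take whichever one-dimensional embedding is better, and then run \autoref{prop:threshold-rounding} followed by \autoref{lem:cleanup}. Your extra $y^+ / y^-$ splitting step to meet the non-negativity hypothesis of \autoref{prop:threshold-rounding} is a detail the paper leaves implicit, and you handle it correctly.
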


\begin{proof}
Given $\directed{E}$ and $\bar{f}$, we apply the Gaussian projection step in \autoref{prop:improved-projection} to obtain a $1$-dimensional embedding $x : V \to \R$ with ${\cal E}(x) / \mu(x) \lesssim \log d \cdot {\cal E}(\bar{f}) / \mu(\bar{f})$.
Alternatively, if $h \leq \log d$, we can choose the best coordinate from $\bar{f}$ to obtain a $1$-dimensional embedding $x : V \to \R$ with ${\cal E}(x) / \mu(x) \leq h \cdot {\cal E}(\bar{f}) / \mu(\bar{f})$ as was done in~\cite{OTZ22}.
So we have a $1$-dimensional embedding $x$ with ${\cal E}(x) / \mu(x) \lesssim \min\{h, \log d\} \cdot {\cal E}(\bar{f}) / \mu(\bar{f})$.

Then, we apply the threshold rounding step in \autoref{prop:threshold-rounding} on $x$ to obtain a set $S \subseteq \supp(x) \subseteq \supp(\bar{f})$ with $\directed{\psi}(S)^2 \lesssim {\cal E}(x) / \mu(x) \leq \min\{h,\log d\} \cdot {\cal E}(\bar{f}) / \mu(\bar{f})$.
Finally, we apply the postprocessing step in \autoref{lem:cleanup} to obtain a set $S' \subseteq S$ with $\psi(S') \leq 2\directed{\psi}(S)$ satisfying the requirements of this lemma.
\end{proof}

We are ready to put together the steps to prove the hard direction of the higher-order Cheeger inequality for vertex expansion.

\begin{theorem}[Hard Direction for Multiway Vertex Expansion] \label{thm:higher-order-hard}
Let $G = (V, E)$ be an undirected graph with maximum degree $d$ and $\pi$ be a probability distribution $\pi$ on $V$.
For any $2 \leq k \leq n$ and $0 \le \eps < 1$,
let $\xi := \max\{\eps,\frac{1}{2k}\}$, it holds that
\[
\psi_{(1-\eps)k}(G) \lesssim \frac{\log k}{\xi^4} \cdot \sqrt{\log d \cdot \sigma_k^*(G)}.
\]
\end{theorem}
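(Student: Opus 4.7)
The plan is to stitch together the four main tools developed in this section: the dual characterization \autoref{def:sigma-k-dual}, the dimension reduction \autoref{lem:dimension-reduction}, the spectral partitioning \autoref{lem:spectral-partitioning}, and the Cheeger rounding \autoref{lem:higher-order-rounding}. First, I would solve the dual program $\kappa(G) = \sigma_k^*(G)$ and pass to its directed version via \autoref{lem:orientation-kappa} to obtain an orientation $\directed{E}$ and an embedding $f : V \to \R^n$ with energy ${\cal E}(f) \lesssim \sigma_k^*(G)$, mass $\mu(f) = k$, and the sub-isotropy condition $\sum_v \pi(v) f(v) f(v)^T \preceq I_n$. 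By \autoref{lem:isotropy-and-spreading}, $f$ is $(\Delta, \frac{1}{k(1-\Delta^2)})$-spreading for every $\Delta \in [0,1)$.

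Next, I would set $\Delta \asymp \xi$ and $\delta \asymp \xi$ (both chosen as sufficiently small constant multiples of $\xi$) and apply \autoref{lem:dimension-reduction} at the maximal allowed dimension $h \asymp \frac{\log k}{\xi^2}$ (which is valid because $\eta \asymp 1/k$ and $\log(k/\xi) \lesssim \log k$, using $\xi \geq 1/(2k)$). This produces $\bar{f} : V \to \R^h$ with ${\cal E}(\bar{f}) \lesssim (1 + \log d / h) \cdot \sigma_k^*(G)$, mass $\mu(\bar{f}) \geq k/2$, and the $(\Delta/4, \frac{1}{k(1-\Delta)})$-spreading property precisely demanded by \autoref{lem:spectral-partitioning}. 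Then I would feed $\bar{f}$ into \autoref{lem:spectral-partitioning} with $l = \lfloor (1-\eps)k \rfloor$: a short calculation splitting on $\eps \geq 1/(2k)$ versus $\eps < 1/(2k)$ verifies that the feasibility condition $\Delta \leq 1 - \frac{2(l-1)}{2(1-\delta)k-1}$ holds with room to spare for the chosen $\Delta, \delta \asymp \xi$. This yields $l$ disjointly supported embeddings $\bar{f}_1, \ldots, \bar{f}_l$ with ${\cal E}(\bar{f}_i) \lesssim (1 + h/(\Delta \delta))^2 \cdot {\cal E}(\bar{f})$ and $\mu(\bar{f}_i) \gtrsim \mu(\bar{f})/k \gtrsim 1$.

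Finally, I would apply \autoref{lem:higher-order-rounding} to each $\bar{f}_i$ with orientation $\directed{E}$ to obtain pairwise disjoint sets $S_i \subseteq \supp(\bar{f}_i)$ satisfying
\[
\psi(S_i)^2 \lesssim \min\{h, \log d\} \cdot \frac{{\cal E}(\bar{f}_i)}{\mu(\bar{f}_i)} \lesssim \min\{h, \log d\} \cdot \Big( 1 + \frac{h}{\Delta \delta} \Big)^2 \Big( 1 + \frac{\log d}{h} \Big) \cdot \sigma_k^*(G).
\]
The key collapse is that $\min\{h, \log d\} \cdot (1 + \log d / h) \lesssim \log d$ regardless of whether $h \leq \log d$ or $h > \log d$. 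Combined with $\Delta \delta \asymp \xi^2$ and $h \asymp \log k / \xi^2$ giving $(1 + h/(\Delta \delta))^2 \asymp (\log k / \xi^4)^2$, the product becomes $\psi(S_i)^2 \lesssim \log d \cdot (\log k / \xi^4)^2 \cdot \sigma_k^*(G)$; taking square roots gives the advertised bound, and the disjointness of the $S_i$ witnesses the bound on $\psi_{(1-\eps)k}(G)$.

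The main obstacle is verifying feasibility of \autoref{lem:spectral-partitioning} in the small-$\eps$ regime $\eps < 1/(2k)$: the naive choice $\delta \asymp \eps$ becomes degenerate, so one must instead take $\delta \asymp 1/k$ and exploit the $\Theta(1/k)$ slack in the numerator $2k(\eps - \delta) + 1$ arising from the integer rounding $l = k - \Theta(1)$. A secondary subtlety lies in the parameter balance: shrinking $h$ helps both the partitioning blowup $(h/(\Delta \delta))^2$ and the Cheeger prefactor $\min\{h, \log d\}$ but inflates the dimension-reduction factor $(1 + \log d / h)$; the product $\min\{h, \log d\}(1 + \log d / h) \lesssim \log d$ telescopes uniformly in $h$, freeing us to choose $h$ purely to optimize the partitioning step.
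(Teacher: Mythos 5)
Your proposal is correct and follows essentially the same route as the paper's proof: the same pipeline of dual program plus orientation, Gaussian dimension reduction to $h \asymp \log k/\xi^2$, spectral partitioning with $\Delta,\delta \asymp \xi$ (handling the $\eps < \frac{1}{2k}$ regime with $\Delta,\delta \asymp 1/k$), and Cheeger rounding, including the same collapse $\min\{h,\log d\}\cdot(1+\log d/h) \lesssim \log d$. The only cosmetic difference is that the paper fixes $\Delta = \min\{\frac12, \frac{2(k-l)+1}{2(k+l)-3}\}$ and $\delta = \frac{2(k-l)+1}{4k}$ explicitly rather than "sufficiently small constant multiples of $\xi$," which is the same choice up to constants.
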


\begin{proof}
The first step is to compute an optimal solution $(f^*,g^*)$ to the dual program in \autoref{def:sigma-k-dual} with objective value $\kappa(G) = \sigma_k^*(G)$.
Then, we use \autoref{lem:orientation-kappa} to obtain a solution $f : V \to \R^n$ to the directed program $\directed{\kappa}(G)$ with energy ${\cal E}(f) \leq 2\sigma_k^*(G)$ and $\mu(f) = k$.
As $f$ satisfies the sub-isotropy condition $\sum_u \pi(u) f(u) f(u)^T \preceq I_n$ in $\directed{\kappa}(G)$, we know from Proposition \ref{lem:isotropy-and-spreading} that $f$ is $\big(\Delta, 1/\big(k(1-\Delta^2)\big)\big)$-spreading for any $\Delta \in (0, 1)$ of our choice.

The second step is to apply the Gaussian projection algorithm in \autoref{lem:dimension-reduction} on $f$ with $\eta = 1/\big(k(1-\Delta^2)\big)$ to obtain $\bar{f}: V \rightarrow \R^h$ with
\[
h \lesssim \frac{1}{\Delta^2} \log \Big(\frac{1}{\Delta \eta} \Big) 
= \frac{1}{\Delta^2} \log \Big(\frac{k (1 - \Delta^2)}{\Delta} \Big)
\]
such that
\[
{\cal E}(\bar{f}) \lesssim \Big(1 + \frac{\log d}{h}\Big) \cdot {\cal E}(f)
\quad {\rm~and~} \quad 
\mu(\bar{f}) \gtrsim \mu(f)
\quad {\rm~and~} \quad
\bar{f} {\rm~is~} \Big(\frac{\Delta}{4}, \frac{1}{k(1-\Delta)}\Big){\rm-spreading}.
\]

The third step is to apply the spectral partitioning algorithm in \autoref{lem:spectral-partitioning} on $f$.
Let $l := (1-\eps)k$ be the target number of output sets.
By setting
\[
\Delta = \min \Big\{ \frac{1}{2}, \frac{2(k-l) + 1}{2(k+l) - 3} \Big\} 
\quad {\rm and} \quad 
\delta = \frac{2(k-l)+1}{4k},
\]
we can check that the conditions of \autoref{lem:spectral-partitioning} are satisfied, and so we can construct functions $\bar{f}_1, \dots, \bar{f}_l: V \rightarrow \R^h$ with disjoint support, such that for each $1 \leq i \leq l$ it holds that
\[
{\cal E}(\bar{f}_i) \lesssim \Big(1 + \frac{h}{\Delta \delta} \Big)^2 \cdot {\cal E}(\bar{f}) 
\quad {\rm and} \quad 
\mu(\bar{f}_i) \gtrsim \frac{1}{k} \mu(\bar{f}).
\]

The fourth step is to apply \autoref{lem:higher-order-rounding} to $\bar{f}_1, \ldots \bar{f}_l$ to obtain disjoint subsets $S_1, \dots, S_l$, such that for every $1 \leq i \leq l$,
\begin{eqnarray*}
\psi(S_i)^2
&\lesssim&
\min\{h, \log d\} \cdot \frac{{\cal E}(\bar{f}_i)}{\mu(\bar{f}_i)}
\\
&\lesssim&
\min\{h, \log d\} \cdot k \cdot \Big(1 + \frac{h}{\Delta \delta} \Big)^2 \cdot \frac{{\cal E}(\bar{f})}{\mu(\bar{f})}
\\
&\lesssim&
\min\{h, \log d\} \cdot \Big(1 + \frac{\log d}{h}\Big) \cdot \Big(1 + \frac{h}{\Delta \delta}\Big)^2 \cdot k \cdot \frac{{\cal E}(f)}{\mu(f)}
\\
&\lesssim&
\log d \cdot \Big( 1 + \frac{h}{\Delta \delta} \Big)^2 \cdot \sigma_k^*(G)
\\
&\lesssim&
\log d \cdot \frac{1}{\Delta^6 \delta^2} \cdot \log^2 \Big(\frac{k (1 - \Delta^2)}{\Delta} \Big) \cdot \sigma_k^*(G),
\end{eqnarray*}
where the fourth inequality uses that $\min\{h, \log d\} \cdot \big(1 + \frac{\log d}{h}\big) \le 2 \log d$ and $\mu(f) = k$, and the inequality uses the fact that $\frac{h}{\Delta \delta} \ge 1$.
This implies that
\[
\psi_l(G) \lesssim \frac{1}{\Delta^3 \delta} \log\Big( \frac{k}{\Delta} \Big) \cdot \sqrt{\log d \cdot \sigma_k^*(G)}.
\]
Finally, we plug in $l=(1-\eps)k$ and consider two cases.
In the case when $\eps \geq \frac{1}{2k}$,
we see that $\Delta = \Theta(\eps)$ and $\delta = \Theta(\eps)$, and so $\psi_l(G) \leq \frac{1}{\eps^4} \cdot \log k \cdot \sqrt{\log d \cdot \sigma_k^*(G)}$.
In the case when $\eps < \frac{1}{2k}$, we simply set $l = k$ and see that $\Delta = \Theta(1/k)$ and $\delta = \Theta(1/k)$ and so $\psi_l(G) \leq k^4 \cdot \log k \cdot \sqrt{\log d \cdot \sigma_k^*(G)}$. 
Combining the two cases proves the theorem.
\end{proof}

We prove the following easy direction in \autoref{app:higher-order-vertex}.
Note that it is about $\lambda_k^*(G)$ instead of $\sigma_k^*(G)$.

\begin{lemma}[Easy Direction for Multiway Vertex Expansion] \label{lem:higher-order-easy}
For any undirected graph $G = (V, E)$ and any probability distribution $\pi$ on $V$,
$\lambda_k^*(G) \leq 2\psi_k(G)$ for any $k \geq 2$.
\end{lemma}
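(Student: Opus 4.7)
The plan is to apply the Courant--Fischer variational characterization of the $k$-th smallest eigenvalue to the indicator functions of the optimal $k$-way partition, adapting the classical easy direction of Cheeger's inequality.

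If $\psi_k(G) = 1$ the bound is immediate, since the spectrum of any reweighted normalized Laplacian $I - P$ lies in $[0, 2]$. Otherwise, let $S_1, \dots, S_k \subseteq V$ be pairwise disjoint nonempty sets witnessing $\max_i \psi(S_i) = \psi_k(G) < 1$. It suffices to show $\lambda_k(I - P) \le 2\psi_k(G)$ for every $P$ feasible in the definition of $\lambda_k^*(G)$; maximizing over $P$ then gives the theorem. Since $P$ is $\pi$-reversible, $I - P$ is similar to the symmetric matrix $I - \Pi^{1/2} P \Pi^{-1/2}$, and its Rayleigh quotient in the $\pi$-weighted inner product is
\[
R_P(x) ~=~ \frac{\tfrac{1}{2} \sum_{u, v} \pi(u) P(u, v) \big( x(u) - x(v) \big)^2}{\sum_v \pi(v) x(v)^2}.
\]

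By Courant--Fischer, $\lambda_k(I - P) \le \max_{x \in U \setminus \{0\}} R_P(x)$ for any $k$-dimensional subspace $U$, so I take $U := \sspan(\one_{S_1}, \dots, \one_{S_k})$, which is $k$-dimensional because the supports are pairwise disjoint. For $x = \sum_i c_i \one_{S_i}$, the denominator equals $\sum_i c_i^2 \pi(S_i)$. Edges entirely inside a single $S_i$ or entirely outside $\bigcup_l S_l$ contribute nothing to the numerator; for a cross-edge $uv$ with $u \in S_i$ and $v \in S_j$, $j \ne i$, I will bound its contribution $\pi(u) P(u, v)(c_i - c_j)^2$ using $(c_i - c_j)^2 \le 2(c_i^2 + c_j^2)$, and after a symmetric regrouping via $\pi(u)P(u,v) = \pi(v)P(v,u)$ the numerator is at most $2 \sum_i c_i^2 \cdot E_P(S_i, V \setminus S_i)$, where $E_P(A, B) := \sum_{u \in A, v \in B, uv \in E} \pi(u) P(u, v)$.

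Finally, using $\pi$-reversibility once more together with the row-sum constraint $\sum_{u: uv \in E} P(v, u) \le 1$, I will bound
\[
E_P(S_i, V \setminus S_i) ~=~ \sum_{v \in \partial S_i} \sum_{u \in S_i,\, uv \in E} \pi(v) P(v, u) ~\le~ \sum_{v \in \partial S_i} \pi(v) ~=~ \psi(S_i)\, \pi(S_i).
\]
Putting everything together yields $R_P(x) \le 2 \max_i \psi(S_i) = 2 \psi_k(G)$ for every $x \in U$, hence $\lambda_k(I - P) \le 2\psi_k(G)$, as required. The only real subtlety is the bookkeeping for cross-edges between distinct parts: the splitting $(c_i - c_j)^2 \le 2(c_i^2 + c_j^2)$ is precisely where the factor of $2$ in the statement is paid, and this mirrors the corresponding step in the $k = 2$ easy direction (\autoref{lem:Cheeger-easy}).
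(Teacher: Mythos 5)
Your proof is correct. It starts from the same data as the paper's proof in the appendix — the optimal disjoint sets $S_1,\dots,S_k$ and their indicators, with the same dispatch of the degenerate case $\psi_k(G)=1$ — but the variational step is genuinely different. The paper rewrites $\lambda_k^*(G)$ as $\max_Q \min \max_{i} f_i^T(\Pi-Q)f_i$ over $\pi$-orthonormal $k$-tuples, plugs in the normalized indicators $\one_{S_i}/\sqrt{\pi(S_i)}$, and bounds each quadratic form \emph{separately} by $\psi(S_i)$, which yields the sharper conclusion $\lambda_k^*(G)\le\psi_k(G)$ with no factor $2$. You instead invoke the subspace (Courant--Fischer) form of the min--max principle on $U=\sspan(\one_{S_1},\dots,\one_{S_k})$, which forces you to control arbitrary linear combinations and hence the cross-edges between distinct parts; the splitting $(c_i-c_j)^2\le 2(c_i^2+c_j^2)$ is exactly where your factor $2$ is paid, and the rest of your bookkeeping (reversibility, the row-sum bound giving $E_P(S_i,V\setminus S_i)\le\pi(\partial S_i)=\psi(S_i)\pi(S_i)$) matches the paper's computation. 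What each route buys: the paper's per-function bound saves the constant, but it rests on the inequality $\lambda_k(I-P)\le\max_i f_i^T(\Pi-Q)f_i$ for an \emph{arbitrary} $\pi$-orthonormal tuple, which is more delicate than the subspace statement (for a general symmetric matrix, the minimum over orthonormal tuples of the maximum of the individual quadratic forms can fall strictly below $\lambda_k$, so that formulation needs care); your subspace argument is the textbook application of Courant--Fischer, is fully self-contained, and delivers exactly the stated bound $2\psi_k(G)$ at the price of that constant.
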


Combining \autoref{lem:higher-order-easy} and \autoref{thm:higher-order-hard}, we conclude this section with the following higher-order Cheeger inequality for vertex expansion that implies \autoref{thm:higher-order-vertex}.

\begin{theorem}[Higher-Order Cheeger Inequality for Vertex Expansion] \label{thm:higher-order-vertex-refined}
For any undirected graph $G = (V,E)$ with maximum degree $d$ and any probability distribution $\pi$ on $V$,
\[
\frac{\sigma_k^*(G)}{k} 
\leq \lambda_k^*(G) 
\lesssim \psi_k(G) 
\lesssim k^4 \log k \sqrt{\log d \cdot \sigma_k^*(G)}
\leq k^{\frac{9}{2}} \log k \sqrt{\log d \cdot \lambda_k^*(G)}
\]
Furthermore, for any $1 > \eps \geq \frac{1}{2k}$, 
\[
\psi_{(1-\eps)k}(G) 
\lesssim \frac{1}{\eps^4} \log k \sqrt{\log d \cdot \sigma_{k}^*(G)} 
\leq\frac{1}{\eps^4} \log k \sqrt{k \log d \cdot \lambda_{k}^*(G)} .
\]
\end{theorem}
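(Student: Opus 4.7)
The plan is to stitch together three pieces already established in the paper: the elementary sandwich $\lambda_k^*(G) \leq \sigma_k^*(G) \leq k \cdot \lambda_k^*(G)$ noted in \autoref{def:reweighted-sum}, the easy direction $\lambda_k^*(G) \lesssim \psi_k(G)$ from \autoref{lem:higher-order-easy}, and the hard direction $\psi_{(1-\eps)k}(G) \lesssim \frac{\log k}{\xi^4}\sqrt{\log d \cdot \sigma_k^*(G)}$ with $\xi = \max\{\eps,\tfrac{1}{2k}\}$ from \autoref{thm:higher-order-hard}. All the heavy lifting is already inside \autoref{thm:higher-order-hard}; what remains is to pick parameters correctly and to translate between $\sigma_k^*$ and $\lambda_k^*$ where needed.

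For the main chain of inequalities, the leftmost bound $\sigma_k^*(G)/k \leq \lambda_k^*(G)$ is just the right half of the sandwich from \autoref{def:reweighted-sum}, and the next bound $\lambda_k^*(G) \lesssim \psi_k(G)$ is exactly \autoref{lem:higher-order-easy}. For $\psi_k(G) \lesssim k^4 \log k \sqrt{\log d \cdot \sigma_k^*(G)}$ I invoke \autoref{thm:higher-order-hard} in the regime $\eps < \tfrac{1}{2k}$ (equivalently, take $l = k$), so that $\xi = \tfrac{1}{2k}$ and $\xi^{-4} \asymp k^4$, supplying the $k^4$ prefactor. Finally, to convert the rightmost inequality from $\sigma_k^*(G)$ back to $\lambda_k^*(G)$, substitute $\sigma_k^*(G) \leq k \cdot \lambda_k^*(G)$ under the square root; this produces an extra factor of $\sqrt{k}$ and yields the claimed $k^{9/2}$ prefactor.

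For the ``furthermore'' statement, the hypothesis $\eps \geq \tfrac{1}{2k}$ forces $\xi = \eps$ in \autoref{thm:higher-order-hard}, so the bound reads $\psi_{(1-\eps)k}(G) \lesssim \frac{\log k}{\eps^4}\sqrt{\log d \cdot \sigma_k^*(G)}$ directly; the second inequality is then obtained by substituting $\sigma_k^*(G) \leq k \cdot \lambda_k^*(G)$ under the radical, moving the $\sqrt{k}$ inside.

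There is essentially no technical obstacle at this stage: every analytic step (the SDP dual via von Neumann's minimax, the Gaussian projection preserving sub-isotropy and energy, the padded-decomposition based spectral partitioning, and the final Cheeger rounding through $\directed{\gamma}^{(1)}$) is packaged into \autoref{thm:higher-order-hard} and \autoref{lem:higher-order-easy}. The only point worth double-checking is that the case analysis at the end of the proof of \autoref{thm:higher-order-hard} cleanly supplies the $k^4$ factor when $l = k$ (rather than only the sharper $\eps^{-4}$ factor for $\eps$ bounded away from $1/(2k)$), which is what accounts for the $k^{9/2}$ prefactor in the main chain as opposed to the cleaner $\eps^{-4}$ prefactor in the furthermore statement.
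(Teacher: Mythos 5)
Your proposal is correct and is essentially the paper's own proof: the theorem is obtained exactly by combining the sandwich $\lambda_k^*(G) \leq \sigma_k^*(G) \leq k\,\lambda_k^*(G)$ from \autoref{def:reweighted-sum}, the easy direction \autoref{lem:higher-order-easy}, and \autoref{thm:higher-order-hard} (with $\xi = 1/(2k)$, i.e.\ $l=k$, for the main chain and $\xi=\eps$ for the furthermore part), then moving the factor $k$ under the square root to pass from $\sigma_k^*$ to $\lambda_k^*$. Your parameter bookkeeping, including the $k^4$ versus $\eps^{-4}$ distinction and the resulting $k^{9/2}$ prefactor, matches the paper's case analysis.
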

\section{Improved Cheeger Inequality for Vertex Expansion} \label{sec:improved-Cheeger-vertex}

The goal of this section is to prove \autoref{thm:improved-Cheeger-vertex}.
The proof in~\cite{KLLOT13} has two main steps.
The first step is to prove that if the second eigenfunction is close to a $k$-step function, then the approximation guarantee of threshold rounding is improved.
The second step is to prove that if $\lambda_k$ is large for a small $k$, then the second eigenfunction is close to a $k$-step function.

We follow the a similar plan to prove an improved version of the Cheeger inequality for $\gamma^{(1)}(G)$ in \autoref{thm:weighted-Cheeger}, where in the second step we replace $\lambda_k$ by $\sigma_k^*(G)$.
First we begin with the definition of a $k$-step function.

\begin{definition}[$k$-Step Function and Approximation] \label{def:k-step}
Let $G = (V, E)$ be an undirected graph and $\pi$ be a probability distribution on $V$.
Given $y: V \rightarrow \R$ and $1 \le k \le n$, we call $y$ a $k$-step function if the number of distinct values in $\{y(u)\}_{u \in V}$ is at most $k$.

Given $x: V \to \R$, we say $y$ is a $k$-step $\eps$-approximation to $x$ if $y$ is a $k$-step function and $\norm{x-y}_{\pi} \leq \eps$, where $\norm{z}_{\pi}^2 := \sum_{v \in V} \pi(v) z(v)^2$ for a vector $z : V \to \R$.
\end{definition}

The following is the precise statement of the first step for $\gamma^{(1)}(G)$, which informally says that if there is a good $k$-step approximation to an optimal solution to $\gamma^{(1)}(G)$ then the performance of threshold rounding is better than that in \autoref{thm:weighted-Cheeger}.

\begin{proposition}[Rounding $k$-Step Approximation] \label{prop:k-step-rounding}
Let $G = (V, E)$ be an undirected graph and $\pi$ be a probability distribution on $V$. 
For any feasible solution $(f, g)$ to the $\gamma^{(1)}(G)$ program with objective value $\gamma_f$ and any $k$-step function $y_f: V \rightarrow \R$ approximating $f$,
\[
  \psi(G) \lesssim k \cdot \gamma_f + k \norm{f - y_f}_{\pi} \sqrt{\gamma_f}.
\]
\end{proposition}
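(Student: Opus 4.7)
The approach adapts the improved Cheeger argument of~\cite{KLLOT13} for edge conductance to the vertex expansion setting, working through the directed dual program $\directed{\gamma}^{(1)}(G)$ from \autoref{def:directed-gamma}. By \autoref{lem:factor2} I would convert $(f, g)$ to a feasible solution of $\directed{\gamma}^{(1)}(G)$ with objective value at most $2\gamma_f$ via some orientation $\directed{E}$, then apply the truncation of \autoref{lem:truncation} to obtain $x \geq 0$ with $\pi(\supp x) \leq 1/2$; truncating $y_f$ in parallel yields a $k$-step function $y$ with $\norm{x - y}_\pi \lesssim \norm{f - y_f}_\pi$ and $\mu(x) \asymp 1$.

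The core step is to replace the continuous averaging over thresholds used in \autoref{prop:threshold-rounding} by a weighted average over at most $k$ carefully chosen threshold values, one inside each interval $[v_j^2, v_{j+1}^2]$ determined by the distinct values $v_1 < v_2 < \cdots < v_k$ of $y$. Concretely, sample each $t_j$ from a density supported on $[v_j^2, v_{j+1}^2]$, form the level sets $S_{t_j} = \{u : x(u)^2 > t_j\}$, and bound the total vertex boundary $\sum_j \pi(\directed{\partial} S_{t_j})$ by charging each contributing vertex $v$ to a witness neighbor $u$ with $uv \in \directed{E}$, exactly as in the proof of \autoref{prop:threshold-rounding}. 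Decompose the charged pairs $(v, u)$ into two classes: (a) those with $y(u) = y(v)$, for which both $x(u)^2$ and $x(v)^2$ lie near a common $v_j^2$ and at most one threshold can separate them, so using $|x(u)^2 - x(v)^2| \leq (x(u) - x(v))^2 + 2|x(v)||x(u) - x(v)|$ and replacing $|x(v)|$ by $|y(v)| + |x(v) - y(v)|$ gives an aggregate of order $k\gamma_f + k \norm{x-y}_\pi \sqrt{\gamma_f}$ after Cauchy--Schwarz; (b) those with $y(u) \neq y(v)$, for which the triangle inequality $|x(u) - x(v)| \leq |x(u) - y(u)| + |y(u) - y(v)| + |y(v) - x(v)|$ combined with Cauchy--Schwarz yields an aggregate of order $k \norm{x - y}_\pi \sqrt{\gamma_f}$. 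Pigeonholing over the $k$ thresholds and dividing by $\mu(x) \asymp 1$ produces some $j$ with $\directed{\psi}(S_{t_j}) \lesssim k \gamma_f + k \norm{x - y}_\pi \sqrt{\gamma_f}$; \autoref{lem:cleanup} then converts this to $\psi(S') \leq 2 \directed{\psi}(S_{t_j})$ in the underlying graph.

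\textbf{Main obstacle.} The chief difficulty is that the directed vertex boundary involves a max over neighbors rather than a sum over edges, so contributions do not decompose additively and double-counting must be avoided by the witness-neighbor charging. In particular, within a single $y$-level one must obtain an $O(k\gamma_f)$ contribution rather than $O(k\sqrt{\gamma_f})$ (which would give no improvement over \autoref{thm:weighted-Cheeger}); the key is that the $|x(v)|$ factor arising in the quadratic decomposition can be replaced by $|x(v) - y(v)| + |y(v)|$, where the $|y(v)|$ part is absorbed into the $\gamma_f$ term because at most one threshold falls in each $y$-level, and the $|x(v) - y(v)|$ part is absorbed into the $\norm{x-y}_\pi \sqrt{\gamma_f}$ term by Cauchy--Schwarz. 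Designing the threshold-sampling density so as to balance these contributions cleanly across levels and correctly account for the factor of $k$ is the most delicate part of the argument.
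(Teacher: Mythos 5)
Your preprocessing (passing to $\directed{\gamma}^{(1)}(G)$ via \autoref{lem:factor2}, truncating $x$ and $y_f$ in parallel) and your endgame via \autoref{lem:cleanup} match the paper, but the core threshold step has a genuine gap. You draw one threshold per interval $[v_j^2,v_{j+1}^2]$ from an \emph{unspecified} density and then assert (a) that same-level pairs contribute only $k\gamma_f + k\norm{x-y}_{\pi}\sqrt{\gamma_f}$ in aggregate because ``at most one threshold can separate them'', and (b) that cross-level pairs contribute $k\norm{x-y}_{\pi}\sqrt{\gamma_f}$. Neither follows: what matters is not how many thresholds can separate a pair but the \emph{probability} that one does, and that is governed entirely by the density you have deferred. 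With, say, uniform densities on each interval, a same-level pair $u\to v$ is separated with probability about $|x(u)^2-x(v)^2|/(v_{j+1}^2-v_j^2)\approx 2|y(v)|\,|x(u)-x(v)|/(v_{j+1}^2-v_j^2)$, so the $|y(v)|$-weighted cross term survives with an uncontrolled factor $|y(v)|/\mathrm{gap}$ --- exactly the $\sqrt{\gamma_f}$-type term the improved inequality must avoid; and a cross-level pair with $x(u)=y(u)\neq y(v)=x(v)$ is separated with probability essentially $1$ for \emph{any} density on the intervening interval, which is certainly not $\lesssim k\norm{x-y}_{\pi}\sqrt{\gamma_f}$ (that quantity can be $0$). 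The final step is also off: the expansion denominator is $\pi(S_{t_j})$, not $\mu(x)$, so ``dividing by $\mu(x)\asymp 1$'' is not legitimate; one needs a lower bound such as $\E\big[\sum_j \pi(S_{t_j})\big]\gtrsim \mu(x)/O(1)$ (or its single-threshold analogue), and with per-interval-normalized densities this is precisely the kind of statement that fails without further design, since an interval of huge $x^2$-length receives the same unit of threshold mass as a tiny one.

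The paper resolves both difficulties with the one device your sketch would have to rediscover: a single random threshold $t\in[0,x_{\max}]$ on the $x$-scale (with $S_t=\{v: x(v)>t\}$), drawn with density proportional to $\rho(t):=\min_i |t-y_i|$ and globally normalized by $Z=\int_0^{x_{\max}}\rho$. Because $\rho$ is $1$-Lipschitz and $\rho(x(v))\leq |x(v)-y(v)|$, the probability that a directed edge $u\to v$ is cut is at most $\frac{1}{Z}\big(|x(u)-x(v)|\,\rho(x(v))+\tfrac12 (x(u)-x(v))^2\big)$, which after Cauchy--Schwarz gives the numerator bound $\frac{1}{Z}\big(\gamma_f+\norm{x-y}_{\pi}\sqrt{\gamma_f}\big)$ with no case split at all; and the elementary estimate $\int_0^{x(v)}\rho(t)\,dt \gtrsim x(v)^2/k$ gives $\E_t[\pi(S_t)]\gtrsim \frac{1}{kZ}$, which is where the factor $k$ and the correct normalization by the level-set measure come from. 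Your stated desideratum --- a density that is small near the step values yet loses only a factor $k$ in the denominator --- \emph{is} this construction; since you leave it unspecified, the two aggregate bounds at the heart of your argument remain unsupported, and as stated for natural density choices they are false.
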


Our second step is to prove that if $\sigma_k^*(G)$ in \autoref{def:sigma-k-dual} is large for a small $k$, then there is a good $k$-step approximation to a good solution to $\gamma^{(1)}(G)$.

\begin{proposition}[Constructing $k$-Step Approximation] \label{prop:k-step-construction}
Let $G = (V, E)$ be an undirected graph and $\pi$ be a probability distribution on $V$. 
For any feasible solution $(f, g)$ to the $\gamma^{(1)}(G)$ program with objective value $\gamma_f$,
there exists a $k$-step function $y$ with
\[
\norm{f - y}_{\pi}^2 \lesssim \frac{k \cdot \gamma_f}{\sigma_k^*(G)}.
\] 
\end{proposition}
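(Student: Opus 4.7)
The plan is to prove the statement via the dual SDP $\kappa(G) = \sigma_k^*(G)$ of \autoref{def:sigma-k-dual}. Given the feasible solution $(f, g_f)$ to $\gamma^{(1)}(G)$ with objective $\gamma_f$, I construct from a suitable partition of $V$ a feasible solution $(F, g)$ to $\kappa(G)$ whose small objective value forces the induced $k$-step approximation to be close to $f$ in $\|\cdot\|_\pi$.

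Concretely, I sort the vertices by $f$-value and pick $k-1$ thresholds to partition $V = V_1 \sqcup \cdots \sqcup V_k$. Let $\mu_i$ be the $\pi$-mean of $f$ on $V_i$, take $y(v) := \mu_i$ for $v \in V_i$, and set $r_i := (f - \mu_i)\, \mathbbm{1}_{V_i}$, so that the $r_i$ have disjoint supports with $\sum_i \|r_i\|_\pi^2 = \|f - y\|_\pi^2 =: \eta^2$. Define $F : V \to \R^n$ by $F(v) := \|r_i\|_\pi^{-1} r_i(v)\, e_i$ for $v \in V_i$ (where $e_i$ is the $i$-th standard basis vector of $\R^n$). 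Disjoint supports give $\sum_v \pi(v) F(v) F(v)^T = \diag(1,\ldots,1,0,\ldots,0) \preceq I_n$ and mass $\sum_v \pi(v) \|F(v)\|^2 = k$, so the sub-isotropy and mass constraints of $\kappa(G)$ hold. Setting
\[
g(v) := \|r_i\|_\pi^{-2} \bigl( g_f(v) + (f(v) - \mu_i)^2 \bigr) \quad \text{for } v \in V_i,
\]
one verifies $g(u) + g(v) \ge \|F(u) - F(v)\|^2$ on every edge: within-part edges use $g_f(u)+g_f(v) \ge (f(u)-f(v))^2$ together with $\|F(u)-F(v)\|^2 = \|r_i\|_\pi^{-2}(f(u) - f(v))^2$, while cross-part edges are dominated by the added $(f(\cdot)-\mu_\cdot)^2$ terms together with $g_f \ge 0$. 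The objective evaluates to $\sum_v \pi(v) g(v) = k + \sum_i G_i/\|r_i\|_\pi^2$ with $G_i := \sum_{v \in V_i} \pi(v) g_f(v)$ and $\sum_i G_i = \gamma_f$, so feasibility of $(F, g)$ in $\kappa(G)$ yields
\[
\sigma_k^*(G) \;\le\; k \;+\; \sum_{i=1}^k \frac{G_i}{\|r_i\|_\pi^2}.
\]

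The remaining step is to choose the thresholds so that every $\|r_i\|_\pi^2 \gtrsim \eta^2/k$: then $\sum_i G_i/\|r_i\|_\pi^2 \lesssim k\gamma_f/\eta^2$, giving $\sigma_k^*(G) \lesssim k + k\gamma_f/\eta^2$ and hence $\eta^2 \lesssim k\gamma_f/\sigma_k^*(G)$ when $\sigma_k^*(G) \ge \Omega(k)$; the complementary regime $\sigma_k^*(G) = O(k)$ is handled by the trivial $y \equiv 0$, for which $\|f - y\|_\pi^2 = 1$ already matches $k\gamma_f/\sigma_k^*(G)$ up to a constant in the useful range. The main obstacle is producing this balanced partition, since the optimal $k$-step partition need not equalize the $\|r_i\|_\pi^2$: I would use an iterative threshold-sweep procedure in the style of~\cite{KLLOT13}, merging under-weight parts with neighbors and re-splitting oversized parts while arguing that $\eta^2$ grows by at most a constant factor per iteration.
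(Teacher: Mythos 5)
There is a genuine gap, and it is fatal to the approach as stated. Your certificate for $\sigma_k^*(G)$ carries an additive $k$: because the residuals $r_i = (f-\mu_i)\mathbbm{1}_{V_i}$ do not vanish at the band boundaries, a cross-band edge $uv$ can have $\norm{F(u)-F(v)}^2$ far larger than any multiple of $(f(u)-f(v))^2$, and your fix --- adding $(f(v)-\mu_i)^2$ to $g$ --- contributes exactly $\sum_i \norm{r_i}_\pi^{-2}\norm{r_i}_\pi^2 = k$ to the objective. But $\sigma_k^*(G) \le 2k$ unconditionally (each eigenvalue of $I-P$ is at most $2$), so the inequality $\sigma_k^*(G) \le k + \sum_i G_i/\norm{r_i}_\pi^2$ is nearly vacuous: from $\sigma_k^* \lesssim k + k\gamma_f/\eta^2$ you can extract $\eta^2 \lesssim k\gamma_f/\sigma_k^*$ only when $\sigma_k^*$ exceeds the additive $k$ by a constant factor, a regime that essentially never occurs. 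The fallback $y \equiv 0$ (so $\eta^2 = 1$) works only if $\sigma_k^* \lesssim k\gamma_f$, which fails precisely where the proposition has content: e.g.\ two expanders joined at a vertex with $k=3$ has $\gamma_f$ arbitrarily small while $\sigma_3^* \ge \lambda_3^* = \Omega(1) = O(k)$, yet the proposition demands $\norm{f-y}_\pi^2 \lesssim \gamma_f$. So your case analysis does not cover the intended applications. The balanced-partition step ($\norm{r_i}_\pi^2 \gtrsim \eta^2/k$ for all $i$) is also only sketched, but even granting it the additive $k$ destroys the bound.

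The paper's proof removes the additive term by making the disjointly supported functions vanish at the thresholds: with thresholds $y_1\le\cdots\le y_k$ and bands $S_i=\{v: y_{i-1}<f(v)\le y_i\}$, it sets $\bar f_i(v) := \min\{|y_i-f(v)|,\,|f(v)-y_{i-1}|\}$, a tent over the band, for which $\sum_i(\bar f_i(u)-\bar f_i(v))^2 \le (f(u)-f(v))^2$ holds edge-wise, within and across bands; hence $\bar g := g/\mu$ is already feasible for \autoref{def:sigma-k-dual} and the objective is exactly $\gamma_f/\mu$, with no $+k$. The thresholds are chosen greedily so that each band has tent-mass exactly $\mu := 2\gamma_f/\sigma_k^*(G)$ (this also replaces your balancing step); if $k$ bands fail to cover $V$, the resulting feasible solution has objective $\gamma_f/\mu = \sigma_k^*(G)/2 < \sigma_k^*(G)$, a contradiction, so the greedy process succeeds and the nearest-threshold step function satisfies $\norm{f-y}_\pi^2 \le k\mu$. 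To salvage your mean-based construction you would need exactly this vanishing-at-boundary (edge-wise Lipschitz decomposition) property, which the tent functions provide and the centered residuals do not.
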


Assuming \autoref{prop:k-step-rounding} and \autoref{prop:k-step-construction},
we prove an exact analog of the improved Cheeger's inequality in~\cite{KLLOT13} for vertex expansion, with $\sigma_k^*(G) / k$ playing the role of $\lambda_k^*(G)$ .

\begin{theorem}[Improved Cheeger Inequality for Vertex Expansion] \label{thm:improved-Cheeger-vertex-expansion}
For any undirected graph $G = (V,E)$ and any probability distribution $\pi$ on $V$ and any $k \geq 2$,
\[
\gamma^{(1)}(G) \lesssim \psi(G) 
\lesssim k \cdot \gamma^{(1)}(G) \cdot \sqrt{ \frac{k}{\sigma_k^*(G)} }.
\] 
\end{theorem}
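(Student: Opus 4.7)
The two propositions above form a perfect pair: \autoref{prop:k-step-construction} produces a good $k$-step approximation precisely when $\sigma_k^*(G)$ is large, while \autoref{prop:k-step-rounding} turns any such approximation into an improved rounding bound. My plan is simply to chain the two, starting from an optimal solution to $\gamma^{(1)}(G)$. The easy direction $\gamma^{(1)}(G) \lesssim \psi(G)$ is already established by \autoref{lem:Cheeger-easy}, so I focus on the hard direction.

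For the hard direction, I would first compute an optimal solution $(f,g)$ to $\gamma^{(1)}(G)$ whose objective value is $\gamma_f = \gamma^{(1)}(G)$. By \autoref{prop:k-step-construction} there exists a $k$-step function $y:V\to\R$ with
\[ \norm{f-y}_\pi^2 \lesssim \frac{k \cdot \gamma^{(1)}(G)}{\sigma_k^*(G)}. \]
Feeding $(f,g)$ together with this $y$ into \autoref{prop:k-step-rounding} yields
\[ \psi(G) \lesssim k \cdot \gamma^{(1)}(G) + k \cdot \norm{f-y}_\pi \cdot \sqrt{\gamma^{(1)}(G)} \lesssim k \cdot \gamma^{(1)}(G) \left(1 + \sqrt{\frac{k}{\sigma_k^*(G)}}\right). \]

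To match the form stated in the theorem, I would absorb the ``$1$'' inside the parentheses. Since every eigenvalue of the normalized Laplacian of any reweighted graph on $G$ lies in $[0,2]$, we have $\lambda_k^*(G) \le 2$, and then $\sigma_k^*(G) \le k \cdot \lambda_k^*(G) \le 2k$ by \autoref{def:reweighted-sum}. Consequently $\sqrt{k/\sigma_k^*(G)} \ge 1/\sqrt{2}$, so the ``$1$'' is dominated by $\sqrt{k/\sigma_k^*(G)}$ up to a constant factor, and
\[ \psi(G) \lesssim k \cdot \gamma^{(1)}(G) \cdot \sqrt{\frac{k}{\sigma_k^*(G)}}, \]
which is exactly the hard direction.

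Given the two propositions, the theorem itself is essentially an assembly, and the only real subtlety at this level is the observation that $\sigma_k^*(G)/k$ is bounded above, which is what lets the additive $k\gamma^{(1)}(G)$ term be dropped. The genuine difficulty sits inside the two propositions, and in particular inside \autoref{prop:k-step-construction}: one must produce a scalar $k$-step approximation to a $1$-dimensional dual solution whose error is controlled by the reweighted eigenvalue quantity $\sigma_k^*(G)$, paralleling the way $\lambda_k$ controls $k$-step approximations in the edge-conductance argument of~\cite{KLLOT13}. That is the step I would expect to require the most care when filling in the details.
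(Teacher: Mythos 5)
Your proposal is correct and follows exactly the paper's own argument: take an optimal solution to $\gamma^{(1)}(G)$, apply \autoref{prop:k-step-construction} to get the $k$-step approximation, feed it to \autoref{prop:k-step-rounding}, and absorb the additive $k\gamma^{(1)}(G)$ term using $\sigma_k^*(G) \leq k \lambda_k^*(G) \leq 2k$ (a step the paper leaves implicit but which you justify correctly).
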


\begin{proof}
The easy direction is proved in \autoref{lem:Cheeger-easy}.
For the hard direction, let $(f^*, g^*)$ be an optimal solution to the $\gamma^{(1)}(G)$ program in \autoref{def:Roch-dual-1D} with objective value $\gamma^*$, and $\sigma^*$ be the optimal value of the $\sigma_k^*(G)$ program in \autoref{def:sigma-k-dual}.
By \autoref{prop:k-step-construction}, 
there exists a $k$-step function $y: V \rightarrow \R$ with
\[
\norm{f^* - y}_{\pi}^2 \lesssim \frac{k \cdot \gamma^*}{\sigma^*}.
\]
Applying \autoref{prop:k-step-rounding} with $y$, it follows that
\begin{eqnarray*}
\psi(G) \lesssim
k \cdot \big( \gamma^* + \norm{f^* - y}_{\pi} \cdot \sqrt{\gamma^*} \big)
\lesssim k \cdot \gamma^* \Big( 1 + \sqrt{\frac{k}{\sigma^*}} \Big)
\lesssim k \cdot \gamma^* \cdot \sqrt{\frac{k}{\sigma^*}}. 
\end{eqnarray*}
\end{proof}

Note that \autoref{thm:improved-Cheeger-vertex} follows immediately from \autoref{thm:improved-Cheeger-vertex-expansion}.

\begin{proofof}{\autoref{thm:improved-Cheeger-vertex}}
\[
\lambda_2^*(G) =
\gamma(G) \leq 
\gamma^{(1)}(G) \lesssim \psi(G) 
\lesssim k \cdot \gamma^{(1)}(G) \cdot \sqrt{ \frac{k}{\sigma_k^*(G)} }
\lesssim k \cdot \log d \cdot \gamma(G) \cdot \sqrt{ \frac{k}{\lambda_k^*(G)} },
\]
where we use 
$\gamma(G) = \lambda_2^*(G)$ in \autoref{prop:Roch-dual} 
and $\gamma(G) \leq \gamma^{(1)}(G) \lesssim \log d \cdot \gamma(G)$ in \autoref{prop:improved-projection} 
and $\sigma_k^*(G) \geq \lambda_k^*(G)$ in \autoref{def:sigma-k-dual}.
\end{proofof}

\begin{remark}[Tight Examples] \label{rem:tight}
We remark that \autoref{thm:improved-Cheeger-vertex-expansion} is tight.
The loss in \autoref{thm:improved-Cheeger-vertex} is because of the factor $\log d$ loss in the dimension reduction step and the factor $k$ loss in the transition from $\lambda_k^*(G)$ to $\sigma_k^*(G)$.

As an example, let $G$ be an $n$-cycle, where $n$ is odd, and $\pi$ the uniform distribution. Suppose $k \ll n$. Then $\sigma_k^*(G) = \sigma_k(G)$, because the only possible ``reweighting'' is the one with equal edge weight. Since $\psi(G) = \Theta(1/n), \gamma^{(1)}(G) = O(1/n^2)$ (choose $f: V \rightarrow \R$ that maps vertex $l \in [n]$ to point $C(1 - \frac{4}{n} \cdot \min(l, n - l))$, where $C = \Theta(1)$ is a normalizing factor, and $g(l) \equiv \frac{8C^2}{n^2}$), and
\[
  \sigma_k(G) =
  \sum_{l=0}^{k-1}
  \left(1 - \cos \left(\frac{2 \pi l}{n} \right) \right)
  = \Theta \left(\frac{k^3}{n^2} \right),
\]

one can verify that in this case the hard direction of \autoref{thm:improved-Cheeger-vertex-expansion} is tight.
\end{remark}

We prove \autoref{prop:k-step-rounding} and \autoref{prop:k-step-construction} in the following two subsections.

\subsection{Rounding $k$-Step Approximation}

We prove \autoref{prop:k-step-rounding} in this subsection.
First, we do some preprocessing on the solution $(f,g)$ as in \autoref{sec:Cheeger-vertex}.
Then, as in~\cite{KLLOT13}, the main step is to use a modified probability distribution on the thresholds based on the $k$-step approximation $y$ to analyze the threshold rounding algorithm.

Given a feasible solution $(f, g)$ to the $\gamma^{(1)}(G)$ program with objective value $\gamma_f$, 
we use \autoref{lem:factor2} and \autoref{def:compact} to obtain a solution $f$ and $\directed{E}$ to the directed program $\directed{\gamma}^{(1)}(G)$ in \autoref{def:directed-gamma} with objective value at most $2\gamma_f$.
Then we apply the truncation step in \autoref{lem:truncation} on $f$ and $\directed{E}$ to obtain a solution $x$ and $\directed{E}$ with $x \geq 0$ and $\pi(\supp(x)) \leq 1/2$ and 
\[
\sum_{v \in V} \pi(v) \max_{u:uv \in \directed{E}} (x(u)-x(v))^2 \leq 8 \gamma_f
\quad {\rm and} \quad 
\sum_{v \in V} \pi(v) x(v)^2 = 1.
\]
Let $y_f$ be the $k$-step approximation of $f$ in the assumption of \autoref{prop:k-step-rounding}.
Note that the way we construct $x$ from $f$ consists of shifting, truncating, and scaling by a factor of at most $2$.
So, applying the same transformations to $y_f$ will still give us a $k$-step function $y$ with $\norm{x-y}_{\pi} \leq 2 \norm{f - y_f}_{\pi}$.
Henceforth, we work with $x$ and its $k$-step approximation $y$.

The main step is to prove that applying the threshold rounding algorithm on $x$ will find a set $S \subseteq V$ with small directed vertex expansion $\directed{\psi}(S)$ as defined in \autoref{def:vertex-cover}.
For the analysis, we take the $k$-step approximation $y$ into consideration and give higher weight to a threshold $t$ if $t$ is far away from the function values in $y$.

The following weighting scheme is from~\cite{KLLOT13}.
Let $x_{\max} := \max_{v \in V} x(v)$.
Suppose the $k$-step function $y$ takes values $0 \leq y_1 \leq y_2 \leq \cdots \leq y_k$.
For $t \in [0,x_{\max}]$, define $\rho(t) := \min_{1 \leq i \leq k} |t - y_i|$.
In words, $\rho(t)$ is the distance from $t$ to the closest value of the $k$-step function $y$.
We sample $t \in [0,x_{\max}]$ with probability proportional to $\rho(t)$.
That is, for $0 \leq a < b \leq x_{\max}$,
\[
  \Pr\big[t \in [a, b]\big] = \frac{1}{Z} \int_a^b \rho(t) \, dt,
\]
where $Z := \int_0^{x_{\max}} \rho(t) dt$ is the normalizing factor.

We use the distribution described above to analyze the threshold rounding algorithm.
Let $S_t := \{v \in V \mid x(v) > t\}$ be a threshold set.
By a standard averaging argument,
\[
\min_{t \in [0,x_{\max}]} \directed{\psi}(S_t)
\leq \frac{\E_t [ \pi( \directed{\partial}S_t) ]}{\E_t [ \pi(S_t) ]}
\]
The denominator is
\begin{eqnarray*}
\E_t [\pi(S_t)]
~=~ \frac{1}{Z} \cdot \sum_{v \in V} \pi(v) \cdot \int_0^{x(v)} \rho(t) \,dt
~\gtrsim~ \frac{1}{kZ} \sum_{v \in V} \pi(v) x(v)^2
= \frac{1}{kZ},
\end{eqnarray*}
where the inequality can be seen as follows:
For any $v \in V$, let $k' \le k$ be the largest index so that 
$0 \le y_1 \le \dots \le y_{k'} \le x(v)$ and let $z_0 := 0$, $z_i := y_i$ for $1 \leq i \leq k'$ and $z_{k'+1} := x(v)$, then
\begin{eqnarray*}
\int_0^{x(v)} \rho(t) \, dt
= \sum_{i=0}^{k'} \int_{z_i}^{z_{i+1}} \rho(t) \, dt
\geq \frac{1}{4} \sum_{i=0}^{k'} (z_{i+1}-z_i)^2
\geq \frac{1}{4(k'+1)} \bigg( \sum_{i=0}^{k'} (z_{i+1}-z_i) \bigg)^2
\gtrsim \frac{1}{k} x(v)^2,
\end{eqnarray*}
where the first inequality is by simple calculus and the second inequality is by Cauchy-Schwarz.

The numerator is
\begin{eqnarray*}
  \E_t \Big[ \pi\big(\directed{\partial}S_t\big) \Big]
  &=&
  \sum_v \pi(v) \cdot \Pr_t[\text{ there exists an edge } uv \in \directed{E} \text{ with } uv \in \delta(S_t) ] 
  \\
  &=&
  \sum_v \pi(v) \cdot \max_{u:u \to v} \Pr_t\big[uv \in \delta(S_t)\big]
 \\
  &=&
  \sum_v \pi(v) \cdot \max_{u: u \to v} \frac{1}{Z} \cdot \bigg| \int_{x(v)}^{x(u)} \rho(t) \, dt \bigg|
  \\
  &\le&
  \frac{1}{Z} \sum_v \pi(v) \cdot \max_{u:u \to v}
  \bigg| \int_{x(v)}^{x(u)} \Big( \rho\big(x(v)\big) + \big|t - x(v)\big| \Big)\, dt \bigg|
  \\
  &=&
  \frac{1}{Z} \sum_v \pi(v) \cdot \max_{u: u \to v}
  \bigg[ \big|x(u) - x(v)\big| \cdot \rho\big(x(v)\big) + \frac{1}{2} \big(x(u) - x(v)\big)^2 \bigg]
  \\
  &\leq&
  \frac{1}{Z} \cdot \Big(
    4\gamma_f + \sum_v \pi(v) \cdot \rho\big(x(v)\big) \cdot \max_{u: u \to v} \big|x(u) - x(v)\big| \Big)
  \\
  &\le&
  \frac{1}{Z} \cdot \bigg( 4 \gamma_f +
  \sqrt{ \Big( \sum_v \pi(v) \cdot \rho\big(x(v)\big)^2 \Big)
  \cdot \Big( \sum_v \pi(v) \max_{u: u \to v} \big(x(u) - x(v)\big)^2 \Big) } \bigg)
  \\
  &\le&
  \frac{1}{Z} \cdot \Big(
    4 \gamma_f + \sqrt{\norm{x - y}_{\pi}^2 \cdot 8\gamma_f} \Big)
  \\
  &\lesssim&
  \frac{1}{Z} \cdot \big(\gamma_f + \norm{x - y}_\pi \cdot \sqrt{\gamma_f}\big),
\end{eqnarray*}
where the first inequality is because $\rho(t)$ is a $1$-Lipschitz function,
the third inequality is by Cauchy-Schwarz,
and the fourth inequality is because $\rho(x(v)) \leq |x(v) - y(v)|$ by the definition of $\rho$.

Combining the bounds on the denominator and the numerator,
\[
\min_{t \in [0,x_{\max}]} \directed{\psi}(S_t) ~\leq~
  \frac {\E_t \big[ \pi\big(\directed{\partial} S_t\big) \big]}
    {\E_t \big[ \pi(S_t)\big]}
  ~\lesssim~ k \gamma_f + k \norm{x - y}_{\pi} \cdot \sqrt{\gamma_f}
  ~\lesssim~ k \gamma_f + k \norm{f-y_f}_{\pi} \cdot \sqrt{\gamma_f}.
\]
Since $\pi(\supp(x)) \le \frac{1}{2}$, the output set $S$ has $\pi(S) \leq \frac{1}{2}$. 
Finally, we apply the postprocessing step in \autoref{lem:cleanup} to obtain a set $S' \subseteq S$ with $\psi(S') \leq 2\directed{\psi}(S)$.
We conclude that 
\[
\psi(G) ~\leq~ \psi(S') ~\lesssim~ \min_t \directed{\psi}(S_t) 
~\lesssim~ \gamma_f + k \norm{f-y_f}_{\pi} \cdot \sqrt{\gamma_f}. 
\]

\subsection{Constructing $k$-Step Approximation}

We prove \autoref{prop:k-step-construction} in this subsection.
The high level plan is similar to that in~\cite{KLLOT13}.
Given a feasible solution $(f,g)$ to the $\gamma^{(1)}(G)$ program, we aim to construct a good $k$-step approximation $y$ of $f$ using a simple procedure.
If we fail to do so, then we show that $f$ can be used to construct a good $k$-dimensional solution $\bar{f} = (\bar{f}_1, \bar{f}_2, \ldots, \bar{f}_k)$ to the $\sigma_k^*(G)$ program, contradicting the value of $\sigma_k^*(G)$ is large.
Therefore, the simple process must succeed to find a good $k$-step approximation $y$ of $f$.

Suppose our $k$-step function $y$ takes values $y_1 \leq y_2 \leq \cdots \leq y_k$.
For convenience $y_0 := -\infty$.
We use these values to define $k$ disjoint subsets $S_1, \ldots, S_k \subseteq V$ where $S_i := \{v \in V \mid y_{i-1} < f(v) \leq y_i\}$,
and define functions $\bar{f}_1, \ldots, \bar{f}_k$ supported on $S_1, \ldots, S_k$ respectively where
\[
  \bar{f}_i(v) := \begin{cases}
    \min\big\{|y_i - f(v)|, |f(v) - y_{i-1}|\big\}, & \text{ if } f(v) \in (y_{i-1}, y_i)
    \\
    0 & \text{ otherwise.}
  \end{cases}
\]
The role of $\bar{f}_i$ is to measure how well the two threshold values $y_{i-1}$ and $y_i$ approximate the values of the vertices in $S_i$ in $f$.
We would like to choose $y_1 \leq y_2 \leq \cdots \leq y_k$ so that each $\norm{\bar{f}_i}_{\pi}^2$ is small and $S_1 \cup \cdots \cup S_k = V$.
As we will show, this would imply that there exists a $k$-step function $y$ with threshold values $y_1 \leq y_2 \leq \cdots \leq y_k$ so that $\norm{f-y}_{\pi} = \sum_{i=1}^k \norm{\bar{f}_i}^2_{\pi}$ is small, and thus $y$ is a good $k$-step approximation to $f$.

Consider the following simple procedure to choose the threshold values $y_1 \leq y_2 \leq \ldots \leq y_k$ of the $k$-step function.
Let $\mu$ be a parameter to be determined later.
We would like to guarantee that $\norm{\bar{f}_i}_\pi^2 \leq \mu$ for $1 \leq i \leq k$ and $S_1 \cup S_2 \cup \cdots \cup S_k = V$ such that all vertices are covered by these functions $\bar{f}_1, \ldots, \bar{f}_k$.
We choose $y_i$ successively as follows:
given $y_0, \ldots, y_{i-1}$, set $y_i$ to be the smallest number such that $\norm{\bar{f}_i}_{\pi}^2 = \mu$.
If the smallest $y_i$ does not exist, then we set $y_i=\ldots=y_k := \max_{v \in V} f(v)$ and call the procedure a ``success''.
Otherwise, if we set all $y_1 \leq y_2 \leq \ldots \leq y_k$ and $S_1 \cup S_2 \cup \cdots S_k \subset V$ such that not all vertices are covered by $\bar{f}_1, \ldots, \bar{f}_k$, then we call the procedure a ``failure''.

Now we set 
\[
\mu = \frac{2 \gamma_f}{\sigma_k^*(G)}.
\]

When the procedure succeeds, then $\norm{\bar{f}_i}_{\pi}^2 \leq \mu$ for $1 \leq i \leq k$ and $S_1 \cup S_2 \cup \cdots \cup S_k = V$.
We would like to show that the there exists a $k$-step function $y$ with threshold values $y_1 \leq y_2 \leq \ldots \leq y_k$ such that $\norm{y-f}_{\pi}^2 \leq k \mu$.
Define $y : V \to \R$ to be
\[
  y(v) := \argmin_{\alpha \in \{y_1, \dots, y_{k}\}} |\alpha - f(v)|,
\]
such that each vertex $v$ is assigned to its closest threshold value.
Then 
\[
\norm{f-y}_{\pi}^2 
= \sum_{v \in V} \pi(v) \cdot \big| f(v) - y(v) \big|^2
= \sum_{i=1}^k \sum_{v \in S_i} \pi(v) \bar{f}_i(v)^2
= \sum_{i=1}^k \norm{\bar{f}_i}_{\pi}^2 
\leq k \mu
\lesssim \frac{k \gamma_f}{\sigma_k^*(G)},
\]
where the second equality uses that $S_1 \cup S_2 \cup \cdots \cup S_k = V$ and the first inequality uses that each $\norm{\bar{f}_i}_{\pi}^2 \leq \mu$.
Thus $y$ is a $k$-step function that satisfies the statement of the proposition.

To complete the proof, we would like to show that the procedure always succeeds.
When the procedure fails, then there exist $y_1 < y_2 < \cdots < y_k$ such that $\norm{\bar{f}_i}_{\pi}^2 = \mu$ for $1 \leq i \leq k$.
We will construct from $\bar{f}_1, \ldots, \bar{f}_k$ a solution $(\bar{f}, \bar{g})$ to the $\sigma_k^*(G)$ program with objective value less than $\sigma_k^*(G)$, thus arriving at a contradiction.
Define $\bar{f}: V \rightarrow \R^n$ and $\bar{g}: V \rightarrow \R$ as follows:
\[
\bar{f}(v) := \Big(\frac{\bar{f}_1(v)}{\sqrt{\mu}}, \dots, \frac{\bar{f}_k(v)}{\sqrt{\mu}}, 0, \dots, 0 \Big)^T
\quad {\rm and} \quad
\bar{g}(v) := \frac{1}{\mu} g(v).
\]

We will check that $(\bar{f}, \bar{g})$ is a feasible solution to the $\sigma_k^*(G)$ program in \autoref{def:sigma-k-dual}. 
For the sub-isotropy condition, note that each $\bar{f}(v)$ has at most one nonzero entry, and
\begin{eqnarray*}
  \sum_{v \in V} \pi(v) \bar{f}(v) \bar{f}(v)^T
  &=&
  \diag \Big(\frac{1}{\mu} \sum_{v \in S_1} \pi(u) \bar{f}_1(u)^2, \frac{1}{\mu} \sum_{v \in S_2} \pi(v) \bar{f}_2(u)^2, \dots, \frac{1}{\mu} \sum_{v \in S_k} \pi(v) \bar{f}_k(v)^2, 0, \dots, 0 \Big)
  \\
  &=&
  \frac{1}{\mu} \diag\Big( \norm{\bar{f}_1}_{\pi}^2, \norm{\bar{f}_2}_{\pi}^2, \dots, \norm{\bar{f}_k}_{\pi}^2, 0, \dots, 0\Big)
  ~=~ \diag(1, 1, \dots, 1, 0, \dots, 0) ~\preceq~ I_n.
\end{eqnarray*}
The mass constraint is satisfied as
\[
\sum_{v \in V} \pi(v) \norm{\bar{f}(v)}^2
= \tr \bigg( \sum_{u \in V} \pi(u) \bar{f}(u) \bar{f}(u)^T \bigg)
= \tr \Big( \diag\big(1, 1, \dots, 1, 0, \dots, 0\big) \Big)
= k.
\]
For the constraint on each edge $uv \in E$,
\[
\norm{\bar{f}(u) - \bar{f}(v)}^2 
= \frac{1}{\mu} \sum_{i=1}^k \big(\bar{f}_i(u) - \bar{f}_i(v)\big)^2
\leq \frac{1}{\mu} \big(f(u) - f(v)\big)^2  
\leq \frac{1}{\mu} \big(g(u) + g(v)\big)
= \bar{g}(u) + \bar{g}(v),
\]
where for the first inequality we consider two cases:
(i) suppose $u \in S_i$ and $v \in S_j$ for $i = j$,
then $\sum_{l=1}^k \big(\bar{f}_l(u) - \bar{f}_l(v) \big)^2 = \big(\bar{f}_i(u) - \bar{f}_i(v) \big)^2 \le \big(f(u) - f(v)\big)^2$,
and (ii) suppose $u \in S_i$ and $v \in S_j$ for $i \neq j$, then
$  \sum_{l=1}^k \big(\bar{f}_l(u) - \bar{f}_l(v)\big)^2 = \big(\bar{f}_i(u) - \bar{f}_i(v)\big)^2 + \big(\bar{f}_j(u) - \bar{f}_j(v)\big)^2 \le \big(f(u) - f(v)\big)^2$
since $\big|\bar{f}_i(u) - \bar{f}_i(v)\big| + \big|\bar{f}_j(u) - \bar{f}_j(v)\big| \le \big|f(u) - f(v)\big|$.

Therefore, $(\bar{f}, \bar{g})$ is a feasible solution to the $\sigma_k^*(G)$ program, and its objective value is
\[
  \sum_{v \in V} \pi(v) \bar{g}(v) = \frac{1}{\mu} \sum_{v \in V} \pi(v) g(v) = \frac{1}{\mu} \cdot \gamma_f = \frac{\sigma_k^*(G)}{2} < \sigma_k^*(G),
\]
a contradiction to the definition of $\sigma_k^*(G)$. 

To conclude, the procedure must succeed and return a $k$-step function $y$ with $\norm{f-y}_{\pi}^2 \lesssim k\gamma_f / \sigma_k^*(G)$.
\section{Vertex Expansion of $0/1$-Polytopes} \label{sec:polytopes}

The goal of this section is to present a construction of $0/1$-polytopes with poor vertex expansion as described in \autoref{thm:01-vertex-expansion}, which has implications about sampling from the uniform distribution as described in \autoref{sec:intro-polytope}.

A $0/1$-polytope is defined by a subset of vertices in the boolean hypercube $\{0,1\}^n$.
Our examples are based on the following simple probabilistic construction.

\begin{definition}[Probabilistic Construction] \label{def:construction}
Let $n$ be an even number and $k < n/2$.
For a binary string $x \in \{0,1\}^n$, denote its $1$-norm by $|x| := \sum_{i=1}^n |x_i|$.
The set of vertices of our constructed polytope is the union of three subsets:
\begin{enumerate}
\item A left part $L := \{ x \in \{0,1\}^n \mid |x| = k\}$ consists of all binary strings with $k$ ones.
\item A right part $R := \{ x \in \{0,1\}^n \mid |x| = n-k\}$ consists of all binary strings with $n-k$ ones.
\item A middle part $M \subset \{ x \in \{0,1\}^n \mid |x| = n/2\}$ consists of $\Theta(4^k n^2)$ number of uniformly random binary strings with $n/2$ ones.
\end{enumerate}
\end{definition}

The graph $G_Q = (V,E)$ of a polytope $Q$ is defined as the $1$-skeleton of the polytope $Q$.
Our plan is to prove that for a random polytope $Q$ constructed in \autoref{def:construction}, the middle part $M$ ``blocks'' all the edges between $L$ and $R$ in $G_Q$ with constant probability.
This would imply that $\partial L \subseteq M$ and $\partial R \subseteq M$, and thus $\psi(L), \psi(R) \lesssim 4^k n^2 / n^k$ are very small.

The organization of this section is as follows.
First, in \autoref{sec:blocking}, we provide a sufficient condition for two binary strings $x,y$ to have no edge in $G_Q$, using geometric arguments.
Then, in \autoref{sec:probabilistic}, we outline the main probabilistic argument to prove \autoref{thm:01-vertex-expansion}, by using a union bound over the set of linear threshold functions.
Finally, we show \autoref{cor:01-slow-uniform} in \autoref{sec:consequence}.
We defer all the proofs in \autoref{sec:blocking} and \autoref{sec:probabilistic} to \autoref{app:01-polytope}.

As mentioned in \autoref{rem:Gillmann} in the introduction, Gillmann \cite{Gil07} has constructed similar examples of 0/1-polytopes with poor vertex expansion. Following the same simple argument as in \autoref{sec:consequence}, one obtains analogous lower bounds on the fastest mixing time of these polytopes.
We remark that the construction in \cite{Gil07} is similar to our construction, but the proofs are different and so we present our proofs even though the results follow from \cite{Gil07}.

\subsection{A Sufficient Condition for Edge Blocking} \label{sec:blocking}

Let $Q$ be a $0/1$-polytope and $G_Q = (V,E)$ be its graph/$1$-skeleton.
For two binary strings $x,y \in \{0,1\}^n$,
if $xy$ is an edge in $G_Q$, then there is a separating hyperplane $l$ with $l(x),l(y) \geq 0$ while $l(z) < 0$ for all other binary strings $z$ in the $0/1$-polytope $Q$.

In the construction of $Q$ in \autoref{def:construction},
if $x \in L$ and $y \in R$ then $\frac{1}{2}(x+y)$ has $1$-norm equal to $n/2$.
If $xy$ is an edge in $G_Q$, then there is a separating hyperplane $l$ with $l\big(\frac12(x+y)\big) \geq 0$ while $l(z) < 0$ for all other binary strings $z$ in the middle part $M$.
So, if we could establish that $\frac12 (x+y)$ is in the convex hull ${\rm conv}(M)$ of $M$ for all $x \in L$ and $y \in R$, then there are no edges between $L$ and $R$ in the graph $G_Q$.
This is the sufficient condition that we will formalize.

In the analysis, we use the following definitions to group the pairs of vertices $x \in L$, $y \in R$ based on their common patterns.

\begin{definition}[Patterns]
For $n \in \N$, a pattern is an element $p \in \{0, 1, \lor \}^n$, where $0$, $1$, and $\lor$ are regarded as symbols.
    
The support of a pattern $p \in \{0, 1, \lor \}^n$ is defined as $\supp(p) := \{i \in [n] \mid p_i \neq \lor\}$. 
We also define $\supp_0(p) := \{i \in [n] \mid p_i = 0\}$ and $\supp_1(p) := \{i \in [n] \mid p_i = 1\}$.
    
Given two binary strings $x, y \in \{0, 1\}^n$, their common pattern $p^{(x, y)} \in \{0, 1, \lor\}^n$ is defined as
    \[
      p^{(x, y)}_i =
      \begin{cases}
        0, & \text{ if } x_i = y_i = 0 \\
        1, & \text{ if } x_i = y_i = 1 \\
        \lor, & \text{ if } x_i \neq y_i.
      \end{cases}
    \]
Given a pattern $p \in \{0, 1, \lor\}^n$ and a binary string $x \in \{0, 1\}^n$, $x$ is said to match $p$ if and only if $p_i \neq \lor$ implies $p_i = x_i$.
\end{definition}

For each pattern $p$, we consider a potential separating hyperplane of the following specific form, which will be convenient for the probabilistic analysis

\begin{definition}[Consistent Affine Function] \label{def:consistent}
Let $p \in \{0, 1, \lor \}^n$ be a pattern.
An affine function $l: (u_1, u_2, \dots, u_n) \in \R^n \mapsto \beta + \sum_i \alpha_i u_i$ is called $p$-consistent if
\[
\alpha_i = 0 {\rm~for~} i \in \supp(p)
\qquad {\rm and} \qquad
\beta + \frac{1}{2} \sum_{i: i \notin \supp(p)} \alpha_i = 0.
\]
If $p$ is the common pattern of $x$ and $y$,
then $l$ being $p$-consistent implies $l\big( \frac12 (x+y) \big) = 0$.
\end{definition}

We formulate the sufficient condition described above for the middle part $M$ blocking the edge $xy$ for $x \in L$ and $y \in R$ using the definitions that we have developed. The proof is deferred to \autoref{app:01-polytope}.

\begin{lemma}[Blocking One Edge] \label{lem:blocking-one-edge}
Let $Q = L \cup M \cup R$ be a $0/1$-polytope from \autoref{def:construction}. 
Let $x \in L$, $y \in R$ and $p = p^{(x, y)} \in \{0, 1, \lor\}^n$ be the common pattern of $x$ and $y$. 
If for any $p$-consistent affine function $l$ there exists a point $z \in M$ matching the pattern $p$ and satisfying $l(z) \ge 0$,
then there is no edge connecting $x$ and $y$ in the graph of $Q$.
\end{lemma}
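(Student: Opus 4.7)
The plan is to apply the non-edge criterion of \cite{KR03} stated in the preliminaries: if the midpoint $m := \tfrac{1}{2}(x+y)$ of the segment $L(x,y)$ lies in the convex hull of the remaining vertices of $Q$, then $xy$ is not an edge. Setting $M_p := \{z \in M : z \text{ matches } p\}$, my goal is to show the stronger claim $m \in \mathrm{conv}(M_p)$. Note that $x,y \notin M$ since $|x|=k$ and $|y|=n-k$ while every point of $M$ has $1$-norm $n/2$, so $M_p \subseteq Q \setminus \{x,y\}$; and $m$ lies on $L(x,y)$ but is distinct from $x,y$, so this inclusion will suffice.

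I will argue by contradiction: assuming $m \notin \mathrm{conv}(M_p)$, the separation theorem \autoref{prop:separation-thm} supplies an affine function $\ell(u) = \beta + \sum_i \alpha_i u_i$ with $\ell(m) = 0$ and $\ell(z) < 0$ for every $z \in M_p$. The central move will be to reparametrize $\ell$ into a $p$-consistent affine function $\ell'$ without changing its values on any point matching $p$. The natural candidate is
\[
\ell'(u) := \beta' + \sum_{i \notin \supp(p)} \alpha_i u_i, \qquad \beta' := \beta + \sum_{i \in \supp(p)} \alpha_i p_i,
\]
since for any $u$ matching $p$ we have $u_i = p_i$ on $\supp(p)$, so $\ell'(u) = \ell(u)$ on the whole set $\{m\} \cup M_p$. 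Evaluating at $m$ (where $m_i = p_i$ on $\supp(p)$ and $m_i = \tfrac{1}{2}$ elsewhere) then gives $\ell'(m) = \ell(m) = 0$, which unpacks to $\beta' + \tfrac{1}{2}\sum_{i \notin \supp(p)} \alpha_i = 0$, matching the second condition of \autoref{def:consistent}; the first condition, $\alpha_i = 0$ on $\supp(p)$, is built into $\ell'$ by construction.

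Applying the hypothesis of the lemma to this $p$-consistent $\ell'$ will produce a point $z^* \in M$ matching $p$ with $\ell'(z^*) \geq 0$; but then $z^* \in M_p$ and $\ell(z^*) = \ell'(z^*) \geq 0$, contradicting the separation property of $\ell$. Hence $m \in \mathrm{conv}(M_p)$, and the non-edge criterion closes the argument. The main conceptual step is the $p$-consistency reparametrization in the middle paragraph: it exploits the fact that the coordinates indexed by $\supp(p)$ are rigidly fixed at $p_i$ on every element of $\{m\} \cup M_p$, so the contributions $\alpha_i p_i$ for $i \in \supp(p)$ can be absorbed into the constant term without disturbing the separation. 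Once this observation is in hand, the rest of the argument is mechanical.
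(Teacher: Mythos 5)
Your proposal is correct and follows essentially the same route as the paper: separate the midpoint $\tfrac12(x+y)$ from the convex hull by \autoref{prop:separation-thm}, absorb the coefficients on $\supp(p)$ into the constant term to make the separator $p$-consistent, and contradict the hypothesis, then invoke the non-edge criterion of \cite{KR03}. The only cosmetic difference is that you separate from $\mathrm{conv}(M_p)$ (the points of $M$ matching $p$) rather than from $\mathrm{conv}(M)$ as the paper does, which changes nothing since the contradiction only ever involves points matching $p$.
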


The following is a sufficient condition for the middle part $M$ blocking all the edges between $L$ and $R$, by considering all possible common patterns of an $x \in L$ and a $y \in R$.

\begin{lemma}[Blocking All Edges] \label{lem:blocking-all-edges}
Let $Q = L \cup M \cup R$ be a $0/1$-polytope from \autoref{def:construction}. 
Suppose for every pattern $p \in \{0,1,\lor\}^n$ with $|\supp_0(p)|=|\supp_1(p)| \leq k$ and for any $p$-consistent affine function $l : \R^n \to \R$,
there is $z \in M$ matching the pattern $p$ with $l(z) \geq 0$.
Then there are no edges between $L$ and $R$ in the graph of $Q$.
\end{lemma}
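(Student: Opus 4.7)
The plan is to reduce this to an application of \autoref{lem:blocking-one-edge} on every pair $(x,y) \in L \times R$. So fix an arbitrary such pair, let $p := p^{(x,y)}$ be their common pattern, and aim to verify that $p$ satisfies the hypothesis of the current lemma so that \autoref{lem:blocking-one-edge} applies.

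First I would compute $|\supp_0(p)|$ and $|\supp_1(p)|$ in terms of $k$. Partition the coordinates $[n]$ into four classes according to $(x_i, y_i)$: let $a := |\{i : x_i = y_i = 1\}| = |\supp_1(p)|$, $b := |\{i : x_i = y_i = 0\}| = |\supp_0(p)|$, $c := |\{i : x_i = 1, y_i = 0\}|$, and $d := |\{i : x_i = 0, y_i = 1\}|$. Then $a+b+c+d = n$, and the norm conditions $|x| = k$ and $|y| = n-k$ give $a + c = k$ and $a + d = n - k$. Subtracting and combining yields $c = k - a$, $d = n - k - a$, and hence
\[
b = n - a - c - d = n - a - (k-a) - (n-k-a) = a.
\]
So $|\supp_0(p)| = |\supp_1(p)| = a$, and since $c = k - a \geq 0$ we have $a \leq k$. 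Therefore $p$ meets the hypothesis of the present lemma.

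The hypothesis then guarantees that for every $p$-consistent affine function $l$, there is some $z \in M$ that matches $p$ with $l(z) \geq 0$. This is exactly the assumption of \autoref{lem:blocking-one-edge}, which lets us conclude that there is no edge connecting $x$ and $y$ in the graph $G_Q$. Since $(x,y) \in L \times R$ was arbitrary, no edge crosses between $L$ and $R$, finishing the proof.

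I do not expect a significant obstacle here: the only real content is the coordinate-counting identity $|\supp_0(p)| = |\supp_1(p)| = a \leq k$, which drops out immediately from $|x| = k$ and $|y| = n - k$. The nontrivial geometric work has already been absorbed into \autoref{lem:blocking-one-edge}, and this lemma is essentially a bookkeeping step that ranges the previous result over all pairs.
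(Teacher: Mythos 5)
Your proposal is correct and follows the same route as the paper: the paper's proof likewise observes that the common pattern of any $x \in L$, $y \in R$ satisfies $|\supp_0(p)| = |\supp_1(p)| \le k$ and then applies \autoref{lem:blocking-one-edge} to every such pair. The only difference is that you spell out the coordinate-counting identity, which the paper states without detail; your verification of it is correct.
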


\subsection{Probabilistic Analysis} \label{sec:probabilistic}

Our plan is to use the sufficient condition in \autoref{lem:blocking-all-edges} to prove that a random $M$ with not many points can block all the edges between $L$ and $R$.
To this end,
we prepare with two simple lemmas about the probability that $z \in M$ satisfying $l(z) \geq 0$ and matching a particular pattern $p$ with $|\supp_0(p)| = |\supp_1(p)| \leq k$.

The geometric intuition of the first lemma is simple:
when we restrict a $p$-consistent affine function $l$ on the coordinates in $[n] \setminus \supp(p)$, then $l$ is an ``unbiased'' hyperplane that goes through the point $\frac{1}{2}\cdot \vec{1}$ on $[n] \setminus \supp(p)$ (because of the second condition in \autoref{def:consistent}), and thus a random vertex in $M$ matching the pattern $p$ lies on the non-negative side of $l$ with probability at least $1/2$.

\begin{lemma} \label{lem:good}
Let $Q = L \cup M \cup R$ be a $0/1$-polytope from \autoref{def:construction}. 
Let $p$ be the common pattern of  $x \in L$ and $y \in R$, and $l$ be a $p$-consistent affine function.
Let $Z$ be the uniform distribution on $\{z \in \{0, 1\}^n: |z| = \frac{n}{2}\}$. 
Then,
  \[
    \Pr_{z \sim Z} \big[ \, 
      l(z) \ge 0 \mid z \text{ matches pattern } p \,
    \big] \ge \frac{1}{2}.
  \]
\end{lemma}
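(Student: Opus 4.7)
The plan is to exploit the bit-flipping symmetry on the coordinates outside $\supp(p)$, which is natural because the $p$-consistency condition was designed precisely so that $l$ vanishes at the ``center'' $\tfrac12 \vec{1}$ of the free coordinates.

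First I would record a combinatorial fact about the common pattern $p = p^{(x,y)}$ of some $x \in L$ and $y \in R$: writing $s_0 := |\supp_0(p)|$ and $s_1 := |\supp_1(p)|$, counting ones in $x$ gives $s_1 + a = k$ where $a$ is the number of positions with $x_i = 1, y_i = 0$, while counting ones in $y$ gives $s_1 + b = n-k$ where $b$ is the number of positions with $x_i = 0, y_i = 1$. Since $a + b = n - s_0 - s_1$, a quick rearrangement forces $s_0 = s_1 =: s$, and the set of free coordinates $T := [n] \setminus \supp(p)$ has even size $n - 2s$. Then I would observe that if $z \in \{0,1\}^n$ matches $p$ and satisfies $|z| = n/2$, then $z$ has exactly $s$ ones inside $\supp(p)$ (fixed by the pattern), so $z|_T$ is a balanced binary string on $T$ with exactly $(n-2s)/2$ ones; conditional on matching $p$, $z|_T$ is therefore uniform over all such balanced strings.

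Next I would rewrite $l$ using $p$-consistency. Since $\alpha_i = 0$ for $i \in \supp(p)$ and $\beta = -\tfrac12 \sum_{i \in T} \alpha_i$, for any $z$ matching $p$ we get
\[
l(z) \;=\; \beta + \sum_{i \in T} \alpha_i z_i \;=\; \sum_{i \in T} \alpha_i\!\left(z_i - \tfrac12\right).
\]
Define the involution $\tau$ on the conditional support by flipping the bits of $z$ on $T$ and leaving $\supp(p)$ untouched, i.e.\ $\tau(z)_i = z_i$ for $i \in \supp(p)$ and $\tau(z)_i = 1 - z_i$ for $i \in T$. Then $\tau(z)$ still matches $p$, and the number of ones in $\tau(z)|_T$ is $(n-2s) - (n-2s)/2 = (n-2s)/2$, so $|\tau(z)| = n/2$ as well. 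Hence $\tau$ is a measure-preserving bijection on the distribution of $z$ conditioned on matching $p$, and by the rewritten formula $l(\tau(z)) = -l(z)$.

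Finally I would conclude by symmetry: the pushforward identity gives $\Pr[l(z) \ge 0 \mid z \text{ matches } p] = \Pr[l(\tau(z)) \ge 0 \mid z \text{ matches } p] = \Pr[l(z) \le 0 \mid z \text{ matches } p]$. Since these two probabilities sum to at least $1$ (their union is the whole space), each is at least $1/2$, giving the claim. There is no real obstacle here; the only thing to verify carefully is that $\tau$ genuinely preserves the conditioning event (both matching $p$ and having weight $n/2$), which is exactly where the $s_0 = s_1$ calculation is used.
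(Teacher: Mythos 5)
Your proof is correct and follows essentially the same route as the paper: the paper also pairs each conditioned point $z$ with its ``opposite'' obtained by toggling the coordinates outside $\supp(p)$, checks that this involution preserves both the weight $n/2$ and the pattern-matching condition (via the same $|\supp_0(p)|=|\supp_1(p)|$ count coming from $|x|+|y|=n$), and uses $p$-consistency to get $l(z)+l(z^{\tau})=0$ before concluding by symmetry. Your write-up is just a slightly more explicit version of the paper's argument.
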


The second lemma gives a lower bound on the probability that a random point $z \sim Z$ matches a pattern $p$ with $|\supp_0(p)| = |\supp_1(p)| \leq k$. 

\begin{lemma} \label{lem:match-pattern}
Let $p \in \{0, 1, \lor\}^n$ be a pattern with $|supp_0(p)| = |supp_1(p)| = s \le k$, 
and let $Z$ be the uniform distribution on $\{z \in \{0, 1\}^n: |z| = n/2\}$. 
Then
  \[
    \Pr_{z \sim Z} [z \text{ matches pattern } p] \gtrsim 4^{-s}.
  \]
\end{lemma}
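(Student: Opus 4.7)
The plan is to compute the probability exactly as a ratio of binomial coefficients and then bound it by elementary means.

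First, I would observe that a string $z$ of weight $n/2$ matches the pattern $p$ precisely when $z$ agrees with $p$ on $\supp(p)$; equivalently, after fixing the $s$ ones on $\supp_1(p)$ and the $s$ zeros on $\supp_0(p)$, it remains to place $n/2 - s$ ones in the $n-2s$ free coordinates. Hence the number of such strings is $\binom{n-2s}{n/2-s}$, while the total number of weight-$n/2$ strings is $\binom{n}{n/2}$, so
\[
\Pr_{z \sim Z}[z \text{ matches } p] = \frac{\binom{n-2s}{n/2-s}}{\binom{n}{n/2}}.
\]

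Next, I would expand this ratio as a telescoping product in order to avoid invoking Stirling's formula. A direct cancellation of factorials gives
\[
\frac{\binom{n-2s}{n/2-s}}{\binom{n}{n/2}}
= \prod_{i=0}^{s-1} \frac{(n/2-i)^2}{(n-2i)(n-2i-1)}
= \prod_{i=0}^{s-1} \frac{1}{2} \cdot \frac{n/2-i}{n-2i-1}.
\]
Since $n/2 - i \geq (n-2i-1)/2$ for every $i$ (this is just $n \geq n-1$), each factor is at least $1/4$. Thus the product is at least $4^{-s}$, which is the desired bound (with implicit constant $1$).

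There is essentially no obstacle: the only thing to check is that each factor in the telescoping product is well-defined, which follows from the assumption $s \leq k < n/2$ (so $n-2i-1 > 0$ throughout). Because the calculation is exact and the bound $n/2 - i \geq (n-2i-1)/2$ is tight only in the limit, one in fact obtains the slightly stronger statement $\Pr_{z \sim Z}[z \text{ matches } p] \geq 4^{-s}$, but for the purposes of the subsequent union-bound argument the stated form $\gtrsim 4^{-s}$ is all that is needed.
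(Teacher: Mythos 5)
Your proposal is correct, and it starts from the same place as the paper: both proofs count the matching strings exactly, giving $\Pr_{z\sim Z}[z \text{ matches } p] = \binom{n-2s}{n/2-s}\big/\binom{n}{n/2}$. The only divergence is in how this ratio is bounded: the paper invokes Stirling's approximation for the two central binomial coefficients, which yields the bound only up to an unspecified constant ($\gtrsim 4^{-s}$), whereas you telescope the ratio into $\prod_{i=0}^{s-1} \frac{(n/2-i)^2}{(n-2i)(n-2i-1)}$ and observe each factor is at least $\tfrac14$, giving the exact bound $\geq 4^{-s}$ with no asymptotics. Your algebra checks out (including well-definedness, since $s \le k < n/2$ keeps every denominator positive), so your route is a slightly cleaner, fully elementary version of the same argument and is certainly sufficient for the union bound that follows.
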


With the above two lemmas, we can show that for any pattern $p$ with $|\supp_0(p)| = |\supp_1(p)| = s \le k$ and any $p$-consistent affine function $l$, 
the probability that a random point in $\{z \in \{0, 1\}^n: |z| = n/2\}$ matches the pattern $p$ and satisfies $l(z) \ge 0$ with probability not too small. 
So, by adding enough number of random points in the middle part $M$, such a point $z$ exists in $M$ with high probability for a fixed $p$ and $l$.
Then, we would like to use a union bound over $p$ and $l$ to prove that there will be no edges between $L$ and $R$ in the graph of the polytope with constant probability.

One technical issue of this approach is that there are infinitely many affine functions $l : \R^n \rightarrow \R$. 
Note, however, that we only care about the values of $l$ on the hypercube vertices. 
This reduces the number of different functions to $2^{2^n}$.
Indeed, we only care about whether $l(z) \ge 0$ for $z \in \{0, 1\}^n$ for \autoref{lem:blocking-all-edges}.
Therefore, we only need to apply a union bound over the set of linear threshold functions over the boolean hypercube, which further reduces the number of different such functions to $2^{n^2}$.

\begin{proposition}[\cite{Cov65}] \label{prop:Cover}
The number of linear threshold functions on $\{0, 1\}^n$ is at most $2^{n^2}$. A linear threshold function on $\{0, 1\}^n$ is a function of the form $\tau: \{0, 1\}^n \rightarrow \{0, 1\}$, where
  \[
    \tau(u_1, \dots, u_n) =
    \begin{cases}
      1, & \text{ if } \beta + \sum_i \alpha_i u_i \ge 0;
      \\
      0, & \text{ if } \beta + \sum_i \alpha_i u_i < 0,
    \end{cases}
  \]
  for some $\alpha_1, \dots, \alpha_n, \beta \in \R$.
\end{proposition}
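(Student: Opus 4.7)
The plan is to bound the number of linear threshold functions by counting sign patterns of a fixed point set under linear functionals, following the classical argument of Schl\"afli and Cover.

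First, I would lift each boolean point $u \in \{0,1\}^n$ to $\hat u := (1, u) \in \R^{n+1}$, so that every affine function $\beta + \sum_i \alpha_i u_i$ equals $\inner{w}{\hat u}$ for the weight vector $w := (\beta, \alpha_1, \ldots, \alpha_n) \in \R^{n+1}$. Each LTF is then determined by the sign vector $\big(\sgn \inner{w}{\hat u}\big)_{u \in \{0,1\}^n}$, with the convention $\sgn(0) = +$, and it suffices to bound the number of distinct such sign vectors as $w$ varies over $\R^{n+1}$.

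Second, I would view $\R^{n+1}$ as partitioned by the $N := 2^n$ central hyperplanes $H_u := \{w : \inner{w}{\hat u} = 0\}$: every open cell of this arrangement induces a single sign vector, and boundary weight vectors can be handled by infinitesimal perturbation. Invoking Schl\"afli's classical formula for central hyperplane arrangements, $N$ hyperplanes through the origin in $\R^{n+1}$ in general position produce at most $2 \sum_{k=0}^{n} \binom{N-1}{k}$ regions. Although the lifted points $\hat u$ are far from being in general position (the boolean hypercube satisfies many affine dependencies, e.g.\ $\hat{0} + \hat{e_i + e_j} = \hat{e_i} + \hat{e_j}$), a standard perturbation argument shows that perturbing $\hat u$ to a generic configuration can only increase the number of realizable sign vectors, so the general-position bound still applies.

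Finally, I would bound $2\sum_{k=0}^{n}\binom{2^n-1}{k}$ by $2^{n^2}$ using $\binom{2^n-1}{k} \le 2^{nk}/k!$. The sum is dominated by its $k=n$ term which is at most $2^{n^2}/n!$, giving a total count of at most $2(n+1) \cdot 2^{n^2}/n!$, which is bounded by $2^{n^2}$ once $n$ exceeds a small constant, with the remaining small cases verified by direct enumeration.

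The main obstacle is the general-position reduction. Because the lifted points $\hat u$ are highly degenerate, one must argue carefully via a limiting argument that replacing each $\hat u$ with a nearby generic point $\hat u + \epsilon \xi_u$ can only introduce new sign vectors (since any sign pattern realized by $w$ on the original $\hat u$'s with strict inequalities persists on the perturbed points for small $\epsilon$), never destroy existing ones. Once this continuity step is in place, the remainder of the proof reduces to Schl\"afli's formula and a routine binomial estimate.
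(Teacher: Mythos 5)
The paper offers no proof of this proposition at all --- it simply cites Cover --- and your argument is essentially Cover's original one: lift each $u\in\{0,1\}^n$ to $\hat u=(1,u)$, bound the number of threshold functions by the number of sign patterns of the central hyperplane arrangement with normals $\hat u$, reduce ties by perturbing $w$ (note the direction $(1,0,\dots,0)$ is strictly positive on every $\hat u$, so this perturbation is always available), pass to general position by perturbing the points (with $\epsilon$ chosen uniformly over the finitely many realized strict patterns), and invoke the Schl\"afli/Cover region count $2\sum_{k=0}^{n}\binom{2^n-1}{k}$. All of these steps are sound. The one quantitative caveat concerns your final estimate and the ``small cases'': $2(n+1)2^{n^2}/n!\le 2^{n^2}$ indeed needs $n\ge 4$, and while $n=2,3$ check out by direct computation ($14\le 16$ and $128\le 512$), the case $n=1$ does not --- there are $4$ threshold functions on $\{0,1\}$ but $2^{n^2}=2$, so the proposition as literally stated fails at $n=1$ and your enumeration step cannot ``verify'' it there. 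This is a defect of the statement (implicitly one should assume $n\ge 2$) rather than of your argument, and it is harmless for the paper's application where $n$ is large, but it should be recorded as a restriction.
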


The complete proof of  \autoref{thm:01-vertex-expansion} can be found in \autoref{app:01-polytope}.

\subsection{Mixing Time} \label{sec:consequence}

\autoref{cor:01-slow-uniform} follows from \autoref{thm:01-vertex-expansion} and the easy direction of Cheeger's inequality for vertex expansion in \autoref{thm:OTZ22}.

\begin{proofof}{\autoref{cor:01-slow-uniform}}
Let $Q = L \cup M \cup R$ be a $0/1$-polytope from \autoref{def:construction}.
The number of vertices of $Q$ is $|L| + |M| + |R| \lesssim {n \choose k} \le (en/k)^k$.
By Theorem \ref{thm:01-vertex-expansion}, the vertex expansion of the graph $G_Q$ is $\psi(G_Q) \lesssim (4k)^k / n^{k-2}$.
Therefore, by the easy direction in \autoref{thm:OTZ22},
the mixing time of any reversible chain $P \in \R^{|V| \times |V|}$ on $G_Q$ with stationary distribution $\pi = \frac{1}{|V|} \vec{1}$ is at least  
\begin{eqnarray*}
\tau^*(G)
\gtrsim \frac{1}{\psi(G_Q)}
\gtrsim \frac{n^{k-2}}{(4k)^k}
\gtrsim \left( \frac{en}{k} \right)^{k-2}
\gtrsim |V|^{1 - \frac{2}{k}},
\end{eqnarray*}
where the second last inequality is by the assumption that $k$ is a constant and so only the exponent of $n$ matters, and the last inequality is by $|V| \leq (en/k)^k$ explained above.
\end{proofof}

The implication of \autoref{cor:01-slow-uniform} has been discussed in \autoref{sec:intro-polytope} and we won't repeat here.
\section{Tight Example to Cheeger's Inequality for Vertex Expansion}
\label{sec:tight-example}

The goal of this section is to construct a family of tight examples to \autoref{thm:Cheeger-vertex} when $\pi = \vec{1}/|V|$. The graphs constructed will have non-constant maximum degree $d$ and satisfy
\[
  \frac{\psi(G)^2}{\log d} \asymp \lambda_2^*(G).
\]
These examples are suggested to us by Shayan Oveis Gharan.

The graphs are realized as proximity graphs on $\S^{k-1}$.
We employ the following notations:
$\mu$ denotes the normalized Lebesgue measure on $\S^{k-1}$.
$\Delta(\cdot, \cdot)$ denotes the geodesic distance on $\S^{k-1}$.
For $\theta \in \S^{k-1}$ and $r > 0$, $B(\theta, r)$ denotes the set of points of distance less than $r$ from $\theta$.
$Cap(r)$ denotes a generic spherical cap on $\S^{k-1}$ of radius $r$.
For $S \subseteq \S^{k-1}$, we use $S_r$ or $S + Cap(r)$ to denote the set of points of distance less than $r$ from $S$.

\begin{definition}[Spherical Proximity Graph]
  Given $k$, a $(\gamma, \delta)$-spherical proximity graph in $k$ dimensions is a graph $G_k = (V, E)$ that can be constructed as follows:
  \begin{enumerate}
    \item Partition $\S^{k-1}$ into $n$ cells: $S_1, S_2, \dots, S_n$, such that each cell has diameter at most $\gamma$ and measure between $\eps := \mu(Cap(\gamma/4))$ and $2 \eps$. Take a point $x_i \in S_i$ for each $i \in [n]$.
    
    \item Set $V = \{x_i\}_{i \in [n]}$ and $E = \{(x_i, x_j) \in V \times V: \Delta(x_i, x_j) < \delta\}$.
  \end{enumerate}
\end{definition}

We shall show that, for suitable choices of $\gamma$ and $\delta$, the resulting graph will be a vertex expander with $\lambda_2^* = O(1/\log d)$.

\begin{theorem}[Tight Example of \autoref{thm:Cheeger-vertex}]
  \label{thm:sphere-example}
  Given any $k \in \N$, if we set $\gamma = c_1 / \sqrt{k}, \delta = c_2 / \sqrt{k}$ for constants $0 < c_1 < c_2$, then for any $(\gamma, \delta)$-spherical proximity graph $G_k$:
  \begin{itemize}
      \item the maximum degree $d$ of $G_k$ is $2^{O(k)}$;
      \item $\psi(G_k) = \Theta(1)$; and
      \item $\lambda_2^*(G_k) = O(1/\log d)$.
  \end{itemize}
\end{theorem}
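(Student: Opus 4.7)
The plan is to verify each of the three claimed bounds in turn. The maximum degree and vertex expansion are geometric properties of the sphere partition, while the upper bound on $\lambda_2^*(G_k)$ follows from Roch's SDP duality (\autoref{prop:Roch-dual}) by exhibiting a dual feasible solution.

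For the maximum degree, if $x_j$ is a neighbor of $x_i$ then $S_j \subseteq B(x_i, \delta + \gamma)$, since $\Delta(x_i, x_j) < \delta$ and every cell has diameter at most $\gamma$. Since every cell has measure at least $\eps = \mu(\mathrm{Cap}(\gamma/4))$, the number of neighbors of $x_i$ is bounded by $\mu(\mathrm{Cap}(\delta+\gamma))/\mu(\mathrm{Cap}(\gamma/4))$. With $\delta, \gamma = \Theta(1/\sqrt{k})$, both caps lie in the small-angle regime where $\mu(\mathrm{Cap}(r)) \asymp r^{k-1}\sqrt{k}/(k-1)$, so the ratio evaluates to $(4(c_1+c_2)/c_1)^{k-1} \cdot O(1) = 2^{O(k)}$; a matching lower bound from counting cells contained in $B(x_i, \delta-\gamma)$ shows $\log d = \Theta(k)$.

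For the vertex expansion lower bound, let $S \subseteq V$ with $|S| \le n/2$ and set $\tilde S := \bigcup_{x_i \in S} S_i$, so $\mu(\tilde S) \le 1/2 + O(\eps)$. The key observation is that if $x_j \in V \setminus S$ has geodesic distance less than $\delta - \gamma$ from $\tilde S$, then by triangle inequality through the cell containing the nearest point, $\Delta(x_j, x_i) < \delta$ for some $x_i \in S$, so $x_j \in \partial S$. This yields $|\partial S| \ge \mu(\tilde S_{\delta-\gamma} \setminus \tilde S)/(2\eps)$ while $|S| \le \mu(\tilde S)/\eps$, giving $\psi(S) \ge \mu(\tilde S_{\delta-\gamma} \setminus \tilde S)/(2\mu(\tilde S))$. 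By L\'evy's spherical isoperimetric inequality the right-hand side is minimized when $\tilde S$ is a spherical cap of matching measure; a case analysis on the cap radius $r$---with small caps amplifying by a factor $(1 + (\delta-\gamma)/r)^{k-1}$ that is super-constant whenever $r \lesssim (k-1)(\delta-\gamma)$, and caps of measure near $1/2$ getting $\Omega(1)$ expansion by Gaussian-type concentration on $\S^{k-1}$ when $\delta-\gamma = \Theta(1/\sqrt k)$---shows $\psi(S) = \Omega(1)$ provided $c_2 - c_1$ is a sufficiently large constant.

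For the upper bound on $\lambda_2^*(G_k)$, I apply the SDP duality $\lambda_2^*(G_k) = \gamma(G_k)$ and exhibit a dual feasible solution. Set $f(x_i) := (x_i - \bar x)/\sqrt{1 - \|\bar x\|^2}$ where $\bar x := \frac{1}{n}\sum_i x_i$. By the approximate symmetry of the partition (cells of comparable measure whose representatives are spread across the sphere), $\|\bar x\| = o(1)$, in particular $\|\bar x\|^2 \le 1/2$. Then $f$ satisfies $\sum_v \pi(v) f(v) = 0$ and $\sum_v \pi(v) \|f(v)\|^2 = 1$, and for every edge $x_i x_j \in E$,
\[
\|f(x_i) - f(x_j)\|^2 \le \frac{\|x_i - x_j\|_{\R^k}^2}{1 - \|\bar x\|^2} \le 2\Delta(x_i, x_j)^2 \le 2\delta^2 = \frac{2c_2^2}{k},
\]
using that Euclidean distance is at most geodesic distance on the sphere. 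Setting $g \equiv \delta^2$ makes $(f,g)$ feasible with objective $\delta^2 = O(1/k) = O(1/\log d)$, as desired. The hard part of this proof is the vertex expansion lower bound: one must handle the full range of cap sizes $\mu(\tilde S) \in (0, 1/2]$ uniformly, and tune the constants $c_1 < c_2$ so that $\delta - \gamma$ is large enough for the isoperimetric margin to yield $\Omega(1)$ expansion in every regime while $\delta + \gamma$ stays small enough to keep the max degree at $2^{O(k)}$. The degree bound and the dual-feasibility argument for $\lambda_2^*$ are then one-line consequences of standard spherical-cap estimates and the vector formulation of Roch's dual.
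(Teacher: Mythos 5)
Your proposal follows essentially the same route as the paper for the first two bullets: the degree bound is the identical volume-ratio count (each neighbouring cell lies in $B(x_i,\delta+\gamma)$ and has measure at least $\mu(Cap(\gamma/4))$, cf.\ \autoref{lemma:degree-expansion-volume} and \autoref{prop:degree-bound-sphere-example}), and the expansion bound is the same reduction: compare $|\partial S|$ with $\mu(S_T+Cap(\delta-\gamma))-\mu(S_T)$, invoke spherical isoperimetry to pass to caps, and show a cap grows by a $1+\Omega(1)$ factor when enlarged by $\delta-\gamma$ (cf.\ \autoref{prop:expansion-bound-sphere-example}). For the third bullet you certify $\lambda_2^*$ through Roch's dual $\gamma(G)$ (\autoref{prop:Roch-dual}) with the explicit pair $f(x_i)=(x_i-\bar x)/\sqrt{1-\norm{\bar x}^2}$, $g\equiv\delta^2$, whereas the paper stays on the primal side and feeds the same centered spherical embedding $v_i=x_i-\bar x$ as test vectors into the Rayleigh quotient for every feasible reweighting $P$ (\autoref{prop:eigenvalue-bound-sphere-example}). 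These are equivalent certificates, and the dual phrasing is a perfectly fine (arguably cleaner) variant.

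Two steps, however, need more justification than you give. First, feasibility of $g\equiv\delta^2$ rests on $\norm{\bar x}^2\le 1/2$, which you attribute to ``approximate symmetry''; the cells are only equal in measure up to a factor of $2$ and the representatives are arbitrary points of their cells, so this is not automatic. It is true, but it needs an argument --- for instance the paper's: $1-\norm{\bar x}^2=\frac1n\sum_i\norm{x_i-\bar x}^2=\frac1{2n^2}\sum_{i,j}\norm{x_i-x_j}^2=\Omega(1)$, because a constant fraction of the points lie at geodesic distance at least $\pi/2-\gamma$ (hence Euclidean distance $\Omega(1)$) from any fixed $x_i$; alternatively one can bound $\norm{\bar x}\lesssim 1/\sqrt k+\gamma$ by comparing the unweighted average of representatives to the (zero) spherical centroid, using that the cell ``density'' is within a factor $2$ of uniform and that linear marginals of the sphere have first absolute moment $O(1/\sqrt k)$. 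Second, in the expansion case analysis, the amplification factor $(1+(\delta-\gamma)/r)^{k-1}$ is a valid lower bound on $\mu(Cap(r+\delta-\gamma))/\mu(Cap(r))$ only in the genuinely small-angle regime; for caps of constant radius up to $\pi/2-\Theta(1/\sqrt k)$ (exponentially small measure, but not ``small caps'') the correct growth rate is governed by $\exp\big(\Theta(\sqrt k)\cot r\big)$, which the paper extracts from the pointwise comparison $\sin^{k-2}(x+c/\sqrt k)\ge(1+\Omega(1))\sin^{k-2}(x)$ on $[0,\pi/2-c/\sqrt k]$, reserving the near-equator band for the concentration (Tkocz-type) bound, as you also intend. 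With these two repairs your argument coincides with the paper's proof.
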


The rest of the section is organized as follows.
First, we show that the construction of the spherical proximity graph is indeed possible.
Then, we prove the degree and expansion bounds, by relating them to volume ratios on the sphere.
Finally, we prove the bound on reweighted eigenvalue, using the spherical embedding of the graph.
All the proofs in this section are deferred to \autoref{app:tight-example}.

\subsection{The Graph Construction}

The following proposition is modified from {\cite[Lemma 8.3.22]{GM12}} and shows that we can always construct a spherical proximity graph.

\begin{proposition}[Constructing Spherical Proximity Graph]
  \label{prop:sphere-graph-construction}
  For every $k$ and $\gamma > 0$, there exists $n$ such that $\S^{k-1}$ can be partitioned into cells $S_1, \dots, S_n$ with ${\mathrm{diam}}(S_i) \le \gamma$ and $\mu(S_i) \in [\eps, 2 \eps]$ for all $i \in [n]$, where $\eps := \mu(Cap(\gamma / 4))$.
  Therefore, $n = \Theta(\mu(Cap(\gamma / 4))^{-1})$.
\end{proposition}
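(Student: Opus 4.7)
The plan is to build the partition by first constructing a Voronoi diagram and then refining any cells that are too large. First, I would take a maximal $(\gamma/2)$-separated set $\{x_1,\ldots,x_m\}\subset\S^{k-1}$, which exists by compactness of $\S^{k-1}$. By maximality, every point $y\in\S^{k-1}$ satisfies $\Delta(y,x_i)<\gamma/2$ for some $i$; by $(\gamma/2)$-separation, the open balls $B(x_i,\gamma/4)$ are pairwise disjoint.

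Next, form Voronoi cells $V_i:=\{y\in\S^{k-1}: i \text{ is the smallest index attaining } \min_j \Delta(y,x_j)\}$. Since every $y$ has some $x_j$ within distance $\gamma/2$, the chosen $x_i$ lies within $\gamma/2$ of $y$, so $V_i\subseteq \overline{B}(x_i,\gamma/2)$ and hence $\mathrm{diam}(V_i)\le \gamma$ by the triangle inequality. Conversely, for any $y\in B(x_i,\gamma/4)$ and any $j\neq i$,
\[
\Delta(x_j,y)\ge \Delta(x_j,x_i)-\Delta(x_i,y)>\gamma/2-\gamma/4=\gamma/4>\Delta(x_i,y),
\]
so $V_i\supseteq B(x_i,\gamma/4)$ and hence $\mu(V_i)\ge\mu(\mathrm{Cap}(\gamma/4))=\eps$.

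At this stage every cell has diameter at most $\gamma$ and measure at least $\eps$, but some may exceed $2\eps$; these are the cells that still need to be refined. For each such $V_i$, I would repeatedly peel off a measurable subset of measure exactly $\eps$, declaring each peeled subset a new cell. A subset of any prescribed measure $t\in[0,\mu(V_i)]$ can be produced by sweeping $V_i$ with a continuously varying family of geodesic halfspaces $\{H_s\}$ and applying the intermediate value theorem to the continuous function $s\mapsto \mu(V_i\cap H_s)$. The peeling stops as soon as the residual measure drops to at most $2\eps$; since the measure was strictly greater than $2\eps$ immediately before the final peel, after the peel the residual has measure strictly in $(\eps,2\eps]$. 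All resulting pieces inherit the diameter bound $\le \gamma$ from their parent, so the final collection $\{S_1,\ldots,S_n\}$ satisfies both requirements, and summing $\mu(S_i)\in[\eps,2\eps]$ yields $n=\Theta(\mu(\mathrm{Cap}(\gamma/4))^{-1})$.

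The only step that is not completely routine is the measurable peeling of prescribed sub-measure, which I would handle via the sweeping/IVT argument above; everything else is a standard packing and covering argument on $\S^{k-1}$, so no significant obstacle remains.
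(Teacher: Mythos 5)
Your proof is correct and follows essentially the same route as the paper: a maximal $(\gamma/2)$-separated set, Voronoi cells containing $B(x_i,\gamma/4)$ and contained in $B(x_i,\gamma/2)$, and then refinement of the oversized cells into pieces of measure in $[\eps,2\eps]$. The only difference is that you spell out the subdivision step via a halfspace-sweep and the intermediate value theorem, which the paper simply describes as subdividing the cells evenly.
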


\subsection{Degree and Expansion Bounds}

Now, we establish bounds on the maximum degree and vertex expansion of $G_k$. We do so by connecting the quantities with the continuous notion of volume.

Recall our choice of parameters $\gamma = c_1 / \sqrt{k}$ and $ \delta = c_2 / \sqrt{k}$, where $0 < c_1 < c_2$ are constants.

\begin{lemma}[Degree, Expansion, and Volume]
  \label{lemma:degree-expansion-volume}
  Let $G = (V, E)$ be the graph constructed above.
  \begin{itemize}
      \item The maximum degree of $G$ is at most
      \[
        \frac{\mu(Cap(\delta + \gamma))}
                {\mu(Cap(\gamma/4))}.
      \]
      
      \item For any $T \subseteq V$ with $|T| \le |V|/2$, if we let $S_T := \cup_{i \in T} S_i$, then
      \[
        \psi(T) \ge \frac{\mu(S_T + Cap(\delta - \gamma)) - \mu(S_T)}{2 \mu(S_T)}.
      \]
  \end{itemize}
\end{lemma}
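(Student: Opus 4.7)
Both bounds will follow from packing-style arguments that convert graph-theoretic cardinalities into Lebesgue measures on $\mathbb{S}^{k-1}$, exploiting the cell property $\mu(S_j) \in [\epsilon, 2\epsilon]$ with $\epsilon := \mu(Cap(\gamma/4))$.

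For the degree bound, I will fix a vertex $x_i$. Every neighbor $x_j$ satisfies $\Delta(x_i, x_j) < \delta$, so $x_j \in B(x_i, \delta)$. Since $\mathrm{diam}(S_j) \le \gamma$ and $x_j \in S_j$, this gives $S_j \subseteq B(x_i, \delta + \gamma) = Cap(\delta + \gamma)$. The cells $\{S_j\}$ are pairwise disjoint and each has measure at least $\epsilon$, so summing their measures inside $Cap(\delta + \gamma)$ bounds the number of neighbors by $\mu(Cap(\delta + \gamma))/\mu(Cap(\gamma/4))$.

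For the expansion bound, the plan is to first reduce to a measure-theoretic inequality and then establish a geometric inclusion. The cell-measure bounds give $|T| \le \mu(S_T)/\epsilon$ and $|\partial T| \ge \mu(S_{\partial T})/(2\epsilon)$, so $\psi(T) = |\partial T|/|T| \ge \mu(S_{\partial T})/(2\mu(S_T))$. It therefore suffices to establish the inclusion
\[
(S_T + Cap(\delta - \gamma)) \setminus S_T \; \subseteq \; S_{\partial T},
\]
which yields $\mu(S_{\partial T}) \ge \mu(S_T + Cap(\delta - \gamma)) - \mu(S_T)$ and completes the proof. To verify the inclusion, I will take $y$ in the LHS; then $y \in S_j$ for some $j \notin T$, and $\Delta(y, z) < \delta - \gamma$ for some $z \in S_i \subseteq S_T$ with $i \in T$. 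Combining $\Delta(z, x_i) \le \gamma$ and $\Delta(y, x_j) \le \gamma$ with a triangle inequality will produce a vertex $x_i \in T$ satisfying $\Delta(x_i, x_j) < \delta$, certifying the edge $(x_i, x_j) \in E$ and hence $j \in \partial T$.

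The delicate point is the triangle inequality in this final step: the naive chain $\Delta(x_i, x_j) \le \Delta(x_i, z) + \Delta(z, y) + \Delta(y, x_j) < \gamma + (\delta - \gamma) + \gamma = \delta + \gamma$ is off by $\gamma$, so the estimate must be sharpened. The plan is to exploit that the witnessing point $z$ need not be the worst-case point of $S_i$: one may route the chain through the nearest point of $S_i$ to $x_j$ (or of $S_j$ to $x_i$) rather than an arbitrary $z$, and then use that the cap $Cap(\delta - \gamma)$ already contains a $\gamma$-margin of slack. I expect this geometric tightening to be the main obstacle, but it is a soft one: in the final application one has $\delta = c_2/\sqrt{k}$ and $\gamma = c_1/\sqrt{k}$ with the free parameters $c_1 < c_2$, and $\mu(Cap(\delta - \gamma))$ and $\mu(Cap(\delta - O(\gamma)))$ differ only by a constant factor, so any residual slack is absorbed in the downstream application of \autoref{lemma:degree-expansion-volume}.
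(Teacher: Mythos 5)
Both bullets of your proposal follow essentially the same route as the paper. The degree bound is verbatim the paper's argument (disjoint neighbour cells, each of measure at least $\mu(Cap(\gamma/4))$, packed into a cap of radius $\delta+\gamma$). For the expansion bound, your skeleton — convert $|T|$ and $|\partial(T)|$ into measures via $\epsilon \le \mu(S_j) \le 2\epsilon$ and reduce to the inclusion $(S_T + Cap(\delta-\gamma)) \setminus S_T \subseteq S_{\partial T}$ — is also the paper's; the paper merely organizes the inclusion through the cells wholly contained in $S_T + Cap(\delta)$ (whose union it shows contains $S_T + Cap(\delta-\gamma)$) together with the claim that the vertices of those cells lie in $T \cup \partial(T)$. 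The delicate point you flag is genuine, but you should know the paper does not resolve it either: its claim that $S_j \subseteq S_T + Cap(\delta)$ forces $x_j \in T \cup \partial(T)$ hides exactly the same slack, since unwinding it only gives $\Delta(x_i,x_j) \le \gamma + (\text{dist to } S_T) + 0 < \delta + \gamma$ rather than $< \delta$. A careful version of either argument therefore proves the second bullet with $Cap(\delta-2\gamma)$ in place of $Cap(\delta-\gamma)$ (equivalently, one could build cells of diameter $\gamma/2$). Of your two proposed repairs, only the second works: routing the chain through the nearest point of $S_i$ to $x_j$ does not help, because the loss is the distance from the witnessing point $z \in S_i$ to the designated vertex $x_i$, which can be a full cell diameter $\gamma$ no matter which $z$ is chosen; by contrast, absorbing the constant downstream is exactly right, since the lemma is only applied with $\gamma = c_1/\sqrt{k}$, $\delta = c_2/\sqrt{k}$, and the argument of \autoref{prop:expansion-bound-sphere-example} only needs a cap of radius $c/\sqrt{k}$ for some constant $c > 0$, so one simply asks for $c_2 - 2c_1$ (instead of $c_2 - c_1$) to be sufficiently large. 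So your plan is sound and matches the paper's, provided you either restate the bullet with $\delta - 2\gamma$ or shrink the cell diameter, rather than trying to sharpen the triangle inequality.
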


After relating the graph degree and vertex expansion to volumes of portions of sphere, we shall now use results from high-dimensional geometry to obtain bounds on the maximum degree and vertex expansion.

\begin{proposition}[Degree Bound]
  \label{prop:degree-bound-sphere-example}
  For constants $0 < c_1 < c_2$, it holds that
  \[
    \frac{\mu(Cap(\delta + \gamma))}
                {\mu(Cap(\gamma/4))}
    \le
    2^{O(k)}.
  \]
\end{proposition}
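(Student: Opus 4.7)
The plan is to reduce the cap-ratio bound to the elementary integral formula for surface area on $\S^{k-1}$. Recall that the normalized measure of a geodesic cap of radius $r \in [0,\pi]$ is
\[
\mu(Cap(r)) = \frac{\int_0^r \sin^{k-2}(\theta)\,d\theta}{\int_0^\pi \sin^{k-2}(\theta)\,d\theta},
\]
so the normalizing factor in the denominator cancels in the ratio and I only need to compare the two incomplete integrals $\int_0^{\delta+\gamma}\sin^{k-2}\theta\,d\theta$ and $\int_0^{\gamma/4}\sin^{k-2}\theta\,d\theta$.

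Next, I would exploit that $\gamma = c_1/\sqrt{k}$ and $\delta = c_2/\sqrt{k}$, so both $\delta+\gamma$ and $\gamma/4$ lie in $[0,\pi/2]$ once $k$ exceeds a constant depending on $c_1,c_2$. On $[0,\pi/2]$ the elementary inequality $\tfrac{2}{\pi}\theta \le \sin\theta \le \theta$ gives
\[
\Big(\tfrac{2}{\pi}\Big)^{k-2}\frac{r^{k-1}}{k-1} \;\le\; \int_0^r \sin^{k-2}(\theta)\,d\theta \;\le\; \frac{r^{k-1}}{k-1}
\]
for every such $r$. Applying the upper bound with $r = \delta+\gamma$ and the lower bound with $r = \gamma/4$ yields
\[
\frac{\mu(Cap(\delta+\gamma))}{\mu(Cap(\gamma/4))}
\;\le\;
\Big(\tfrac{\pi}{2}\Big)^{k-2}
\cdot
\left(\frac{\delta+\gamma}{\gamma/4}\right)^{k-1}
= \Big(\tfrac{\pi}{2}\Big)^{k-2}\cdot\Big(\tfrac{4(c_1+c_2)}{c_1}\Big)^{k-1}.
\]
Since $c_1,c_2$ are absolute constants, both factors are of the form $C^{O(k)}$, so the whole ratio is $2^{O(k)}$ as claimed.

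I expect no substantial obstacle here: the only thing to watch is that the small-angle sine bounds require $\delta+\gamma \le \pi/2$, which holds once $k$ is at least a constant depending on $c_1,c_2$; for the finitely many smaller values of $k$ the ratio is bounded by an absolute constant that is trivially absorbed into the $2^{O(k)}$ bound. Notice also that the $1/\sqrt{k}$ scaling of $\gamma,\delta$ plays no role in the ratio itself (it cancels between numerator and denominator), which is what makes the bound clean; its role is only to keep $\delta+\gamma$ in the small-angle regime so that the linear approximation of $\sin\theta$ is tight up to a universal constant factor.
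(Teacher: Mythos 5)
Your proposal is correct. It follows the same overall skeleton as the paper's proof: both reduce $\mu(Cap(r))$ to the incomplete integral $\int_0^r \sin^{k-2}\theta\,d\theta$ (the normalization cancels) and then bound the ratio of the two integrals at radii $\delta+\gamma$ and $\gamma/4$, which differ only by a constant factor since both scale as $1/\sqrt{k}$. The one place you diverge is in how the integrals are estimated: the paper uses the third-order approximation $\sin^{k}\theta \approx_k (\theta-\theta^3/6)^k$ together with a change of variables to get an essentially exact asymptotic $\int_0^x \sin^k\theta\,d\theta \approx_k (x-x^3/6)^{k+1}/(k+1)$, whereas you use the cruder two-sided bound $\tfrac{2}{\pi}\theta \le \sin\theta \le \theta$ on $[0,\pi/2]$. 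Your route is more elementary and pays only an extra $(\pi/2)^{k-2}$ factor, which is immaterial for a $2^{O(k)}$ conclusion; the paper's sharper estimate yields the cleaner base $4(c_1+c_2)/c_1$ but buys nothing extra for this proposition. Your handling of the edge cases (requiring $\delta+\gamma\le\pi/2$, absorbing the finitely many small $k$ into the constant) is also fine.
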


\begin{proposition}[Small-Volume Expansion of the Sphere]
  \label{prop:expansion-bound-sphere-example}
  For constants $0 < c_1 < c_2$ with $c_2 - c_1$ sufficiently large,
  it holds that for all $T \subseteq V$ with $|T| \le |V|/2$,
  \[
    \frac{\mu(S_T + Cap(\delta - \gamma)) - \mu(S_T)}{\mu(S_T)} \ge \Omega(1).
  \]
\end{proposition}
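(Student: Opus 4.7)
The plan is to transfer the discrete vertex expansion estimate to the sphere and then apply the Lévy–Schmidt spherical isoperimetric inequality on $\mathbb{S}^{k-1}$.

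First I will bound $\mu(S_T)$ from the hypothesis $|T| \le |V|/2$. Writing $n = |V|$, the normalization $\sum_i \mu(S_i) = 1$ combined with $\mu(S_i) \in [\varepsilon, 2\varepsilon]$ forces $\varepsilon = \Theta(1/n)$, so $\mu(S_T) \le 2|T|/n \le 1$. For the isoperimetric argument below I need $\mu(S_T) \le 1/2$, which can be arranged by a minor sharpening of \autoref{prop:sphere-graph-construction} — taking cells of (nearly) equal measure — and I will make this assumption from here on.

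Second, I will apply spherical isoperimetry: among all measurable $A \subseteq \mathbb{S}^{k-1}$ of a given measure, a spherical cap minimizes $\mu(A + Cap(r))$. Writing $r := \delta - \gamma = (c_2 - c_1)/\sqrt{k}$ and choosing $\theta \in [0, \pi/2]$ with $\mu(Cap(\theta)) = \mu(S_T)$, this gives $\mu(S_T + Cap(r)) \ge \mu(Cap(\theta + r))$. The task reduces to showing
\[
\mu(Cap(\theta + r)) - \mu(Cap(\theta)) \gtrsim \mu(Cap(\theta)) \qquad \text{for every } \theta \in [0, \pi/2].
\]

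Third, I will prove this by direct calculation on caps. Writing $\mu(Cap(\theta)) = C_k^{-1} \int_0^\theta \sin^{k-2}(t)\,dt$ and using the identity $\int_0^\theta \sin^{k-2}(t)\cos t\,dt = \sin^{k-1}(\theta)/(k-1)$ together with $\cos t \ge \cos\theta$ on $[0, \theta]$, I get
\[
\mu(Cap(\theta)) \le \frac{\sin^{k-1}(\theta)}{(k-1)\cos\theta \cdot C_k}, \qquad \mu(Cap(\theta + r)) - \mu(Cap(\theta)) \ge \frac{r \sin^{k-2}(\theta)}{C_k},
\]
so the ratio in question is at least $(k-1)\, r\, \cos\theta / \sin\theta = (k-1) r / \tan\theta$. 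For $\theta \le \pi/2 - 1/\sqrt{k}$, $\tan\theta = O(\sqrt{k})$, which makes this ratio at least $(k-1) r / \sqrt{k} = \Omega(c_2 - c_1)$. The remaining narrow strip $\theta \in (\pi/2 - 1/\sqrt{k},\, \pi/2]$ is handled directly: Wallis' estimate gives $C_k = \Theta(1/\sqrt{k})$, so the density $\sin^{k-2}(t)/C_k$ is $\Theta(\sqrt{k})$ near the equator, yielding $\mu(Cap(\theta + r)) - \mu(Cap(\theta)) = \Omega(r\sqrt{k}) = \Omega(c_2 - c_1)$, while $\mu(Cap(\theta)) \le 1/2$. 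Choosing $c_2 - c_1$ sufficiently large then makes both ratios at least a fixed positive constant.

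The main obstacle is the regime where $\mu(S_T)$ is close to $1$: there $\mu(S_T + Cap(r)) \le 1$ precludes any constant multiplicative expansion. This is the only place the cell-measure uniformity of the construction is used, and it is bypassed in the first step; without that adjustment one would instead have to exploit the symmetry $\mu(Cap(\theta + r)) - \mu(Cap(\theta)) = \mu(Cap(\pi - \theta)) - \mu(Cap(\pi - \theta - r))$ to argue via the complement, which however still fails once $\mu(S_T) \ge 1 - e^{-\Omega((c_2 - c_1)^2)}$.
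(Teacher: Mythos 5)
Your overall route is the same as the paper's: use spherical isoperimetry to reduce to the case where $S_T$ is a cap, then show a cap gains a constant factor of measure under enlargement by $r=(c_2-c_1)/\sqrt{k}$. Your cap estimate (comparing $\mu(Cap(\theta))\le \sin^{k-1}\theta\big/\big((k-1)\cos\theta\cdot\int_0^\pi\sin^{k-2}t\,dt\big)$ against the increment) is a legitimate and somewhat more direct alternative to the paper's argument, which instead proves $\int_x^{x+r}\sin^{k-2}t\,dt\ge(1+\Omega(1))\int_{x-r}^{x}\sin^{k-2}t\,dt$ via monotonicity of $\sin^{k-2}(x+r)/\sin^{k-2}(x)$ and telescoping, with an exponential cap-volume bound near the equator.

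The genuine gap is your first step. The proposition is about the $(\gamma,\delta)$-spherical proximity graph as constructed, whose cells are only guaranteed to have measure in $[\eps,2\eps]$; you are not free to ``sharpen'' \autoref{prop:sphere-graph-construction} to nearly equal cells, so as written you prove a statement about a different graph. The fix is already available from the stated hypotheses: since $\mu(S_T)\le 2\eps|T|$ and $1-\mu(S_T)\ge\eps(|V|-|T|)\ge\eps|T|$, the condition $|T|\le|V|/2$ forces $\mu(S_T)\le 2(1-\mu(S_T))$, i.e.\ $\mu(S_T)\le 2/3$. Thus the only regime beyond your cap analysis is $\mu(S_T)\in(1/2,2/3]$, and there the complement/concentration argument you dismiss does work: for $\mu(S_T)\ge 1/2$, L\'evy's inequality together with the cap bound $\mu(Cap(\pi/2-\theta))\le e^{-k\sin^2\theta/2}$ gives $\mu(S_T+Cap(r))\ge 1-e^{-\Omega((c_2-c_1)^2)}$, so the increment is at least $1-e^{-\Omega((c_2-c_1)^2)}-2/3=\Omega(1)$ once $c_2-c_1$ is large; the failure mode you cite ($\mu(S_T)\ge 1-e^{-\Omega((c_2-c_1)^2)}$) simply cannot occur under the hypothesis. (Your instinct that something must be said here is sound --- the paper's own proof quietly restricts to $\tau\le\pi/2$ --- but the remedy is this observation, not altering the construction.) Two smaller slips, both repairable and not affecting the $\Omega(1)$ conclusion: the increment bound $r\sin^{k-2}\theta$ requires $\theta+r\le\pi-\theta$, which fails for $\theta$ within $r/2$ of $\pi/2$ once $c_2-c_1>2$ (truncate the integral at $\pi-\theta$; the ratio is then only $\Omega(1)$ rather than $\Omega(c_2-c_1)$, which suffices); and near the equator the claimed increment $\Omega(r\sqrt{k})=\Omega(c_2-c_1)$ is impossible since the increment is at most $1$ --- what is true, and enough, is that the band of width $\asymp 1/\sqrt{k}$ just past $\theta$ already contributes $\Omega(1)$ measure while $\mu(Cap(\theta))\le 1/2$.
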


Combining \autoref{lemma:degree-expansion-volume}, \autoref{prop:degree-bound-sphere-example}, and \autoref{prop:expansion-bound-sphere-example} gives the first two parts of \autoref{thm:sphere-example}.

\subsection{Reweighted eigenvalue bound}

Finally, we bound $\lambda_2^*(G)$. On a high level, if a graph can be embedded in a unit sphere via $i \mapsto v_i$, such that the centre of mass is the origin and that each edge $ij$ satisfies $\norm{v_i-v_j} \leq \delta$, then the embedding certifies that $\lambda_2^*(G) \leq \delta^2$.
This is basically the proof.
The details can be found in \autoref{app:tight-example}.

\begin{proposition}[Reweighted Eigenvalue Bound]
  \label{prop:eigenvalue-bound-sphere-example}
  For constants $0 < c_1 < c_2$, it holds that
  \[
    \lambda_2^*(G_n) \lesssim \frac{1}{k} \lesssim \frac{1}{\log d}.
  \]
\end{proposition}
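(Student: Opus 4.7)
The plan is to upper-bound $\lambda_2^*(G_k) = \gamma(G_k)$ from \autoref{prop:Roch-dual} (using strong duality) by directly exhibiting a feasible solution to the dual program built from the spherical embedding itself. Set $f(x_i) := x_i \in \R^k \hookrightarrow \R^n$ (padded with zeros); since each $\|f(v)\|^2 = 1$, the normalization $\sum_v \pi(v)\|f(v)\|^2 = 1$ holds automatically under the uniform $\pi$. For every edge $uv = x_i x_j \in E$, the chord-versus-arc bound $\|x_i - x_j\| = 2\sin(\Delta(x_i,x_j)/2) \le \Delta(x_i,x_j) < \delta$ gives $\|f(u) - f(v)\|^2 \le \delta^2 = c_2^2/k$, so setting $g(v) \equiv \delta^2/2$ uniformly satisfies the edge constraint $g(u) + g(v) \ge \|f(u) - f(v)\|^2$. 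The resulting objective is $\sum_v \pi(v) g(v) = \delta^2/2 = O(1/k)$. Combined with \autoref{prop:degree-bound-sphere-example}, which gives $\log d = O(k)$, this yields $\lambda_2^*(G_k) \lesssim 1/k \lesssim 1/\log d$ as required.

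The main obstacle is the centering constraint $\sum_v \pi(v) f(v) = \vec 0$, which the raw embedding generically violates. Writing $\bar c := (1/n) \sum_i x_i$, the standard fix is to replace $f$ by $\tilde f(v) := (x_i - \bar c)/\sqrt{1 - \|\bar c\|^2}$; this satisfies both the centering and normalization, while inflating the edge lengths (and hence the objective) by a factor $1/(1 - \|\bar c\|^2)$. The whole argument then goes through provided $\|\bar c\|^2 \le 1 - \Omega(1)$, so the remaining technical task is to establish this bound for every admissible partition. I would attempt this via a Riemann-sum comparison: for each coordinate $j \in [k]$, the vanishing first moment $\int_{\S^{k-1}} y_j \, d\mu(y) = 0$ combined with the diameter bound $\mathrm{diam}(S_i) \le \gamma$ yields $\bigl\| \sum_i \mu(S_i) x_i \bigr\| \le \gamma$, and one then converts the measure weights $\mu(S_i) \in [\epsilon, 2\epsilon]$ to the uniform weights $1/n$ by exploiting that the differences $\mu(S_i) - 1/n$ have zero sum and magnitude $O(\epsilon) = O(1/n)$. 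A cleaner alternative, should the direct discretization estimate prove too delicate, is to refine \autoref{prop:sphere-graph-construction} to produce an antipodally symmetric partition---partition the upper hemisphere and mirror its cells to the lower hemisphere, choosing representatives in antipodal pairs $(x_i, -x_i)$---which forces $\bar c = 0$ exactly while preserving the diameter and measure properties. I expect quantifying the discrepancy between the empirical measure $(1/n) \sum_i \delta_{x_i}$ and uniform measure on $\S^{k-1}$ to be the only non-trivial step; once that is in hand, the rest of the proof is the one-line computation above.
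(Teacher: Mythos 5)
Your construction is, modulo SDP duality, the same as the paper's: the paper plugs the centered (unnormalized) vectors $v_i := x_i - \bar c$, $\bar c := \frac1n\sum_j x_j$, into the Rayleigh quotient for an arbitrary doubly stochastic reweighting $P$, which is exactly the certificate your feasible dual solution $(\tilde f, g)$ encodes, and the numerator bound $\norm{x_i-x_j}\le \delta$ is identical. The issue is the step you defer, namely $1-\norm{\bar c}^2 = \Omega(1)$: this is not a routine afterthought but the entire content of the denominator bound, and neither of your sketched routes closes it. The Riemann-sum comparison gives $\bignorm{\sum_i \mu(S_i)x_i}\le\gamma$ correctly, but the conversion from the weights $\mu(S_i)$ to $1/n$ fails as described: the differences $e_i := 1/n-\mu(S_i)$ satisfy $\sum_i e_i=0$ and $|e_i|\le\eps$, yet the triangle inequality only yields $\norm{\sum_i e_i x_i}\le n\eps\le 1$, which is vacuous when what you need is $\norm{\bar c}$ bounded away from $1$ (the $x_i$ are unit vectors pointing in all directions, so the zero-sum property buys nothing by itself; a genuine first-moment/density argument is needed). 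The antipodal-symmetrization alternative proves the bound only for a specially constructed partition, whereas \autoref{prop:eigenvalue-bound-sphere-example} (via \autoref{thm:sphere-example}) is asserted for \emph{every} $(\gamma,\delta)$-spherical proximity graph, so it changes the statement rather than proving it.

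The paper closes the gap with a two-line argument that sidesteps any discrepancy estimate: by the centering identity,
\[
1-\norm{\bar c}^2=\frac1n\sum_i\norm{x_i-\bar c}^2=\frac{1}{2n^2}\sum_{i,j}\norm{x_i-x_j}^2,
\]
and for each fixed $i$ a constant fraction of the points $x_j$ lie at geodesic distance at least $\frac{\pi}{2}-\gamma$ from $x_i$ (the cells contained in the half-sphere $B(x_i,\pi/2)$ have total measure at most $1/2$ and each cell has measure at least $\eps$, while $n\le 1/\eps$), hence at Euclidean distance $\Omega(1)$; this gives $\sum_j\norm{x_i-x_j}^2\ge\Omega(n)$ and therefore $1-\norm{\bar c}^2=\Omega(1)$. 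If you graft this observation onto your dual solution $\tilde f(v)=(x_i-\bar c)/\sqrt{1-\norm{\bar c}^2}$, $g\equiv \delta^2/\big(2(1-\norm{\bar c}^2)\big)$, your proof is complete and agrees with the paper's; as written, it is not.
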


\autoref{prop:eigenvalue-bound-sphere-example} gives the last part of \autoref{thm:sphere-example}. Thus, the proof of \autoref{thm:sphere-example} is complete.

\section{Concluding Remarks}

We present a new spectral theory which relates (i) reweighted eignevalues, (ii) vertex expansion and (iii) fastest mixing time.
This is analogous to the classical spectral theory which relates (i) eigenvalues, (ii) edge conductance and (iii) mixing time.
This spectral approach for vertex expansion has the advantage that most existing results and proofs for edge conductances and eigenvalues have a close analog for vertex expansion and reweighted eigenvalues with almost tight bounds.
We do not intend to be exhaustive in this paper, and we fully expect that other results relating eigenvalues and edge conductances also have an analog for vertex expansion using reweighted eigenvalues.

To conclude, we believe that our work provides an interesting spectral theory for vertex expansion, as the formulations have the natural interpretation as reweighted eigenvalues and also have close connections to other important problems such as fastest mixing time and the reweighting conjectures in approximation algorithms.
We also believe that this approach can be extended further for hypergraph edge expansion.

\subsection*{Acknowledgements}

We thank Shayan Oveis Gharan for suggesting the tight example for \autoref{thm:Cheeger-vertex} in \autoref{sec:tight-example},
Robert Wang for suggesting the connection to the $0/1$-polytope expansion conjecture, and Sam Olesker-Taylor for providing insightful comments that improved the presentation of the paper.

\newpage

\begin{appendix}

\section{Deferred Proofs for Weighted Vertex Expansion} \label{app:Cheeger-vertex}

\begin{proofof}{\autoref{lem:Cheeger-easy}}
First we consider the case that there is an optimizer $S \subseteq V$ to $\psi(G)$,
with $0 < \pi(S) \leq 1/2$ and $\pi(\partial S)/\pi(S) = \psi(G)$.
For $a,b \in \R$ to be chosen below, we define a solution $f,g$ to $\gamma^{(1)}(G)$ in \autoref{def:Roch-dual-1D} as follows:
\[
  f(v) := \begin{cases}
    a, & \text{if } v \in S; \\
    b, & \text{if } v \not\in S,
  \end{cases}
  \qquad
  g(v) := \begin{cases}
    (a - b)^2, & \text{if } v \in \partial S; \\
    0, & \text{otherwise}.
  \end{cases}
\]
Then the constraints $g(u)+g(v) \geq (f(u)-f(v))^2$ for $uv \in E$ are satisfied by construction.
Note that we can always choose $a,b \neq 0$ that satisfy the two equations
$\pi(S) \cdot a + (1 - \pi(S)) \cdot b = 0$ and $\pi(S) \cdot a^2 + (1 - \pi(S)) \cdot b^2 = 1$ simultaneously.
This choice ensures that the two constraints $\sum_{v \in V} \pi(v) f(v) = 0$ and $\sum_{v \in V} \pi(v) f(v)^2 = 1$ are satisfied, and so $f,g$ as defined is a feasible solution to $\gamma^{(1)}(G)$.
The objective value is
\[
\sum_{v \in V} \pi(v) g(v) 
= \pi(\partial S) \cdot (a-b)^2
= \psi(G) \cdot \pi(S) \cdot (a-b)^2
\leq 2\psi(G) \cdot \pi(S) \cdot (a^2 + b^2)
\leq 2\psi(G),
\]
where the last inequality uses that $\pi(S) \leq 1/2$ and so $\pi(S) \cdot (a^2+b^2) \leq \pi(S) \cdot a^2 + (1-\pi(S)) \cdot b^2 = 1$.
This proves $\gamma^{(1)}(G) \leq 2\psi(G)$ in this case.

The other case is when $\psi(G)=1$.
We will show that $\gamma^{(1)}(G) \leq 2$ and this would imply that $\gamma^{(1)}(G) \leq 2\psi(G)$.
Let $v$ be a vertex with $0 < \pi(v) \leq 1/2$, which must exist as long as the graph has at least two vertices.
We define a solution $f,g$ to $\gamma^{(1)}(G)$ as above with $S=\{v\}$.
Following the same arguments, $f,g$ is a feasible solution to $\gamma^{(1)}(G)$ with objective value
\[
\sum_{v \in V} \pi(v) g(v) = \pi(v) (a-b)^2 \leq 2\pi(v)\big(a^2+b^2\big) 
\leq 2\big(\pi(v)a^2 + (1-\pi(v))b^2 \big) = 2.
\]
This proves $\gamma^{(1)} \leq 2 \psi(G)$ in the other case when $\psi(G)=1$.
\end{proofof}

\begin{proofof}{\autoref{lem:truncation}}
First we shift $y$ so that both $\pi(\{v \in V: y(v) > 0\})$ and $\pi(\{v \in V: y(v) < 0\}|)$ are at most $1/2$, which is always possible thanks to the constraint $\sum_{v \in V} \pi(v) y(v) = 0$.  
By shifting, the objective value of the solution does not change. 
Let $y^+$ and $y^-$ be the solutions with $y^+(v) := \max\{y(v),0\}$ and $y^-(v) := \min\{y(v),0\}$ for $v \in V$.
Note that the objective values for $y^+$ and $y^-$ are at most that for $y$, as
$|y^+(u) - y^+(v)| \leq |y(u)-y(v)|$ and $|y^-(u) - y^-(v)| \le |y(u) - y(v)|$.
By \autoref{lem:vector-rounding}, either $\sum_{v \in V} \pi(v) y^+(v)^2$ or $\sum_{v \in V} \pi(v) y^-(v)^2$ is at least $1/4$. 

We define $x$ to be $y^+$ if $\sum_{v \in V} \pi(v) y^+(v)^2 \geq \sum_{v \in V} \pi(v) y^-(v)^2$, or otherwise we define $x$ to be $-y^-$.
Then $x \geq 0$ and $\pi(\supp(x)) \leq 1/2$.
The numerator is at most the objective value of $y$ which is $\directed{\gamma}^{(1)}(G)$, and the denominator is at least $1/4$ by \autoref{lem:vector-rounding} and so the lemma follows. 
\end{proofof}

\begin{lemma} \label{lem:vector-rounding}
  Given a finite set $X$, a vector $f: X \rightarrow \R$, a distribution $\pi$ on $X$. Let
  \[
    \Var_{\pi}(f) :=
    \sum_{i \in X} \pi(i) f(i)^2
    -
    \bigg( \sum_{i \in X} \pi(i) f(i) \bigg)^2
  \]
  be the $\pi$-variance of $f$. Let $f^+ := \max(f, 0)$ and $f^- := \max(-f, 0)$ be the positive and negative parts of $f$ respectively (so that $f = f^+ - f^-$). Then,
  \[
    2\big(\Var_{\pi}(f^+) + \Var_{\pi}(f^-)\big) \ge \Var_{\pi}(f).
  \]
\end{lemma}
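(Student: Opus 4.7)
The plan is to compute $\Var_\pi(f)$ cleanly in terms of $f^+$ and $f^-$, reduce the inequality to a purely algebraic comparison, and then dispatch it with Cauchy--Schwarz followed by AM--GM.

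First, I would observe the key pointwise identity: since $f^+$ and $f^-$ have disjoint supports, we have $f^+ \cdot f^- = 0$ and $f^2 = (f^+)^2 + (f^-)^2$. Taking $\pi$-expectations and writing $\mu^+ := E_\pi[f^+]$ and $\mu^- := E_\pi[f^-]$, this yields
\[
\Var_\pi(f) = E_\pi[(f^+)^2] + E_\pi[(f^-)^2] - (\mu^+ - \mu^-)^2 = \Var_\pi(f^+) + \Var_\pi(f^-) + 2\mu^+ \mu^-.
\]
Hence the desired inequality $2(\Var_\pi(f^+) + \Var_\pi(f^-)) \geq \Var_\pi(f)$ is equivalent to the clean inequality
\[
\Var_\pi(f^+) + \Var_\pi(f^-) \;\geq\; 2\mu^+ \mu^-.
\]

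Second, I would lower-bound each variance using Cauchy--Schwarz restricted to the support. Let $p := \pi(\supp(f^+))$ and $q := \pi(\supp(f^-))$, so that $p + q \leq 1$. Applying Cauchy--Schwarz to $f^+ = f^+ \cdot \mathbf{1}[\supp(f^+)]$ gives $(\mu^+)^2 \leq p \cdot E_\pi[(f^+)^2]$, so
\[
\Var_\pi(f^+) = E_\pi[(f^+)^2] - (\mu^+)^2 \;\geq\; \frac{(\mu^+)^2}{p} - (\mu^+)^2 = (\mu^+)^2 \cdot \frac{1-p}{p},
\]
and symmetrically $\Var_\pi(f^-) \geq (\mu^-)^2 \cdot \frac{1-q}{q}$.

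Third, using $p + q \leq 1$ we get $1 - p \geq q$ and $1 - q \geq p$, and hence
\[
\Var_\pi(f^+) + \Var_\pi(f^-) \;\geq\; (\mu^+)^2 \cdot \frac{q}{p} + (\mu^-)^2 \cdot \frac{p}{q} \;\geq\; 2\mu^+ \mu^-,
\]
where the last step is AM--GM on the two nonnegative terms. Combined with the identity from the first step, this yields $2(\Var_\pi(f^+) + \Var_\pi(f^-)) \geq \Var_\pi(f)$, as required.

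I do not anticipate a substantive obstacle here: the only subtle point is the simultaneous Cauchy--Schwarz/AM--GM maneuver, which relies crucially on the fact that the supports of $f^+$ and $f^-$ are disjoint (so that $p + q \leq 1$). Without that constraint, no such inequality would hold in general, so this is the step where I would be most careful. The factor $2$ on the left is tight, as witnessed by the symmetric example $f = \pm 1$ with probability $1/2$ each.
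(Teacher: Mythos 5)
Your proof is correct and follows essentially the same route as the paper's: the identity $\Var_{\pi}(f)=\Var_{\pi}(f^+)+\Var_{\pi}(f^-)+2\mu^+\mu^-$ followed by a Cauchy--Schwarz plus AM--GM bound on the cross term is exactly the paper's argument, just organized with unconditional means and supports instead of conditional means over the sign classes $S^+,S^-$. The only omission is the degenerate case $p=0$ or $q=0$ (needed before dividing by $p$ or $q$), which is trivial since then the corresponding $\mu^{\pm}$ and variance vanish; the paper dispenses with it in one sentence as well.
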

\begin{proof}
Let $S^+ \subseteq X := \{ i \in X: f(i) \ge 0\}$ and $S^- \subseteq X := \{ i \in X: f(i) < 0\}$. Suppose $\pi(S^+), \pi(S^-) > 0$; otherwise either $f^+ = f$ or $f^- = f$, and there is nothing to prove.
Then
\begin{eqnarray*}
  \Var_\pi(f^+)
  &=&
  \sum_{i \in S^+} \pi(i) f(i)^2 -
  \bigg( \sum_{i \in S^+} \pi(i) f(i) \bigg)^2
  \\
  &=&
  \pi(S^+) \sum_{i \in S^+} \frac{\pi(i)}{\pi(S^+)} f(i)^2 - \pi(S^+)^2 \bigg( \sum_{i \in S^+} \frac{\pi(i)}{\pi(S^+)} f(i) \bigg)^2
  \\
  &=&
  \pi(S^+) \pi(S^-) \sum_{i \in S^+} \frac{\pi(i)}{\pi(S^+)} f(i)^2
  +
  \pi(S^+)^2 \bigg[ \sum_{i \in S^+} \frac{\pi(i)}{\pi(S^+)} f(i)^2 - \bigg( \sum_{i \in S^+} \frac{\pi(i)}{\pi(S^+)} f(i) \bigg)^2
  \bigg]
\end{eqnarray*}
Similarly,
\begin{eqnarray*}
  \Var_\pi(f^-)
  &=&
  \pi(S^-) \pi(S^+) \sum_{i \in S^-} \frac{\pi(i)}{\pi(S^-)} f(i)^2
  +
  \pi(S^-)^2 \bigg[ \sum_{i \in S^-} \frac{\pi(i)}{\pi(S^-)} f(i)^2 - \bigg( \sum_{i \in S^-} \frac{\pi(i)}{\pi(S^-)} f(i) \bigg)^2
  \bigg]
\end{eqnarray*}
Therefore,
\begin{eqnarray*}
  & & 
  \Var_{\pi}(f)
  \\
  &=&
  \sum_{i \in X} \pi(i) f(i)^2 - 
  \left( \sum_{i \in X} \pi(i) f(i) \right)^2
  \\
  &=&
  \pi(S^+) \sum_{i \in S^+} \frac{\pi(i)}{\pi(S^+)} f(i)^2
  +
  \pi(S^-) \sum_{i \in S^-} \frac{\pi(i)}{\pi(S^-)} f(i)^2
  \\
  &&
  \quad -
  \bigg(
    \pi(S^+) \sum_{i \in S^+} \frac{\pi(i)}{\pi(S^+)} f(i)
    +
    \pi(S^-) \sum_{i \in S^-} \frac{\pi(i)}{\pi(S^-)} f(i)
  \bigg)^2
  \\
  &=&
  \Var_{\pi}(f^+) + \Var_{\pi}(f^-) - 2 \pi(S^+) \pi(S^-) 
  \bigg( \sum_{i \in S^+} \frac{\pi(i)}{\pi(S^+)} f(i) \bigg)
  \bigg( \sum_{i \in S^-} \frac{\pi(i)}{\pi(S^-)} f(i) \bigg)
  \\
  &\le&
  \Var_{\pi}(f^+) + \Var_{\pi}(f^-)
  + \pi(S^+) \pi(S^-) \bigg[
    \bigg( \sum_{i \in S^+} \frac{\pi(i)}{\pi(S^+)} f(i) \bigg)^2
    +
    \bigg( \sum_{i \in S^-} \frac{\pi(i)}{\pi(S^-)} f(i) \bigg)^2
  \bigg]
  \\
  &\le&
  \Var_{\pi}(f^+) + \Var_{\pi}(f^-)
  + \pi(S^+) \pi(S^-) \bigg[
    \bigg( \sum_{i \in S^+} \frac{\pi(i)}{\pi(S^+)} f(i)^2 \bigg)
    +
    \bigg( \sum_{i \in S^-} \frac{\pi(i)}{\pi(S^-)} f(i)^2 \bigg)
  \bigg]
  \\
  &\le&
  2\big(\Var_{\pi}(f^+) + \Var_{\pi}(f^-)\big),
\end{eqnarray*}
\end{proof}

\begin{proofof}{\autoref{thm:weighted-Cheeger}}
The easy direction $\gamma^{(1)}(G) \lesssim \psi(G)$ is proved in \autoref{lem:Cheeger-easy}.
For the hard direction, given a solution $y$ and $\directed{E}$ to $\directed{\gamma}^{(1)}(G)$, we apply the truncation step in \autoref{lem:truncation} to obtain $x \geq 0$ with $\pi(\supp(x)) \leq 1/2$ and 
$
\big( \sum_{v \in V} \pi(v) \max_{u:u \to v} (x(u)-x(v))^2 \big) / \big(\sum_{v \in V} \pi(v) x(v)^2 \big) \leq 4\directed{\gamma}^{(1)}(G).
$
Then, we apply the threshold rounding step in \autoref{prop:threshold-rounding} on $x$ to obtain a set $S$ with $S \subseteq \supp(x)$ and $\directed{\psi}(S) \lesssim \sqrt{\directed{\gamma}^{(1)}(G)}$.
If $\directed{\psi}(S) \geq 1/2$, then it implies that $\directed{\gamma}^{(1)}(G) = \Omega(1)$, and so the inequality $\psi(G)^2 \lesssim \directed{\gamma}^{(1)}(G)$ holds trivially as $\psi(G) \leq 1$ by definition.
Otherwise, if $\directed{\psi}(S) < 1/2$, we apply the postprocessing step in \autoref{lem:cleanup} on $S$ to obtain $S' \subseteq S$ with $\psi(S') \leq 2 \directed{\psi}(S)$.
Therefore, $S'$ is a set with $\pi(S') \leq \pi(S) \leq \pi(\supp(x)) \leq 1/2$ and $\psi(S') \leq 2 \directed{\psi}(S) \lesssim \sqrt{\directed{\gamma}^{(1)}(G)}$.
Thus we conclude the hard direction that $\psi(G) \leq \psi(S') \lesssim \sqrt{\directed{\gamma}^{(1)}(G)} \lesssim \sqrt{\gamma^{(1)}(G)}$. 
\end{proofof}

\section{Deferred Proofs for Bipartite Vertex Expansion} \label{app:Cheeger-bipartite}

\begin{proofof}{\autoref{lem:Cheeger-easy-bipartite}}
First we consider the case that there is an optimizer $S \subseteq V$ to $\psi_B(G)$, with bipartition $S=(S_1,S_2)$ and $\pi(\partial S)/\pi(S) = \psi_B(G)$.
Define $f: V \rightarrow \R$ and $g: V \rightarrow \R$ as follows:
\begin{itemize}
  \item $f(u) = 1$ if $u \in S_1$, and $f(u) = -1$ if $u \in S_2$, and $f(u) = 0$ if $u \not\in S$;
  \item $g(u) = 1$ if $u \in \partial S$, and $g(u) = 0$ otherwise.
\end{itemize}
For each edge $uv \in E$, we claim that $g(u) + g(v) \ge (f(u) + f(v))^2$. Note that $u$ and $v$ cannot both belong to $S_1$ or both belong to $S_2$.
One can check that the constraint is satisfied in all the remaining cases:
(1) $u \in S_1$ and $v \in S_2$ (or vice versa), (2) $u \in S$ and $v \not \in S$ (or vice versa), and (3) $u, v \not\in S$.
So, $(f, g)$ is a feasible solution to the $\nu^{(1)}(G)$ program, and the objective value is
\[
\frac{\sum_{v \in V} \pi(v) g(v)}{\sum_{v \in V} \pi(v) f(v)^2} 
= \frac{\sum_{v \in \partial S} \pi(v)}{\sum_{v \in S_1 \cup S_2} \pi(v)} 
= \frac{\pi(\partial S)}{\pi(S)} = \psi_B(S).
\]
This implies that $\nu^{(1)}(G) \leq 2 \psi_B(S)$ in this case.
The other case is when $\psi_B(G)=1$.
Choosing the feasible solution $f \equiv 1$ and $g \equiv 2$ to the $\nu^{(1)}(G)$ program shows that $\nu^{(1)}(G) \leq 2$.
This implies that $\nu^{(1)}(G) \leq 2\psi_B(G)$ in the other case.
\end{proofof}

\begin{proofof}{\autoref{thm:Cheeger-nu}}
The easy direction $\nu^{(1)}(G) \lesssim \psi_B(G)$ is proved in \autoref{lem:Cheeger-easy-bipartite}.
For the hard direction, given a solution $x$ and $\directed{E}$ to $\directed{\nu}^{(1)}(G)$, we apply the threshold rounding step in \autoref{prop:threshold-rounding-nu} on $x$ to obtain two disjoint sets $S_1,S_2 \subseteq V$ with $\directed{\psi}(S_1,S_2) \lesssim \sqrt{\directed{\nu}^{(1)}(G)}$.
If $\directed{\psi}(S_1,S_2) \geq 1/2$, then it implies that $\directed{\nu}^{(1)}(G) = \Omega(1)$, and so the inequality $\psi_B(G)^2 \lesssim \directed{\nu}^{(1)}(G)$ holds trivially as $\psi_B(G) \leq 1$ by definition.
Otherwise, if $\directed{\psi}(S_1,S_2) < 1/2$, we apply the postprocessing step in \autoref{lem:cleanup-bipartite} on $(S_1,S_2)$ to obtain $(S_1',S_2')$ so that $S_1' \cup S_2'$ is an induced bipartite graph in $G$ and $\psi(S_1' \cup S_2')\leq 2\directed{\psi}(S_1,S_2)$.
Thus we conclude the hard direction that $\psi_B(G) \leq \psi_B(S_1' \cup S_2') \leq 2\directed{\psi}(S_1,S_2) \lesssim \sqrt{\directed{\nu}^{(1)}(G)}$.  
\end{proofof}

\section{Deferred Proofs for Multiway Vertex Expansion} \label{app:higher-order-vertex}

\begin{proofof}{\autoref{prop:max-of-gaussians}}
By the Laurent-Massart bound of $\chi$-squared distribution (c.f. Lemma 1 of \cite{LM00}), for any $\delta > 0$ and $i \le d$,
\[
  \Pr\big[Y_i - 1 \ge 2 \sqrt{\delta / m} + 2 (\delta / m)\big] \le e^{- \delta}.
\]
Since
\[
  1 + 2 \sqrt{\delta / m} + 2 (\delta / m) \le 4 + 4 (\delta / m)
\]
it follows that
\[
  \Pr[Y_i \ge 4 + 4 (\delta / m)] \le e^{- \delta}.
\]
Recall that $Y = \max_{i \le d} Y_i$. 
Taking union bound over the $Y_i$'s,
\[
  \Pr[Y \ge 4 + 4 (\delta / m)] \le d \cdot e^{-\delta}.
\]
With this probability tail bound, the expectation of $Y$ can be bounded as follows:
\begin{eqnarray*}
  \E[Y]
  &=&
  \int_{0}^{\infty} \Pr[Y \ge t] \, dt
  ~\le~
  4 + \int_{0}^{\infty} \Pr[Y \ge 4 + t] \, dt
  ~\le~
  4 + \int_{0}^{\infty} \min\big\{1, d \cdot e^{-mt/4}\big\} \, dt \quad
  \\
  &=&
  4 + \frac{4\log d}{m} + \int_{\frac{4 \log d}{m}}^{\infty} d \cdot e^{-mt/4} \, dt
  ~=~
  4 + \frac{4\log d}{m} - \frac{4d}{m} \cdot e^{-mt/4} \Big|_{\frac{4 \log d}{m}}^{\infty}
  ~=~
  4 + \frac{4\log d}{m} + \frac{4}{m}.
\end{eqnarray*}
\end{proofof}

\begin{proofof}{\autoref{lem:higher-order-easy}}
If $\psi_k(G) \geq 1$, then the lemma holds trivially as $\lambda_k^*(G) \leq 2$.
Henceforth, we assume $\psi_k(G) < 1$,
and there are nonempty disjoint subsets $S_1, \dots, S_k \subseteq V$ with
$\max_{1 \leq i \le k} \psi(S_i) = \psi_k(G)$.

Using the notation in \autoref{def:sigma-k-dual},
the $\lambda_k^*$ program in \autoref{def:reweighted-lambda-k} can be written as
\begin{align*}
\lambda_k^*(G) ~:=~ \max_{Q \geq 0}  \min_{f_1, \dots, f_k: V \rightarrow \R} \max_{1 \leq i \leq k} &~~~ f_i^T (\Pi - Q) f_i  & 
\\
\st
&~~~ \sum_{v} \pi(v) f_i(v)^2 = 1 & & \forall 1 \leq i \leq k
\\
&~~~ \sum_{v} \pi(v) f_i(v) f_j(v) = 0 & & \forall 1 \leq i \neq j \leq k
\\  
&~~~ Q(u,v) = 0 & & \forall uv \notin E
\\
&~~~ \sum_{v \in V} Q(u,v) = \pi(u) & & \forall u \in V
\\
&~~~ Q(u,v) = Q(v,u) & & \forall uv \in E.
\end{align*}
Each $f_i^T (\Pi - Q) f_i$ can be written as
\[
  \sum_{uv \in E} Q(u,v) \big(f_i(u) - f_i(v)\big)^2 = 
  \sum_{uv \in E} \pi(u) P(u, v) \big(f_i(u) - f_i(v)\big)^2.
\]
We set
\[
  f_i(u) := \begin{cases}
    \frac{1}{\sqrt{\pi(S_i)}}, & \text{ if $u \in S_i$ }
    \\
    0, & \text{ otherwise.}
  \end{cases}
\]
We can check that the constraints on $f_i$ are satisfied. 
Moreover, for any $P$ satisfying the constraints,
\begin{eqnarray*}
  \sum_{uv \in E} \pi(u) P(u, v) \big(f_i(u) - f_i(v)\big)^2
  & = &
  \sum_{v \in S_i, u \in \partial(S_i)} \pi(u) P(u, v) \frac{1}{\pi(S_i)}
\\ 
  & \le &
  \frac{1}{\pi(S_i)} \sum_{v \in V, u \in \partial(S_i)} \pi(u) P(u, v)
  ~=~
  \frac{\pi(\partial(S_i))}{\pi(S_i)}
  ~=~
  \psi(S_i).
\end{eqnarray*}
So, $\max_{1 \leq i \le k} f_i^T (\Pi - \Pi P) f_i 
\leq \max_{1 \leq i \leq k} \psi(S_i)$. 
Taking maximum over $P$ gives $\lambda_k^*(G) \le \psi_k(G)$.
\end{proofof}

\section{Deferred Proofs for 0/1-Polytopes with Poor Vertex Expansion} \label{app:01-polytope}

\begin{proofof}{\autoref{lem:blocking-one-edge}}
The plan is to prove that the stated conditions imply $\frac12 (x+y) \in {\rm conv}(M)$, which would then immediately imply that there is no edge connecting $x$ and $y$ in $Q$.
We will prove the contrapositive: if $\frac12 (x+y) \notin {\rm conv}(M)$,
then there is a $p$-consistent affine function $l$ such that $l(z) < 0$ for all $z \in M$ matching the pattern $p$.

Denote $w := \frac12 (x+y)$.
As $w \notin {\rm conv}(M)$, by \autoref{prop:separation-thm},
there is an affine function $l': (u_1, \dots, u_n) \mapsto \beta' + \sum_i \alpha_i' u_i$ such that $l'(w) = 0$ and $l'(z) < 0$ for all $z \in M$.
We would like to modify $l'$ to obtain an affine function $l: (u_1, \dots, u_n) \mapsto \beta + \sum_i \alpha_i u_i$ such that (i) $l(w) = 0$, (ii) $\alpha_i=0$ for $i \in \supp(p)$, and (iii) $l(z) < 0$ for all $z \in M$ matching the pattern $p$.

Note that, by the definition of common pattern, 
for $i \in \supp(p)$, either $x_i = y_i = 1$ or $x_i = y_i = 0$,
and so $w_i \in \{0, 1\}$. 
Also, for any $z \in \{0, 1\}^n$ that matches the pattern $p$, 
we must have $z_i = w_i$ for $i \in \supp(p)$. 
So, for any such $z$,
\begin{eqnarray*}
    l'(z)
    =
    \beta' + \sum_{i=1}^n \alpha'_i z_i 
    =
    \bigg( \beta' + \sum_{i \in \supp(p)} \alpha'_i z_i \bigg) + \sum_{i \not\in \supp(p)} \alpha'_i z_i 
    =
    \bigg( \beta' + \sum_{i \in \supp(p)} \alpha'_i w_i \bigg)
    + \sum_{i \not\in \supp(p)} \alpha'_i z_i.
\end{eqnarray*}
Hence, 
if we set $\beta := \beta' + \sum_{i \in \supp(p)} \alpha_i' w_i$, 
and $\alpha_i = \alpha'_i$ for $i \notin \supp(p)$ and $\alpha_i = 0$ for $i \in \supp(p)$,
then the affine function $l: (u_1, u_2, \dots, u_n) \in \R^n \mapsto \beta + \sum_i \alpha_i u_i$ satisfies $l(z) = l'(z)$ for any $z$ that matches the pattern $p$.
Therefore, $l$ is an affine function that satisfies the three properties that (i) $l(w)=l'(w)=0$, (ii) $\alpha_i=0$ for $i \in \supp(p)$, and (iii) $l(z) = l'(z) < 0$ for $z \in M$ matching the pattern $p$. 
\end{proofof}

\begin{proofof}{\autoref{lem:blocking-all-edges}}
Note that for $x \in L$ and $y \in R$ in our construction in \autoref{def:construction}, 
their common pattern $p$ satisfies $|\supp_0(p)| = |\supp_1(p)| \le k$.
The lemma follows by applying \autoref{lem:blocking-one-edge} on all possible such patterns.
\end{proofof}

\begin{proofof}{\autoref{lem:good}}
Let $z \in \{0, 1\}^n$ be such that $|z| = \frac{n}{2}$ and $z$ matching the pattern $p$. 
Let $z^{\tau} \in \{0, 1\}^n$ be its ``opposite point'' formed by toggling the coordinates of $z_i$ for $i \not\in \supp(p)$ and leaving other coordinates unchanged. 
Note that the lemma follows from the following two facts:
(i) $|z^{\tau}| = \frac{n}{2}$ and $z^{\tau}$ matches the pattern $p$, and
(ii) $l(z) + l(z^{\tau}) = 0$.
  
For the first fact, $z^{\tau}$ matches the pattern $p$ because $z^{\tau}_i = z_i$ for $i \in \supp(p)$. 
And $|z^{\tau}| = \frac{n}{2}$ because $|z| = \frac{n}{2}$ and there are the same number of zeroes and ones in $p$, the latter being a consequence of $|x| + |y| = k+(n-k) = n$.
  
For the second fact, as $l$ is $p$-consistent,
  \begin{eqnarray*}
    l(z) + l(z^{\tau})
    =
    \bigg( \beta + \sum_{i \not\in \supp(p)} \alpha_i z_i \bigg)
    +
    \bigg( \beta + \sum_{i \not\in \supp(p)} \alpha_i z_i^{\tau} \bigg)
    =
    2 \bigg( \beta + \frac{1}{2} \sum_{i \not\in \supp(p)} \alpha_i \bigg)
    =
    0,
  \end{eqnarray*}
where the last equality is from the second condition in \autoref{def:consistent}.
\end{proofof}

\begin{proofof}{\autoref{lem:match-pattern}}
  The number of points $z \in \{0, 1\}^n$ with $|z| = \frac{n}{2}$ is ${n \choose n/2}$, whereas the number of such points that matches pattern $p$ is ${n - 2s \choose n/2 - s}$. Therefore,
  \begin{eqnarray*}
    \Pr_{z \sim Z} [z \text{ matches pattern } p]
    ~=~
    {n - 2s \choose n/2 - s} \Big/{n \choose n/2}
    ~\gtrsim~
    \bigg( \sqrt{\frac{2}{\pi(n-2s)}} \cdot 2^{n-2s} \bigg) \Big/
    \bigg( \sqrt{\frac{2}{\pi n}} \cdot 2^{n} \bigg)
    ~\ge~ 4^{-s},
  \end{eqnarray*}
where we used Stirling's approximation $n! \approx_n \sqrt{2 \pi n} (n/e)^n$.
\end{proofof}

\begin{proofof}{\autoref{thm:01-vertex-expansion}}
Let $Q$ be a $0/1$-polytope from \autoref{def:construction}.
We would like to apply \autoref{lem:blocking-all-edges} to prove the theorem.

Let $Z$ be the uniform distribution on $\{z \in \{0, 1\}^n: |z| = n/2\}$.
Combining \autoref{lem:good} and \autoref{lem:match-pattern}, 
it follows that for any pattern $p$ with $|\supp_0(p)| = |\supp_1(p)| = s \le k$ and any $p$-consistent affine function $l$,
\begin{eqnarray*}
  \Pr_{z \sim Z} \left[ \, 
      l(z) \ge 0 \text{ and } z \text{ matches pattern } p \,
    \right]
  \geq
  c \cdot 4^{-s}
\end{eqnarray*}
for some universal constant $c > 0$.
Therefore, if we take independent samples $z_1, \dots, z_m \sim Z$ and set $M := \{z_1, \dots, z_m\}$ where $m$ is a value to be determined later, 
then for any pattern $p$ with $|\supp_0(p)| = |\supp_1(p)| = s \le k$ and any $p$-consistent affine function $l$,
\begin{eqnarray*}
  \Pr[ \nexists z \in M \text{ with } l(z) \ge 0 \text{ and } z \text{ matching pattern } p]
  ~\leq~
  (1 - c \cdot 4^{-s})^m
  ~\leq~
  (1 - c \cdot 4^{-k})^m.
\end{eqnarray*}
To apply a union bound, we upper bound the numbers of different such $p$ and $l$.
The number of patterns $p$ with $|\supp_0(p)| = |\supp_1(p)| = s \le k$ is
\[
  \sum_{s = 0}^k {n \choose 2s} \cdot {2s \choose s} \le (k+1) \cdot n^{2k}.
\]
The number of $p$-consistent affine functions $l$ with different sign patterns on the boolean hypercube $\{0,1\}^n$ is upper bounded by the number of affine threshold functions on $\{0, 1\}^n$, which is at most $2^{n^2}$ by \autoref{prop:Cover}.
Combining the two estimates and the above probability bound,
the failure probability is
\[
\Pr[ \exists p~\exists l \text{ s.t. } \nexists z \in M \text{ with } l(z) \ge 0 \text{ and } z \text{ matching pattern } p]
\le (1 - c \cdot 4^{-k})^m \cdot (k+1) \cdot n^{2k} \cdot 2^{n^2}.
\]
Setting
\[
  m ~=~ \frac{4^k}{c} \cdot (1 + \log(k+1) + 2k \log n + n^2 \log 2) ~\lesssim~ 4^k n^2,
\]
the failure probability is at most
\begin{eqnarray*}
  & & (1 - c \cdot 4^{-k})^m \cdot (k+1) \cdot n^{2k} \cdot 2^{n^2}
  \\
  &\le&
  \exp(-c \cdot 4^{-k} \cdot m) \cdot \exp(\log (k+1) + 2k \log n + n^2 \log 2)
  \\
  &\le&
  \exp(-(1 + \log (k+1) + 2k \log n + n^2 \log 2)) \cdot \exp(\log (k+1) + 2k \log n + n^2 \log 2)
  \\
  &=&
  e^{-1}.
\end{eqnarray*}
Therefore, by \autoref{lem:blocking-all-edges}, we conclude that there exists $M \subseteq \{z \in \{0, 1\}^n : |z| = n/2\}$ with $|M| \lesssim 4^k n^2$ such that there are no edges between $L$ and $R$ in the graph of $Q$.

Finally, as $|L| = {n \choose k} \ge (n/k)^k$ and $\partial L \subseteq M$, it follows that
\[
\psi(L) \le \frac{|M|}{|L|} \lesssim \frac{4^k n^2}{(n/k)^k} = \frac{(4k)^k}{n^{k-2}}.
\]
\end{proofof}

\section{Deferred Proofs for Spherical Proximity Graph} \label{app:tight-example}

\begin{proofof}{\autoref{prop:sphere-graph-construction}}
The construction largely follows that of {\cite[Lemma 8.3.22]{GM12}}.
Given $n$ and $\gamma > 0$, we iteratively choose points $y_1, y_2, \dots, y_m \in \S^{k-1}$, such that each new point $y_{i+1}$ has distance at least $\gamma/2$ from $y_1, \dots, y_i$. We stop when it is no longer possible to choose a point that is $(\gamma/2)$-far from all existing points.

We now let $S_1', S_2', \dots, S_m'$ be the cells of the Voronoi diagram of $y_1, y_2, \dots, y_m$. That is, for any $x \in \S^{k-1}$, $x \in S_i'$ iff $\Delta(x, y_i) = \min_{j \in [m]} \Delta(x, y_j)$.
Note that cell $S_i'$ contains $B(y_i, \gamma/4)$ and is contained in $B(y_i, \gamma/2)$. Therefore, the measure of each $S_i'$ is at least $\eps = \mu(Cap(\gamma/4))$ and the diamater of each $S_i'$ is at most $\gamma$.
By further subdividing the cells (evenly) until each cell has measure $\le 2 \eps$, we obtain $S_1, S_2, \dots, S_n$, such that the measure of each $S_i$ is between $\eps$ and $2 \eps$, and ${\rm diam}(S_i) \le \gamma$ for each $i \in [n]$.
We can choose the points $x_i \in S_i$ arbitrarily.
\end{proofof}

\begin{proofof}{\autoref{lemma:degree-expansion-volume}}
  First, we prove the bound on the maximum degree. For any vertex $i \in V$, its degree is equal to the number of points $x_j$ that are $\delta$-close to $x_i$. We can count the number of such points using volume estimation.
  If a point $x_j$ is within distance $\delta$ from $x_i$, then the entire cell $S_j$ is within distance $\delta + \gamma$ from $x_i$. This contributes at least $\eps = \mu(Cap(\gamma/4))$ total measure to $B(x_i, \delta + \gamma)$.
  The total measure of these cells is at most $\mu(B(x_i, \delta + \gamma))$, so
  \[
    |\partial(i)| \cdot \mu(Cap(\gamma/4)) \le \mu(B(x_i, \delta + \gamma)).
  \]
  Rearranging gives the desired upper bound on $|\partial(i)|$, and thus on the maximum degree.
  
  Next, we prove the bound on the vertex expansion. Given any $T \subseteq V$, we wish to lower bound $|\partial(T)|$. For $j \in V$, if the cell $S_j$ is completely contained in $S_T + Cap(\delta)$, then $x_j \in T \cup \partial(T)$.
  If we take the union of all such cells, then the set will contain $S_T + Cap(\delta - \gamma)$:
  \[
    \cup \{ S_j : S_j \subseteq S_T + Cap(\delta) \}
    \supseteq S_T + Cap(\delta - \gamma).
  \]
  
  This is because, for any point $u \in S_T + Cap(\delta - \gamma)$, the cell that it is in will be contained in
  \[
    \{u\} + Cap(\gamma) \subseteq S_T + Cap(\delta - \gamma) + Cap(\gamma) = S_T + Cap(\delta).
  \]
  
  It follows that
  \[
    \cup \{ S_j : j \not\in T \text{ and } S_j \subseteq S_T + Cap(\delta) \} \supseteq (S_T + Cap(\delta - \gamma)) \setminus S_T.
  \]
  
  Combining previous observations, and since each cell has measure at most $2\eps$,
  \begin{eqnarray*}
    |\partial(T)|
    =
    |(T \cup \partial(T)) \setminus T|
    &\ge&
    |\{S_j : j \not\in T \text{ and } S_j \subseteq S_T + Cap(\delta)\}|
    \\
    &\ge&
    \frac{\mu(S_T + Cap(\delta - \gamma)) - \mu(S_T)}{2 \eps}.
  \end{eqnarray*}
  
  We are done after substituting this and $|T| \le \mu(S_T) / \eps$ into $\psi(T) = |\partial(T)| / |T|$.
\end{proofof}

\begin{proofof}{\autoref{prop:degree-bound-sphere-example}}
  We will use the following formula for spherical cap volume:
  \[
    \mu(Cap(x)) = \frac
    {\int_0^{x} \sin^{k-2} \theta \, d \theta}
    {\int_0^{\pi} \sin^{k-2} \theta \, d \theta}.
  \]
  
  This comes from the formula for unnormalized spherical cap volume \cite{Li11}:
  \[
    \mu_0(Cap(x)) = \frac{2 \pi^{(k-1)/2}}{\Gamma((k-1)/2)} \cdot \int_0^x \sin^{k-2} \theta \, d \theta.
  \]
  
  We shall use the approximation
  \[
    \sin^k \theta \approx_k \left( \theta - \frac{\theta^3}{3!} \right)^k
  \]
  
  for $0 \le \theta \le O(1/\sqrt{k})$.
  The third-degree approximation is sufficient because $(\theta - \theta^3 / 3! + O(\theta^5))^k \approx_k (\theta - \theta^3 / 3!)^k$ for $\theta$ in this range.
  
  Then, for $x = O(1 / \sqrt{k})$,
  \begin{eqnarray*}
    \int_{0}^x \sin^k \theta \, d \theta
    &\approx_k&
    \int_{0}^x \left( \theta - \frac{\theta^3}{3!} \right)^k \, d \theta
    \\
    &\approx_k&
    \int_{0}^x \left( \theta - \frac{\theta^3}{3!} \right)^k \cdot (1 - \theta^2 / 2) \, d \theta
    \quad (\because (1 - \theta^2 / 2) \text{ is close to } 1)
    \\
    &\overset{y := \theta - \theta^3/6}{=}&
    \int_{0}^{x - x^3 / 6} y^k \, dy
    \quad (\because y(\theta) := \theta - \theta^3 / 6 \text{ is increasing for $\theta \in [0, x]$})
    \\
    &=&
    \frac{(x-x^3 / 6)^{k+1}}{k+1}.
  \end{eqnarray*}
  
  Therefore,
  \begin{eqnarray*}
    \frac{\mu(Cap(\delta + \gamma))}
                {\mu(Cap(\gamma/4))}
    &=&
    \frac{\int_0^{\delta + \gamma} \sin^{k-2} \theta \, d \theta}
    {\int_0^{\gamma/4} \sin^{k-2} \theta \, d \theta}
    \\
    &\approx_k&
    \left( \frac{\delta + \gamma}{\gamma/4} \right)^{k-1} \cdot \left( \frac{1 - (\delta + \gamma)^2/6}{1 - (\gamma / 4)^2 / 6} \right)^{k-1}
    \\
    &=&
    \left( \frac{4(c_1 + c_2)}{c_1} \right)^{k-1} \cdot \left( \frac{1 - (c_1 + c_2)/6k}{1 - c_1 / 96k} \right)^{k-1}
    \\
    &\lesssim&
    \left( \frac{4(c_1 + c_2)}{c_1} \right)^k
    \\
    &\le&
    2^{O(k)}.
  \end{eqnarray*}
\end{proofof}

\begin{proofof}{\autoref{prop:expansion-bound-sphere-example}}
  By isoperimetric results on the sphere, the worst case is when the set $S_T$ is a spherical cap. Therefore, it suffices to prove that, for any $\tau \in (0, \pi / 2]$,
  \[
    \mu(Cap(\tau + (\delta - \gamma))) \ge (1 + \Omega(1)) \cdot \mu(Cap(\tau)).
  \]
  Let $c := c_2 - c_1 > 0$. This is equivalent to
  \[
    \int_0^{\tau + c / \sqrt{k}} \sin^{k-2} \theta \, d \theta \ge (1 + \Omega(1)) \int_0^{\tau} \sin^{k-2} \theta \, d \theta.
  \]
  
  For technical reasons, we first deal with the case where $\tau$ is close to $\pi/2$. For $\tau \ge \pi / 2 - c/\sqrt{k}$, the result follows from well-known upper bounds on spherical cap volume. For example, we may use the upper bound in \cite{Tko12}:
  \[
    \mu(Cap(\pi/2 - \theta)) \le e^{-k \sin^2 \theta /2}, \qquad \theta \in [0, \pi/2)
  \]
  
  and the fact that $\sin \theta \approx_k \theta$ for $\theta = O(1/\sqrt{k})$.
  Then, if $c$ is such that
  $\exp \left[-k \sin^2 \left(\frac{c}{2\sqrt{k}} \right) / 2 \right] \le 1/3$,
  \[
    \frac{\mu(Cap(\tau + c/\sqrt{k}))}{\mu(Cap(\tau))} \ge
    \min \left(
    \frac{\mu(Cap(\pi/2))}{\mu(Cap(\pi/2 - c/2\sqrt{k}))}
    ,
    \frac{\mu(Cap(\pi/2 + c/2\sqrt{k}))}{\mu(Cap(\pi/2))}
    \right)
    \ge
    \frac{4}{3}.
  \]
  Therefore, we may assume that $\tau < \pi/2 - c/\sqrt{k}$. \\
  
  We will actually prove the following relation: for all $x \in [0, \pi/2 - c/\sqrt{k}]$,
  \[
    \int_{x}^{x + c / \sqrt{k}} \sin^{k-2} \theta \, d \theta \ge (1 + \Omega(1)) \int_{x - c / \sqrt{k}}^x \sin^{k-2} \theta \, d \theta.
  \]
  
  If this relation is proven to be true, then by writing
  \[
    \int_0^{\tau + c / \sqrt{k}} \sin^{k-2} \theta \, d \theta
    =
    \left(
    \int_{\tau}^{\tau + c / \sqrt{k}} 
    + \int_{\tau - c / \sqrt{k}}^{\tau}
    + \cdots
    + \int_{0}^{\tau - t c / \sqrt{k}}
    \right)
    \sin^{k-2} \theta \, d \theta,
    \quad
    (
      t := \floor{\frac{\tau \sqrt{k}}{c}}
    ),
  \]
  and applying the above relation to each term on RHS, we obtain the desired result.
  
  In order to prove the relation, we show that, for all $x \in [0, \pi/2 - c/\sqrt{k}]$,
  \[
    \sin^{k-2} (x + c / \sqrt{k}) \ge
    (1 + \Omega(1)) \cdot \sin^{k-2} x.
  \]
  
  that is, we wish to show that the function
  \[
    f(x) := \frac{\sin^{k-2}(x + c/\sqrt{k})}{\sin^{k-2}(x)}
  \]
  
  is at least $1 + \Omega(1)$ for $x \in [0, \pi/2 - c/\sqrt{k}]$.
  
  Check that, for $k \ge 3$, $f(x)$ is decreasing by differentiating $\sin(x + c/\sqrt{k}) / \sin(x)$.
  It remains to compute
  \begin{eqnarray*}
  f(\pi/2 - c/\sqrt{k})
  &=&
  \frac{\sin^{k-2} (\pi/2)}{\sin^{k-2}(\pi/2 - c/\sqrt{k})}
  \\
  &=&
  [\cos(c/\sqrt{k})]^{-(k-2)}
  \\
  &=&
  \left( 1 - \frac{c^2}{2k} \right)^{-(k-2)} + o(1)
  \qquad (\text{by Taylor expansion})
  \\
  &=&
  \exp(c^2/2) + o_k(1).
  \end{eqnarray*}
  
  Therefore, for sufficiently large $k$, $f(x) \ge 1 + \Omega(1)$, where the constant depends on $c$. 
\end{proofof}

\begin{proofof}{\autoref{prop:eigenvalue-bound-sphere-example}}
We wish to construct test vectors $\{v_i\}_{i \in [n]}$, such that $\sum_{i \in [n]} v_i = \vec{0}$ and
\[
  \frac{\sum_{(i, j) \in E} P(i, j) \norm{v_i - v_j}^2}
  {\sum_{i \in [n]} \norm{v_i}^2}
\]

is small for any doubly stochastic reweighting $P$ of the graph $G_n$.

We claim that setting $v_i := x_i - z$ works, where $z := \frac{1}{n} \sum_{j \in [n]} x_j$.
By construction, $\sum_{i \in [n]} v_i = \vec{0}$. We next show that
\[
  \sum_{i \in [n]} \norm{v_i}^2 \ge \Omega(n).
\]
Since $\sum_{i \in [n]} v_i = \vec{0}$, note that
\[
  \sum_{i \in [n]} \norm{v_i}^2
  = \frac{1}{2n} \sum_{i, j \in [n]} \norm{v_i - v_j}^2
  = \frac{1}{2n} \sum_{i, j \in [n]} \norm{x_i - x_j}^2.
\]

It then suffices to show that
\[
  \sum_{j \in [n]} \norm{x_i - x_j}^2 \ge \Omega(n)
\]

for all $i \in [n]$. This follows from the fact that at least $\frac{n}{3}$ of the points $x_j$ are of geodesic distance at least $\frac{\pi}{2} - \gamma \ge \Omega(1)$ from $x_i$, and so are of Euclidean distance $\Omega(1)$ from $x_i$.

Therefore, for any doubly stochastic reweighting $P$,
\begin{eqnarray*}
  \lambda_2(P)
   \le 
  \frac{\sum_{(i, j) \in E} P(i, j) \norm{v_i - v_j}^2}
  {\sum_{i \in V} \norm{v_i}^2}  
  \le
  \frac{\sum_{i \in [n]} \sum_{j: (i, j) \in E} P(i, j) \cdot \delta^2}{\Omega(n)}
  = O(\delta^2).
\end{eqnarray*}

Note that, in the second inequality, we used the fact that straight-line distance in $\R^k$ is less than or equal to geodesic distance in $\S^{k-1}$.

We conclude that $\lambda_2^*(G) \lesssim \delta^2 \lesssim 1/k$. Combining with the degree bound in Proposition \ref{prop:degree-bound-sphere-example},
\[
 \lambda_2^*(G) \lesssim \frac{1}{k} \lesssim \frac{1}{\log d}.
\]
\end{proofof}

\end{appendix}

\newpage

\bibliographystyle{plain}

\end{document}